\title{On all things star-free}
\author[1]{Thomas Place\thanks{Support from the DeLTA project (ANR-16-CE40-0007).}}
\author[2]{Marc Zeitoun$^*$}
\affil[1]{LaBRI, Bordeaux University and IUF, France}
\affil[2]{LaBRI, Bordeaux University, France}
\newcommand{\canec}{\ensuremath{\sim_\Cs}\xspace}
\newcommand{\caned}{\ensuremath{\sim_\Ds}\xspace}
\newcommand{\caneg}{\ensuremath{\sim_\Gs}\xspace}
\newcommand{\sclac}{\ensuremath{{A^*}/{\canec}}\xspace}
\newcommand{\dclac}{\ensuremath{{A^*}/{\caned}}\xspace}
\newcommand{\gclac}{\ensuremath{{A^*}/{\caneg}}\xspace}
\newcommand{\fo}{\ensuremath{\textup{FO}}\xspace}
\newcommand{\MOD}{\textup{MOD}\xspace}
\newcommand{\fowm}{\mbox{\ensuremath{\fo({<},\MOD)}}\xspace}
\newcommand{\md}{\ensuremath{\textup{\MOD}}\xspace}
\newcommand{\grp}{\ensuremath{\textup{GR}}\xspace}
\newcommand{\bsd}{\ensuremath{\textup{SD}}\xspace}
\newcommand{\sfp}[1]{\ensuremath{\mathit{SF}(#1)}\xspace}
\newcommand{\bsdp}[1]{\ensuremath{\mathit{SD}(#1)}\xspace}
\newcommand{\imprint}{imprint\xspace}
\newcommand{\imprints}{imprints\xspace}
\newcommand{\tame}{multiplicative\xspace}
\newcommand{\Ratms}{Rating maps\xspace}
\newcommand{\ratms}{rating maps\xspace}
\newcommand{\ratm}{rating map\xspace}
\newcommand{\Nice}{Nice\xspace}
\newcommand{\nice}{nice\xspace}
\newcommand{\mratm}{multiplicative rating map\xspace}
\newcommand{\mratms}{multiplicative rating maps\xspace}
\newcommand{\Mratms}{Multiplicative rating maps\xspace}
\newcommand{\iden}{\veps-approximation\xspace}
\newcommand{\idens}{\veps-approximations\xspace}
\newcommand{\prin}[2]{\ensuremath{\Is[#1](#2)}\xspace}
\newcommand{\opti}[2]{\ensuremath{\Is_{#1}\left[#2\right]}\xspace}
\newcommand{\copti}[1]{\opti{\Cs}{#1}}
\newcommand{\popti}[3]{\ensuremath{\Ps_{#1}^{#2}[#3]}\xspace}
\newcommand{\iopti}[2]{\ensuremath{\fri_{#1}[#2]}\xspace}
\newcommand{\ioptic}[1]{\iopti{\Cs}{#1}}
\newcommand{\sfcopti}{\opti{\sfp{\Cs}}{\rho}}
\newcommand{\csfcopti}{\popti{\sfp{\Cs}}{\Cs}{\rho}}
\newcommand{\typ}[2]{\ensuremath{[#1]_{#2}}\xspace}
\newcommand{\ctype}[1]{\typ{#1}{\Cs}}
\newcommand{\gtype}[1]{\typ{#1}{\Gs}}
\newcommand{\cmult}{\ensuremath{\mathbin{\scriptscriptstyle\bullet}}}
\newcommand{\Cs}{\ensuremath{\mathcal{C}}\xspace}
\newcommand{\Ds}{\ensuremath{\mathcal{D}}\xspace}
\newcommand{\Gs}{\ensuremath{\mathcal{G}}\xspace}
\newcommand{\Is}{\ensuremath{\mathcal{I}}\xspace}
\newcommand{\Ps}{\ensuremath{\mathcal{P}}\xspace}
\newcommand{\Fb}{\ensuremath{\mathbf{F}}\xspace}
\newcommand{\Gb}{\ensuremath{\mathbf{G}}\xspace}
\newcommand{\Hb}{\ensuremath{\mathbf{H}}\xspace}
\newcommand{\Kb}{\ensuremath{\mathbf{K}}\xspace}
\newcommand{\Lb}{\ensuremath{\mathbf{L}}\xspace}
\newcommand{\Ub}{\ensuremath{\mathbf{U}}\xspace}
\newcommand{\Vb}{\ensuremath{\mathbf{V}}\xspace}
\newcommand{\Wb}{\ensuremath{\mathbf{W}}\xspace}
\newcommand{\fri}{\ensuremath{\mathbbm{i}}\xspace}
\newcommand{\vari}{quotient-closed Boolean algebra\xspace}
\newcommand{\varis}{quotient-closed Boolean algebras\xspace}
\newcommand{\nat}{\ensuremath{\mathbb{N}}\xspace}
\def\inv{^{-1}}
\newcommand{\veps}{\ensuremath{\varepsilon}\xspace}
\newcommand{\dclosp}[1]{\ensuremath{\mathord{\downarrow_{#1}}}\xspace}
\newcommand{\dclosr}{\dclosp{R}}
\newcommand{\brataux}[2]{\ensuremath{\xi_{#1}[#2]}\xspace}
\newcommand{\bratauxd}{\brataux{\Ds}{\rho}}
\newcommand{\bratauxsfc}{\brataux{\sfp{\Cs}}{\rho}}
\theoremstyle{plain}
\newtheorem{theorem}{Theorem}
\newtheorem{fact}[theorem]{Fact}
\newtheorem{proposition}[theorem]{Proposition}
\newtheorem{lemma}[theorem]{Lemma}
\newtheorem{remark}[theorem]{Remark}
\newtheorem{example}[theorem]{Example}
\newcounter{sauvegarde}
\newcommand\adjustc[1]{
  \setcounter{sauvegarde}{\thetheorem}
  \setcounterref{theorem}{#1}
  \addtocounter{theorem}{-1}
}
\newcommand\restorec{
  \setcounter{theorem}{\thesauvegarde}
}
\begin{document}

\maketitle

\begin{abstract}
  We investigate the star-free closure, which associates to a class of languages its closure under~Boolean operations and marked concatenation. We prove that the star-free~closure of any finite class~and of any class of groups languages with decidable separation (plus mild additional properties) has decidable separation. We actually show decidability of a stronger property, called covering. This~generalizes many results on the subject in a unified framework. A key ingredient is that star-free closure coincides with another closure operator where Kleene stars are also allowed in restricted~contexts.
\end{abstract}

\section{Introduction}
\label{sec:intro}
This paper investigates a remarkable operation on classes of languages: the \emph{star-free closure}. It builds a new class \sfp{\Cs} from an input class \Cs by closing it under union, complement and concatenation. This generalizes an important specific class: the one of \emph{star-free languages}, \emph{i.e.}, the star-free closure of the class consisting of all finite languages. Star-free languages~are those that can be defined in first order logic~\cite{mnpfo}. The correspondence was lifted to the~quantifier alternation hierarchy of first order~logic by Thomas~\cite{ThomEqu}, which corresponds to a classification~of star-free languages: the dot-depth hierarchy~\cite{BrzoDot}. These results extend to the star-free closure~\cite{pzgenconcat}. For each input class \Cs, \sfp{\Cs} corresponds to a variant of first-order logic (specified by the set of predicates that are allowed). Moreover, its quantifier alternation hierarchy corresponds to a classification of \sfp{\Cs}: the concatenation hierarchy of basis \Cs.

Schützenberger proved that~one may decide whether a regular language is star-free~\cite{sfo}. This result established a framework for investigating and understanding classes of languages, based on the \emph{membership problem}: is it decidable to test whether an input regular language belongs to the class under investigation?
Similar results were obtained for other prominent classes. Yet, this fruitful line of research also includes some of the most famous open problems in automata theory. For example, only the first levels of the dot-depth hierarchy are known to have decidable membership (see~\cite{jep-dd45} for a survey).

Recently, these results were unified and generalized. First, the problem itself was strengthened: membership was replaced by separation as a means to investigate classes. The separation problem asks whether two input languages can be separated by one from the class under study. While more general and difficult than membership, separation is also more flexible. This was exploited to show that separation is decidable for several levels in the dot-depth hierarchy~\cite{pzqalt,pseps3}. In fact, this is a particular instance of a \emph{generic} result applying to every hierarchy whose basis \Cs is \emph{finite} and satisfies some mild properties~\cite{pseps3j,pzboolpol}. Moreover, the same result was obtained when the basis \Cs is a class of group languages (\emph{i.e.}, recognized by a finite group) with decidable separation~\cite{concagroup}. Altogether, these results generalize most of the known results regarding the decidability of levels in concatenation hierarchies.

\noindent\textbf{Contributions}. This paper is a continuation of these research efforts. Instead of looking at levels within hierarchies, we investigate the star-free closure as a whole. First, we show that the star-free closure of a finite class has decidable separation. We then use this result to establish our main theorem: the star-free closure of a class of group languages with \emph{decidable separation} has also decidable separation. In both cases, we actually prove the decidability of a stronger property called covering. Let us mention some important features of this work.

A first point is that the case of a finite class is important by itself. Foremost, it is a crucial step for the main result on the star-free closure of classes of group languages. Second, it yields a new proof that covering is decidable for the star-free languages (this is shown in~\cite{pzfo} or can be derived from~\cite{Henckell88,MR1709911}). This new proof is simpler and more generic. While the original underlying technique goes back to Wilke~\cite{wfo}, the proof has been simplified at several levels.
The main simplification is obtained thanks to an abstract framework, introduced in~\cite{pzcovering2}. It is based on the central notion of \ratm, which is meant to measure the quality of a separator. For the framework to be relevant, we actually need to generalize separation to multiple input languages, which leads to the covering problem. Another key difference is that previously existing proofs (specific to the star-free languages) involve abstracting words by new letters at some point, which requires a relabeling procedure and a change of alphabet. Here, we cannot use this approach as the classes we build with star-free closure are less robust in general. We work with a fixed alphabet, which also makes the proof simpler.

A crucial ingredient in the proof is the notion of prefix code with bounded synchronization delay. Generalizing a definition of Schützenberger~\cite{schutzbd} which was also considered by Diekert and Walter~\cite{DiekertW16a,DiekertW17}, we define a new closure operator that permits Kleene stars on such languages (this is a semantic property). This yields an operator that happens to coincide with the star-free closure when applied to the classes that we investigate. It serves as a key intermediary:  in our proofs, we heavily rely on Kleene stars to construct languages.
We therefore present this important step in the body of the paper (Theorem~\ref{thm:sfclos:carac}). Moreover, its proof provides yet another characterization of~\sfp{\Cs}, which is effective when the class~\Cs is finite (thus generalizing Schützenberger's membership result). At last regarding membership, it is worth pointing out that not only do we cover more cases, but also that it is straightforward to reprove the known algebraic characterizations from our results (see \emph{e.g.},~\cite{MIXBARRINGTON1992478}).

Finally, let us present important applications of our main result applying to input classes made of group languages. First, one may look at the input class containing all group languages. Straubing~\cite{STRAUBING1979319} described an algebraic counterpart of the star-free closure of this class, which was then shown to be recursive by Rhodes and Karnofsky~\cite{Karnofsky1982}. Altogether, this implies that membership is decidable for the star-free closure of group languages, as noted by Margolis and Pin~\cite{MargolisP85}. Here, we are able to generalize this result to separation and covering as separation is known to be decidable for the group languages~\cite{Ash91}.

Another important application is the class of languages definable by first-order logic with modular predicates \fowm. This class is known to have decidable membership~\cite{MIXBARRINGTON1992478}. Moreover, it is the star-free closure of the class consisting of the languages counting the length of words modulo some number. Since this input class is easily shown to have decidable separation (see~\cite{concagroup} for example), our main theorem applies.

The third application concerns first-order logic endowed with predicates counting the number of occurrences of a letter before a position, modulo some integer. Indeed, the class of languages definable in this logic is exactly the star-free closure of the class of languages recognized by Abelian groups (this follows from a generic correspondence theorem between star-free closure of a class and variants of first-order logic~\cite{pzgenconcat,pinbridges}, as well as from the description of languages recognized by Abelian groups~\cite{Eilenberg_book_B}). Again, our main theorem applies, since the class of Abelian groups is known to have decidable separation: this follows from~\cite{abelian_pt,MR1709911}.

\smallskip\noindent\textbf{Organization}. In Section~\ref{sec:prelims}, we recall some useful background. Section~\ref{sec:carac} presents a generic characterization of star-free closure. Then, Sections~\ref{sec:finite} and~\ref{sec:units} are devoted to our two main theorems applying respectively to finite input classes and those made of group languages. Due to space limitations, several proofs are postponed to the appendix.

\section{Preliminaries}
\label{sec:prelims}
We fix a finite alphabet $A$ for the whole paper. As usual, $A^*$ denotes the set of all words over $A$, including the empty word~\veps. For $u,v \in A^*$, we denote by $uv$ the word obtained by concatenating $u$ and~$v$. A \emph{language} is a subset of $A^*$. We lift concatenation to languages: for $K,L \subseteq A^*$, we let $KL = \{uv \mid u \in K \text{\;and\;} v \in L\}$. Finally, we use Kleene star: if $K \subseteq A^*$, $K^+$ denotes the union of all languages $K^n$ for $n \geq 1$ and $K^* = K^+ \cup \{\veps\}$.

A \emph{class of languages} is a set of languages.  A class \Cs is a \emph{Boolean algebra} when it is closed under union, intersection and complement. Moreover, \Cs is \emph{quotient-closed} if for every $L \in \Cs$ and $w \in A^*$, the languages $w^{-1}L  \stackrel{\text{def}}= \{u \in A^* \mid wu \in L\}$  and $Lw^{-1} \stackrel{\text{def}}= \{u \in A^* \mid uw \in L\}$ belong to $\Cs$. All classes considered in the paper are \varis containing only \emph{regular languages} (this will be implicit in our statements). These are the languages that can be equivalently defined by monadic second-order logic, finite automata or finite monoids. We briefly recall the monoid-based definition below.

We shall often consider \emph{finite} \varis. If \Cs is such a class, one may associate a canonical equivalence \canec over $A^*$. For $w,w'\in A^*$,  $w \canec w'$ if and only if $w \in L \Leftrightarrow w' \in L$ for every $L \in \Cs$. Moreover, we write $\ctype{w} \in \sclac$ for the \canec-class of $w$. One may then verify that the languages in \Cs are exactly the unions of \canec-classes. Moreover, since \Cs is quotient-closed, \canec is a congruence for word concatenation  (see~\cite{pzgenconcat} for proofs).

\smallskip\noindent
{\bf Regular languages.} A \emph{monoid} is a set $M$ endowed with an associative multiplication $(s,t)\mapsto s\cdot t$ (also denoted by~$st$) having a neutral element $1_M$. An \emph{idempotent} of a monoid $M$ is an element $e \in M$ such that $ee = e$. It is folklore that for any \emph{finite} monoid $M$, there exists a natural number $\omega(M)$ (denoted by $\omega$ when $M$ is understood) such that $s^\omega$ is an idempotent for every $s \in M$. Observe that $A^{*}$ is a monoid whose multiplication is concatenation (the neutral element is $\varepsilon$). Thus, we may consider monoid morphisms $\alpha: A^* \to M$ where $M$ is an arbitrary monoid. Given such a morphism and $L \subseteq A^*$, we say that $L$ is \emph{recognized} by~$\alpha$ when there exists a set $F \subseteq M$ such that $L = \alpha\inv(F)$. A language $L$ is \emph{regular} if and only if it is recognized by a morphism into a \emph{finite} monoid. Moreover, it is known that there exists a canonical recognizer of $L$, which can be computed from any representation of $L$ (such as a finite automaton): the syntactic morphism of $L$. We refer the reader to~\cite{pingoodref} for details.

\smallskip\noindent
{\bf Group languages.}  A group is a monoid $G$ in which every element $g \in G$ has an inverse $g\inv \in G$, \emph{i.e.}, \hbox{$gg\inv = g\inv g = 1_G$}. A ``\emph{group language}'' is a language $L$ recognized by a morphism into a \emph{finite group}. All classes of group languages investigated here are \varis. Typically, publications on the topic consider \emph{varieties} of group languages which is more restrictive: they involve an additional closure property called ``inverse morphic image'' (see~\cite{pinbridges}). For example, the class \md described below is \emph{not} a variety.

\begin{example}\label{ex:allgroups}
  A simple example of \vari of group languages is the class of \emph{all} group languages: \grp. Another one is \md, which contains the Boolean combinations of languages $\{w \in A^* \mid |w| = k \mod m\}$ with $k,m \in \nat$ such that $k < m$.
\end{example}

\noindent
{\bf Decision problems.} We rely on three decision problems to investigate classes of languages. Each one depends on a parameter class \Cs, which we fix for the definition. The first problem, \emph{\Cs-membership}, takes a single regular language $L$ as input and asks whether $L \in \Cs$.

The second one, \emph{\Cs-separation}, takes two regular languages $L_1$ and $L_2$ as input and asks whether $L_1$ is \Cs-separable from $L_2$ (is there a third language $K \in \Cs$ such that $L_1 \subseteq K$ and $L_2 \cap K = \emptyset$). This generalizes membership: $L \in \Cs$ if and only if $L$ is \Cs-separable from $A^* \setminus L$.

The third problem, \emph{\Cs-covering} was introduced in~\cite{pzcovering2}. Given a language $L$, a \emph{cover of $L$} is a \emph{\bf finite} set of languages \Kb such that $L \subseteq \bigcup_{K \in \Kb} K$. Moreover, a \Cs-cover of $L$  is a cover \Kb of $L$ such that all $K \in \Kb$ belong to \Cs. Consider a pair $(L_1,\Lb_2)$ where $L_1$ is a language and $\Lb_2$ is a \emph{finite set of languages}. We say that $(L_1,\Lb_2)$ is \emph{\Cs-coverable} when there exists a \Cs-cover \Kb of $L_1$ such that for every $K \in \Kb$, there exists $L \in \Lb_2$ satisfying $K \cap L = \emptyset$. The \emph{\Cs-covering problem} takes as input a single regular language $L_1$ and a finite set of regular languages $\Lb_2$. It asks whether $(L_1,\Lb_2)$ \Cs-coverable. Covering generalizes separation if \Cs is closed under union: $L_1$ is \Cs-separable from $L_2$, if and only if $(L_1,\{L_2\})$ is \Cs-coverable (see~\cite{pzcovering2}).

\medskip
\noindent
{\bf Star-free closure and main results.} We investigate an operation defined on classes: \emph{star-free closure}. Consider a class \Cs. The \emph{star-free closure of \Cs}, denoted by \sfp{\Cs}, is the least class containing \Cs and the singletons $\{a\}$ for every $a \in A$, and closed under Boolean operations and concatenation. It is standard and simple to verify that when \Cs is a \vari (which will always be the case here), this is also the case for \sfp{\Cs}.

Our main theorems state conditions on the input class \Cs guaranteeing decidability of our decision problems for \sfp{\Cs}.  First, we may handle \emph{finite} classes.

\begin{theorem}\label{thm:sfclos:main}
  Let \Cs be a finite \vari. Then, membership, separation and covering are decidable for \sfp{\Cs}.
\end{theorem}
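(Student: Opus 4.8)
The strategy is to reduce everything to the \emph{covering} problem for \sfp{\Cs}, since covering implies separation (as \sfp{\Cs} is a \vari, hence closed under union) and separation implies membership (by the observation $L \in \sfp{\Cs} \iff L$ is \sfp{\Cs}-separable from $A^* \setminus L$). So it suffices to prove that \sfp{\Cs}-covering is decidable whenever \Cs is a finite \vari. The vehicle for this is the abstract \ratm{} framework of~\cite{pzcovering2}: given a morphism $\rho$ into a finite monoid that recognizes all input languages, covering is characterized by an object $\sfopti$ (the ``optimal imprint'' of \sfp{\Cs} on $\rho$), and the covering instance is positive if and only if a certain computable condition on $\sfopti$ holds. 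Thus the real task is to give an algorithm computing $\sfopti$ (equivalently $\sfsatc$), which amounts to a fixpoint saturation procedure over the finite monoid.

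**Key steps.** First I would invoke the characterization theorem of Section~\ref{sec:carac} (Theorem~\ref{thm:sfclos:carac}), which says that \sfp{\Cs} coincides with the closure operator allowing Kleene stars on prefix codes with bounded synchronization delay; crucially, when \Cs is finite this alternative description is \emph{effective}, and it is the form we can actually compute with. Second, I would set up the \ratm{} machinery: fix the input languages, take a morphism $\rho \colon A^* \to N$ recognizing them, and express \sfp{\Cs}-coverability of the input in terms of $\sfsatc$. Third — the technical heart — I would establish a finite system of closure rules characterizing $\sfsatc$ (or $\sfopti[\rho]$): a ``trivial'' seeding rule coming from \Cs itself (computable because \Cs is finite, via the canonical equivalence \canec and the types $\ctype{w}$), closure under multiplication (reflecting concatenation), and an idempotent/star rule of the shape ``if $e = e^\omega$ is reachable then $e$-indexed star-type elements are added'', reflecting the Kleene stars on bounded-synchronization-delay prefix codes permitted by Theorem~\ref{thm:sfclos:carac}. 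Since $N$ is finite this saturation terminates and is effective; soundness (everything computed is genuinely achievable by an \sfp{\Cs}-cover) and completeness (nothing is missed) are then proved by the usual two inclusions — soundness by explicitly building covers using marked concatenations and stars on prefix codes, completeness by an induction on the construction of languages in \sfp{\Cs} showing each respects the saturated set.

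**Main obstacle.** The hard part is the completeness direction of the saturation characterization: one must show that \emph{every} \sfp{\Cs}-cover of the input is ``captured'' by the saturated set, i.e.\ that the closure rules are exhaustive. This is where the bounded-synchronization-delay prefix code viewpoint of Theorem~\ref{thm:sfclos:carac} earns its keep — it lets us replace an unwieldy induction over arbitrary Boolean combinations and concatenations by a more controlled argument about Kleene stars in restricted contexts, so that the idempotent rule provably suffices. The soundness direction, by contrast, is mostly bookkeeping: given an element flagged by the saturation, unwind which rule introduced it and assemble the corresponding separator/cover from singletons $\{a\}$, languages of \Cs, Boolean operations, concatenations, and stars on prefix codes, checking at each step that the construction stays inside \sfp{\Cs}. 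A secondary point requiring care is keeping the whole argument over the \emph{fixed} alphabet $A$: unlike classical proofs for star-free languages, we cannot pass to an enriched alphabet by relabeling factors, because \sfp{\Cs} need not be robust under such relabelings, so all intermediate languages and codes must be built directly over $A$.
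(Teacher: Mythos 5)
Your proposal follows essentially the same route as the paper: reduce membership and separation to covering, invoke the \ratm\ framework (Proposition~\ref{prop:breduc}) to reduce covering to computing an optimal imprint, characterize that imprint as the least fixpoint of a small set of closure rules — trivial seeding via the $\canec$-types, downset, multiplication, and an $r^\omega + r^{\omega+1}$ rule at idempotents (Theorem~\ref{thm:sfclos:carfinite}) — and use the bounded-synchronization-delay reformulation of Theorem~\ref{thm:sfclos:carac} as the engine for the hard inclusion, all over the fixed alphabet $A$. One correction: in your ``key steps'' paragraph you have the two directions' techniques swapped. In the paper, the explicit construction of \bsdp{\Cs}-partitions from prefix codes with bounded synchronization delay (Proposition~\ref{prop:sfclos:pumping}) proves \emph{completeness} (the optimal imprint \csfcopti lies inside every \sfp{\Cs}-saturated set), whereas \emph{soundness} (\csfcopti is itself saturated, in particular satisfies the $r^\omega+r^{\omega+1}$ rule at idempotents) is proved by the pumping-style Lemma~\ref{lem:sfclos:aper}, which is an induction on the structure of languages in the star-free closure. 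Your ``main obstacle'' paragraph gets this the right way round, so it reads as an internal slip rather than a missing idea; still, do not underestimate the soundness side — it is not mere bookkeeping, it genuinely needs the pumping argument of Lemma~\ref{lem:sfclos:aper} to justify the idempotent rule.
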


The second theorem applies to input classes made of group languages.

\begin{theorem}\label{thm:sfclos:gmain}
  Let \Cs be a \vari of group languages with decidable separation. Then, membership, separation and covering are decidable for \sfp{\Cs}.
\end{theorem}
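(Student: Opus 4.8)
Since $\sfp{\Cs}$ is closed under Boolean operations, covering implies separation, which implies membership; so it suffices to prove that $\sfp{\Cs}$-covering is decidable. The plan is to reduce this to the finite case settled by Theorem~\ref{thm:sfclos:main}: given an instance of $\sfp{\Cs}$-covering, I would use the hypothesis that $\Cs$-separation is decidable to compute a \emph{finite} \vari $\Ds \subseteq \Cs$, made of group languages and depending on the instance, such that the instance is $\sfp{\Cs}$-coverable if and only if it is $\sfp{\Ds}$-coverable. Running the algorithm of Theorem~\ref{thm:sfclos:main} on $\sfp{\Ds}$ then answers the question.

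Fix a morphism $\alpha : A^* \to M$ into a finite monoid recognizing every language occurring in the instance. One inclusion is free: $\Ds \subseteq \Cs$ gives $\sfp{\Ds} \subseteq \sfp{\Cs}$, so every $\sfp{\Ds}$-cover of the instance is a $\sfp{\Cs}$-cover, whence $\sfp{\Ds}$-coverability implies $\sfp{\Cs}$-coverability. To build $\Ds$, I would proceed as follows. Applying the $\Cs$-separation oracle to $\alpha$ tells us exactly which pairs $\bigl(\alpha^{-1}(s),\alpha^{-1}(t)\bigr)$, for $s,t \in M$, are $\Cs$-separable; for each such pair we can moreover \emph{compute} a separator in $\Cs$ (enumerate regular languages $K$; the conditions $\alpha^{-1}(s)\subseteq K$, $K\cap\alpha^{-1}(t)=\emptyset$ and $K\in\Cs$ are all decidable — the last since $\Cs$-membership reduces to $\Cs$-separation — so searching until such a $K$ appears terminates, a separator existing by assumption). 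These separators are finitely many group languages lying in $\Cs$, hence admit a common recognizer $\beta : A^* \to G$ into a finite group. Take $\Ds$ to be the class of languages recognized by $\beta$ that belong to $\Cs$: it is a finite \vari of group languages, and it is effectively given since $\Cs$-membership is decidable.

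The crux is the converse inclusion: from a $\sfp{\Cs}$-cover $\Kb$ of the instance one must produce a $\sfp{\Ds}$-cover with the same avoidance property. Here I would rely on Theorem~\ref{thm:sfclos:carac}: every language of $\sfp{\Cs}$ is built from the singletons $\{a\}$ and the languages of $\Cs$ using Boolean operations, concatenation, and Kleene stars on prefix codes with bounded synchronization delay. Such a representation of the blocks of $\Kb$ confines how a group language of $\Cs$ can influence membership of a word, and — using that $\Cs$ is a \vari, hence closed under quotients and Boolean combinations — one can replace, throughout the representation, each $\Cs$-language by a $\beta$-recognizable language of $\Cs$, i.e. a language of $\Ds$, in a way that does not affect which words of $L_1$ are captured and which members of $\Lb_2$ are avoided. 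Since $L_1$ and the languages of $\Lb_2$ are unions of $\alpha$-classes, it suffices that $\Ds$ reproduces, relative to $\alpha$, the $\Cs$-separations the representation relies on, which is exactly how $\beta$, hence $\Ds$, was chosen. This turns $\Kb$ into a $\sfp{\Ds}$-cover, as required, and decidability follows: compute $\Ds$, apply Theorem~\ref{thm:sfclos:main} to $\sfp{\Ds}$, and return its answer.

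I expect the replacement argument of the previous paragraph to be the main obstacle: one must show that ``enough of $\Cs$ to get the separations among the $\alpha$-classes right'' is already ``enough of $\Cs$ to answer $\sfp{\Cs}$-covering of the instance''. This is precisely where the group hypothesis is essential — it is what makes a separation oracle sufficient (rather than a full covering oracle) and what keeps all the relevant languages recognizable by finite groups — and where the bounded-synchronization-delay characterization of Theorem~\ref{thm:sfclos:carac} does the real work, reducing the control of arbitrary $\sfp{\Cs}$ languages to that of boundedly many building blocks, each carrying only a bounded amount of information from $\Cs$.
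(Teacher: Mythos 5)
Your reduction to the finite case through a static, instance-dependent subclass $\Ds \subseteq \Cs$ does not go through, and the gap is precisely the step you flag as ``the crux.'' You propose to build $\Ds$ from $\Cs$-separators for the pairs $\bigl(\alpha^{-1}(s),\alpha^{-1}(t)\bigr)$, but these separations only govern how $\Cs$ distinguishes \emph{whole words} according to $\alpha$. A $\sfp{\Cs}$-cover of $L_1$ is built from languages of the form $K_0 L_1' K_1 L_2' \cdots$ (and Kleene stars of such), and the $\Cs$-languages $L_i'$ occurring inside these concatenations must distinguish \emph{factors}, \emph{prefixes}, and \emph{suffixes} of words — distinctions that have nothing to do with $\Cs$-separability of $\alpha$-classes. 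There is no reason why an optimal $\sfp{\Cs}$-cover of $L_1$ should be replaceable by one whose $\Cs$-ingredients are all recognized by the single group $G$ you extract: with $\Cs = \grp$, for instance, a cover of a union of $\alpha$-classes may require group languages recognized by groups of unbounded size, even though $\alpha$ is small. Your sentence ``it suffices that $\Ds$ reproduces, relative to $\alpha$, the $\Cs$-separations the representation relies on'' is exactly the missing lemma, and it is not a lemma — you neither state what these ``relied-upon separations'' are (without already having the cover in hand) nor argue that they are finitely many and captured by $\alpha$.

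What the paper does instead is inherently iterative: it characterizes the target (a nested optimal imprint, $\ioptic{\bratauxsfc}$) as the least $\sfp{\Cs}$-complete set $S\subseteq R$, where the crucial closure condition ($\Cs$-operation) asks to compute $\ioptic{\sfrats}$ — and $\sfrats$ is an \emph{auxiliary} multiplicative rating map whose definition \emph{depends on the current value of} $S$. Each fixpoint step therefore triggers a fresh $\Cs$-separation question about $\{\veps\}$ and the preimages of $(\sfrats)_*$, a morphism that changes as $S$ grows and that encodes factor-level information accumulated so far. The finite subclasses $\Gs \subseteq \Cs$ do appear in the proof of Theorem~\ref{thm:sfclos:cargroup}, and Theorem~\ref{thm:sfclos:main} is indeed the sub-result used — so your instinct that the finite case is the engine is right — but $\Gs$ is extracted from auxiliary objects constructed \emph{inside} the soundness/completeness arguments (a $\Cs$-optimal $\veps$-approximation, an optimal $\sfp{\Cs}$-cover), not once from the input morphism. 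To salvage your plan you would need to show that a single finite group recognizer suffices in advance, and that is a stronger claim than the one the paper proves.
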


The remainder of the paper is devoted to proving these theorems. We first focus on \sfp{\Cs}-\emph{membership} in Section~\ref{sec:carac}. Naturally, this is weaker than directly handling \sfp{\Cs}-covering. Yet, detailing membership independently allows to introduce many proof ideas and techniques that are needed to prove the ``full'' theorems. We detail these theorems in  Sections~\ref{sec:finite} and~\ref{sec:units}. We only present the algorithms: proofs are deferred to the
appendix.

\section{Bounded synchronization delay and algebraic characterization}
\label{sec:carac}
This section is devoted to \sfp{\Cs}-membership. We handle it with a generic algebraic characterization of \sfp{\Cs} (effective under the hypotheses of Theorems~\ref{thm:sfclos:main} and~\ref{thm:sfclos:gmain}), generalizing earlier work by Pin, Straubing and Th\'erien~\cite{STRAUBING1979319,Pinambigu}. We rely on an alternate definition of star-free closure involving a semantic restriction of the Kleene star, which we first present.

\subsection{Bounded synchronization delay}

We define a second operation on classes of languages $\Cs \mapsto \bsdp{\Cs}$. We shall later prove~that it coincides with star-free closure (provided that \Cs satisfies mild hypotheses). It is based on the work of Sch\"utzenberger~\cite{schutzbd} who~defined a single class \bsd corresponding to the star-free languages (\emph{i.e.}, \sfp{\{\emptyset,A^*\}}). Here, we generalize it as an operation. The definition involves a semantic restriction of the Kleene star operation on languages: it may only be applied to ``\emph{prefix codes with bounded synchronization delay}''. Introducing this notion requires basic definitions from coding theory that we first~recall.

A language $K \subseteq A^*$ is a \emph{prefix code} when $\veps \not\in K$ and $K \cap KA^+ = \emptyset$ (no word in $K$ has a strict prefix in $K$). Note that this implies the following weaker property that we shall use implicitly: every $w \in K^*$ admits a \emph{unique} decomposition $w = w_1 \cdots w_n$ with $w_1,\dots,w_n \in K$ (this property actually defines \emph{codes} which are more general).

Given $d \geq 1$, a prefix code $K \subseteq A^+$ has \emph{synchronization delay $d$} if for every $u,v,w \in A^*$ such that $uvw \in K^+$ and $v \in K^d$, we have $uv \in K^+$. Finally, a prefix code $K \subseteq A^+$ has \emph{bounded synchronization delay} when it has synchronization delay $d$ for some $d \geq 1$.

\begin{example}
  Let $A = \{a,b\}$. Clearly, $\{ab\}$ is a prefix code with synchronization delay $1$: if $uvw \in (ab)^+$ and $v  = ab$, we have $uv \in (ab)^+$. Similarly, one may verify that $(aab)^*ab$ is a prefix code with synchronization delay $2$ (but not $1$). On the other hand, $\{aa\}$ does not have bounded synchronization delay. If $d \geq 1$, $a(aa)^da \in (aa)^*$ but  $a(aa)^d \not\in (aa)^*$.
\end{example}

We present the operation $\Cs \mapsto \bsdp{\Cs}$. The definition involves \emph{unambiguous concatenation}. Given $K,L \subseteq A^*$, their concatenation $KL$ is \emph{unambiguous} when every word $w \in KL$ admits a \emph{unique} decomposition $w = uv$ with $u \in K$ and $v \in L$. Given a class \Cs, \bsdp{\Cs} is the least class containing $\emptyset$ and $\{a\}$ for every $a \in A$, and closed under the following properties:
\begin{itemize}
\item \textbf{Intersection with \Cs:} if $K \in \bsdp{\Cs}$ and $L \in \Cs$, then $K \cap L \in \bsdp{\Cs}$.
\item \textbf{Disjoint union:} if $K,L \in \bsdp{\Cs}$ are disjoint, then $K \uplus L \in \bsdp{\Cs}$.
\item \textbf{Unambiguous product:} if $K,L \in \bsdp{\Cs}$ and $KL$ is unambiguous, then $KL \in \bsdp{\Cs}$.
\item \textbf{Kleene star for prefix codes with bounded synchronization delay:} if $K \in \bsdp{\Cs}$ is a prefix code with bounded synchronization delay, then $K^* \in \bsdp{\Cs}$.
\end{itemize}

\begin{remark}\label{rem:sfclos:bsd}
  Sch\"utzenberger proved in~\cite{schutzbd} that $\bsdp{\{\emptyset,A^*\}} = \sfp{\{\emptyset,A^*\}}$. His definition of \bsdp{\{\emptyset,A^*\}} was slightly less restrictive than ours: it does not require that the unions are disjoint and the concatenations unambiguous. It will be immediate from the correspondence with star-free closure that the two definitions are equivalent. \end{remark}

\begin{remark}
  This closure operation is different from standard ones. Instead of requiring that $\Cs \subseteq \bsdp{\Cs}$, we impose a stronger requirement: intersection with languages in \Cs is allowed. If we only asked that $\Cs \subseteq \bsdp{\Cs}$, we would get a weaker operation which does not correspond to star-free closure in general. For example, let $A = \{a,b\}$ and consider the class \md of Example~\ref{ex:allgroups}. Observe that $(aa)^* \in \bsdp{\md}$. Indeed, $\{a\} \in \bsdp{\md}$ has bounded synchronization delay, $(AA)^* \in \md$ and $(aa)^* = a^* \cap (AA)^*$. Yet, one may verify that $(aa)^*$ cannot be built from the languages of \md with union, concatenation and Kleene star applied to prefix codes with bounded synchronization delay.
\end{remark}

\subsection{Algebraic characterization of star-free closure}

We now reduce deciding membership for \sfp{\Cs} to computing \Cs-\emph{stutters}. Let us first define this new notion. Let \Cs be a \vari and $\alpha: A^* \to M$ be a morphism. A \emph{\Cs-stutter} for $\alpha$ is an element $s \in M$ such that for every \Cs-cover \Kb of $\alpha\inv(s)$, there exists $K\in\Kb$ satisfying $K\cap KK \neq \emptyset$. When $\alpha$ is understood, we simply speak of a \textit{}\Cs-stutter. Finally, we say that $\alpha$ is \emph{\Cs-aperiodic} when for every \Cs-stutter $s \in M$, we have $s^\omega = s^{\omega+1}$. The reduction is stated in the following theorem.

\begin{theorem}\label{thm:sfclos:carac}
  Let \Cs be a \vari and consider a regular language $L \subseteq A^*$. The following properties are equivalent:
\begin{enumerate}
  \item $L \in \sfp{\Cs}$.
  \item $L \in \bsdp{\Cs}$.
  \item The syntactic morphism of $L$ is \Cs-aperiodic.
  \end{enumerate}
\end{theorem}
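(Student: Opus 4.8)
The plan is to prove the cycle of implications $(2)\Rightarrow(1)\Rightarrow(3)\Rightarrow(2)$, which is the natural route since $(2)\Rightarrow(1)$ is the easy inclusion and $(3)\Rightarrow(2)$ is the substantive construction step.

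\smallskip\noindent\textbf{Step 1: $(2)\Rightarrow(1)$.} Here I would simply check that $\bsdp{\Cs}\subseteq\sfp{\Cs}$ by verifying that $\sfp{\Cs}$ is closed under all four generating operations of $\bsdp{\Cs}$ and contains the generators $\emptyset,\{a\}$. Containment of generators is immediate. Closure under intersection with $\Cs$, disjoint union and (unambiguous) product is clear since $\sfp{\Cs}$ is a \vari containing $\Cs$ and closed under concatenation. The only nontrivial point is closure under Kleene star for a prefix code $K$ with bounded synchronization delay. This is exactly the content of Sch\"utzenberger's theorem in the relativized form: if $K\in\sfp{\Cs}$ is such a code, then $K^*\in\sfp{\Cs}$. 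I would reprove this by reducing to the case $\Cs=\{\emptyset,A^*\}$ (plain star-free), noting that membership of $K^*$ in the star-free closure only uses the star-freeness of $K$ and the combinatorial synchronization-delay property, which are preserved; alternatively, one expresses $K^*$ via an aperiodic recognizer argument. I expect this to be a known lemma that can be cited or adapted directly from~\cite{schutzbd,DiekertW16a}.

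\smallskip\noindent\textbf{Step 2: $(1)\Rightarrow(3)$.} Here I would show that the set of regular languages whose syntactic morphism is \Cs-aperiodic is a \vari containing $\Cs$ and the singletons $\{a\}$, and closed under concatenation; then $\sfp{\Cs}$, being the least such class, is contained in it. The delicate parts are: (i) the notion ``syntactic morphism is \Cs-aperiodic'' must be translated into a property stable under the relevant operations --- for this I would work with an arbitrary recognizing morphism rather than the syntactic one, using the fact that if $\alpha$ recognizes $L$ and is \Cs-aperiodic then so is the syntactic morphism (a quotient of $\alpha$), since \Cs-stutters of a quotient pull back to \Cs-stutters upstairs (a \Cs-cover of the preimage of $t$ downstairs pulls back to a \Cs-cover upstairs because \Cs is quotient-closed, hence closed under the preimages involved); (ii) closure under Boolean operations follows because a single morphism into a product monoid recognizes all the relevant languages; (iii) closure under concatenation is the real work: given $\alpha\colon A^*\to M$ \Cs-aperiodic, one builds a morphism recognizing $L_1L_2$ (a ``concatenation'' or ``block product'' style construction on $M$) and must show it is again \Cs-aperiodic --- this is where one shows that any new \Cs-stutter $s$ in the larger monoid forces, via the covering definition, a \Cs-stutter in $M$, whence $s^\omega=s^{\omega+1}$.

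\smallskip\noindent\textbf{Step 3: $(3)\Rightarrow(2)$, the main obstacle.} This is the heart of the theorem and where I expect the real difficulty. Given a regular $L$ whose syntactic morphism $\alpha\colon A^*\to M$ is \Cs-aperiodic, I must build $L$ inside $\bsdp{\Cs}$. The strategy is induction on a suitable parameter of $M$ (e.g.\ on $|M|$, or on the $\Jrel$-depth together with the size of $\Hrel$-classes, in the spirit of Sch\"utzenberger's original argument and its relativizations in~\cite{schutzbd,DiekertW16a}). One writes $\alpha\inv(s)$ for each $s\in M$ as a combination, using intersections with languages from $\Cs$, disjoint unions, unambiguous products, and stars of prefix codes with bounded synchronization delay, of preimages under morphisms into strictly smaller monoids. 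The key uses of the hypothesis are: first, $\Cs$-aperiodicity tells us that the ``problematic'' elements $s$ with $s^\omega\neq s^{\omega+1}$ are \emph{not} \Cs-stutters, i.e.\ $\alpha\inv(s)$ admits a \Cs-cover by languages $K$ with $K\cap KK=\emptyset$; each such $K$, being ``stutter-free,'' can be refined into (or is already) a piece built from a prefix code with bounded synchronization delay over a smaller structure --- this is precisely what allows the controlled reintroduction of Kleene stars. Second, the intersection-with-$\Cs$ clause is what lets us cut the language down using the $\canec$-classes, which is essential (as the Remark after the definition of $\bsdp{\Cs}$ emphasizes). The main obstacle is organizing this induction so that the synchronization-delay bound is maintained at each step and so that the covers supplied by non-stutter elements genuinely translate into the code-plus-star decompositions; this requires a careful Sch\"utzenberger-style analysis of the linked pairs $(e,s)$ with $e$ idempotent, $es=s$, replacing the classical ``aperiodic'' hypothesis by the relativized \Cs-aperiodicity throughout. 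I would structure this as a sequence of lemmas isolating (a) the base case (trivial monoid), (b) the reduction for a fixed regular $\Jrel$-class, (c) the treatment of the ``top'' $\Jrel$-class via prefix codes, and (d) the assembly.
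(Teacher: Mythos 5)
Your outline of the three implications matches the paper's, and steps $(2)\Rightarrow(1)$ and $(1)\Rightarrow(3)$ are reasonable, though they take different routes from the paper's. For $(2)\Rightarrow(1)$ the paper gives an explicit star-free formula for $K^*$ (built from $A^*K^d$, $A^*K^{d+1}$ and the $K^h$ for $h\leq d$), rather than reducing to the unrelativized case. For $(1)\Rightarrow(3)$ you propose a ``closure class'' argument (show that the languages with \Cs-aperiodic syntactic morphism form a \vari containing \Cs and closed under concatenation), which shifts the burden to proving that a Sch\"utzenberger-product-style recognizer of $L_1L_2$ inherits \Cs-aperiodicity. The paper instead fixes $L$, replaces \Cs by a \emph{finite} sub-\vari \Ds such that $L\in\sfp{\Ds}$ (Fact~\ref{fct:sfclos:profinite}), and proves a pumping statement by structural induction on the star-free expression (Lemma~\ref{lem:sfclos:aper}): for $uu\caned u$ and $\ell$ large, $xu^\ell y\in K\Leftrightarrow xu^{\ell}uy\in K$. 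This sidesteps the algebraic product entirely and is both shorter and more elementary than what you sketch; your route is plausible but you have not indicated why the concatenation product preserves \Cs-aperiodicity, which is genuinely nontrivial work.

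For $(3)\Rightarrow(2)$ there is a real gap. You propose the classical Sch\"utzenberger-style induction: induct on $|M|$ or on \Jrel-depth, analyse linked pairs, and express $\alpha\inv(s)$ via preimages under morphisms into \emph{strictly smaller} monoids. But to make $M$ shrink in a way that gives purchase, the classical argument passes to a \emph{relabelled} alphabet (roughly $B\subseteq M\times A\times M$) and a morphism over $B^*$; this is exactly the step the paper calls out as unusable here, because for a general input \vari \Cs the classes $\bsdp{\Cs}$ and $\sfp{\Cs}$ are not robust under inverse images of free monoid morphisms (that closure is only available for $\mathbcal{lm}$-varieties, and the paper deliberately does not assume it — indeed the class \md is not a variety). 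If you instead try to stay inside $M$ by passing to quotients, the induction on $|M|$ does not decrease when you most need it to. The paper's actual argument (Proposition~\ref{prop:sfclos:fromapertostar}) fixes the morphism $\alpha$ and the alphabet throughout, and inducts on a completely different triple of parameters attached to a prefix code $P$ and a $1_M$-safe $\bsdp{\Cs}$-partition $\Hb$ of it: the size of $\alpha(P^+)$, the size of $\Hb$, and the size of $s\cdot\alpha(P^*)$. The decomposition is driven by choosing some $H\in\Hb$ that violates ``\Hb-stability'' of $s$, and splitting $P^*$ along the leftmost/rightmost occurrence of a factor in $H$, using that $(P\setminus H)^*H$ is again a prefix code with bounded synchronization delay. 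The base case (``$s$ is \Hb-stable'') is where \Cs-aperiodicity is actually used, via a finite sub-\vari $\Ds$ and the partition of $P^*$ by $\caned$-classes. This is the Wilke-style scheme the paper credits, and it is structurally unlike the \Jrel-class plan you give. To repair your plan you would need either to restrict \Cs so that relabelling is licit (which weakens the theorem), or to redesign the induction along the paper's lines.
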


Naturally, the characterization need not be effective: this depends on \Cs. Deciding whether a morphism is \Cs-aperiodic boils down to computing \Cs-stutters. Yet, this is possible under the hypotheses of Theorems~\ref{thm:sfclos:main} and~\ref{thm:sfclos:gmain}. First, if \Cs is a finite \vari, deciding whether an element is a \Cs-stutter is simple: there are finitely many \Cs-covers and we may check them all. If \Cs is a \vari of group languages, the question boils down to \Cs-separation as stated in the next lemma (proved in the appendix).

\begin{lemma}\label{lem:sfclos:grpstut}
  Let \Cs be a \vari of group languages and $\alpha: A^* \to M$~be a morphism. For all $s \in M$, $s$ is a \Cs-stutter if and only if $\{\veps\}$ is \textbf{not} \Cs-separable from~$\alpha\inv(s)$.
\end{lemma}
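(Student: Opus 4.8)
The plan is to prove both directions, using the key fact that for a \vari $\Cs$ of group languages, the language $\{\veps\}$ plays a special role: any $L \in \Cs$ containing $\veps$ must actually be all of $A^*$. Indeed, if $\alpha_L : A^* \to G$ is a morphism into a finite group recognizing $L$, then $\alpha_L(A^*)$ is a subgroup of $G$ (the image of a monoid morphism is a submonoid, and submonoids of finite groups are subgroups), so $\varepsilon \in L$ forces $1_G \in \alpha_L(L)$, and since $\alpha_L(A^*)$ is a subgroup, every element of $\alpha_L(A^*)$ is of the form $\alpha_L(w)$ with $w$ in the $\alpha_L$-class of a word mapping to $1_G$; a cleaner way: every word $w$ satisfies $w \cdot u \in \alpha_L^{-1}(1_G)$ for some $u$, but more simply one uses that $\varepsilon \in L$ and $L$ quotient-closed gives $w^{-1}L \ni \varepsilon$ hence... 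Let me instead argue via covers directly. I will isolate this as a preliminary observation: \emph{for any $\Cs$-cover $\Kb$ of a language containing $\varepsilon$, some $K \in \Kb$ contains $\varepsilon$, and then $K \cap KK \neq \emptyset$ automatically since $\varepsilon \in K$ implies $\varepsilon = \varepsilon\varepsilon \in KK$, so $\varepsilon \in K \cap KK$}; this handles much of the "only if" direction once we know $s = \alpha(\varepsilon)$-related elements behave well, but the real content is relating an arbitrary $s$ to $\varepsilon$-separability.

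First I would prove the "only if" direction: suppose $\{\veps\}$ is $\Cs$-separable from $\alpha^{-1}(s)$, say by $K_0 \in \Cs$ with $\varepsilon \in K_0$ and $K_0 \cap \alpha^{-1}(s) = \emptyset$; I must exhibit a $\Cs$-cover $\Kb$ of $\alpha^{-1}(s)$ with $K \cap KK = \emptyset$ for all $K \in \Kb$, witnessing that $s$ is \emph{not} a $\Cs$-stutter. The idea is to use the group structure: since $\Cs$ is a \vari of group languages, there is a morphism $\eta : A^* \to G$ into a finite group such that $K_0$ is a union of $\eta^{-1}(g)$'s, and we may assume $\eta$ refines enough structure. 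Then consider the cover of $\alpha^{-1}(s)$ by the nonempty sets $\alpha^{-1}(s) \cap \eta^{-1}(g)$ for $g \in G$ — each such set lies in $\Cs$ after we observe $\alpha^{-1}(s) \cap \eta^{-1}(g)$... wait, $\alpha^{-1}(s)$ need not be in $\Cs$. The correct move is: cover $\alpha^{-1}(s)$ by $\{\eta^{-1}(g) : g \in \eta(\alpha^{-1}(s))\}$, each piece in $\Cs$; I then need each $\eta^{-1}(g)$ with $g \in \eta(\alpha^{-1}(s))$ to satisfy $\eta^{-1}(g) \cap \eta^{-1}(g)\eta^{-1}(g) = \emptyset$, i.e. $g \neq g\cdot g$, i.e. $g \neq 1_G$. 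So I must choose $\eta$ so that $1_G \notin \eta(\alpha^{-1}(s))$; equivalently $\eta^{-1}(1_G) \cap \alpha^{-1}(s) = \emptyset$. Since $\eta^{-1}(1_G) \ni \varepsilon$ and $\eta^{-1}(1_G) \in \Cs$, this says precisely that $\eta^{-1}(1_G)$ is a separator — so the separator $K_0$ can be taken to be $\eta^{-1}(1_G)$ for the syntactic-type morphism of $K_0$ restricted appropriately, which works because the $\Cs$-class of $\varepsilon$ (under the canonical congruence if $\Cs$ were finite, or more carefully via the recognizer of $K_0$) maps to the identity. This gives the desired cover, so $s$ is not a $\Cs$-stutter.

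Conversely, for the "if" direction, suppose $s$ is not a $\Cs$-stutter, so there is a $\Cs$-cover $\Kb$ of $\alpha^{-1}(s)$ with $K \cap KK = \emptyset$ for every $K \in \Kb$. I want to conclude $\{\veps\}$ is $\Cs$-separable from $\alpha^{-1}(s)$. Take any $K \in \Kb$; since $K \cap KK = \emptyset$ and $K \in \Cs$ is a group language recognized by $\eta_K : A^* \to G_K$, the condition $K \cap KK = \emptyset$ combined with $K$ being a union of $\eta_K$-classes forces, for every $g \in \eta_K(K)$, that $g \notin \eta_K(K)\eta_K(K)$; in particular taking any $g \in \eta_K(K)$ we cannot have $g = g^2$ unless... hmm, more carefully: if $\varepsilon \in K$ then $1_{G_K} \in \eta_K(K)$, and $\varepsilon = \varepsilon\varepsilon \in KK$, contradicting $K \cap KK = \emptyset$; hence $\varepsilon \notin K$. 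So \emph{every} $K \in \Kb$ avoids $\varepsilon$. Now I need a single language in $\Cs$ containing $\varepsilon$ and disjoint from $\alpha^{-1}(s)$. Since $\alpha^{-1}(s) \subseteq \bigcup_{K \in \Kb} K$ and each $K \in \Cs$ avoids $\varepsilon$: take, for each $K$, the group morphism $\eta_K$ recognizing it; form the product $\eta = \prod_K \eta_K : A^* \to \prod_K G_K$; then $\eta^{-1}(1)$ is in $\Cs$ (closure of \vari under finite intersection), contains $\varepsilon$, and I claim it is disjoint from $\alpha^{-1}(s)$. If $w \in \alpha^{-1}(s)$ then $w \in K$ for some $K$, so $\eta_K(w) \in \eta_K(K)$; but $\eta_K(K) \not\ni 1_{G_K}$ since $\varepsilon \notin K$ and $K$ is a union of $\eta_K$-classes (if $\eta_K(w) = 1_{G_K} = \eta_K(\varepsilon)$ then $\varepsilon \in K$, contradiction). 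Hence $\eta_K(w) \neq 1_{G_K}$, so $\eta(w) \neq 1$, so $w \notin \eta^{-1}(1)$. Thus $\eta^{-1}(1)$ separates $\{\varepsilon\}$ from $\alpha^{-1}(s)$, as desired.

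The main obstacle I anticipate is bookkeeping around \emph{which} morphism recognizes a given language in $\Cs$ and ensuring that "$L \in \Cs$ contains $\varepsilon$" genuinely forces "$L$'s recognizing group morphism sends $\varepsilon$ to the identity and $L$ is a union of classes one of which is $\eta^{-1}(1)$" — this is where the hypothesis that $\Cs$ is a \vari of \emph{group} languages (hence every $L \in \Cs$ is recognized by some finite group morphism, and finite intersections stay in $\Cs$) is essential; for a general \vari the lemma is false (e.g.\ $\Cs = \sfp{\{\emptyset,A^*\}}$). I would also need to double-check the edge cases $\alpha^{-1}(s) = \emptyset$ (then $s$ is trivially not a stutter via the empty cover, and $\{\varepsilon\}$ is vacuously separable) and $s$ such that $\varepsilon \in \alpha^{-1}(s)$ (then $\{\varepsilon\}$ is not separable from $\alpha^{-1}(s)$, and indeed $s$ is a stutter by the preliminary observation). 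These special cases are quick but should be mentioned.
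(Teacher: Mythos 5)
Your proof is correct. For the direction ``$s$ is a \Cs-stutter $\Rightarrow$ $\{\veps\}$ is not \Cs-separable from $\alpha\inv(s)$'' (which you prove contrapositively), your argument is essentially the paper's: you cover $\alpha\inv(s)$ by the classes $\eta\inv(g)$ of a group morphism recognizing the separator, then observe that $\eta\inv(1_G)\subseteq K_0$ forces every class hitting $\alpha\inv(s)$ to correspond to a non-identity $g$, and a non-identity element of a group cannot be idempotent, so each piece satisfies $K\cap KK=\emptyset$. One precision you should make explicit: to know that every $\eta\inv(g)$ lies in \Cs, you should take $\eta$ to be the \emph{syntactic} morphism of $K_0$ (then each syntactic class is a finite Boolean combination of quotients of $K_0$, hence in \Cs since \Cs is a quotient-closed Boolean algebra); you gesture at this but the paper just invokes it cleanly.

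For the converse direction you take a genuinely different and more elaborate route. You assume $s$ is not a \Cs-stutter, pick a witnessing cover $\Kb$, observe that $\veps\notin K$ for every $K\in\Kb$ (else $\veps\in K\cap KK$), and then explicitly assemble a separator as $\eta\inv(1)$ where $\eta$ is a product of group morphisms recognizing the members of $\Kb$. This works, but the paper's argument for this direction is strikingly simpler and uses \textbf{no group structure at all}: given a \Cs-cover $\Kb$ of $\alpha\inv(s)$, set $H=\bigcup_{K\in\Kb}K\in\Cs$; since $\{\veps\}$ is not \Cs-separable from $\alpha\inv(s)$ and $\alpha\inv(s)\subseteq H$, the complement $A^*\setminus H\in\Cs$ cannot be a separator, so $\veps\in H$; hence some $K\in\Kb$ contains $\veps$, and then $\veps\in K\cap KK$. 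This costs one line and relies only on \Cs being a Boolean algebra. The asymmetry is worth noticing: the group hypothesis is only needed for the forward implication; the backward implication is purely Boolean. Your version reproves the same thing through the group machinery, which is sound but obscures where the hypothesis is actually doing work.

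Your edge-case remarks ($\alpha\inv(s)=\emptyset$, $\veps\in\alpha\inv(s)$) are correct but unnecessary once both directions are argued for arbitrary $s$; they fall out as special cases.
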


Altogether, we obtain the membership part in Theorems~\ref{thm:sfclos:carfinite} and~\ref{thm:sfclos:cargroup}. We conclude the section with an extended proof sketch for the most interesting direction in Theorem~\ref{thm:sfclos:carac}: $3) \Rightarrow 2)$ (a detailed proof for the two other directions is provided in appendix).

\begin{proof}[Proof of $3) \Rightarrow 2)$ in Theorem~\ref{thm:sfclos:carac}] Let \Cs be a \vari and \mbox{$\alpha: A^* \to M$} be a \Cs-aperiodic morphism. We show that all languages recognized by $\alpha$ belong to \bsdp{\Cs}.

Given $K \subseteq A^*$ and $s\in M$, we say that $K$ is \emph{$s$-safe} when $s\alpha(u)=s\alpha(v)$ for every $u,v \in K$. We extend this notion to sets of languages: such a set \Kb is $s$-safe when every $K \in \Kb$ is $s$-safe. We shall use $s$ as an induction parameter. Finally, given a language $P \subseteq A^*$, an \bsdp{\Cs}-partition of $P$ is a finite partition of $P$ into languages of \bsdp{\Cs}. 

  \begin{proposition}\label{prop:sfclos:fromapertostar}
    Let $P \subseteq A^+$ be a prefix code with bounded synchronization delay. Assume that there exists a $1_M$-safe \bsdp{\Cs}-partition of $P$. Then, for every $s \in M$, there exists an $s$-safe \bsdp{\Cs}-partition of $P^*$.
  \end{proposition}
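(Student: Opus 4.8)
The plan is to exploit the unique-first-block structure of the prefix code $P$ in order to reduce the statement to itself with a smaller parameter, and then to isolate a single residual case in which the bounded synchronization delay of $P$ and the \Cs-aperiodicity of $\alpha$ are genuinely needed. Write the given partition as $P = P_1 \uplus \cdots \uplus P_m$; since each $P_i$ is $1_M$-safe, there is $t_i \in M$ with $\alpha(u) = t_i$ for every $u \in P_i$. As $P$ is a prefix code, every nonempty $w \in P^*$ has a unique first block, which lies in exactly one $P_i$, whence
\[
P^* \;=\; \{\veps\} \;\uplus\; \biguplus_{i=1}^m P_i P^*,
\]
each product $P_i P^*$ being unambiguous. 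Since $P = \biguplus_i P_i$ lies in \bsdp{\Cs} and is a prefix code with bounded synchronization delay, we get $P^* \in \bsdp{\Cs}$, hence $P_i P^* \in \bsdp{\Cs}$ for every $i$; moreover $s\alpha(uv) = (st_i)\alpha(v)$ whenever $u \in P_i$ and $v \in P^*$. Consequently, if for each $i$ there is an $(st_i)$-safe \bsdp{\Cs}-partition of $P^*$, then prefixing each of its members by $P_i$ and throwing in the singleton $\{\veps\}$ yields an $s$-safe \bsdp{\Cs}-partition of $P^*$ (each $P_i K$ is again an unambiguous product of languages of \bsdp{\Cs}, and $s\alpha$ is constant on it). So it suffices to produce, for every $i$, an $(st_i)$-safe \bsdp{\Cs}-partition of $P^*$.

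I would set this up as an induction on $|sM|$. When $st_i M \subsetneq sM$, the induction hypothesis with parameter $st_i$ already provides the required partition, and the base case $|sM| = 1$ is immediate (then $s\alpha(w) = s$ for all $w \in P^*$, so $P^* \in \bsdp{\Cs}$ is itself $s$-safe). The difficulty is concentrated in the \emph{recurrent} indices, those with $st_iM = sM$: for such an $i$ the first-block reduction does not decrease the parameter, and iterating it through recurrent blocks only leads to the sublanguage $R \subseteq P^*$ consisting of the words all of whose $P^*$-prefixes $w'$ satisfy $s\alpha(w')M = sM$ — along $R$ the running product never leaves the recurrent set $\{x \in M : xM = sM\}$. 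What remains to be done is to produce an $s$-safe \bsdp{\Cs}-partition of $R$ and to splice it back with the parts of $P^*$ that eventually exit $\{x \in M : xM = sM\}$; the latter do fall under the induction hypothesis, once one manages to split those parts according to their $\alpha$-image while staying inside \bsdp{\Cs}.

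The real obstacle is the recurrent language $R$. Along $R$, after a bounded number of blocks the value $s\alpha(w)$ starts circulating under right multiplications inside a fixed structured part of $M$, and a partition ``by the value of $s\alpha(w)$'' would amount to counting blocks modulo the order of an internal group — which is out of reach inside \bsdp{\Cs}, since (for instance) a power of a prefix code with bounded synchronization delay need not itself be a prefix code with bounded synchronization delay, so the usual counting languages are unavailable. This is exactly where \Cs-aperiodicity of $\alpha$ must be invoked. One identifies the relevant ``loop'' elements $s' \in M$ and shows they are \Cs-stutters: here the prefix-code identity $P_i \cap P_i P_i = \emptyset$ together with the bounded synchronization delay $d$ of $P$ is what allows one to manufacture, from the languages of \Cs, covers of $\alpha\inv(s')$ all of whose members $K$ satisfy $K \cap KK = \emptyset$. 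Thus either $s'$ is not a \Cs-stutter, and such a cover can be intersected into the construction (using that \bsdp{\Cs} is closed under intersection with \Cs), or $s'$ is a \Cs-stutter and then $s'^\omega = s'^{\omega+1}$, which forces the internal group to be trivial and hence $s\alpha(w)$ to be eventually constant along $R$. In either case $R$ admits a finite $s$-safe \bsdp{\Cs}-partition: finitely many pieces record the boundedly many $P^*$-prefixes needed before stabilization, a last piece absorbs everything beyond, and each piece is assembled from the $P_i$ by unambiguous products and Kleene stars on suitably regrouped prefix codes with bounded synchronization delay, the delay $d$ being used to keep the regrouped languages of that form. I expect this recurrent case to be essentially the only hard step: the reduction via first blocks and the descending induction on $|sM|$ are routine, whereas keeping every regrouped language inside \bsdp{\Cs} (arbitrary powers of prefix codes with bounded synchronization delay are in general not such codes) and pinning down exactly which elements are \Cs-stutters, so that \Cs-aperiodicity applies, are the delicate points.
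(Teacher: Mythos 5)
Your first-block decomposition and the reduction to the recurrent case are reasonable observations, but the proposal has a genuine gap precisely where you yourself flag ``the only hard step'': the recurrent sublanguage $R$ is never actually handled, and the sketch you give for it does not assemble into a proof.

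Concretely, three things are missing. First, your induction is on the single parameter $|sM|$, so when every index $i$ is recurrent ($st_iM = sM$) the first-block reduction never decreases that parameter and the recursion does not terminate; you acknowledge this by introducing $R$, but you never produce a terminating scheme for it. The paper instead inducts on \emph{three} parameters --- the size of $\alpha(P^+)$, the size of the partition $\Hb$ of $P$, and the size of $s\cdot\alpha(P^*)$ --- and its case split is governed not by a condition in all of $M$ (your $st_iM = sM$) but by $\Hb$-stability, namely $s\cdot\alpha(P^*H) = s\cdot\alpha(P^*)$ for all $H\in\Hb$, which is an equality inside $\alpha(P^*)$ and is the hypothesis under which the base case actually closes. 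Second, your treatment of the base case leans on the claim that if the relevant loop element is a $\Cs$-stutter then $s'^\omega = s'^{\omega+1}$ ``forces the internal group to be trivial and $s\alpha(w)$ to be eventually constant along $R$''; this inference is not justified, and in fact the paper's base case does not go through group triviality at all. What it does is: extract a finite $\Ds\subseteq\Cs$ with $\alpha$ still $\Ds$-aperiodic, partition $P^*$ into its intersections with $\caned$-classes (these are in $\bsdp{\Cs}$ by intersection with $\Cs$), and prove $s$-safety via the stability hypothesis, which yields $sqe=sq$ for every $q$ and every idempotent $e$ in $\alpha(P^*)$, combined with the aperiodicity $s'^\omega = s'^{\omega+1}$. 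Your sketch never produces the concrete partition nor the analogue of this $sqe=sq$ identity. Third, you correctly observe that powers of a prefix code with bounded synchronization delay need not again be such, and that this threatens to put the pieces outside $\bsdp{\Cs}$ --- but you do not say how to circumvent it. The paper does, via the two sub-cases of the inductive step: in one it fixes a block $H$ with $s\cdot\alpha(P^*H)\subsetneq s\cdot\alpha(P^*)$ and decomposes along the \emph{leftmost} $H$-factor (decreasing the third parameter), in the other it decomposes along the \emph{rightmost} $H$-factor using that $(P\setminus H)^*H$ is again a prefix code of bounded synchronization delay (Fact~\ref{fct:sfclos:sprefnest}, decreasing the first parameter). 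These constructions are exactly the missing content of your ``regrouped languages'' remark. In short: the reduction you set up is fine, but the recurrent case --- which you correctly identify as the crux --- is where the proof must be, and it is absent.
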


  We first apply Proposition~\ref{prop:sfclos:fromapertostar} to conclude the main argument. We show that every language recognized by $\alpha$ belongs to \bsdp{\Cs}. By definition, \bsdp{\Cs} is closed under disjoint union. Hence, it suffices to show that $\alpha\inv(t) \in \bsdp{\Cs}$ for every $t \in M$. We fix $t \in M$ for the proof.

  Clearly, $A \subseteq A^+$ is a prefix code with bounded synchronization delay and  $\{\{a\} \mid a \in A\}$ is a $1_M$-safe \bsdp{\Cs}-partition of $A$. Hence, Proposition~\ref{prop:sfclos:fromapertostar} (applied in the case $s= 1_M$) yields a $1_M$-safe \bsdp{\Cs}-partition \Kb of $A^*$. One may verify that $\alpha\inv(t)$ is the disjoint union of all $K \in \Kb$ intersecting $\alpha\inv(t)$. Hence, $\alpha\inv(t) \in \bsdp{\Cs}$ which concludes the main argument.

  \medskip

  It remains to prove Proposition~\ref{prop:sfclos:fromapertostar}. We let $P \subseteq A^*$ be a prefix code with bounded synchronization delay, \Hb a $1_M$-safe \bsdp{\Cs}-partition of $P$ and $s \in M$. We need to build an \bsdp{\Cs}-partition \Kb of $P^*$ such that every $K \in \Kb$ is $s$-safe. We proceed by induction on the three following parameters listed by order of importance: $(1)$ the size of $\alpha(P^+) \subseteq M$, $(2)$ the size of $\Hb$ and $(3)$ the size of $s \cdot \alpha(P^*) \subseteq M$.  We distinguish two cases depending on the following property of $s$ and $\Hb$. We say that \emph{$s$ is \Hb-stable} when the following holds:
  \begin{equation}\label{eq:sfclos:mostable}
    \text{for every $H \in \Hb$,} \quad  s \cdot \alpha(P^*) = s \cdot \alpha(P^*H).
  \end{equation}
  The base case happens when $s$ is \Hb-stable. Otherwise, we use induction on our parameters.

  \smallskip
  \noindent
  \textbf{Base case: $s$ is \Hb-stable.} Since $\alpha$ is \Cs-aperiodic, we have the following simple fact.

  \begin{fact}\label{fct:caracfin}
    There is a \emph{finite} \vari $\Ds \subseteq \Cs$ such that $\alpha$ is \Ds-aperiodic.
  \end{fact}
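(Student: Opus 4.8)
The plan is to exploit that the failure of an element to be a $\Cs$-stutter is certified by a single finite object — one $\Cs$-cover — so that only finitely many languages of $\Cs$ are genuinely needed to witness the $\Cs$-aperiodicity of the fixed finite monoid $M$. Concretely, I would first isolate the ``bad'' elements: let $B = \{s \in M \mid s^\omega \neq s^{\omega+1}\}$, which is finite since $M$ is finite. Because $\alpha$ is \Cs-aperiodic, no $s \in B$ is a \Cs-stutter, so for each $s \in B$ there is a \Cs-cover $\Kb_s$ of $\alpha\inv(s)$ such that $K \cap KK = \emptyset$ for every $K \in \Kb_s$. Put $\Fs = \bigcup_{s \in B} \Kb_s$, a finite set of languages of~\Cs.

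Next I would manufacture the finite sub-\vari. Let $\Gs$ be the set of all two-sided quotients $u\inv L v\inv$ with $L \in \Fs$ and $u,v \in A^*$. Each $L \in \Fs$ is regular, hence has finitely many distinct quotients (the quotient $u\inv L v\inv$ depends only on the images of $u$ and $v$ under the syntactic morphism of $L$), so $\Gs$ is finite; moreover $\Gs \subseteq \Cs$ since \Cs is quotient-closed. Let $\Ds$ be the Boolean algebra generated by $\Gs \cup \{\emptyset, A^*\}$. It is finite, being the Boolean algebra generated by a finite family of languages, and it is quotient-closed because quotients commute with the Boolean operations and $w\inv(u\inv L v\inv) = (uw)\inv L v\inv \in \Gs$ (symmetrically on the right). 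Thus $\Ds$ is a finite \vari, and $\Ds \subseteq \Cs$.

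It then remains to check that $\alpha$ is \Ds-aperiodic. Let $s$ be a \Ds-stutter; I claim $s \notin B$. Indeed, if $s \in B$ then $\Kb_s$ is a cover of $\alpha\inv(s)$ all of whose members lie in $\Fs \subseteq \Ds$, hence a \Ds-cover, and no member $K$ of it satisfies $K \cap KK \neq \emptyset$ by construction — contradicting that $s$ is a \Ds-stutter. So every \Ds-stutter $s$ satisfies $s^\omega = s^{\omega+1}$, i.e.\ $\alpha$ is \Ds-aperiodic.

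The only subtle point — the ``obstacle'', such as it is — is keeping the sub-variety $\Ds$ both finite and genuinely quotient-closed at the same time: one must close the finite witness family $\Fs$ under quotients \emph{before} generating the Boolean algebra, which stays finite precisely because regular languages have finitely many quotients, and then the generated Boolean algebra is automatically quotient-closed. Note also the direction of reasoning matters: shrinking the class can only create \emph{more} stutters, so it is essential that $\Ds$ retain the specific witnessing covers $\Kb_s$; everything else is routine bookkeeping.
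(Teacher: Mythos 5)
Your proof is correct and follows essentially the same approach as the paper: collect, for finitely many elements of $M$, a witnessing $\Cs$-cover certifying failure to be a $\Cs$-stutter, and then enclose all these witnesses in a finite sub-\vari. The only difference is cosmetic — the paper gathers witnesses for all non-stutters and invokes a cited lemma (Lemma~17 of~\cite{pzgenconcat}) for the existence of the finite sub-\vari, whereas you restrict to the smaller set $B=\{s : s^\omega\neq s^{\omega+1}\}$ and carry out the quotient-closure-then-Boolean-closure construction explicitly.
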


  Since \Ds is finite, we may consider the associated canonical equivalence \caned over $A^*$. We let $\Kb = \{P^* \cap D \mid D \in \dclac\}$. Clearly, \Kb is a partition of $P^*$. Let us verify that it only contains languages in \bsdp{\Cs}. We have $P \in \bsdp{\Cs}$: it is the disjoint union of all languages in the \bsdp{\Cs}-partition \Hb of $P$. Moreover, $P^* \in \bsdp{\Cs}$ since $P$ is a prefix code with bounded synchronization delay. Hence, $P^* \cap D \in \bsdp{\Cs}$ for every $D \in \dclac$ since $D \in \Ds \subseteq \Cs$. Therefore, it remains to show that every language $K \in \Kb$ is $s$-safe. This is a consequence of the following lemma which is proved using the hypothesis~\eqref{eq:sfclos:mostable} that $s$ is \Hb-stable.

  \begin{lemma}\label{lem:sfclos:godcase1}
    For every $u,v \in P^*$ such that $u \caned v$, we have $s \alpha(u) = s \alpha(v)$.
  \end{lemma}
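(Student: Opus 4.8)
The plan is to use $\Hb$-stability to make $\alpha(P^*)$ act by bijections on a fixed finite subset of $M$, and then invoke $\Ds$-aperiodicity to trivialize that action on the part relevant to the statement. Set $X = s\,\alpha(P^*) \subseteq M$, so that $s \in X$ and $X$ is finite. Since $\Hb$ is a $1_M$-safe partition of $P$, the morphism $\alpha$ is constant on each block $H \in \Hb$, say with value $m_H$, and $\alpha(P^*)$ is the submonoid of $M$ generated by $\{m_H \mid H \in \Hb\}$. The first step is to observe that~\eqref{eq:sfclos:mostable}, rewritten via $1_M$-safety, says exactly that $X\,m_H = X$ for every $H \in \Hb$; since the $m_H$ generate $\alpha(P^*)$, this gives $X\,\alpha(w) = X$ for all $w \in P^*$, and hence (as $X$ is finite) right multiplication by $\alpha(w)$ is a bijection of $X$ for every $w \in P^*$. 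Here the only property of $P$ used is that it is a code, so that $\alpha(P^*) = \langle \alpha(P) \rangle$; bounded synchronization delay plays no role in this lemma.

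The second step, which I expect to be the crux, is the claim that for every $w \in P^*$ with $w \caned w^2$, right multiplication by $\alpha(w)$ is the \emph{identity} on $X$. The key point is to recognize that such a $w$ forces $\alpha(w)$ to be a $\Ds$-stutter: given any $\Ds$-cover $\Kb$ of $\alpha\inv(\alpha(w))$, one picks $K \in \Kb$ with $w \in K$, and since $K \in \Ds$ and $\Ds$ is a finite \vari, $K$ is a union of $\caned$-classes, so $[w]_{\Ds} \subseteq K$; then $w \caned w^2$ gives $w^2 \in K \cap KK$. As $\alpha$ is $\Ds$-aperiodic (Fact~\ref{fct:caracfin}), we get $\alpha(w)^\omega = \alpha(w)^{\omega+1}$, and combined with the first step --- which makes this self-map of $X$ invertible --- this relation forces it to be the identity.

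The third step is a routine pumping argument reducing the lemma to the second step. Given $u,v \in P^*$ with $u \caned v$, put $\omega' = \omega(\dclac)$ and consider $w_1 = u\,v^{\omega'-1}$ and $w_2 = v^{\omega'}$, both in $P^*$. Using that $\caned$ is a congruence and $u \caned v$, one checks that $[w_1]_{\Ds} = [v]_{\Ds}^{\omega'} = [w_2]_{\Ds}$, an idempotent of \dclac, whence $w_1 \caned w_1^2$ and $w_2 \caned w_2^2$. Applying the second step to $w_1$ and using $s \in X$ gives $s\,\alpha(u)\,\alpha(v)^{\omega'-1} = s$; multiplying on the right by $\alpha(v)$ turns this into $s\,\alpha(u)\,\alpha(v)^{\omega'} = s\,\alpha(v)$, and since $\alpha(v)^{\omega'} = \alpha(w_2)$ fixes $X$ pointwise (the second step for $w_2$) while $s\,\alpha(u) \in X$, the left-hand side is simply $s\,\alpha(u)$. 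Hence $s\,\alpha(u) = s\,\alpha(v)$, as desired. The only delicate point in the whole argument is the bridge in the second step between the combinatorial condition $w \caned w^2$ and the definition of a $\Ds$-stutter; everything else is bookkeeping with the bijective action of $\alpha(P^*)$ on $X$.
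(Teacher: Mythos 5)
Your proof is correct and takes essentially the same route as the paper's: both use $\Hb$-stability to show that $\alpha(P^*)$ acts on the right on $X = s\alpha(P^*)$ by bijections (your step~1 is the preliminary claim inside the proof of Fact~\ref{fct:sfclos:icarbase}), then combine $\Ds$-aperiodicity with that invertibility to show that $\alpha(w)$ acts as the identity on $X$ whenever $w \caned w^2$, and conclude by a pumping argument. The differences are only organizational --- the paper routes the ``identity action'' step through the statement that idempotents of $\alpha(P^*)$ fix $X$ pointwise (Fact~\ref{fct:sfclos:icarbase}), whereas you deduce it directly from bijectivity applied to $\alpha(w)^\omega = \alpha(w)^{\omega+1}$, and your auxiliary words $uv^{\omega'-1}, v^{\omega'}$ replace the paper's $u^p, vu^{p-1}$ harmlessly.
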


  \smallskip
  \noindent
  \textbf{Inductive step: $s$ is not \Hb-stable.} By hypothesis, we know that~\eqref{eq:sfclos:mostable} does not hold. Therefore, we get some $H \in \Hb$ such that the following \textbf{strict} inclusion holds,
  \begin{equation}\label{eq:sfclos:godinduc}
    s \cdot \alpha(P^*H) \subsetneq s \cdot \alpha(P^*).
  \end{equation}
  We fix this language $H \in \Hb$ for the remainder of the proof. The following lemma is proved by induction on our second parameter (the size of $\Hb$).

  \begin{lemma}\label{lem:sfclos:alphind}
    There exists a $1_M$-safe \bsdp{\Cs}-partition \Ub of $(P \setminus H)^*$.
  \end{lemma}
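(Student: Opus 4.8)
I need to prove Lemma~\ref{lem:sfclos:alphind}: there exists a $1_M$-safe \bsdp{\Cs}-partition \Ub of $(P \setminus H)^*$, where $H$ is one of the blocks of the $1_M$-safe \bsdp{\Cs}-partition \Hb of $P$. The setup is that I'm proving Proposition~\ref{prop:sfclos:fromapertostar} by induction on three parameters — (1) $|\alpha(P^+)|$, (2) $|\Hb|$, (3) $|s\cdot\alpha(P^*)|$ — and now I'm in the inductive step where $s$ is not \Hb-stable.

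**Key observation.** $P \setminus H$ is still a prefix code (removing words from a prefix code keeps it a prefix code), and it's a subset of $P \subseteq A^+$. Does it still have bounded synchronization delay? Hmm, a sub-prefix-code of a prefix code with bounded synchronization delay... I should check: if $P$ has synchronization delay $d$, and $K \subseteq P$, then $K$ also has synchronization delay $d$. Indeed, suppose $uvw \in K^+$ and $v \in K^d$. Then $uvw \in P^+$ and $v \in P^d$, so $uv \in P^+$. But I need $uv \in K^+$. Since $uvw \in K^+$ and $K$ is a prefix code (hence a code), $uvw$ has a unique factorization into $K$-words; and $uv \in P^+$ together with uniqueness of $P$-factorization (for $uvw$, whose $P$-factorization refines... wait, actually the $K$-factorization of $uvw$ IS a $P$-factorization, so by uniqueness it's THE $P$-factorization). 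Since $uv$ is a prefix of $uvw$ that is in $P^+$, $uv$ is the concatenation of an initial segment of that factorization, hence $uv \in K^+$. Good, so $P \setminus H$ is a prefix code with bounded synchronization delay.

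**Using the induction hypothesis.** The natural move is to apply Proposition~\ref{prop:sfclos:fromapertostar} itself, but with $P$ replaced by $P \setminus H$ and $s$ replaced by $1_M$. For this I need: (a) $P \setminus H$ is a prefix code with bounded synchronization delay — done above; (b) there is a $1_M$-safe \bsdp{\Cs}-partition of $P \setminus H$; (c) the induction is well-founded, i.e., the triple of parameters for $(P \setminus H, 1_M)$ is strictly smaller than the triple for $(P, s)$. For (b): take $\Hb \setminus \{H\}$ — this is a partition of $P \setminus H$ into \bsdp{\Cs}-languages, and it's $1_M$-safe because \Hb was. For (c): $\alpha((P \setminus H)^+) \subseteq \alpha(P^+)$; if this inclusion is strict we're done by parameter (1). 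If it's an equality — this is where I need to think — then I compare parameter (2): the partition of $P \setminus H$ I'm feeding in, namely $\Hb \setminus \{H\}$, has size $|\Hb| - 1 < |\Hb|$. So parameter (2) strictly decreases while parameter (1) stays the same. So the triple $(|\alpha((P\setminus H)^+)|, |\Hb \setminus \{H\}|, \cdot)$ is lexicographically smaller than $(|\alpha(P^+)|, |\Hb|, |s\cdot\alpha(P^*)|)$ regardless of the third coordinate. Hence the induction hypothesis applies and yields the desired $1_M$-safe \bsdp{\Cs}-partition \Ub of $(P \setminus H)^*$.

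**The main obstacle and the plan.** The core of the argument is really just (i) verifying the sub-prefix-code facts (prefix code, bounded synchronization delay, and that $\Hb \setminus \{H\}$ is a $1_M$-safe \bsdp{\Cs}-partition of $P \setminus H$), and (ii) checking the induction parameters decrease. The slightly delicate point — the "main obstacle" — is making sure the induction in Proposition~\ref{prop:sfclos:fromapertostar} is set up so that this recursive call is legitimate: we must apply the *statement* of Proposition~\ref{prop:sfclos:fromapertostar} to the smaller instance $(P\setminus H, 1_M)$, and confirm that its parameter triple is strictly below $(P,s)$'s in the lexicographic order on $(1),(2),(3)$. This works precisely because either parameter $(1)$ strictly drops, or it stays equal and parameter $(2)$ strictly drops (since we pass to the strictly smaller partition $\Hb\setminus\{H\}$); in neither case does parameter $(3)$ matter. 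I'd write it as follows.

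\medskip

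\begin{proof}[Proof of Lemma~\ref{lem:sfclos:alphind}]
  We apply Proposition~\ref{prop:sfclos:fromapertostar} to the prefix code $P \setminus H$ and the element $1_M \in M$. We must check that the hypotheses hold and that this application is legitimate with respect to our induction on the three parameters.

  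First, $P \setminus H \subseteq A^+$ is a prefix code with bounded synchronization delay. It is clearly a prefix code, as a subset of the prefix code $P$. Moreover, if $P$ has synchronization delay $d$, then so does $P \setminus H$: given $u,v,w \in A^*$ with $uvw \in (P \setminus H)^+$ and $v \in (P \setminus H)^d$, we have $uvw \in P^+$ and $v \in P^d$, hence $uv \in P^+$; since the unique factorization of $uvw$ over $P$ is in fact a factorization over $P \setminus H$ (because $uvw \in (P \setminus H)^+$), and $uv$ is a prefix of $uvw$ lying in $P^+$, it follows that $uv$ is a prefix of this factorization, so $uv \in (P \setminus H)^+$.

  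Next, $\Hb \setminus \{H\}$ is a $1_M$-safe \bsdp{\Cs}-partition of $P \setminus H$: it partitions $P \setminus H$ since \Hb partitions $P$, every language in it belongs to \bsdp{\Cs} and is $1_M$-safe since \Hb has these properties.

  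It remains to verify that the triple of induction parameters associated with the instance $(P \setminus H,\, \Hb \setminus \{H\},\, 1_M)$ is strictly smaller, in the lexicographic order on parameters $(1)$, $(2)$, $(3)$, than the triple $(|\alpha(P^+)|,\, |\Hb|,\, |s \cdot \alpha(P^*)|)$ associated with $(P,\Hb,s)$. Since $P \setminus H \subseteq P$, we have $\alpha((P \setminus H)^+) \subseteq \alpha(P^+)$, so parameter $(1)$ does not increase. If this inclusion is strict, parameter $(1)$ strictly decreases and we are done. Otherwise parameter $(1)$ is unchanged, and then parameter $(2)$ strictly decreases, as $|\Hb \setminus \{H\}| = |\Hb| - 1 < |\Hb|$ (recall $H \in \Hb$). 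In both cases the induction hypothesis applies to $(P \setminus H, 1_M)$ and yields a $1_M$-safe \bsdp{\Cs}-partition \Ub of $(P \setminus H)^*$, as desired.
\end{proof}
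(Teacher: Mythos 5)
Your proof is correct and follows the same route as the paper's: pass to $P\setminus H$ with the partition $\Hb\setminus\{H\}$ and the element $1_M$, note that this is still a prefix code with bounded synchronization delay (you reprove this inline; the paper cites it as Fact~\ref{fct:sfclos:sprefsub}), and observe that parameter (1) has not increased while parameter (2) strictly drops, so the induction hypothesis in Proposition~\ref{prop:sfclos:fromapertostar} applies. Your explicit case split on whether parameter (1) decreases or stays equal is a slightly more verbose way of stating the lexicographic comparison, but the substance is identical.
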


  We fix the partition \Ub of $(P \setminus H)^*$ given by Lemma~\ref{lem:sfclos:alphind} and distinguish two independent sub-cases. Since $H \subseteq P$ (as $H$ is an element of the partition \Hb of $P$), we have $\alpha(P^*H) \subseteq \alpha(P^+)$. We use a different argument depending on whether this inclusion is strict or not.

  \smallskip
  \noindent
  \textbf{Sub-case~1: $\alpha(P^*H) = \alpha(P^+)$.} Since $H$ is $1_M$-safe by hypothesis, there exists $t \in M$~such that $\alpha(H) = \{t\}$. Similarly, since every $U \in \Ub$ is $1_M$-safe, there exists $r_U \in M$ such that $\alpha(U) = \{r_U\}$. The construction of \Kb is based on the next lemma which is proved using~\eqref{eq:sfclos:godinduc}, the hypothesis of Sub-case~1 and induction on our third parameter (the size of $s \cdot \alpha(P^*) \subseteq M$).

  \begin{lemma}\label{lem:sfclos:sc1carac}
    For every $U \in \Ub$, there exists an $sr_Ut$-safe \bsdp{\Cs}-partition $\Wb_{U}$ of $P^*$.
  \end{lemma}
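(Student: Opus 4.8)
The plan is to derive $\Wb_{U}$ directly from the induction hypothesis of Proposition~\ref{prop:sfclos:fromapertostar}, instantiated at the \emph{same} prefix code $P$ and the \emph{same} $1_M$-safe \bsdp{\Cs}-partition $\Hb$ of $P$, but with the element $sr_Ut \in M$ playing the role of $s$. Since $P$ and $\Hb$ are unchanged, the first two induction parameters, $|\alpha(P^+)|$ and $|\Hb|$, stay the same; so the only thing to verify before invoking the induction hypothesis is that the third parameter strictly decreases, namely that $sr_Ut \cdot \alpha(P^*) \subsetneq s \cdot \alpha(P^*)$. Once this strict inclusion is in hand, Proposition~\ref{prop:sfclos:fromapertostar} (used with a lexicographically smaller triple of parameters) delivers an $sr_Ut$-safe \bsdp{\Cs}-partition $\Wb_{U}$ of $P^*$, which is precisely the claim.

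For the strict inclusion I would argue entirely inside $M$. Write $N = \alpha(P^*)$; this is a submonoid of $M$ because $P^*P^* = P^*$, and $\alpha(P^*H) = Nt$ because $\alpha(H) = \{t\}$. The hypothesis of Sub-case~1 then says $Nt = \alpha(P^*H) = \alpha(P^+)$. Since $H \subseteq P$, every word of $P^*HP^*$ already lies in $P^+$, hence $NtN = \alpha(P^*HP^*) \subseteq \alpha(P^+) = Nt$. As $U \subseteq (P\setminus H)^* \subseteq P^*$ we have $r_U = \alpha(U) \in N$, so
\[
 sr_Ut \cdot \alpha(P^*) \;=\; s\,(r_Ut)\,N \;\subseteq\; s\,(Nt)\,N \;=\; s\,(NtN) \;\subseteq\; s\,(Nt) \;=\; s \cdot \alpha(P^*H).
\]
By~\eqref{eq:sfclos:godinduc}, $s \cdot \alpha(P^*H) \subsetneq s \cdot \alpha(P^*)$, and the desired strict inclusion follows at once.

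The proof is short, and no step is really an obstacle in the technical sense; the only care needed is bookkeeping. One must be sure that applying Proposition~\ref{prop:sfclos:fromapertostar} to the data $(P,\Hb,sr_Ut)$ is a \emph{legitimate} appeal to the induction hypothesis — which relies on the fact that keeping $P$ and $\Hb$ leaves the first two parameters literally unchanged, while the displayed chain of inclusions makes the third one \emph{strictly} smaller. The genuine mathematical content is concentrated in the Sub-case~1 equality $\alpha(P^*H) = \alpha(P^+)$, which turns into the absorption property $NtN \subseteq Nt$, together with the strictness coming from~\eqref{eq:sfclos:godinduc}; everything else is routine monoid manipulation.
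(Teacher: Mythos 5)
Your proof is correct and follows essentially the same route as the paper: both reduce to showing the strict inclusion $sr_Ut\,\alpha(P^*) \subsetneq s\,\alpha(P^*)$ via the chain $sr_Ut\,\alpha(P^*) \subseteq s\,\alpha(P^*HP^*) \subseteq s\,\alpha(P^+) = s\,\alpha(P^*H) \subsetneq s\,\alpha(P^*)$, using $r_U \in \alpha(P^*)$, $H \subseteq P$, the Sub-case~1 equality, and~\eqref{eq:sfclos:godinduc}, and then invoke Proposition~\ref{prop:sfclos:fromapertostar} for $sr_Ut$ with the first two induction parameters unchanged. Your packaging of the argument as an absorption property $NtN \subseteq Nt$ of the submonoid $N = \alpha(P^*)$ is a harmless notational variant of the same computation.
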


  We are ready to define the partition \Kb of $P^*$. Using Lemma~\ref{lem:sfclos:sc1carac}, we define,
  \[
    \Kb = \Ub \cup \bigcup_{U \in \Ub} \{UHW \mid W \in \Wb_U\}
  \]
  It remains to show that \Kb is an $s$-safe \bsdp{\Cs}-partition of~$P^*$. First, \Kb is a partition of~$P^*$ since $P$ is a prefix code and $H \subseteq P$. Indeed, every word $w \in P^*$ admits a \emph{unique} decomposition $w = w_1 \cdots w_n$ with $w_1,\dots,w_n \in P$. If no factor $w_i$ belongs to $H$, then $w \in (P \setminus H)^*$ and $w$ belongs to some unique $U \in \Ub$. Otherwise, let $w_i$ be the leftmost factor such that $w_i \in H$. Thus, $w_1 \cdots w_{i-1} \in (P \setminus H)^*$, which also yields a unique $U \in \Ub$ such that $w_1 \cdots w_{i-1} \in U$ and $w_{i+1} \cdots w_n \in P^*$ which yields a unique $W \in \Wb_U$ such that $w_{i+1} \cdots w_n \in W$. Thus, $w \in UHW$ which is an element of \Kb (the only one containing $w$).

  Moreover, every $K \in \Kb$ belongs to \bsdp{\Cs}. If $K \in \Ub$, this is immediate by definition of \Ub in Lemma~\ref{lem:sfclos:alphind}. Otherwise, $K = UHW$ with $U \in \Ub$ and $W \in \Wb_U$. We know that $U,H,W \in \bsdp{\Cs}$ by definition. Moreover, one may verify that the concatenation $UHW$ is \emph{unambiguous} since $P$ is a prefix code, $U \subseteq (P \setminus H)^*$ and $W \subseteq H^*$. Hence, $K \in \bsdp{\Cs}$.

  Finally, we verify that \Kb is $s$-safe. Consider $K \in \Kb$ and $w,w' \in K$, we show that $s\alpha(w) = s\alpha(w')$. If $K \in \Ub$, this is immediate: \Ub is $1_M$-safe by definition. Otherwise, $K=UHW$ with $U \in \Ub$ and $W \in \Wb_U$. By definition, $\alpha(H) = \{t\}$ and $\alpha(U) = \{r_U\}$ which implies that $s\alpha(w) = sr_U t\alpha(x)$ and $s\alpha(w') = sr_U t\alpha(x')$ for $x,x' \in W$. Moreover, $W \in \Wb_U$ is $sr_Ut$-safe by definition. Hence, $s\alpha(w) = s \alpha(w')$, which concludes the proof of this sub-case.

  \smallskip
  \noindent
  \textbf{Sub-case~2: $\alpha(P^*H) \subsetneq \alpha(P^+)$.} Consider $w \in P^*$. Since $P$ is a prefix code, $w$ admits a unique decomposition $w = w_1 \cdots w_n$ with $w_1,\dots,w_n \in P$. We may look at the rightmost factor $w_i \in H \subseteq P$ to uniquely decompose $w$ in two parts (each of them possibly empty): the prefix $w_1 \cdots w_i \in ((P \setminus H)^*H)^*$ and the suffix in $w_{i+1} \cdots w_n \in (P \setminus H)^*$. Using induction,~we construct \bsdp{\Cs}-partitions of the possible languages of prefixes and suffixes. Then, we combine them to construct a partition of the whole set $P^*$. We already handled the suffixes: \Ub is an \bsdp{\Cs}-partition of $(P \setminus H)^*$. The prefixes are handled using the hypothesis of Sub-case~2 and induction on our first parameter (the size of~$\alpha(P^+)$).

  \begin{lemma}\label{lem:sfclos:sc2carac}
    There exists a $1_M$-safe \bsdp{\Cs}-partition \Vb of $((P \setminus H)^*H)^*$.
  \end{lemma}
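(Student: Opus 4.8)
The plan is to invoke the induction hypothesis of Proposition~\ref{prop:sfclos:fromapertostar} once more, this time on the language $Q := (P\setminus H)^*H$ with the parameter $s$ set to $1_M$: it then returns a $1_M$-safe \bsdp{\Cs}-partition of $Q^* = ((P\setminus H)^*H)^*$, which is exactly the set \Vb we want. To make this legitimate I must check three things: that $Q$ is a prefix code with bounded synchronization delay; that there is a $1_M$-safe \bsdp{\Cs}-partition of $Q$; and that the recursive call strictly decreases the first induction parameter $|\alpha(\cdot^+)|$.

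For the coding-theoretic facts, recall that since $P$ is a prefix code, every word of $P^+$ has a unique decomposition into factors of $P$, and from $Q = (P\setminus H)^*H$ one reads off that a word $w\in P^+$ lies in $Q^+$ if and only if the last factor of its $P$-decomposition belongs to $H$. If a word $w'\in Q$ were a strict prefix of a word $w\in Q$, comparing their $P$-decompositions (which must agree on a common prefix, $P$ being a prefix code) would force a single $P$-factor to belong to both $H$ and $P\setminus H$; hence $Q\cap QA^+=\emptyset$, and $\veps\notin Q$ is clear, so $Q$ is a prefix code. For the synchronization delay, let $d$ be one for $P$ and take $u,v,w$ with $uvw\in Q^+$ and $v\in Q^{2d}$. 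Split $v=v'v''$ with $v',v''\in Q^d$; since every factor of $Q$ contributes at least one $P$-factor, both $v'$ and $v''$ are products of at least $d$ factors of $P$. Applying twice the synchronization delay of $P$ — peeling off $d$ consecutive $P$-factors from the right end of $v'$, then of $v''$ — gives $uv'\in P^+$ and $uv\in P^+$. As $uv'$ is a prefix of $uv$, the $P$-decomposition of $uv$ extends that of $uv'$, so $v''$ occurs as a contiguous block of $P$-factors inside it; since $v''\in Q^+$, that block ends with a factor of $H$, hence so does the $P$-decomposition of $uv$, i.e. $uv\in Q^+$. Thus $Q$ has synchronization delay at most $2d$.

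Next, $\{UH \mid U\in\Ub\}$ is a $1_M$-safe \bsdp{\Cs}-partition of $Q$: its members cover $(P\setminus H)^*H=Q$ and are pairwise disjoint, because the way to write a word of $Q$ as a word of $(P\setminus H)^*$ followed by a word of $H$ is unique ($P$ being a prefix code), so the left factor determines the corresponding $U\in\Ub$; for the same reason each product $UH$ is unambiguous. Since $U\in\bsdp{\Cs}$ by the choice of \Ub in Lemma~\ref{lem:sfclos:alphind}, while $H\in\bsdp{\Cs}$ is $1_M$-safe as a block of the partition \Hb, closure under unambiguous product gives $UH\in\bsdp{\Cs}$, and $\alpha$ is constant on $UH$ because it is constant on $U$ and on $H$, so $UH$ is $1_M$-safe.

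Finally, the recursive call is on a strictly smaller first parameter: every word of $Q^+=((P\setminus H)^*H)^+$ is a word of $P^*$ followed by a factor of $H$, so $Q^+\subseteq P^*H$ and therefore $\alpha(Q^+)\subseteq\alpha(P^*H)\subsetneq\alpha(P^+)$, the strict inclusion being precisely the standing hypothesis of Sub-case~2. Hence applying Proposition~\ref{prop:sfclos:fromapertostar} to the prefix code $Q$, the partition $\{UH\mid U\in\Ub\}$, and the element $1_M$ is a use of the induction hypothesis with $|\alpha(Q^+)|<|\alpha(P^+)|$, and it produces the required $1_M$-safe \bsdp{\Cs}-partition \Vb of $((P\setminus H)^*H)^*$. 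I expect the only genuine obstacle to be the bounded-synchronization-delay claim for $Q$: unlike the other points, it is not a purely formal consequence of $P$ being a prefix code, and it is what makes the delay grow from $d$ to $2d$, since one needs two applications of the synchronization property of $P$ — one to close off the decomposition before the inserted block and one to close it off after — in order to align the $P$-decomposition of $v$ with that of $uv$.
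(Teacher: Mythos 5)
Your proof is correct and follows essentially the same route as the paper: apply Proposition~\ref{prop:sfclos:fromapertostar} inductively to $Q=(P\setminus H)^*H$ with the partition $\{UH\mid U\in\Ub\}$ and $s=1_M$, using $\alpha(Q^+)\subseteq\alpha(P^*H)\subsetneq\alpha(P^+)$ to decrease the first parameter. The only (immaterial) divergence is the synchronization-delay verification for $Q$: you obtain delay $2d$ by two applications of the delay-$d$ property of $P$, whereas the paper cites its Fact~\ref{fct:sfclos:sprefnest}, which gives the tighter delay $d+1$ by a single application followed by the observation that $P^*H\subseteq((P\setminus H)^*H)^+$; both yield boundedness, which is all that is needed.
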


  Using Lemma~\ref{lem:sfclos:sc2carac}, we define $\Kb = \{VU \mid V \in \Vb \text{ and } U \in \Ub\}$. It follows from the above discussion that \Kb is a partition of $P^*$ since  \Vb and \Ub are partitions of $((P \setminus H)^*H)^*$ and $(P \setminus H)^*$, respectively. Moreover, every $K \in \Kb$ belongs to \bsdp{\Cs}: $K = VU$ with $V \in \Vb$ and $U\in \Ub$, and one may verify that this is an \emph{unambiguous} concatenation. It remains to show that \Kb is $s$-safe. Let $K \in \Kb$ and $w,w' \in K$. We show that $s\alpha(w) = s\alpha(w')$. By definition, we have $K = VU$ with $V \in \Vb$ and $U \in \Ub$. Therefore, $w =vu$ and $w' = v'u'$ with $u,u' \in U$ and $v,v' \in V$. Since $U$ and $V$ are both $1_M$-safe by definition, we have $\alpha(u) = \alpha(u')$ and $\alpha(v) = \alpha(v')$. It follows that $s\alpha(w) = s\alpha(w')$, which concludes the proof of Proposition~\ref{prop:sfclos:fromapertostar}.
\end{proof}

\section{Covering when the input class is finite}
\label{sec:finite}
This section is devoted to Theorem~\ref{thm:sfclos:main}. We show that when \Cs is a finite \vari, \sfp{\Cs}-covering is decidable by presenting a generic algorithm. It is formulated within a framework designed to handle covering questions, which was originally introduced in~\cite{pzcovering2}. We start by briefly recalling it (we refer the reader to~\cite{pzcovering2} for details).

\subsection{\Ratms and optimal \imprints}

The framework is based on an algebraic object called ``\ratm''. These are morphisms of commutative and idempotent monoids. We write such monoids $(R,+)$: the binary operation  ``$+$'' is called \emph{addition} and the neutral element is denoted by $0_R$. Being idempotent means that $r + r = r$ for every $r \in R$. For every commutative and idempotent monoid $(R,+)$, one may define a canonical ordering $\leq$ over $R$: for $r, s\in R$, we have $r \leq s$ when $r+s=s$. One may verify that $\leq$ is a partial order which is compatible with addition. 

\begin{example}
  For every set $E$, $(2^E,\cup)$ is an idempotent and commutative monoid. The neutral element is $\emptyset$ and the canonical ordering is inclusion.
\end{example}

A \ratm is a morphism $\rho: (2^{A^*},\cup) \to (R,+)$ where $(R,+)$ is a \emph{finite} idempotent and commutative monoid, called the \emph{rating set of $\rho$}. That is, $\rho$ is a map from $2^{A^{*}}$ to $R$ such that $\rho(\emptyset) = 0_R$ and $\rho(K_1\cup K_2)=\rho(K_1)+\rho(K_2)$ for every $K_1,K_2 \subseteq A^*$.

For the sake of improved readability, when applying a \ratm $\rho$ to a singleton set $\{w\}$, we write $\rho(w)$ for $\rho(\{w\})$. Moreover, we write $\rho_*: A^* \to R$ for the restriction of $\rho$ to $A^*$: for every $w \in A^*$, we have $\rho_*(w) = \rho(w)$ (this notation is useful when referring to the language $\rho_*\inv(r) \subseteq A^*$, which consists of all words $w \in A^*$ such that $\rho(w) = r$).

\smallskip

Most of the theory makes sense for arbitrary \ratms. However, we shall often have to work with special \ratms satisfying additional properties. We define two kinds.

\smallskip
\noindent
{\bf \Nice \ratms.} A \ratm $\rho: 2^{A^*} \to R$ is \emph{\nice} when, for every nonempty language $K \subseteq A^*$, there exist finitely many words $w_1,\dots,w_n \in K$ such that $\rho(K) = \rho(w_1) + \cdots + \rho(w_k)$.

When a \ratm $\rho: 2^{A^*} \to R$ is \nice, it is characterized by the canonical map $\rho_*: A^* \to R$. Indeed, for $K \subseteq A^*$, we may consider the sum of all elements $\rho(w)$ for $w \in K$: while it may be infinite, this sum boils down to a finite one since $R$ is commutative and idempotent. The hypothesis that $\rho$ is \nice implies that $\rho(K)$ is equal to this sum.

\smallskip
\noindent
{\bf \Mratms.} A \ratm $\rho: 2^{A^*} \to R$ is \tame when its rating set $R$ has more structure: it needs to be an \emph{idempotent semiring}. A \emph{semiring} is a tuple $(R,+,\cdot)$ where $R$ is a set and ``$+$'' and ``$\cdot$''  are two binary operations called addition and multiplication. Moreover, $(R,+)$ is a commutative monoid, $(R,\cdot)$ is a monoid (the neutral element is denoted by $1_R$), the multiplication distributes over addition and the neutral element ``$0_R$'' of $(R,+)$ is a zero for $(R,\cdot)$ ($0_R \cdot r = r \cdot 0_R = 0_R$ for every $r \in R$). A semiring $R$ is \emph{idempotent} when $r + r = r$ for every $r \in R$, \emph{i.e.}, when the additive monoid $(R,+)$ is idempotent (there is no additional constraint on the multiplicative monoid $(R,\cdot)$).

\begin{example}
  A key example of an infinite idempotent semiring is the set $2^{A^*}$. Union is the addition and language concatenation is the multiplication (with $\{\varepsilon\}$ as neutral element).
\end{example}

Let $\rho: 2^{A^*} \to R$ be a \ratm: $(R,+)$ is an idempotent commutative monoid and $\rho$ is a morphism from $(2^{A^*},\cup)$ to $(R,+)$. We say that $\rho$ is \emph{\tame} when the rating set $R$ is equipped with a multiplication ``$\cdot$'' such that $(R,+,\cdot)$ is an idempotent semiring and $\rho$ is also a monoid morphism from $(2^{A^*},\cdot)$ to $(R,\cdot)$. That is, the two following additional axioms have to be satisfied: $\rho(\varepsilon) = 1_R$ and $\rho(K_1K_2) = \rho(K_1) \cdot \rho(K_2)$ for every $K_1,K_2 \subseteq A^*$. 

\begin{remark}
  \Ratms which are both \nice and \tame are finitely representable. As we explained, if $\rho: 2^{A^*} \to R$ is \nice, it is characterized by the canonical map $\rho_*: A^* \to R$.  When $\rho$ is also \tame, $\rho_*$ is finitely representable: it is a morphism into a finite monoid. Hence, we may speak of algorithms whose input is a \nice \mratm.

  \Ratms which are not \nice and \tame cannot be finitely represented in general. Yet, they are crucial: while our main statements consider \nice \mratms, many proofs involve auxiliary \ratms which are neither \nice nor \tame.
\end{remark}

\noindent
{\bf Optimal \imprints.} Now that we have \ratms, we turn to \imprints. Consider a \ratm $\rho: 2^{A^*} \to R$. Given any finite set of languages~\Kb, we define the $\rho$-\imprint of~\Kb. Intuitively, when \Kb is a cover of some language $L$, this object measures the ``quality'' of \Kb. The $\rho$-\imprint \emph{of \Kb} is the following subset of~$R$:
\[
  \prin{\rho}{\Kb} = \{r \mid r \leq \rho(K) \text{ for some } K \in\Kb\}.
\]
We may now define optimality. Consider an arbitrary \ratm $\rho: 2^{A^*} \to R$ and a Boolean algebra \Cs. Given a language $L$, an optimal \Cs-cover of $L$ for $\rho$ is a \Cs-cover \Kb of $L$ which satisfies the following property:
\[
  \prin{\rho}{\Kb} \subseteq \prin{\rho}{\Kb'} \quad \text{for every \Cs-cover $\Kb'$ of $L$}.
\]
In general, there can be infinitely many optimal \Cs-covers for a given \ratm $\rho$. It is shown in~\cite{pzcovering2} that there always exists at least one (using closure under intersection for \Cs).

Clearly, for a Boolean algebra \Cs, a language $L$ and a \ratm $\rho$, all optimal \Cs-covers of $L$ for $\rho$ have the same $\rho$-\imprint. Hence, this unique $\rho$-\imprint is a \emph{canonical} object for \Cs, $L$ and $\rho$. We call it the \emph{\Cs-optimal $\rho$-\imprint on $L$} and we write it $\opti{\Cs}{L,\rho}$:
\[
  \opti{\Cs}{L,\rho} = \prin{\rho}{\Kb} \quad \text{for any optimal \Cs-cover \Kb of $L$  for $\rho$}.
\]
We complete the definition with a simple useful fact (a proof is available in~\cite{pzbpolc}).

\begin{fact} \label{fct:lunion}
  Let \Cs be a Boolean algebra, $\rho: 2^{A^*} \to R$ a \ratm and $L_1,L_2 \subseteq A^*$. Then, $\opti{\Cs}{L_1,\rho}\cup\opti{\Cs}{L_2,\rho}=\opti{\Cs}{L_1\cup L_2,\rho}$.
\end{fact}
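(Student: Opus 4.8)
The plan is to prove the two inclusions separately, using nothing beyond the definitions of $\rho$-\imprint and of optimal \Cs-cover, plus two elementary observations. First, the \imprint map is additive on sets of languages: directly from the definition $\prin{\rho}{\Kb} = \{r \mid r \leq \rho(K)\text{ for some }K \in \Kb\}$, we get $\prin{\rho}{\Kb_1 \cup \Kb_2} = \prin{\rho}{\Kb_1} \cup \prin{\rho}{\Kb_2}$ for any two finite sets of languages $\Kb_1, \Kb_2$. Second, if $L \subseteq L'$ and $\Kb$ is a \Cs-cover of $L'$, then $\Kb$ is also a \Cs-cover of $L$. Throughout I will also use that optimal \Cs-covers exist (recalled above from~\cite{pzcovering2}), which applies since \Cs is a Boolean algebra hence closed under intersection.

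For the inclusion $\opti{\Cs}{L_1 \cup L_2,\rho} \subseteq \opti{\Cs}{L_1,\rho} \cup \opti{\Cs}{L_2,\rho}$, I would fix optimal \Cs-covers $\Kb_1$ of $L_1$ and $\Kb_2$ of $L_2$ for $\rho$. Then $\Kb_1 \cup \Kb_2$ is a finite set of languages of \Cs covering $L_1 \cup L_2$, i.e. a \Cs-cover of $L_1 \cup L_2$. Since $\opti{\Cs}{L_1 \cup L_2,\rho}$ is the \imprint of an \emph{optimal} such cover, it is contained in $\prin{\rho}{\Kb_1 \cup \Kb_2}$, which by additivity equals $\prin{\rho}{\Kb_1} \cup \prin{\rho}{\Kb_2} = \opti{\Cs}{L_1,\rho} \cup \opti{\Cs}{L_2,\rho}$ by optimality of $\Kb_1$ and $\Kb_2$.

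For the reverse inclusion it suffices to show $\opti{\Cs}{L_i,\rho} \subseteq \opti{\Cs}{L_1 \cup L_2,\rho}$ for $i = 1,2$. I would fix an optimal \Cs-cover $\Kb$ of $L_1 \cup L_2$ for $\rho$; by the second observation, $\Kb$ is also a \Cs-cover of $L_i$, so the \imprint of an optimal \Cs-cover of $L_i$, namely $\opti{\Cs}{L_i,\rho}$, is contained in $\prin{\rho}{\Kb} = \opti{\Cs}{L_1 \cup L_2,\rho}$. Taking the union over $i \in \{1,2\}$ finishes the proof. There is no genuine obstacle here: the statement falls out of unfolding the definitions, and the only points requiring care are invoking the existence of optimal covers and checking that the union of two finite \Cs-covers is again a finite \Cs-cover.
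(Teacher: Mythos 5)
Your proof is correct, and it is exactly the expected argument. The paper does not actually include its own proof of Fact~\ref{fct:lunion}: it simply refers to an external reference for it. Both inclusions follow mechanically from the definitions as you lay out, via the additivity of the \imprint operator over unions of covers and the fact that a cover of $L_1 \cup L_2$ is also a cover of each $L_i$; this is the standard argument and there is nothing to add.
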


\noindent
{\bf Connection with covering.} Consider the special case when the language $L$ that needs to be covered is $A^*$. In that case, we write \copti{\rho} for \copti{A^*,\rho}. It is shown in~\cite{pzcovering2} that for every \emph{Boolean algebra} \Cs, deciding \Cs-covering formally reduces to computing \Cs-optimal \imprints from input \nice \mratms.

\begin{proposition} \label{prop:breduc}
  Let \Cs be a Boolean algebra. Assume that there exists an algorithm which computes \copti{\rho} from an input \nice \mratm $\rho$. Then, \Cs-covering is decidable.
\end{proposition}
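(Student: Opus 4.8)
The plan is to reduce an arbitrary instance of \Cs-covering to a single call of the assumed algorithm. Fix such an instance: a regular language $L_1$ and a finite set $\Lb_2=\{H_1,\dots,H_k\}$ of regular languages. First I would compute a morphism $\alpha\colon A^*\to M$ into a finite monoid recognizing $L_1$ and every $H_i$ (for instance, a product of syntactic morphisms), and set $F_1=\alpha(L_1)$ and $F_i=\alpha(H_i)$; since $\alpha$ recognizes these languages, $L_1=\alpha\inv(F_1)$ and $H_i=\alpha\inv(F_i)$. From $\alpha$ I build the \ratm $\rho\colon 2^{A^*}\to 2^M$ defined by $\rho(K)=\{\alpha(w)\mid w\in K\}$, where $2^M$ carries union as addition (neutral $\emptyset$) and the elementwise product lifted from $M$ as multiplication (neutral $\{1_M\}$). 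It is routine that $(2^M,\cup,\cdot)$ is a finite idempotent semiring, that $\rho$ is a \ratm, that it is \nice (as $M$ is finite, $\rho(K)$ is the finite sum of the distinct $\rho(w)$ for $w\in K$) and \tame ($\rho(\veps)=\{1_M\}$ and $\rho(K_1K_2)=\rho(K_1)\rho(K_2)$); hence $\rho$ is a finitely representable \nice \mratm, a legitimate input for the algorithm. Running it returns $\copti{\rho}\subseteq 2^M$, i.e.\ $\opti{\Cs}{A^*,\rho}$, which exists because \Cs is a Boolean algebra; recall that the canonical order on $2^M$ is inclusion.

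The core is the equivalence: \emph{$(L_1,\Lb_2)$ is \Cs-coverable if and only if every $X\in\copti{\rho}$ with $X\cap F_1\neq\emptyset$ satisfies $X\cap F_i=\emptyset$ for some $i$.} Granting it, the decision procedure is immediate — build $\alpha$ and $\rho$, compute $\copti{\rho}$ with the given algorithm, and test the (finitely checkable) condition — so \Cs-covering is decidable; it remains to establish the equivalence. For the ``if'' direction, let \Kb be an optimal \Cs-cover of $A^*$ for $\rho$ and set $\Kb'=\{K\in\Kb\mid K\cap L_1\neq\emptyset\}$; note that $K\cap L_1\neq\emptyset$ is equivalent to $\alpha(K)\cap F_1\neq\emptyset$ since $L_1=\alpha\inv(F_1)$. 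Then \Kb' is a \Cs-cover of $L_1$, and for each $K\in\Kb'$ we have $\alpha(K)=\rho(K)\in\prin{\rho}{\Kb}=\copti{\rho}$ with $\alpha(K)\cap F_1\neq\emptyset$, so by hypothesis there is $i$ with $\alpha(K)\cap F_i=\emptyset$, i.e.\ $K\cap H_i=\emptyset$; thus \Kb' witnesses \Cs-coverability.

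The ``only if'' direction is where I expect the real work to lie, because of the mismatch between what the algorithm computes (optimal covers of $A^*$) and what covering concerns ($L_1$). I would start from a \Cs-cover \Kb of $L_1$ witnessing coverability, so each $K\in\Kb$ satisfies $\alpha(K)\cap F_i=\emptyset$ for some $i$, and the key move is to complete it to a \Cs-cover of $A^*$ by adjoining a single language: let $\Ub=\Kb\cup\{A^*\setminus\bigcup_{K\in\Kb}K\}$. The adjoined language belongs to \Cs because \Cs is closed under finite union and complement, and it is disjoint from $L_1$ (since \Kb covers $L_1$), hence $\alpha$ maps it into a subset of $M$ disjoint from $F_1$. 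Now \Ub is a \Cs-cover of $A^*$, so $\copti{\rho}\subseteq\prin{\rho}{\Ub}$ (the \Cs-optimal imprint on $A^*$ is below that of every \Cs-cover of $A^*$); given $X\in\copti{\rho}$ with $X\cap F_1\neq\emptyset$, we get $X\subseteq\alpha(K)$ for some $K\in\Ub$, and $K$ cannot be the adjoined language (its $\alpha$-image misses $F_1$ whereas $X$ does not), so $K\in\Kb$ and $X\cap F_i\subseteq\alpha(K)\cap F_i=\emptyset$ for the associated $i$. The only genuinely delicate point of the whole argument is this completion step, and it is precisely there that closure of \Cs under Boolean operations is indispensable.
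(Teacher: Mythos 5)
Your proof is correct and is essentially the standard reduction that the paper cites from \cite{pzcovering2} without reproducing; the paper itself contains no proof of Proposition~\ref{prop:breduc} to compare against. The two directions are both sound: in the ``if'' direction, restricting an optimal $\Cs$-cover of $A^*$ to those blocks meeting $L_1$ yields the required witness; in the ``only if'' direction, padding a partial $\Cs$-cover of $L_1$ with the single language $A^*\setminus\bigcup_{K\in\Kb}K$ (well-defined in $\Cs$ because $\Cs$ is a Boolean algebra, and disjoint from $L_1$, hence with $\alpha$-image avoiding $F_1$) lets you invoke optimality of $\copti{\rho}$ over all $\Cs$-covers of $A^*$. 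You correctly identified that the key point is passing between properties of $\alpha(K)\subseteq M$ and properties of $K\subseteq A^*$, which works precisely because $\alpha$ recognizes all input languages, and that the delicate step is the completion from a cover of $L_1$ to a cover of $A^*$, which is where the Boolean algebra hypothesis on $\Cs$ is used.
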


\subsection{Algorithm}

We may now present our algorithm for \sfp{\Cs}-covering when \Cs is a finite \vari. We fix \Cs for the presentation. In view of Proposition~\ref{prop:breduc}, we need to prove that one may compute \sfcopti from an input \nice \mratm $\rho$.

Our algorithm actually computes slightly more information. Since \Cs is a finite \vari, we may consider the equivalence \canec over $A^*$. In particular, the set \sclac of \canec-classes is a finite monoid (we write ``\cmult'' for its multiplication) and the map $w \mapsto \ctype{w}$ is a morphism. Given a \ratm $\rho: 2^{A^*} \to R$ we define:
\[
  \csfcopti = \{(C,r) \in (\sclac) \times R \mid r \in \opti{\sfp{\Cs}}{C,\rho}\}
\]
Observe that \csfcopti captures more information than \sfcopti. Indeed, it encodes all sets \opti{\sfp{\Cs}}{C,\rho} for $C \in \sclac$ and by Fact~\ref{fct:lunion}, \sfcopti is the union of all these sets.

Our main result is a least fixpoint procedure for computing \csfcopti from a \nice \mratm $\rho$. It is based on a generic characterization theorem which we first present. Given an arbitrary \nice \mratm $\rho: 2^{A^*} \to R$ and a set $S \subseteq (\sclac) \times R$, we say that $S$ is \emph{\sfp{\Cs}-saturated for $\rho$} when the following properties are satisfied:
\begin{enumerate}
\item {\bf Trivial elements.} For every $w \in A^*$, we have $(\ctype{w},\rho(w)) \in S$.
\item {\bf Downset.} For every $(C,r) \in S$ and $q \in R$, if $q \leq r$, then $(C,q) \in S$.
\item {\bf Multiplication.} For every $(C,q),(D,r) \in S$, we have $(C \cmult D,qr) \in S$.
\item {\bf \sfp{\Cs}-closure.} For all $(E,r) \in S$, if $E \in \sclac$ is idempotent, then $(E,r^\omega + r^{\omega+1}) \in S$.
\end{enumerate}

\begin{theorem}[\sfp{\Cs}-optimal \imprints (\Cs finite)] \label{thm:sfclos:carfinite}
  Let $\rho: 2^{A^*} \to R$ be a \emph{\nice} \mratm. Then, \csfcopti is the least \sfp{\Cs}-saturated subset of $(\sclac) \times R$ for $\rho$.
\end{theorem}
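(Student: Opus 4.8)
The plan is to prove the two inclusions separately: that $\csfcopti$ is $\sfp{\Cs}$-saturated (giving $\supseteq$ when combined with minimality), and that it is contained in every $\sfp{\Cs}$-saturated set (giving $\subseteq$). Both directions will reuse the characterization of $\sfp{\Cs}$ via $\bsdp{\Cs}$ from Theorem~\ref{thm:sfclos:carac}.

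\textbf{Soundness: $\csfcopti$ is $\sfp{\Cs}$-saturated.} I would verify the four closure properties one at a time. \emph{Trivial elements:} the cover $\{\{w\}\}$ of $\{w\}$ is already an $\sfp{\Cs}$-cover (since singletons of letters generate $\sfp{\Cs}$, so singletons of words lie in it), hence $\rho(w) \in \opti{\sfp{\Cs}}{\{w\},\rho} \subseteq \opti{\sfp{\Cs}}{C,\rho}$ for $C = \ctype{w}$ by monotonicity of optimal imprints under inclusion of the covered language (which follows from Fact~\ref{fct:lunion}). \emph{Downset:} immediate from the definition of imprint, which is downward closed under $\leq$ by construction. \emph{Multiplication:} given optimal $\sfp{\Cs}$-covers $\Kb$ of $C$ and $\Kb'$ of $D$ for $\rho$, the set $\{KK' \mid K \in \Kb, K' \in \Kb'\}$ is an $\sfp{\Cs}$-cover of $C \cmult D$ (using that $\sfp{\Cs}$ is closed under concatenation and that $\canec$ is a congruence), and since $\rho$ is \tame we get $\rho(KK') = \rho(K)\rho(K')$; multiplying $q \leq \rho(K)$ and $r \leq \rho(K')$ and using compatibility of $\leq$ with multiplication yields $qr \leq \rho(KK')$, so $(C\cmult D, qr) \in \csfcopti$. \emph{$\sfp{\Cs}$-closure:} this is the substantial part — given an idempotent $E \in \sclac$ and $(E,r) \in \csfcopti$, I must show $(E, r^\omega + r^{\omega+1}) \in \csfcopti$. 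Here one takes an arbitrary $\sfp{\Cs}$-cover $\Kb$ of $E$, refines it if necessary so that each $K \in \Kb$ is contained in $E$ (intersecting with the language $E$, which lies in $\Cs \subseteq \sfp{\Cs}$), and analyzes $K^+$: since $E$ is idempotent, $E^+ = E$, so the concatenation closure stays inside $E$, and one argues that the syntactic morphism constraint forces an element witnessing $r^\omega + r^{\omega+1}$ to appear — invoking $\Cs$-aperiodicity is not quite right here since we are in the ``other direction''; instead this should follow from examining how any $\sfp{\Cs}$-cover of $E$ must behave on the sub-language $K \cap KK$, i.e. essentially from the combinatorics built into the stutter definition. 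I expect this to be the main obstacle and to require a lemma saying that for an idempotent $E$, the optimal imprint is stable under the $r \mapsto r^\omega + r^{\omega+1}$ operation.

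\textbf{Completeness: $\csfcopti \subseteq S$ for every $\sfp{\Cs}$-saturated $S$.} Fix such an $S$. I would take an optimal $\sfp{\Cs}$-cover $\Kb$ of $A^*$ for $\rho$ — but more precisely, for each $C \in \sclac$ I need to produce, for every $(C,r) \in \csfcopti$, a witness that $(C,r) \in S$. The strategy is to build, from $S$, an actual $\sfp{\Cs}$-cover $\Kb_S$ of each class $C$ whose $\rho$-imprint is contained in $\{r \mid (C,r) \in S\}$; then optimality of the genuine optimal cover forces $\opti{\sfp{\Cs}}{C,\rho} \subseteq \prin{\rho}{\Kb_S} \subseteq \{r \mid (C,r)\in S\}$, which is exactly $\csfcopti$-entries sitting in $S$. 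To construct $\Kb_S$, I would proceed by structural induction on the $\bsdp{\Cs}$ construction (legitimate by Theorem~\ref{thm:sfclos:carac}): languages $\{a\}$ correspond to ``trivial elements'', intersection with $\Cs$ and disjoint union are handled by the downset and Fact~\ref{fct:lunion}, unambiguous product is handled by the ``Multiplication'' axiom (unambiguity ensures $\rho(KL) = \rho(K)\rho(L)$ behaves correctly on the imprint level), and the Kleene star on a prefix code with bounded synchronization delay is handled by the ``$\sfp{\Cs}$-closure'' axiom applied at the idempotent corresponding to the $\omega$-power of the code's image. This mirrors the inductive architecture already used in the proof of $3)\Rightarrow 2)$ in Theorem~\ref{thm:sfclos:carac} (Proposition~\ref{prop:sfclos:fromapertostar} and its sub-cases), so I would expect to reuse that decomposition of $P^*$ into unambiguous products of simpler pieces.

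The delicate points I anticipate: first, in the star case, matching the semantic ``bounded synchronization delay'' hypothesis to the purely algebraic $r \mapsto r^\omega + r^{\omega+1}$ operation — the synchronization delay is what guarantees that $K^*$ for a prefix code $K$ decomposes cleanly enough that its $\rho$-imprint is governed by the imprint of $K$ together with one application of the idempotent closure; second, handling the interaction between the $\sclac$-component and the $R$-component uniformly, which is why the theorem is stated for $\csfcopti$ rather than just $\sfcopti$ (Fact~\ref{fct:lunion} then recovers $\sfcopti$ as a union). I would also need the standard fact, provable from the definitions in~\cite{pzcovering2}, that optimal imprints can be computed ``piecewise'' along a decomposition of the covered language and along unambiguous concatenations; if such a lemma is not already available I would isolate and prove it first, as it is the technical backbone of both directions.
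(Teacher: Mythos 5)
Your completeness direction has the right architecture. The paper's proof does indeed replay the three-parameter induction of Proposition~\ref{prop:sfclos:fromapertostar}, producing a $\bsdp{\Cs}$-partition of $P^*$ whose pieces are tracked to have values in $S$, and then specializing to $P = A$. One correction on phrasing: you describe this as ``structural induction on the $\bsdp{\Cs}$ construction,'' but there is no fixed $\bsdp{\Cs}$ expression to recurse on --- one is \emph{constructing} a cover, not verifying a given one. What the paper actually does is run the Proposition~\ref{prop:sfclos:fromapertostar} recursion (same numeric parameters, same base case ``$t$ is $\Hb$-stable,'' same two sub-cases) while simultaneously carrying an auxiliary rating map $\gamma:2^{A^*}\to 2^{\sclac}\times R$ and a reformulated target $T=\{(\{C\},r)\mid (C,r)\in S\}$, and proving that each piece $K$ of the constructed partition satisfies $\gamma(K)\in T$. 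The $\sfp{\Cs}$-closure axiom of $S$ is consumed in the base case of that induction (where $t$ is $\Hb$-stable, taking idempotent powers of the images of $\Hb$-products); the Multiplication and Downset axioms are consumed throughout; and niceness of $\rho$ is used to reduce $\gamma$ of an infinite union to a finite sum of $\gamma$-values of $\Hb$-products. So you have the right intuition and the right reference point, but the bookkeeping via $\gamma$ and $T$ is where the nontrivial work happens and your sketch leaves it implicit.

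Your soundness direction has a genuine gap, and you correctly flag it yourself. For trivial elements, downset, and multiplication, your direct arguments are fine (the paper just cites a generic lemma from the covering framework instead). For $\sfp{\Cs}$-closure, however, you say you ``expect this to require a lemma saying that for an idempotent $E$, the optimal imprint is stable under $r\mapsto r^\omega + r^{\omega+1}$'' --- but that is exactly the statement to be proved, not a tool for proving it, and the idea of ``examining how any $\sfp{\Cs}$-cover of $E$ must behave on $K\cap KK$'' does not go anywhere because $K\cap KK$ is not the witness one needs. The missing ingredient is the pumping lemma (Lemma~\ref{lem:sfclos:aper}, proved by structural induction on $\sfp{\Ds}$ expressions for a \emph{finite} $\Ds$): for each $K\in\sfp{\Ds}$ there is $k_K$ such that $xu^\ell y\in K \Leftrightarrow xu^\ell u y\in K$ whenever $uu\canec u$ and $\ell\geq k_K$. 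Given an arbitrary $\sfp{\Cs}$-cover $\Kb$ of $E$, you first restrict to a finite $\Ds\subseteq\Cs$ with $\Kb\subseteq\sfp{\Ds}$ (Fact~\ref{fct:sfclos:strat}), take $k$ uniform over $\Kb$, use $(E,r)\in\csfcopti$ against the $\caned$-cover of $E$ to get a $\caned$-class $H$ intersecting $E$ with $r\leq\rho(H)$, and then show directly that $G=H^{k\omega}\cup H^{k\omega+1}$ is contained in a single $K\in\Kb$ (Lemma~\ref{lem:sfclos:findthek}), whence $r^\omega+r^{\omega+1}\leq\rho(G)\leq\rho(K)$. Without this pumping bound uniform over a finite subclass, there is no way to force the two powers $\omega$ and $\omega+1$ into the same block of the cover, and that is exactly where your sketch stalls.
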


Given a \nice \mratm $\rho: 2^{A^*} \to R$ as input, it is clear that one may compute the least \sfp{\Cs}-saturated subset of $(\sclac) \times R$ with a least fixpoint procedure. Hence, Theorem~\ref{thm:sfclos:carfinite} provides an algorithm for computing \csfcopti. As we explained above, we may then compute \sfcopti from this set. Together with Proposition~\ref{prop:breduc}, this yields Theorem~\ref{thm:sfclos:main} as a corollary: \sfp{\Cs}-covering is decidable when \Cs is a finite \vari. Theorem~\ref{thm:sfclos:carfinite} is proved in the appendix.

\section{Covering when the input class is made of group languages}
\label{sec:units}
This section is devoted to Theorem~\ref{thm:sfclos:gmain}. We show that when \Cs is a \vari of group languages with decidable separation, \sfp{\Cs}-covering is decidable.

As in Section~\ref{sec:finite}, we rely on Proposition~\ref{prop:breduc}: we present an algorithm computing \sfcopti from an input \nice \mratm $\rho$. We do not work with \sfcopti itself but with another set carrying  more information. Its definition requires introducing a few additional concepts. We first present them and then turn to the algorithm. For more details, see~\cite{concagroup}.

\subsection{Preliminary definitions}

\noindent
\textbf{Optimal \idens.} In this case, handling \sfp{\Cs} involves considering \Cs-optimal covers of $\{\veps\}$. Since $\{\veps\}$ is a singleton, there always exists such a cover consisting of a single language, which leads to the following definition.

Let \Cs be a Boolean algebra (we shall use the case when \Cs contains only group languages but this is not required for the definitions) and $\tau: 2^{A^*} \to Q$ a \ratm. A \emph{\Cs-optimal \iden for $\tau$} is a language $L \in \Cs$ such that $\veps \in L$ and  $\tau(L) \leq \tau(L')$ for every $L' \in \Cs$ satisfying $\veps \in L'$. As expected, there always exists a \Cs-optimal \iden for any \ratm $\tau$ (see the appendix for a proof).

By definition, all \Cs-optimal \idens for $\tau$ have the same image under $\tau$. We write it $\ioptic{\tau} \in Q$: $\ioptic{\tau} = \tau(L)$ for every \Cs-optimal \iden $L$ for $\tau$. It turns out that when $\tau$ is \nice and \tame, computing \iopti{\Cs}{\tau} from $\tau$ boils down to \Cs-separation. This is important: this is exactly how our algorithm for \sfp{\Cs}-covering depends on \Cs-separation.

\begin{lemma}\label{lem:sepepswit}
  Let $\tau: 2^{A^*} \to Q$ be a \nice \ratm and \Cs a Boolean algebra. Then, $\iopti{\Cs}{\tau}$ is the sum of all $q \in Q$ such that $\{\veps\}$ is not \Cs-separable from $\tau_*\inv(q)$.
\end{lemma}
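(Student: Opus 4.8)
The plan is to set $T = \{q \in Q \mid \{\veps\} \text{ is not \Cs-separable from } \tau_*\inv(q)\}$ and prove the two inequalities $\sum_{q \in T} q \leq \iopti{\Cs}{\tau}$ and $\iopti{\Cs}{\tau} \leq \sum_{q \in T} q$; both sums are finite and well defined since $Q$ is finite, commutative and idempotent. The single ingredient I would isolate first is the standard reformulation of niceness: for every nonempty $K \subseteq A^*$, writing $\tau_*(K) = \{\tau(w) \mid w \in K\}$, one has
\[
  \tau(K) \;=\; \sum_{w \in K} \tau(w) \;=\; \sum_{r \in \tau_*(K)} r.
\]
Indeed, $\tau(w) \leq \tau(K)$ for every $w \in K$ since $\tau(w) + \tau(K) = \tau(\{w\}\cup K) = \tau(K)$; as $\leq$ is compatible with addition this gives $\sum_{w\in K}\tau(w) \leq \tau(K)$, while niceness supplies the reverse inequality, and idempotence lets us group equal values.

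For the lower bound I would take an arbitrary \iden $L \in \Cs$ for $\tau$ and fix $q \in T$. Since $L \in \Cs$, $\veps \in L$ and $\{\veps\}$ is not \Cs-separable from $\tau_*\inv(q)$, the language $L$ fails to be a separator, so $L \cap \tau_*\inv(q) \neq \emptyset$; choosing $w$ in this intersection yields $q = \tau(w) \leq \tau(L)$. Summing over $q \in T$ gives $\sum_{q \in T} q \leq \tau(L)$ for \emph{every} \iden $L$, hence in particular for a \Cs-optimal one (which exists by the result quoted just before the lemma), so $\sum_{q \in T} q \leq \iopti{\Cs}{\tau}$.

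For the upper bound, the key step is to produce a single witness \iden whose image is small. For each $q \in Q \setminus T$, choose $K_q \in \Cs$ separating $\{\veps\}$ from $\tau_*\inv(q)$, that is $\veps \in K_q$ and $K_q \cap \tau_*\inv(q) = \emptyset$, and set $L = \bigcap_{q \notin T} K_q$. Since $Q$ is finite, this is a finite intersection, so $L \in \Cs$ (closure under intersection) and $\veps \in L$; moreover $\tau_*(L) \subseteq T$, because $w \in L$ with $\tau(w) = q \notin T$ would contradict $w \in K_q \cap \tau_*\inv(q) = \emptyset$. Applying the niceness identity to the nonempty language $L$ then gives $\tau(L) = \sum_{r \in \tau_*(L)} r \leq \sum_{q \in T} q$. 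As $L$ is an \iden in \Cs, optimality yields $\iopti{\Cs}{\tau} \leq \tau(L) \leq \sum_{q \in T} q$, and combining the two inequalities finishes the proof.

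I do not expect a genuine obstacle: the argument is essentially a dictionary translation between ``non-separability of $\{\veps\}$ from the fibre $\tau_*\inv(q)$'' and ``$q$ lies below the optimal image'', with finiteness of $Q$ making the relevant sums and intersections finite. The two points that deserve a line of care are the use of the niceness identity (valid only on a nonempty language, which is the case here as $\veps$ always belongs to the languages involved) and the observation that it is precisely the Boolean-algebra structure of \Cs that keeps $\bigcap_{q\notin T} K_q$ inside \Cs.
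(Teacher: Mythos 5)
Your proof is correct and follows essentially the same strategy as the paper: define $T$ (the paper calls it $U$), show the lower bound by observing that any $\veps$-containing language in $\Cs$ must meet each fibre $\tau_*\inv(q)$ for $q\in T$, and show the upper bound by intersecting, over all $q\notin T$, a chosen separator $K_q\in\Cs$ and then applying niceness to see that the image of that intersection is contained in $T$. The only differences are presentational (you isolate the niceness identity as a preliminary remark, and you argue the lower bound for an arbitrary $\veps$-containing $L\in\Cs$ rather than fixing a $\Cs$-optimal one), not mathematical.
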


\noindent
\textbf{Nested \ratms.} We want an algorithm which computes \sfcopti from an input \nice \mratm $\rho$ for a fixed \vari of group languages \Cs. Yet, we shall not use optimal \idens with this input \ratm $\rho$. Instead, we consider an auxiliary \ratm built from $\rho$ (the definition is taken from~\cite{pzbpolc}).

Consider a Boolean algebra \Ds (we shall use the case $\Ds = \sfp{\Cs}$) and a \ratm $\rho: 2 ^{A^*} \to R$. We build a new map $\bratauxd: 2^{A^*} \to 2^R$ whose rating set is $(2^R,\cup)$. For every $K \subseteq A^*$, we define $\bratauxd(K) = \opti{\Ds}{K,\rho} \in 2^R$. It follows from Fact~\ref{fct:lunion} that this is indeed a \ratm (on the other hand \bratauxd need not be \nice nor \tame, see~\cite{pzbpolc} for details).

We may now explain which set is computed by our algorithm instead of \sfcopti. Consider a \nice \mratm $\rho: 2^{A^*} \to R$. Since $\bratauxsfc: 2^{A^*} \to 2^R$ is a \ratm, we may consider the element $\ioptic{\bratauxsfc} \in 2^R$. By definition, $\ioptic{\bratauxsfc} = \bratauxsfc(L)$ where $L$ is a \Cs-optimal \iden for $\bratauxsfc$. Therefore, \ioptic{\bratauxsfc} is a subset of $\bratauxsfc(A^*) = \opti{\sfp{\Cs}}{A^*,\rho} = \sfcopti$. When \Cs is a \vari of group languages, one may compute the whole set \sfcopti from this subset.

\begin{proposition}\label{prop:utooptibool}
  Let \Cs be a \vari of group languages and $\rho: 2^{A^*} \to R$ a \nice \mratm. Then, \sfcopti is the least subset of $R$ containing \ioptic{\bratauxsfc} and satisfying the three following properties:
  \begin{itemize}
  	\item {\bf Trivial elements.} For every $w \in A$, $\rho(w) \in \sfcopti$.
  	\item {\bf Downset.} For every $r \in \sfcopti$ and $q \leq r$, we have $q \in \sfcopti$.
  	\item {\bf Multiplication.} For every $q,r \in \sfcopti$, we have $qr \in \sfcopti$.
  \end{itemize}
\end{proposition}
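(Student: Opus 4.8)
The plan is to prove the two inclusions separately. For the easy inclusion, I would show that $\sfcopti$ contains $\ioptic{\bratauxsfc}$ and is closed under the three stated operations; this establishes that the least such set is contained in $\sfcopti$. Containment of $\ioptic{\bratauxsfc}$ is immediate from the discussion preceding the statement: $\ioptic{\bratauxsfc} = \bratauxsfc(L) = \opti{\sfp{\Cs}}{L,\rho} \subseteq \opti{\sfp{\Cs}}{A^*,\rho} = \sfcopti$, using that optimal imprints are monotone in the covered language (which follows from Fact~\ref{fct:lunion}). For the three closure properties of $\sfcopti$: ``Trivial elements'' holds because $\{\{w\} \mid w \in A^*\}$ can be refined into an $\sfp{\Cs}$-cover of $A^*$ (each $\{w\} \in \sfp{\Cs}$ as a singleton), so $\rho(w) \in \prin{\rho}{\Kb}$ for any optimal cover; ``Downset'' holds because $\prin{\rho}{\Kb}$ is by definition downward closed; ``Multiplication'' is the standard fact that optimal imprints for a \mratm are closed under the semiring multiplication of the rating set — if $r = \rho(K)$ and $r' = \rho(K')$ come from optimal covers, then $KK' \in \sfp{\Cs}$ (closure under concatenation) and $rr' = \rho(KK')$, and one argues via products of covers that $rr' \in \sfcopti$ (this is a lemma of the type already used to prove Theorem~\ref{thm:sfclos:carfinite}, so I would cite~\cite{pzbpolc} or reprove it in a line).

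The substantive inclusion is the reverse one: every element of $\sfcopti$ lies in the least set generated by $\ioptic{\bratauxsfc}$ under the three operations; call this generated set $T$. Equivalently, I must produce, for the purpose of certifying membership in $T$, an $\sfp{\Cs}$-cover $\Kb$ of $A^*$ whose imprint $\prin{\rho}{\Kb}$ is contained in $T$. The idea is to take a $\Cs$-optimal \iden $L$ for $\bratauxsfc$ — so $L \in \Cs$, $\veps \in L$, and $\bratauxsfc(L) = \ioptic{\bratauxsfc} \subseteq T$ — together with an optimal $\sfp{\Cs}$-cover $\Hb$ of $L$ for $\rho$, which realizes $\bratauxsfc(L) = \prin{\rho}{\Hb}$. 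Since $\Cs$ is a \vari of group languages, $L$ is a group language, hence $A^*$ decomposes according to the finite group recognizing $L$; more usefully, one uses that for group languages the "neighbourhood of $\veps$" captured by $L$ lets every word be written using pieces lying in $L$. Concretely: a language $L$ with $\veps \in L$ recognized by a morphism onto a finite group $G$ is such that $L \supseteq \alpha^{-1}(1_G)$, and every element of $G$ is a power of... — more cleanly, the key combinatorial fact (already implicitly behind Lemma~\ref{lem:sfclos:grpstut} and Lemma~\ref{lem:sepepswit}) is that $\{\veps\}$ being non-$\Cs$-separable from a set is exactly what lets the corresponding elements of $R$ appear inside $\ioptic{\bratauxsfc}$, and then Proposition~\ref{prop:breduc}-style bookkeeping combined with the "Trivial / Downset / Multiplication" operations rebuilds all of $\sfcopti$.

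The main obstacle, and the heart of the proof, is the following: given an arbitrary $\sfp{\Cs}$-cover $\Kb'$ of $A^*$, I need to bound $\prin{\rho}{\Kb'}$ from below in terms of $\ioptic{\bratauxsfc}$ and the three operations, i.e. show every $r \in \sfcopti$ is obtained by multiplying together finitely many generators from $\rho(A)$ and from $\ioptic{\bratauxsfc}$ and then taking a downset. The strategy I would follow is: factor a long word $w$ with $\rho(w) = r$ using the group structure — since $L \in \Cs$ is recognized by a finite group $G$ via $\beta \colon A^* \to G$, split $w$ at positions where the $G$-image returns to $1_G$, obtaining $w = u_0 v_1 u_1 v_2 \cdots$ where the "return" factors $v_i \in \beta^{-1}(1_G) \subseteq L$ and the $u_i$ are short (bounded-length) factors over $A$. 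The short factors contribute images in the submonoid generated by $\rho(A)$ (via Trivial $+$ Multiplication), and the factors in $L$ contribute images in $\bratauxsfc(L) = \ioptic{\bratauxsfc}$ — here one must be careful that it is not $\rho(v_i)$ but rather a lower bound in $\opti{\sfp{\Cs}}{L,\rho}$ that is available, which is exactly what $\bratauxsfc(L)$ supplies. Assembling these via Multiplication and closing under Downset yields $r \in T$. I would cite the analogous argument in~\cite{concagroup}, where a nested-rating-map / group-factorization device of precisely this kind is developed, and present the adaptation to $\sfp{\Cs}$ rather than redo it from scratch; the only genuinely new point to check is the compatibility of the semiring multiplication with the nesting $\bratauxsfc$, which again reduces to Fact~\ref{fct:lunion} and the unambiguity-free formulation of $\sfp{\Cs}$.
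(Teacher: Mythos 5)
Your treatment of the easy inclusion is fine: $\ioptic{\bratauxsfc} \subseteq \sfcopti$ follows from Fact~\ref{fct:lunion}, and the three closure properties are indeed generic (the paper cites Lemma~9.5 of~\cite{pzcovering2} rather than reproving them, but your sketches point the right way).

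The hard inclusion is where the proposal goes astray. You propose to ``factor a long word $w$ with $\rho(w)=r$'' and recover $r$ from the factors. But an element $r\in\sfcopti$ is, by definition of imprints, some $r\leq\rho(K)$ for $K$ in an optimal $\sfp{\Cs}$-cover of $A^*$; it is \emph{not} in general equal to $\rho(w)$ for any single word $w$. Worse, all elements of the form $\rho(w)$ for $w\in A^*$ are already in $T$ by ``Trivial elements'' and ``Multiplication'' alone (since $\rho$ is multiplicative, $\rho(w)=\rho(a_1)\cdots\rho(a_n)$), so nothing produced by factoring an actual word needs the group structure or $\ioptic{\bratauxsfc}$ at all. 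Your own parenthetical warning — that you should use ``a lower bound in $\opti{\sfp{\Cs}}{L,\rho}$'' rather than $\rho(v_i)$ — is the right instinct, but it cannot be realized within a word-level factorization: if you fix a concrete factor $v_i$ of a concrete word, you only ever get $\rho(v_i)$.

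The fix is to run the argument at the \emph{language} level throughout. The pumping argument (the content of Lemma~\ref{lem:nest:grouplem}) gives that $A^*$ is a \emph{finite} union of languages $La_1L\cdots a_\ell L$ with $a_i\in A$: for $w=b_1\cdots b_m$ one looks at the $G$-images of prefixes, and whenever two coincide the intermediate block lands in $\beta^{-1}(1_G)\subseteq L$, which lets one compress to $\ell<|G|$ marker letters; this shows each word lies in one of boundedly many such languages. Since $\bratauxsfc$ is a rating map, $\sfcopti=\bratauxsfc(A^*)$ is the corresponding finite union of $\bratauxsfc(La_1L\cdots a_\ell L)$, so any $r\in\sfcopti$ lies in one of them. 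Now Lemma~\ref{lem:quasitame} (quasi-multiplicativity of $\bratauxsfc$ — the ``compatibility'' point you flagged but did not make precise) gives
\[
\bratauxsfc(La_1L\cdots a_\ell L)\;=\;\dclosr\bigl(\bratauxsfc(L)\cdot\bratauxsfc(a_1)\cdot\bratauxsfc(L)\cdots\bratauxsfc(a_\ell)\cdot\bratauxsfc(L)\bigr),
\]
and since $\bratauxsfc(L)=\ioptic{\bratauxsfc}$ by choice of $L$, while $\bratauxsfc(a)=\dclosr\{\rho(a)\}$ (Fact~\ref{fct:theletters}), the right-hand side is contained in $T$ by Downset and Multiplication. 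This yields $r\in T$. So the ingredients you list are the right ones, but the decomposition must be of $A^*$ into languages, evaluated under $\bratauxsfc$, not of an individual word evaluated under $\rho$.
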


\begin{remark}
	Intuitively, we use \ioptic{\bratauxsfc} to ``nest'' two optimizations: one for \Cs and the other for \sfp{\Cs}. Indeed, $\ioptic{\bratauxsfc} = \bratauxsfc(L) = \opti{\sfp{\Cs}}{L,\rho}$ where $L$ is a \Cs-optimal \iden for $\bratauxsfc$.  Hence, \ioptic{\bratauxsfc} is least set $\prin{\rho}{\Kb} \subseteq R$ (with respect to inclusion), over all \sfp{\Cs}-covers \Kb of some language $L \in \Cs$ containing $\veps$.
\end{remark}

\subsection{Algorithm}

\newcommand{\sfrat}[1]{\ensuremath{\eta_{\rho,#1}}\xspace}
\newcommand{\sfrats}{\sfrat{S}}

We may now present our algorithm for computing \sfcopti. We fix a \vari of group languages \Cs for the presentation. As expected, the main procedure computes \ioptic{\bratauxsfc} (see Proposition~\ref{prop:utooptibool}). In this case as well, this procedure is obtained from a characterization theorem.

Consider a \nice \mratm $\rho: 2^{A^*} \to R$. We define the \sfp{\Cs}-complete subsets of $R$ for $\rho$. The definition depends on auxiliary \nice \mratms. We first present them. Clearly, $2^R$ is an idempotent semiring (addition is union and the multiplication is lifted from the one of $R$). For every $S \subseteq R$, we use it as the rating set of a \nice \mratm $\sfrats: 2^{A^*} \to 2^{R}$. Since we are defining a \emph{\nice} \mratm, it suffices to specify the evaluation of letters. For $a \in A$, we let $\sfrats(a) = S \cdot \{\rho(a)\} \cdot S \in 2^R$. Observe that by definition, we have $\ioptic{\sfrats} \subseteq R$.

We are ready to define the \sfp{\Cs}-complete subsets of $R$. Consider $S \subseteq R$. We say that $S$ is \emph{\sfp{\Cs}-complete for $\rho$} when the following
conditions are satisfied:
\begin{enumerate}
	\item {\bf Downset.} For every $r \in S$ and $q \leq r$, we have $q \in S$.
	\item {\bf Multiplication.} For every $q,r \in S$, we have $qr \in S$.
	\item {\bf \Cs-operation.} We have $\ioptic{\sfrats} \subseteq S$.
	\item {\bf \sfp{\Cs}-closure.} For every $r \in S$, we have $r^\omega + r^{\omega+1} \in S$.
\end{enumerate}

\begin{remark} \label{rem:sfclos:compne}
	The definition of \sfp{\Cs}-complete subsets does not explicitly require that they contain some trivial elements. Yet, this is implied by \Cs-operation. Indeed, if $S \subseteq R$ is \sfp{\Cs}-complete, then $\sfrats(\veps) = \{1_R\}$ (this is the multiplicative neutral element of $2^{R}$). This implies that $1_R \in \ioptic{\sfrats}$ and we obtain from \Cs-operation that $1_R \in S$.
\end{remark}

\begin{theorem}[\sfp{\Cs}-optimal \imprints (\Cs made of group languages)] \label{thm:sfclos:cargroup}
	Let $\rho: 2^{A^*} \to R$ be a \nice \mratm. Then, \ioptic{\bratauxsfc} is the least \sfp{\Cs}-complete subset of $R$.	
\end{theorem}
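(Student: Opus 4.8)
The plan is to prove the two inclusions separately, mirroring the structure of the analogous fixpoint characterizations (Theorem~\ref{thm:sfclos:carfinite} and Proposition~\ref{prop:utooptibool}). Write $T$ for the least \sfp{\Cs}-complete subset of $R$; it exists because the four closure conditions are preserved under arbitrary intersection (the only nontrivial point being \Cs-operation, which is monotone in $S$ since $\sfrat{S}(a) = S\cdot\{\rho(a)\}\cdot S$ is monotone in $S$ and $\ioptic{(\cdot)}$ is monotone). We must show $T = \ioptic{\bratauxsfc}$.

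\textbf{Soundness ($\ioptic{\bratauxsfc}$ is \sfp{\Cs}-complete, hence $T \subseteq \ioptic{\bratauxsfc}$).} Here I would verify that the set $U := \ioptic{\bratauxsfc} = \bratauxsfc(L) = \opti{\sfp{\Cs}}{L,\rho}$ (for $L$ a \Cs-optimal \iden for $\bratauxsfc$) satisfies the four conditions. Downset is immediate from the definition of \imprint. Multiplication uses that \sfp{\Cs} is closed under (marked) concatenation: given optimal \sfp{\Cs}-covers of $L$ witnessing $q$ and $r$, the pairwise products of their members form an \sfp{\Cs}-cover of $LL$, and since $L$ contains $\veps$ and is itself in \Cs \subseteq \sfp{\Cs} one passes back to a cover of $L$ — this is the standard argument, and optimality forces $qr \in U$. \sfp{\Cs}-closure uses the defining property of \sfp{\Cs}: from a member $K$ of an optimal cover with $r \le \rho(K)$ one builds $K^* $ (or rather uses that, after taking the syntactic image, idempotent-related elements $e$ of the rating monoid satisfy $e^\omega = e^{\omega+1}$ on the optimal imprint); concretely this is where one invokes that $\alpha$ being \sfp{\Cs}-recognizing forces $r^\omega + r^{\omega+1}$ into the optimal imprint, exactly as in the finite case via Theorem~\ref{thm:sfclos:carfinite}. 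The crux is \Cs-operation: I would show $\ioptic{\sfrat{U}} \subseteq U$. Unfolding, $\ioptic{\sfrat{U}} = \sfrat{U}(M)$ for a \Cs-optimal \iden $M$ of $\sfrat{U}$; by definition of $\sfrat{U}$ on letters and niceness, $\sfrat{U}(M)$ is built from products $U\cdot\{\rho(a)\}\cdot U$ over letters of words in $M$, i.e. it records $\opti{\sfp{\Cs}}{L,\rho}$-decorated scattered subwords inside a \Cs-language $M$ containing $\veps$; composing the two optimizations (optimality for \Cs at the outer level, for \sfp{\Cs} at the inner) and using that $LML \subseteq$ some \Cs-language containing $\veps$ (as \Cs is quotient-closed and a Boolean algebra, and $M, L \in \Cs$), one recognizes the result as landing inside $\opti{\sfp{\Cs}}{(\text{some }L'\ni\veps),\rho} \subseteq U$. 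This "nesting" step is the main obstacle and is where the hypothesis that \Cs consists of group languages is used (so that \Cs-optimal \idens interact well with concatenation on both sides — cf. Lemma~\ref{lem:sfclos:grpstut} and the philosophy of Lemma~\ref{lem:sepepswit}).

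\textbf{Completeness ($\ioptic{\bratauxsfc} \subseteq T$).} For the reverse inclusion I would exhibit, for each $r \in \ioptic{\bratauxsfc}$, a witness placing $r$ in every \sfp{\Cs}-complete set — equivalently, build an \sfp{\Cs}-cover of a suitable \Cs-language containing $\veps$ whose \imprint is contained in $T$. The natural approach is to go through the characterization of \sfp{\Cs} as \bsdp{\Cs} (Theorem~\ref{thm:sfclos:carac}) and reuse the partition machinery of Proposition~\ref{prop:sfclos:fromapertostar}: one shows by induction on the \bsdp{\Cs}-construction that for every $L \in \bsdp{\Cs}$ contained in a \Cs-optimal \iden region, $\prin{\rho}{\{L\}} \subseteq T$, with the base case $\{a\}$ and $\emptyset$ trivial, disjoint union handled by Downset, unambiguous product by Multiplication, intersection-with-\Cs absorbed into the outer \Cs-optimization (this is exactly what \Cs-operation encodes via $\sfrat{T}$), and the Kleene-star step handled by \sfp{\Cs}-closure (the $r \mapsto r^\omega + r^{\omega+1}$ condition is tailored to prefix codes with bounded synchronization delay, as in the proof of Proposition~\ref{prop:sfclos:fromapertostar}). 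Finally, to get $\ioptic{\bratauxsfc}$ itself rather than the imprint of an arbitrary \sfp{\Cs}-cover of $\{\veps\}$, one uses \Cs-operation applied with $S = T$: $\ioptic{\sfrat{T}} \subseteq T$ says precisely that the optimal \Cs-level \iden, decorated by the best \sfp{\Cs}-covers, stays inside $T$, which after unwinding the definition of $\bratauxsfc$ and $\sfrat{T}$ gives $\ioptic{\bratauxsfc} \subseteq T$. The delicate bookkeeping is again aligning the two nested optimizations; the group-language hypothesis guarantees (via Lemma~\ref{lem:sfclos:grpstut} and separation-decidability, used in the companion algorithmic statement) that the \Cs-optimal \iden is a two-sided-absorbing object, which is what makes the induction on the \bsdp{\Cs}-structure go through uniformly.
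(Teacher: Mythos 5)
Your soundness direction has the right skeleton (verify the four closure conditions for $\ioptic{\bratauxsfc}$), but you miss the two places where the group hypothesis actually enters. The paper first replaces the \Cs-optimal \iden by one of the form $L = \alpha\inv(1_G)$ for a morphism $\alpha: A^* \to G$ into a finite group; this is what makes Multiplication go through (because then $LL = L$, not merely because $\veps\in L$) and what makes \Cs-operation go through (because if $a_1\cdots a_n \in L$ then $La_1L\cdots La_nL \subseteq L$, since $\alpha$ sends every word in that product to $1_G$). Your sketch of \Cs-operation gestures at ``nesting the two optimizations'' but never identifies this two-sided absorption property of $L$, which is the actual mechanism. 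For \sfp{\Cs}-closure the paper also does not argue directly: it passes to a finite sub-\vari $\Gs \subseteq \Cs$ chosen so that $L\in\Gs$ and a fixed optimal \sfp{\Cs}-cover is in fact an \sfp{\Gs}-cover, and then invokes Theorem~\ref{thm:sfclos:carfinite} as a black box to read off the $r^\omega+r^{\omega+1}$ conclusion from \sfp{\Gs}-saturation; your sketch names Theorem~\ref{thm:sfclos:carfinite} but never sets up that reduction.

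The completeness direction is where there is a real gap. You propose an induction on the syntactic \bsdp{\Cs}-construction of each language, with inductive invariant $\prin{\rho}{\{L\}} \subseteq T$ for $L \in \bsdp{\Cs}$ ``contained in a \Cs-optimal \iden region''. That invariant cannot be right: $\prin{\rho}{\{L\}}$ is just $\dclosr \rho(L)$, and already for $L$ equal to the optimal \iden itself this is the imprint of the trivial one-element cover, which is strictly larger than $\ioptic{\bratauxsfc}$ unless the trivial cover is optimal. What is actually needed is to exhibit a \emph{fine} \sfp{\Cs}-cover $\Kb$ of some \Cs-language containing $\veps$ with $\rho(K)\in T$ for each $K\in\Kb$; even in the finite case the relevant induction (Proposition~\ref{prop:sfclos:pumping}) is not on syntactic \bsdp{\Cs}-structure at all, but on auxiliary size parameters of image sets, precisely because one must partition a single language into many small pieces. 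The paper does not redo that work here. Instead it takes a \Cs-optimal \iden $H$ for the auxiliary \ratm $\sfrats$ (note the role reversal: the \iden is for $\sfrats$, not for $\bratauxsfc$), encloses $H$ in a finite sub-\vari $\Gs\subseteq\Cs$, sets $L=\gtype{\veps}$, builds an explicit candidate $S'\subseteq (\gclac)\times R$ out of $S$ and $\sfrats$, checks that $S'$ is \sfp{\Gs}-saturated (Lemma~\ref{lem:sfclos:isat}; this is where the fact that $\gclac$ is a group is used to reduce \sfp{\Gs}-closure to the single idempotent $\gtype{\veps}$), and concludes by the minimality statement of Theorem~\ref{thm:sfclos:carfinite}. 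Your proposal neither identifies this reduction nor gives any concrete handle on the intersection-with-\Cs rule in \bsdp{\Cs}: ``absorbed into the outer \Cs-optimization'' is not a step one can carry out without the $\Gs$/$S'$ construction or an equivalent device.
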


When \Cs-separation is decidable, Theorem~\ref{thm:sfclos:cargroup} yields a least fixpoint procedure for computing \ioptic{\bratauxsfc} from a \nice \mratm $\rho: 2^{A^*} \to R$. The computation starts from the empty set and saturates it with the four operations in the definition of \sfp{\Cs}-complete subsets. It is clear that we may implement downset, multiplication and \sfp{\Cs}-closure. Moreover, we may implement \Cs-operation as this boils down to \Cs-separation by Lemma~\ref{lem:sepepswit}. Eventually, the computation reaches a fixpoint and it is straightforward to verify that this set is the least \sfp{\Cs}-complete subset of $R$, \emph{i.e.}, \ioptic{\bratauxsfc} by Theorem~\ref{thm:sfclos:cargroup}.

\smallskip

By Proposition~\ref{prop:utooptibool}, we may compute \sfcopti from \ioptic{\bratauxsfc}. Altogether, this yields the decidability of \sfp{\Cs}-covering by Proposition~\ref{prop:breduc}. Hence, Theorem~\ref{thm:sfclos:gmain} is proved.

\section{Conclusion}
We proved that for any \vari \Cs, \sfp{\Cs}-covering is decidable whenever \Cs is either finite or made of group languages and with decidable separation. Moreover, we presented an algebraic characterization of \sfp{\Cs} which holds for every \vari \Cs, generalizing earlier results~\cite{STRAUBING1979319,Pinambigu}. A key proof ingredient is an alternative definition of star-free closure: the operation $\Cs \mapsto \bsdp{\Cs}$ which we prove to be equivalent. This correspondence generalizes the work of Sch\"utzenberger~\cite{schutzbd} who introduced a single class \bsd ({\em i.e.} \bsdp{\{\emptyset,A^*\}}) corresponding to the star-free languages ({\em i.e.}~\sfp{\{\emptyset,A^*\}}).

Our results can be instantiated for several input classes \Cs. Theorem~\ref{thm:sfclos:main} applies when~\Cs is finite. In this case, the only prominent application is the class of star-free languages itself. It was already known that covering is decidable for this class~\cite{Henckell88,pzfo}. However, Theorem~\ref{thm:sfclos:main} is important for two reasons. First, its proof is actually simpler than the earlier ones specific to the star-free languages (this is achieved by relying on the operation $\Cs \mapsto \bsdp{\Cs}$). More importantly, Theorem~\ref{thm:sfclos:main} is used as a key ingredient for proving our second generic statement: Theorem~\ref{thm:sfclos:gmain}, which applies to classes made of group languages with decidable separation. It is known that separation is decidable for the class \grp of all group languages~\cite{Ash91}. Hence, we obtain that \sfp{\grp}-covering is decidable. Another application is the class \md consisting of languages counting the length of words modulo some number (deciding \md-separation is a simple exercise). We get the decidability of \sfp{\md}-covering. This is important, as the languages in \sfp{\md} are those definable in first-order logic with modular predicates (\fowm). A last example is given by the input class consisting of all languages counting the number of occurrences of letters modulo some number. These are exactly the languages recognized by finite commutative groups, for which separation is decidable~\cite{abelian_pt}.

\newpage
\appendix

\section{Appendix to Section~\ref{sec:carac}}
\label{app:sfc}
This appendix presents the missing proofs in Section~\ref{sec:carac}. It is written as a self-contained, full version of Section~\ref{sec:carac} which can be read independently.

We handle \sfp{\Cs}-membership with a generic algebraic characterization of \sfp{\Cs} (effective under the hypotheses of Theorems~\ref{thm:sfclos:main} and~\ref{thm:sfclos:gmain}), generalizing earlier work by Pin, Straubing and Th\'erien~\cite{STRAUBING1979319,Pinambigu}. We rely on an alternate definition of star-free closure involving a semantic restriction of the Kleene star, which we first present.

\subsection{Bounded synchronization delay}

We define a second operation on classes of languages $\Cs \mapsto \bsdp{\Cs}$. We shall later prove that it coincides with star-free closure. It is based on the work of Sch\"utzenberger~\cite{schutzbd} who defined a single class \bsd corresponding to the star-free languages (\emph{i.e.}, \sfp{\{\emptyset,A^*\}}). Here, we generalize it as an operation. The definition involves a semantic restriction of the Kleene star operation on languages: it may only be applied to ``\emph{prefix codes with bounded synchronization delay}''. Introducing this notion requires basic definitions from code theory that we first recall.

A language $K \subseteq A^*$ is a \emph{prefix code} when $\veps \not\in K$ and $K \cap KA^+ = \emptyset$ (no word in $K$ has a strict prefix in $K$). 
Given $d \geq 1$, a prefix code $K \subseteq A^+$ has \emph{synchronization delay $d$} if for every $u,v,w \in A^*$ such that $uvw \in K^+$ and $v \in K^d$, we have $uv \in K^+$. Finally, a prefix code $K \subseteq A^+$ has \emph{bounded synchronization delay} when it has synchronization delay $d$ for some $d \geq 1$.

\begin{example}
	Let $A = \{a,b\}$. Clearly, $\{ab\}$ is a prefix code with synchronization delay $1$: if $uvw \in (ab)^+$ and $v  = ab$, we have $uv \in (ab)^+$. Similarly, one may verify that $(aab)^*ab$ is a prefix code with synchronization delay $2$ (but not $1$). On the other hand, $\{aa\}$ does not have bounded synchronization delay. If $d \geq 1$, $a(aa)^da \in (aa)^*$ but  $a(aa)^d \not\in (aa)^*$.
\end{example}

We complement the definition with a few standard properties of prefix codes with bounded synchronization delay. They will be useful when manipulating them later. First, we have the following fact about prefix codes.

\begin{fact} \label{fct:sfclos:prefdec}
	Let $K$ be a prefix code. Consider $m,n \in \nat$, $u_1,\dots,u_m \in K$ and $v_1 \cdots,v_n \in K$. The two following properties hold:
	\begin{itemize}
		\item If $u_1 \cdots u_m$ is a prefix of $v_1 \cdots v_n$, then $m \leq n$ and $u_i = v_i$ for every $i \leq m$.
		\item If $u_1 \cdots u_m = v_1 \cdots v_n$, then $m = n$ and $u_i = v_i$ for every $i \leq m$.
	\end{itemize}
\end{fact}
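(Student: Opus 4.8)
The plan is to prove the first item by induction on $m$, and then to obtain the second item at once by applying the first item in both directions.

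First I would record the only consequence of the prefix-code hypothesis that is actually needed: if $x,y\in K$ and $x$ is a prefix of $y$, then $x=y$. Indeed, writing $y=xz$ with $z\in A^*$, if $z\neq\veps$ then $y\in KA^+$, contradicting $K\cap KA^+=\emptyset$; hence $z=\veps$ and $x=y$. (Note also that $\veps\notin K$, so every element of $K$ is a nonempty word, which will be used to rule out degenerate cases.)

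Next I would prove the first item by induction on $m$. The base case $m=0$ is trivial: $0\le n$, and there is no index $i\le 0$ to check. For the inductive step, assume $m\ge 1$ and that $u_1\cdots u_m$ is a prefix of $v_1\cdots v_n$. Since $u_1\in K$ is nonempty, the word $u_1\cdots u_m$ is nonempty, which forces $n\ge 1$. Now both $u_1$ and $v_1$ are prefixes of $v_1\cdots v_n$: the former because it is a prefix of $u_1\cdots u_m$, which is itself a prefix of $v_1\cdots v_n$, and the latter obviously. Two words that are both prefixes of a common word are comparable for the prefix order, so one of $u_1,v_1$ is a prefix of the other; by the consequence recorded above, $u_1=v_1$. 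Left-cancelling this common prefix in the free monoid $A^*$ shows that $u_2\cdots u_m$ is a prefix of $v_2\cdots v_n$. Applying the induction hypothesis to the $m-1$ words $u_2,\dots,u_m$ and the $n-1$ words $v_2,\dots,v_n$ gives $m-1\le n-1$, i.e.\ $m\le n$, together with $u_i=v_i$ for $2\le i\le m$; combined with $u_1=v_1$ this finishes the step.

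Finally, the second item follows by applying the first item symmetrically: if $u_1\cdots u_m=v_1\cdots v_n$, then each side is a prefix of the other, so the first item yields both $m\le n$ and $n\le m$, hence $m=n$, and also $u_i=v_i$ for all $i\le m$. I do not expect any real obstacle here; the only points that need a little care are setting up the induction on the number $m$ of factors (rather than on word length) and handling the degenerate cases $m=0$ and $n=0$ cleanly, both of which are addressed above.
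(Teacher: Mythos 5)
Your proof is correct and uses essentially the same strategy as the paper: induction on $m$, exploiting the prefix-code property to identify one pair of factors at each step, with the second item obtained as an immediate corollary of the first. The only cosmetic difference is that you peel off the first factor $u_1 = v_1$ and recurse on the remainder, while the paper peels off the last factor $u_m$; both are equally valid and rely on the same key observation that two comparable elements of a prefix code must be equal.
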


\begin{proof}
	The second property is an immediate corollary of the first one. Hence, it suffices to show that if $u_1 \cdots u_m$ is a prefix of $v_1 \cdots v_n$, then $m \leq n$ and $u_i = v_i$ for every $i \leq m$. We proceed by induction on $m \in \nat$. If $m= 0$, then $u_1 \cdots u_m = \veps$ and the property is immediate. Otherwise, $m \geq 1$. Clearly, $u_1 \cdots u_{m-1}$ is a prefix of $v_1 \cdots v_n$. Hence, induction yields that $m-1 \leq n$ and $u_i = v_i$ for every $i \leq m -1$. Therefore, since $u_1 \cdots u_m$ is a prefix of $v_1 \cdots v_n$, it follows that $u_m$ is a prefix of $v_m \cdots v_n$. Since $u_m \in K$ which is a prefix code, we have $u_m \neq \veps$ which implies that $v_m \cdots v_n \neq \veps$, i.e. $n \geq m$. Moreover, since $K$ is a prefix code an $u_m,v_m \in K$ we know that $u_m$ is \textbf{not} a strict prefix of $v_m$ and $v_m$ is \textbf{not} a strict prefix of $u_m$. Together with the hypothesis that $u_m$ is a prefix of $v_m \cdots v_n$, this yields $u_m = v_m$, concluding the proof.
\end{proof}

The second assertion in Fact~\ref{fct:sfclos:prefdec} implies that when $K$ is a prefix code, every word $w \in K^*$ admits a \emph{unique} decomposition witnessing this membership. This property is exactly the definition of ``codes'', which are more general than prefix codes.

We have the two following simple facts that we shall use to build new prefix codes with bounded synchronization delay out of already existing ones.

\begin{fact} \label{fct:sfclos:sprefsub}
	Let $d \geq 1$ and $K \subseteq A^+$ a prefix code with synchronization delay $d$. Then, every $H \subseteq K$ is also a prefix code with synchronization delay $d$.
\end{fact}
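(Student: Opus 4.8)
The plan is to verify directly that $H$ satisfies the two defining conditions of a prefix code with synchronization delay $d$, in each case deducing the condition for $H$ from the corresponding condition for $K$ together with the inclusion $H \subseteq K$. No new ideas are needed; the whole argument is a monotonicity observation plus one appeal to uniqueness of code decompositions.

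First I would check that $H$ is a prefix code. Since $\veps \notin K$ and $H \subseteq K$, we have $\veps \notin H$. Moreover $H A^+ \subseteq K A^+$, so $H \cap H A^+ \subseteq K \cap K A^+ = \emptyset$, which is exactly the condition defining a prefix code. Next I would establish that $H$ has synchronization delay $d$. Fix $u,v,w \in A^*$ with $uvw \in H^+$ and $v \in H^d$; the goal is $uv \in H^+$. Because $H \subseteq K$, we have $uvw \in K^+$ and $v \in K^d$, so the synchronization delay $d$ of $K$ gives $uv \in K^+$. The remaining point — the only one requiring any argument — is to upgrade this from $K^+$ to $H^+$. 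Write $uvw = h_1 \cdots h_n$ with each $h_i \in H$ (possible since $uvw \in H^+$) and $uv = k_1 \cdots k_m$ with each $k_j \in K$ (possible since $uv \in K^+$). Since $H \subseteq K$, both of these are also $K^+$-decompositions, and $uv$ is a prefix of $uvw$; applying the first assertion of Fact~\ref{fct:sfclos:prefdec} to $K$ yields $m \leq n$ and $k_i = h_i$ for every $i \leq m$, so in particular each $k_i$ lies in $H$. Finally $m \geq 1$, because $v \in H^d$ with $d \geq 1$ and $H \subseteq A^+$ forces $v \neq \veps$, hence $uv \neq \veps$. Therefore $uv = h_1 \cdots h_m \in H^+$, as required.

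I do not expect a genuine obstacle: the prefix-code part is immediate, and for the synchronization delay part everything is automatic except the passage from $uv \in K^+$ to $uv \in H^+$, which is handled by the uniqueness of decompositions over codes recorded in Fact~\ref{fct:sfclos:prefdec}.
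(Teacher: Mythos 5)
Your proof is correct and follows essentially the same route as the paper: both arguments deduce $uv \in K^+$ from the synchronization delay of $K$ and then use the first assertion of Fact~\ref{fct:sfclos:prefdec} to match the $K$-decomposition of $uv$ against the $H$-decomposition of $uvw$, concluding $uv \in H^+$. Your explicit check that $m \geq 1$ is a small point the paper leaves implicit (via $uv \in K^+$), but otherwise the two proofs coincide.
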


\begin{proof}
	It is immediate from the definitions that $H$ remains a prefix code. Hence, we need to prove that $H$ has synchronization delay $d$. Consider $u,v,w \in A^*$ such that $uvw \in H^+$ and $v \in H^d$. We show that $uv \in H^+$. Since $H \subseteq K$, we have $uvw \in K^+$ and $v \in K^d$. Hence, since $K$ has synchronization delay $d$, we obtain that $uv \in K^+$. Moreover, since $K$ is a prefix code and $uvw \in K^+$, $uvw$ admits a unique decomposition into factors of $K$ (this is follows from the second property in Fact~\ref{fct:sfclos:prefdec}). Additionally, since $uvw \in H^+$ with $H \subseteq K$, all factors in this unique decomposition belong to $H$. Finally, since $uv\in K^+$, the first property in Fact~\ref{fct:sfclos:prefdec} yields that $uv$ is a concatenation of factors in this unique decomposition. Hence, we have $uv \in H^+$ which concludes the proof.
\end{proof}

Additionally, we have the following more involved construction.

\begin{fact} \label{fct:sfclos:sprefnest}
	Let $d \geq 1$ and $K \subseteq A^+$ a prefix code with synchronization delay $d$. Let $H \subseteq K$. Then, $(K \setminus H)^*H$ is a prefix code with synchronization delay $d+1$. 
\end{fact}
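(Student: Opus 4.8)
The plan is to work throughout with the language $L = (K \setminus H)^*H$ and to translate membership in $L$ and $L^+$ into statements about the unique $K$-decomposition provided by Fact~\ref{fct:sfclos:prefdec}. First I would record the easy facts that $L \subseteq A^+$ (every word of $L$ ends with a word of $H \subseteq A^+$) and $L \subseteq K^+$, and more precisely that for every $\ell \in L$, the unique $K$-decomposition of $\ell$ has all its factors in $K \setminus H$ except the last one, which lies in $H$ (this is just the definition of $L$ together with the uniqueness part of Fact~\ref{fct:sfclos:prefdec}). The central preliminary observation, proved from Fact~\ref{fct:sfclos:prefdec}, is a characterization of $L^+$: a word $w \in A^*$ belongs to $L^+$ if and only if $w \in K^+$ and the last factor of its (unique) $K$-decomposition belongs to $H$. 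The forward direction follows because the concatenation of the $K$-decompositions of the $\ell_i$'s in a factorization $w = \ell_1 \cdots \ell_N$ with $\ell_i \in L$ is a $K$-decomposition of $w$, hence \emph{the} one; the backward direction is obtained by cutting the $K$-decomposition of $w$ right after each factor that lies in $H$.

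With this observation in hand, the prefix code property is immediate: $\veps \notin L$, and if some $\ell' \in L$ were a strict prefix of some $\ell \in L$, then by Fact~\ref{fct:sfclos:prefdec} the $K$-decomposition of $\ell'$ would be a strict initial segment of that of $\ell$, so the last factor of $\ell'$ --- which lies in $H$ --- would be a non-final factor in the $K$-decomposition of $\ell$, contradicting the description of $\ell$'s decomposition above.

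It then remains to establish synchronization delay $d+1$. Given $u,v,w \in A^*$ with $uvw \in L^+$ and $v \in L^{d+1}$, I would note that $uvw \in K^+$ and write $v = v_1 v_2$ with $v_2 \in L$ and $v_1 \in L^{d}$; since $L \subseteq K^+$, the $K$-decomposition of $v_1$ has at least $d$ factors, so $v_1 = q p$ where $q \in K^*$ and $p$ is the product of the last $d$ of these factors, so that $p \in K^d$. Applying the synchronization delay $d$ of $K$ to the factorization $uvw = (uq)\,p\,(v_2 w)$ yields $uqp \in K^+$, and since $v_2 \in L \subseteq K^+$ we get $uv = (uqp) v_2 \in K^+$. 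Finally, because $uqp$ and $v_2$ both lie in $K^+$, Fact~\ref{fct:sfclos:prefdec} shows that the $K$-decomposition of $uv$ is the concatenation of those of $uqp$ and of $v_2$, so its last factor equals the last factor of the $K$-decomposition of $v_2 \in L$, which lies in $H$; by the preliminary observation, $uv \in L^+$, as required. The only delicate point throughout is the bookkeeping connecting the $L$-structure with the $K$-structure --- getting the characterization of $L^+$ right and consistently invoking the uniqueness statement of Fact~\ref{fct:sfclos:prefdec} each time a word of $K^+$ is split into two words of $K^+$; once that is set up, every individual step is routine.
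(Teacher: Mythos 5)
Your proof is correct and follows essentially the same route as the paper's: apply the synchronization delay $d$ of $K$ to a factorization placing the last $d$ $K$-factors of the $L^d$ block in the middle position, conclude $uqp \in K^+$, and then observe that $uv = (uqp)v_2$ lies in $K^+$ with final $K$-factor in $H$, hence in $L^+$. The only difference is expository: you state and reuse a clean characterization of $L^+$ (membership in $K^+$ with last $K$-factor in $H$) and explicitly extract $p \in K^d$, whereas the paper applies the delay directly to $x \in K^n$ with $n \geq d$ (implicitly using that delay $d$ entails delay $n$) and closes via the inclusion $K^*H \subseteq ((K\setminus H)^*H)^+$; these are the same argument with minor bookkeeping rearranged.
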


\begin{proof}
	We first verify that $(K \setminus H)^*H$ is a prefix code. Clearly, $(K \setminus H)^*H \subseteq A^+$ since $H \subseteq K \subseteq A^+$. Hence, we have to show that $(K \setminus H)^*HA^+ \cap (K \setminus H)^*H = \emptyset$. If $w$ belongs to this intersection, in particular, we have $w \in K^*$. Therefore, since $K$ is a prefix code, $w$ admits a \emph{unique} decomposition $w = w_1 \cdots w_n$ with $w_1,\dots,w_n \in K$. Since $w \in (K \setminus H)^*H$, the factor $w_n$ is the only one in $H$. However, since $w \in (K \setminus H)^*HA^+$, the first property in Fact~\ref{fct:sfclos:prefdec} implies that one of the factors $w_i$ for $i \leq n-1$ must belong to $H$. This is a contradiction.
	
	It remains to show that $(K \setminus H)^*H$ has synchronization delay $d+1$. We let $u,v,w \in A^*$ such that $uvw \in ((K \setminus H)^*H)^+$ and $v \in ((K \setminus H)^*H)^{d+1}$. We prove that $uv \in ((K \setminus H)^*H)^+$. Clearly $v = xy$ with $x \in ((K \setminus H)^*H)^{d}$ and $y \in (K \setminus H)^*H$. Observe that $x \in K^{n}$ for some $n \geq d$.  Hence, since $uxyw = uvw \in K^+$ and $K$ has synchronization delay $d$, it follows that $ux \in K^+$. Consequently $uv = uxy \in K^+(K \setminus H)^*H$. This implies that $uv \in K^*H$ which is clearly a subset of $((K \setminus H)^*H)^+$. This concludes the proof.
\end{proof}

We may now present the operation $\Cs \mapsto \bsdp{\Cs}$. The definition involves \emph{unambiguous concatenation}. Given $K,L \subseteq A^*$, their concatenation $KL$ is \emph{unambiguous} when every word $w \in KL$ admits a \emph{unique} decomposition $w = uv$ with $u \in K$ and $v \in L$. Given a class \Cs, \bsdp{\Cs} is the least class containing $\emptyset$ and $\{a\}$ for every $a \in A$, and closed under the following properties:
\begin{itemize}
	\item \textbf{Intersection with \Cs:} if $K \in \bsdp{\Cs}$ and $L \in \Cs$, then $K \cap L \in \bsdp{\Cs}$.
	\item \textbf{Disjoint union:} if $K,L \in \bsdp{\Cs}$ are disjoint, then $K \uplus L \in \bsdp{\Cs}$.
	\item \textbf{Unambiguous product:} if $K,L \in \bsdp{\Cs}$ and $KL$ is unambiguous, then $KL \in \bsdp{\Cs}$.
	\item \textbf{Kleene star for prefix codes with bounded synchronization delay:} if $K \in \bsdp{\Cs}$ is a prefix code with bounded synchronization delay, then $K^* \in \bsdp{\Cs}$.
\end{itemize}

\begin{remark}
	Sch\"utzenberger proved in~\cite{schutzbd} that $\bsdp{\{\emptyset,A^*\}} = \sfp{\{\emptyset,A^*\}}$. His definition of \bsdp{\{\emptyset,A^*\}} was slightly less restrictive than ours: it does not require that the unions are disjoint and the concatenations unambiguous. It will be immediate from the correspondence with star-free closure that the two definitions are equivalent. \end{remark}

\begin{remark}
  This closure operation is different from standard ones. Instead of requiring that $\Cs \subseteq \bsdp{\Cs}$, we impose a stronger requirement: intersection with languages in \Cs is allowed. If we only required that $\Cs \subseteq \bsdp{\Cs}$, we would get a weaker operation which does not correspond to star-free closure in general. For example, let $A = \{a,b\}$ and consider the class \md of Example~\ref{ex:allgroups}. Observe $(aa)^* \in \bsdp{\md}$. Indeed, $\{a\} \in \bsdp{\md}$ has bounded synchronization delay, $(AA)^* \in \md$ and $(aa)^* = a^* \cap (AA)^*$. Yet, one may verify that $(aa)^*$ cannot be built from the languages of \md with union, concatenation and Kleene star applied to prefix codes with bounded synchronization delay.
\end{remark}

\subsection{Algebraic characterization of star-free closure}

We now reduce deciding membership for \sfp{\Cs} to computing \Cs-\emph{stutters}. Let us first define this new notion. Let \Cs be a \vari and $\alpha: A^* \to M$ be a morphism. A \emph{\Cs-stutter} for $\alpha$ is an element $s \in M$ such that for every \Cs-cover \Kb of $\alpha\inv(s)$, there exists $K\in\Kb$ satisfying $K\cap KK \neq \emptyset$. When $\alpha$ is understood, we simply speak of \Cs-stutter. Finally, we say that $\alpha$ is \emph{\Cs-aperiodic} when for every \Cs-stutter $s \in M$, we have $s^\omega = s^{\omega+1}$. The reduction is stated in the following theorem.

\adjustc{thm:sfclos:carac}
\begin{theorem}
	Let \Cs be a \vari and consider a regular language $L \subseteq A^*$. The following properties are equivalent:
\begin{enumerate}
		\item $L \in \sfp{\Cs}$.
		\item $L \in \bsdp{\Cs}$.
		\item The syntactic morphism of $L$ is \Cs-aperiodic.
	\end{enumerate}
\end{theorem}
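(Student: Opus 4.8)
The statement is a three-way equivalence; I would organize the argument as a cycle $1) \Rightarrow 2) \Rightarrow 3) \Rightarrow 1)$, with the hardest implication $3) \Rightarrow 2)$ being exactly the one already sketched in the excerpt (Proposition~\ref{prop:sfclos:fromapertostar}), so I would quote that and concentrate on the two remaining directions.

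\textbf{$1) \Rightarrow 2)$: $\sfp{\Cs} \subseteq \bsdp{\Cs}$.} Since $\bsdp{\Cs}$ contains $\emptyset$ and every $\{a\}$, and $\sfp{\Cs}$ is generated from $\Cs \cup \{\{a\} : a\in A\}$ by Boolean operations and marked concatenation, it suffices to show $\bsdp{\Cs}$ contains $\Cs$ and is closed under union, complement, and (unrestricted) concatenation. Containment of $\Cs$: for $L \in \Cs$, write $L = A^* \cap L$; now $A^* = A^*$ is obtained as $(\text{the prefix code }A)^*$ — $A$ is a prefix code with synchronization delay $1$ — so $A^* \in \bsdp{\Cs}$, and then intersection-with-$\Cs$ gives $L \in \bsdp{\Cs}$. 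For Boolean closure, the subtle point is that $\bsdp{\Cs}$ is defined only with \emph{disjoint} union and there is no complement clause, so I must argue that $\bsdp{\Cs}$ is in fact closed under all Boolean operations: the standard move is to observe that the languages in $\bsdp{\Cs}$ are regular and that one can refine any finite family by the canonical congruence of a suitable finite subclass $\Ds \subseteq \Cs$ (Fact~\ref{fct:caracfin}-style reasoning) together with syntactic congruences, obtaining a common finite partition into $\bsdp{\Cs}$-languages; Boolean combinations are then disjoint unions of blocks. For concatenation, I would reduce an arbitrary product $KL$ to an unambiguous one: pass to a fine enough partition $K = \biguplus K_i$, $L = \biguplus L_j$ so that each $K_iL_j$ is unambiguous (markers of the form $w^{-1}$-classes make the cut point unique), and take the disjoint union of those that are needed. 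This part is bookkeeping but needs the prefix-code facts (Facts~\ref{fct:sfclos:sprefsub} and~\ref{fct:sfclos:sprefnest}) to stay inside $\bsdp{\Cs}$.

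\textbf{$2) \Rightarrow 3)$: membership in $\bsdp{\Cs}$ forces $\Cs$-aperiodicity of the syntactic morphism.} I would prove the contrapositive-flavoured statement by structural induction on the construction of $L \in \bsdp{\Cs}$: every language recognized by a morphism $\alpha$ that is \emph{not} $\Cs$-aperiodic behaves badly, so I instead track the positive invariant — if $L \in \bsdp{\Cs}$ then $L$ is recognized by some $\Cs$-aperiodic morphism, and since $\Cs$-aperiodicity is preserved under taking the syntactic morphism (a quotient of any recognizer), we conclude. The induction: $\emptyset$ and $\{a\}$ are recognized by trivial/aperiodic morphisms; for $K\cap L$ with $L\in\Cs$, take a product of a recognizer of $K$ with the $\Cs$-canonical morphism and check that $\Cs$-stutters of the product still square-collapse (using that $\Cs$-covers of a preimage can be intersected with $\Cs$-languages); disjoint union and unambiguous product are handled with the analogous product-of-recognizers construction, where the key is that a $\Cs$-cover witnessing a stutter in the product pulls back to $\Cs$-covers of the component preimages — here the unambiguity of the product is what keeps the square-collapse under control. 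The genuinely delicate clause is the Kleene star: if $K$ is a prefix code with bounded synchronization delay and $K^*$ is recognized by $\alpha$, I must show $\alpha$ is $\Cs$-aperiodic. This is where bounded synchronization delay does real work: a $\Cs$-stutter $s$ with $s^\omega \neq s^{\omega+1}$ would let one build a $\Cs$-cover of $\alpha^{-1}(s)$ with no self-overlapping block, contradicting the stutter property — the synchronization delay bound is what guarantees that concatenating copies of words in a block of the cover stays controlled, so that the $\Cs$-cover refining the one on $K$ propagates to all of $K^*$ without creating a square.

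\textbf{Expected main obstacle.} The routine directions are $1)\Rightarrow 2)$ (pure bookkeeping with prefix codes) and the non-star clauses of $2)\Rightarrow 3)$. The real difficulty is twofold: first, the $3)\Rightarrow 2)$ direction, which is Proposition~\ref{prop:sfclos:fromapertostar} and its three-parameter induction — but that is supplied in the excerpt, so I would simply invoke it; second, within $2)\Rightarrow 3)$, the Kleene-star clause, where I must transfer a $\Cs$-cover of $\alpha^{-1}(s)$ of good quality across the $*$ operation using the bounded-synchronization-delay hypothesis. I would handle it by fixing a synchronization delay $d$ for $K$, taking a $\Cs$-cover of $\alpha^{-1}(s)$ refined by the syntactic congruence, and arguing that if every block squared outside the preimage we could reassemble a contradiction via the unique decomposition of words in $K^*$; making "reassemble" precise — controlling how a square of a long word in $K^*$ decomposes into $K$-factors and bounding the overlap by $d$ — is the step I expect to require the most care.
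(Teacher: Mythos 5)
There are two genuine problems. First, the logical organization is broken: you announce the cycle $1)\Rightarrow 2)\Rightarrow 3)\Rightarrow 1)$, but then set out to prove $1)\Rightarrow 2)$, $2)\Rightarrow 3)$, and to quote $3)\Rightarrow 2)$ from the excerpt. Those three implications yield $1)\Rightarrow 2)\Leftrightarrow 3)$ with no route back to $1)$, so the equivalence is not established. The paper runs the reverse cycle $2)\Rightarrow 1)\Rightarrow 3)\Rightarrow 2)$, which closes.

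Second, and more seriously, your $1)\Rightarrow 2)$ direction is not ``pure bookkeeping with prefix codes''; it is the hard direction of the theorem in disguise. $\bsdp{\Cs}$ is generated only by \emph{disjoint} unions, \emph{unambiguous} products and restricted stars: no constructor gives complement, and closure under Boolean operations and unrestricted concatenation is precisely what the aperiodicity machinery is needed to establish. Your proposed fix --- refine a finite family by a finite sub-class congruence $\caned$ together with syntactic congruences so that Boolean combinations become disjoint unions of blocks --- silently assumes that the blocks of that common refinement lie in $\bsdp{\Cs}$. But those blocks are morphism preimages, and showing that preimages under an aperiodic morphism lie in $\bsdp{\Cs}$ is exactly the content of Proposition~\ref{prop:sfclos:fromapertostar} and its three-parameter induction, i.e.\ the direction $3)\Rightarrow 2)$ you intend merely to quote. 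As written the ``standard move'' is circular. The genuinely easy inclusion is the opposite one, $\bsdp{\Cs}\subseteq\sfp{\Cs}$: $\sfp{\Cs}$ absorbs every $\bsdp{\Cs}$ constructor trivially except the restricted star, and for $K^*$ one exhibits an explicit star-free expression over $\Cs$ using the synchronization delay $d$ (the paper's formula $K^*=\bigcup_{0\leq h\leq d-1}K^h\cup\bigl(A^*K^d\cap(A^*\setminus H)\bigr)$ for an auxiliary star-free $H$); that is the direction to take. Your $2)\Rightarrow 3)$ sketch is a genuinely different route from the paper (which proves $1)\Rightarrow 3)$ by structural induction over the star-free expression via Lemma~\ref{lem:sfclos:aper} and derives $2)\Rightarrow 3)$ by composition); your quotient-stability observation is correct, but the Kleene-star clause of your induction --- that starring a prefix code with bounded synchronization delay preserves $\Cs$-aperiodicity --- is a $\Cs$-relative version of Sch\"utzenberger's theorem and is far from routine, which is another reason the paper's cycle is the one to use.
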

\restorec

Naturally, the characterization need not be effective: this depends on \Cs. Deciding whether a morphism is \Cs-aperiodic boils down to computing \Cs-stutters. Yet, this is possible under the hypotheses of Theorems~\ref{thm:sfclos:main} and~\ref{thm:sfclos:gmain}. First, if \Cs is a finite \vari, deciding whether and element is a \Cs-stutter is simple: there are finitely many \Cs-covers and we may check them all. Moreover, if \Cs is a \vari of group languages, the question boils down to \Cs-separation as stated in the next lemma.

\adjustc{lem:sfclos:grpstut}
\begin{lemma} 
	Let \Cs be a \vari of group languages and $\alpha: A^* \to M$~be a morphism. For all $s \in M$, $s$ is a \Cs-stutter if and only if $\{\veps\}$ is \textbf{not} \Cs-separable from~$\alpha\inv(s)$.
\end{lemma}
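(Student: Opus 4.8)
The plan is to prove the two implications separately. One is routine and uses nothing about groups; the other is the real content and exploits the structure of group languages.

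\medskip
\noindent\emph{Easy direction: if $\{\veps\}$ is not \Cs-separable from $\alpha\inv(s)$, then $s$ is a \Cs-stutter.} I argue the contrapositive. Suppose $s$ is not a \Cs-stutter, so there is a \Cs-cover $\Kb = \{K_1,\dots,K_n\}$ of $\alpha\inv(s)$ with $K_i \cap K_iK_i = \emptyset$ for all $i$. Then $\veps \notin K_i$ for every $i$: otherwise $\veps = \veps\veps \in K_i \cap K_iK_i$. Hence $L = A^* \setminus (K_1 \cup \dots \cup K_n)$ lies in \Cs (it is a Boolean algebra), contains $\veps$, and is disjoint from $K_1 \cup \dots \cup K_n \supseteq \alpha\inv(s)$; so $L$ separates $\{\veps\}$ from $\alpha\inv(s)$.

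\medskip
\noindent\emph{Hard direction: if $\{\veps\}$ is \Cs-separable from $\alpha\inv(s)$, then $s$ is not a \Cs-stutter.} Let $L \in \Cs$ with $\veps \in L$ and $L \cap \alpha\inv(s) = \emptyset$. The goal is to produce a \Cs-cover of $\alpha\inv(s)$ all of whose blocks $K$ satisfy $K \cap KK = \emptyset$, which is exactly the failure of the stutter condition. Since $L$ is a group language, we may assume there is a surjective morphism $\eta\colon A^* \to G$ onto a finite group with $L = \eta\inv(F)$ for some $F \subseteq G$, and $1_G = \eta(\veps) \in F$ since $\veps \in L$ (corestrict any recognizer to its image, a subgroup of the finite group). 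Consider $\mathcal{P} = \{P \subseteq G \mid \eta\inv(P) \in \Cs\}$. Because \Cs is a Boolean algebra, $\mathcal{P}$ is a Boolean subalgebra of $2^G$; because \Cs is quotient-closed and $\eta$ is onto, $\mathcal{P}$ is stable under left and right translations by $G$, using $w\inv\eta\inv(P) = \eta\inv(\eta(w)\inv P)$ and $\eta\inv(P)w\inv = \eta\inv(P\eta(w)\inv)$. Let $N = \bigcap\{P \in \mathcal{P} \mid 1_G \in P\}$, the atom of the finite Boolean algebra $\mathcal{P}$ containing $1_G$. The key step is that $N$ is a \emph{normal subgroup} of $G$: for $n \in N$ and $P \in \mathcal{P}$ with $1_G \in P$, stability gives $n\inv P \in \mathcal{P}$ and $1_G \in n\inv P$, so $N \subseteq n\inv P$, i.e. $nN \subseteq P$; intersecting over such $P$ gives $nN \subseteq N$, hence $nN = N$ by finiteness, so $N$ is a subgroup; and for any $g \in G$, the set $g\inv N g = g\inv(Ng)$ lies in $\mathcal{P}$ and contains $1_G$, so $N \subseteq g\inv N g$ for all $g$, whence $g\inv N g = N$.

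\medskip
\noindent Now I would conclude as follows. From $1_G \in F \in \mathcal{P}$ and minimality of $N$ we get $N \subseteq F$, so $\eta\inv(N) \subseteq L$ and therefore $\eta\inv(N) \cap \alpha\inv(s) = \emptyset$. Each coset $gN$ is a left translate of $N \in \mathcal{P}$, so $\eta\inv(gN) \in \Cs$; the family $\{\eta\inv(gN) \mid g \in G \setminus N\}$ covers $\alpha\inv(s)$, since any $w$ with $\eta(w) \in N$ lies in $\eta\inv(N) \subseteq L$ and hence not in $\alpha\inv(s)$; and for $g \notin N$ normality gives $(gN)(gN) = g^2N \neq gN$ in $G$, so $\eta\inv(gN)\,\eta\inv(gN) \subseteq \eta\inv(g^2N)$ is disjoint from $\eta\inv(gN)$. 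This is the required cover witnessing that $s$ is not a \Cs-stutter, which together with the easy direction proves the lemma. The main obstacle is precisely that \Cs is only a quotient-closed Boolean algebra of group languages rather than a full variety, so the individual preimages $\eta\inv(g)$ need not belong to \Cs and the naive cover $\{\eta\inv(g)\mid g\neq 1_G\}$ is unavailable; the remedy is to inflate $\{1_G\}$ to the smallest $\mathcal{P}$-definable set $N$ around it and to establish that $N$ is a normal subgroup, which is exactly what makes both the membership $\eta\inv(gN)\in\Cs$ and the product identity $(gN)(gN)=g^2N$ available.
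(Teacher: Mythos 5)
Your proof is correct. The easy direction is the paper's second implication, just stated in contrapositive form, and is essentially identical. The hard direction is also correct but takes a genuinely different route. The paper proves the contrapositive (stutter $\Rightarrow$ not separable) by fixing any $L\in\Cs$ with $\veps\in L$, passing to the \emph{syntactic} morphism $\beta\colon A^*\to G$ of $L$, and invoking the standard fact that for a quotient-closed Boolean algebra \Cs, every language recognized by the syntactic morphism of a member of \Cs again belongs to \Cs (each $\beta\inv(g)$ is a finite Boolean combination of two-sided quotients $u\inv L v\inv$). This makes $\{\beta\inv(g)\}_{g\in G}$ immediately a \Cs-cover, so the stutter hypothesis produces a $g$ with $gg=g$, i.e.\ $g=1_G$, and then $\beta\inv(1_G)\subseteq L$ gives $L\cap\alpha\inv(s)\neq\emptyset$. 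You instead start from an \emph{arbitrary} group recognizer $\eta\colon A^*\to G$ of the separator, where the individual preimages $\eta\inv(g)$ are indeed not guaranteed to lie in \Cs, and you compensate by constructing $\mathcal P=\{P\subseteq G:\eta\inv(P)\in\Cs\}$ and its atom $N$ at $1_G$, proving $N$ normal and covering $\alpha\inv(s)$ by the non-identity cosets $\eta\inv(gN)$. This is a more self-contained but heavier argument: you are effectively re-deriving, inside $G$, the quotient that the syntactic morphism would have handed you for free. The obstacle you identify in your final remark is real for your chosen $\eta$, but it evaporates if $\eta$ is taken to be the syntactic morphism, which is what the paper does. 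Both routes are sound; the paper's is shorter by outsourcing to a known fact, yours buys independence from that fact at the cost of the normal-subgroup argument.
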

\restorec

\begin{proof}
	Assume first that $s$ is a \Cs-stutter. We prove that $\{\veps\}$ is not \Cs-separable from $\alpha\inv(s)$: given $L \in \Cs$ such that $\veps \in L$, we show that $L \cap \alpha\inv(s)\neq\emptyset$. Let $\beta: A^*  \to G$ be the syntactic morphism of $L$. Since \Cs is a \vari of group languages, it is standard and simple to verify that $G$ is a group and that every language recognized by $\beta$ belongs to \Cs (see~\cite{pingoodref} for example). Let \Kb be the set of all languages $\beta\inv(g)$ for $g \in G$ which intersect $\alpha\inv(s)$. By definition, \Kb is a \Cs-cover of $\alpha\inv(s)$. Since $s$ is a \Cs-stutter, this yields $g \in G$ such that $\beta\inv(g)\beta\inv(g) \cap \beta\inv(g) \neq \emptyset$ and $\alpha\inv(s) \cap \beta\inv(g) \neq \emptyset$. Clearly, $\beta\inv(g)\beta\inv(g) \cap \beta\inv(g) \neq \emptyset$ implies that $gg = g$ which means that $g = 1_G$ since $G$ is a group. Therefore, $\beta\inv(g) \subseteq L$ since $L$ is recognized by it syntactic morphism $\beta$ and $\veps \in L$. Together with the hypothesis that $\alpha\inv(s) \cap \beta\inv(g) \neq \emptyset$, this implies $L \cap \alpha\inv(s) \neq \emptyset$ which concludes the proof for the first direction.
	
	Conversely, we assume that $\{\veps\}$ is not \Cs-separable from $\alpha\inv(s)$ and show that $s$ is a \Cs-stutter. Let \Kb be a \Cs-cover of $\alpha\inv(s)$. We need to exhibit $K \in \Kb$ such that $K\cap KK \neq \emptyset$. Let $H$ be the union of all languages in \Kb. Clearly, $H \in \Cs$ and $\alpha\inv(s)\subseteq H$. Therefore, since $\{\veps\}$ is not \Cs-separable from $\alpha\inv(s)$, we have $\veps \in H$. (otherwise $A^* \setminus H \in \Cs$ would be a separator). By definition of $H$, it follows that \Kb contains a language $K$ such that $\veps \in K$. It follows that $\veps \in KK$ and we get $K \cap KK \neq \emptyset$ which concludes the proof.	
\end{proof}

Altogether, we obtain the membership part in Theorems~\ref{thm:sfclos:main} and~\ref{thm:sfclos:gmain}: given a \vari which is either finite or made of group languages and with decidable separation, the \sfp{\Cs}-membership problem is decidable. It remains to prove Theorem~\ref{thm:sfclos:carac}.

\subsection{Proof of Theorem~\ref{thm:sfclos:carac}}

We prove Theorem~\ref{thm:sfclos:carac}. We fix an arbitrary \vari \Cs for the proof and show the implications $2 \Rightarrow 1) \Rightarrow 3) \Rightarrow 2)$. 

\subsubsection*{Direction $2) \Rightarrow 1)$}

We show that $\bsdp{\Cs} \subseteq \sfp{\Cs}$. This amounts to proving that \sfp{\Cs} satisfies all properties involved in the definition of \bsdp{\Cs}. In all cases but one, this is immediate. By definition of \sfp{\Cs}, we have $\{a\} \in \sfp{\Cs}$ for every $a \in A$. Moreover, \sfp{\Cs} is closed under union, intersection and concatenation by definition (this includes intersection with languages of \Cs since $\Cs \subseteq \sfp{\Cs}$). Therefore, we concentrate on proving that \sfp{\Cs} is closed under Kleene star when it is applied to a prefix code with bounded synchronization delay. We fix $K \in \sfp{\Cs}$ which is a prefix code with bounded synchronization delay and prove that $K^* \in \sfp{\Cs}$. We let $d$ be the delay and consider the following language $H$:
\[
H = \left(A^*K^d \cap \left(A^* \setminus \left(A^*K^{d+1} \cup \bigcup_{0 \leq h \leq d} K^h\right)\right) \right) A^*
\]
Since $K$ has synchronization delay $d$, we have the following fact (note that this is the only part of the proof where we use the hypothesis that $K$ has synchronization delay $d$).

\begin{fact} \label{fct:sfclos:delay}
	We have $K^* \subseteq A^* \setminus H$.
\end{fact}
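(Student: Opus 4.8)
The plan is to argue by contradiction. Suppose some $z \in K^*$ also lies in $H$. Since $H$ has the shape $N A^*$ with $N = A^*K^d \cap \bigl(A^* \setminus (A^*K^{d+1} \cup \bigcup_{0 \leq h \leq d} K^h)\bigr)$, membership $z \in H$ means that $z$ has a prefix $w \in N$, i.e.\ $w \in A^*K^d$, $w \notin A^*K^{d+1}$, and $w \notin K^h$ for every $h$ with $0 \leq h \leq d$. The goal is to derive a contradiction from these three conditions together with the fact that $w$ is a prefix of a word in $K^*$.

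First I would dispatch an edge case: since $K \subseteq A^+$ and $d \geq 1$, we have $K^d \subseteq A^+$, so $w \in A^*K^d$ forces $w \neq \veps$; hence $z \neq \veps$, and therefore $z \in K^+$ rather than merely $z \in K^*$. Next, because $w \in A^*K^d$, write $w = uv$ with $v \in K^d$, and write $z = w w'$ using that $w$ is a prefix of $z$. Then $uvw' = z \in K^+$ with $v \in K^d$, so the defining property of synchronization delay $d$ yields $uv \in K^+$, that is, $w \in K^+$. Now take the (unique, by Fact~\ref{fct:sfclos:prefdec}) decomposition $w = w_1 \cdots w_m$ with $m \geq 1$ and $w_1,\dots,w_m \in K$. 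If $m \geq d+1$, grouping the last $d+1$ factors exhibits $w \in A^*K^{d+1}$, contradicting $w \notin A^*K^{d+1}$; hence $m \leq d$. But then $w \in K^m$ with $1 \leq m \leq d$, contradicting $w \notin \bigcup_{0 \leq h \leq d} K^h$. This contradiction shows $K^* \cap H = \emptyset$, i.e.\ $K^* \subseteq A^* \setminus H$.

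I do not expect any genuine obstacle here: the argument is short and elementary. The one place where the hypothesis "$K$ has synchronization delay $d$" is actually used is the step passing from $uvw' \in K^+$ and $v \in K^d$ to $uv \in K^+$; everything else is bookkeeping with prefix-code decompositions, plus the small preliminary observation that $w \neq \veps$ (so that $z \in K^+$, which is what lets us apply the synchronization-delay property, stated for $K^+$).
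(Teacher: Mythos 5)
Your proof is correct and follows essentially the same two-step structure as the paper's: use the synchronization-delay hypothesis to show that any prefix of a word in $K^*$ that lies in $A^*K^d$ must itself lie in $K^+$, and then observe that every word of $K^+$ lies in $A^*K^{d+1} \cup \bigcup_{0 \leq h \leq d} K^h$. The paper presents this as a chain of set inclusions (and leaves the synchronization-delay step as ``one may verify''), whereas you unfold it as a direct contradiction argument tracking an explicit word and handling the $z = \veps$ edge case explicitly; these are the same proof in different packaging.
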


\begin{proof}
	Clearly, we have, $K^* \subseteq A^*K^{d+1} \cup \bigcup_{0 \leq h \leq d} K^h$. Therefore, it follows that,
	\[
	A^* \setminus \left(A^*K^{d+1} \cup \bigcup_{0 \leq h \leq d} K^h\right) \subseteq A^* \setminus K^*
	\]
	By definition of $H$, this yields $H \subseteq \left(A^*K^d \cap \left(A^* \setminus K^*\right)\right) A^*$ and we obtain,
	\[
	A^* \setminus \left(\left(A^*K^d \cap \left(A^* \setminus K^*\right)\right) A^*\right) \subseteq A^* \setminus H
	\]
	Moreover, since $K$ has synchronization delay $d$, one may verify that,
	\[
	K^* \subseteq A^* \setminus \left(\left(A^*K^d \cap \left(A^* \setminus K^*\right)\right) A^*\right)
	\]
	Altogether, this yields as desired that $K^* \subseteq A^* \setminus H$.
\end{proof}

Clearly, we have $H \in \sfp{\Cs}$ by definition of \sfp{\Cs} since $K,A^* \in \sfp{\Cs}$. Finally, we define,
\[
G = \left(\bigcup_{0 \leq h \leq d-1} K^h\right) \cup \left(A^*K^d \cap \left(A^* \setminus H\right)\right)
\]	
Again, it is immediate that $G \in \sfp{\Cs}$. We show that $K^* = G$ which concludes the proof. Let us start with the left to right inclusion. Consider $x \in K^*$. If $x \in K^h$ for $h \leq d-1$, it is immediate that $w \in G$ by definition. Otherwise, we have $x \in A^*K^d$ and since $x \in K^*$, we know that $x \in A^* \setminus H$ by Fact~\ref{fct:sfclos:delay} (note that this fact relies on the hypothesis that $K$ has bounded synchronization delay). This implies that $x \in A^*K^d \cap \left(A^* \setminus H\right)$ which yields $x \in G$, finishing the proof for this inclusion.

\medskip

We finish with the right to left inclusion (which is independent from the hypothesis that $K$ has bounded synchronization delay). Consider $x \in G$, we show that $x \in K^*$. If $x\in \bigcup_{0 \leq h \leq d-1} K^h$, this is immediate. Otherwise, $x\in A^*K^d \cap \left(A^* \setminus H\right)$ and we proceed by induction on the length of $x$. By hypothesis, $x \in A^*K^d$ and $x \not\in H$. By definition of $H$, this implies that,
\[
x \in A^*K^{d+1} \cup \bigcup_{0 \leq h \leq d} K^h
\]
If $x \in \bigcup_{0 \leq h \leq d} K^h$, it is immediate that $x \in K^*$ and we are finished. Otherwise, $x \in A^*K^{d+1}$ which means that $x = x'y$ with $x' \in A^*K^d$ and $y \in K$. Since $\varepsilon \not\in K$ ($K$ is a prefix code), we have $y \neq \veps$ which implies that $|x'| < |x|$. Moreover, since $x \in A^* \setminus H$ and $x'$ is a prefix of $x$, one may verify from the definition of $H$ that $x' \in A^* \setminus H$ as well. Altogether, we have $x' \in A^*K^d \cap \left(A^* \setminus H\right)$ which satisfies $|x'| < |x|$. Therefore, induction yields that $x' \in K^*$. Finally, since $y \in K$, we get $x = x'y \in K^*K \subseteq K^*$ which concludes the proof. 	

\subsubsection*{Direction $1) \Rightarrow 3)$}

Consider a regular language $L \in \sfp{\Cs}$ and let $\alpha: A^* \to M$ be its syntactic morphism. We prove that $\alpha$ is \Cs-aperiodic. First, we show that one may assume without loss of generality that \Cs is \emph{finite}.

\begin{fact} \label{fct:sfclos:profinite}
	There exists a finite \vari \Ds such that $\Ds \subseteq \Cs$ and $L \in \sfp{\Ds}$.
\end{fact}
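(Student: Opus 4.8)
The plan is to exploit the fact that membership of $L$ in $\sfp{\Cs}$ is witnessed by a \emph{finite} derivation, so that only finitely many languages of \Cs are actually involved; it then suffices to enclose these finitely many languages inside a finite sub-\vari of \Cs. Concretely, I would first fix, by the very definition of $\sfp{\Cs}$, finitely many languages $L_1,\dots,L_n \in \Cs$ such that $L$ is obtained from $L_1,\dots,L_n$ together with the singletons $\{a\}$ (for $a \in A$) using finitely many Boolean operations and concatenations.

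The heart of the argument is then to build \Ds. Since each $L_i$ is regular, it has only finitely many distinct two-sided quotients $u\inv L_i v\inv$ with $u,v \in A^*$ (these are determined by the syntactic images of $u$ and $v$, which range over a finite monoid). I would let \Ds be the Boolean algebra generated by the finite set of all such quotients, over all $i \leq n$. This class \Ds is finite, being generated by finitely many languages. It is contained in \Cs: each generator $u\inv L_i v\inv$ lies in \Cs because \Cs is quotient-closed and contains $L_i$, and \Cs is a Boolean algebra, hence contains the whole algebra generated by these languages. And \Ds is quotient-closed: left and right quotients commute with Boolean operations, while $w\inv(u\inv L_i v\inv) = (uw)\inv L_i v\inv$ and $(u\inv L_i v\inv)w\inv = u\inv L_i (wv)\inv$ are again generators, so the set of quotients of elements of \Ds stays inside \Ds. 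Thus \Ds is a finite \vari with $\Ds \subseteq \Cs$.

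It remains to see that $L \in \sfp{\Ds}$. Each $L_i = \veps\inv L_i \veps\inv$ is a generator of \Ds, hence $L_i \in \Ds \subseteq \sfp{\Ds}$. Moreover, $\sfp{\Ds}$ contains every singleton $\{a\}$ and is closed under Boolean operations and concatenation, so replaying the fixed derivation of $L$ — which only uses $L_1,\dots,L_n$, the singletons, Boolean operations and concatenation — inside $\sfp{\Ds}$ shows $L \in \sfp{\Ds}$, completing the proof.

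I do not expect a genuine obstacle here: the statement is essentially a compactness observation. The one point that needs care is checking that \Ds, which is \emph{defined} as a Boolean algebra, is actually quotient-closed; this rests precisely on the two identities displayed above for quotients of quotients and on there being only finitely many distinct quotients of a regular language, which also guarantees finiteness of \Ds.
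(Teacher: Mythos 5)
Your proposal is correct and follows essentially the same route as the paper: the paper likewise takes the finitely many languages of \Cs occurring in a derivation of $L$ and passes to the least \vari they generate, merely deferring to a cited lemma the finiteness argument (Myhill--Nerode, finitely many quotients) that you spell out explicitly.
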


\begin{proof}
  Since $L \in \sfp{\Cs}$, it is built from finitely many languages in \Cs using the operations available in star-free closure. We let \Ds be the least \vari containing these languages. One may verify that \Ds is finite (this is because a regular language has finitely many quotients by the Myhill-Nerode theorem, see Lemma~17 in~\cite{pzgenconcat} for a proof). Moreover, it is immediate that $\Ds \subseteq \Cs$ and $L \in \sfp{\Ds}$ by definition.
\end{proof}

We fix the finite \vari \Ds described in Fact~\ref{fct:sfclos:profinite}. Moreover, we consider the associated canonical equivalence \caned over $A^*$. We have the following lemma.

\begin{lemma} \label{lem:sfclos:aper}
	Consider $K \in \sfp{\Ds}$. There exists $k \in \nat$ such that for every $\ell \geq k$ and $u,x,y \in A^*$ satisfying $uu \caned u$, we have $xu^\ell y \in K$ if and only if $xu^\ell uy \in K$.
\end{lemma}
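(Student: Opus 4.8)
The plan is to proceed by structural induction on the way $K$ is built inside \sfp{\Ds}: the atoms are the languages of \Ds and the singletons $\{a\}$ with $a \in A$, and the inductive steps are union, intersection, complement, and binary concatenation. Throughout one may harmlessly treat the case $u = \veps$ separately, since then $xu^\ell y = xy$ for every $\ell$ and any $k$ works; so below the point is only to pick a $k$ that also handles $u \neq \veps$. For $K = \{a\}$ take $k = 2$: when $u \neq \veps$ both $xu^\ell y$ and $xu^\ell u y$ have length at least $\ell \geq 2 > 1$, hence neither lies in $\{a\}$. For $K \in \Ds$ take $k = 1$: from $uu \caned u$ a trivial induction gives $u^\ell \caned u$ for all $\ell \geq 1$, so (since \caned is a congruence) $xu^\ell y \caned xuy \caned xu^\ell u y$, and $K$ is a union of \caned-classes. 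For $K = K_1 \cup K_2$ and $K = K_1 \cap K_2$ take $k = \max(k_1,k_2)$ where $k_i$ is the bound given by induction for $K_i$; for $K = A^* \setminus K_1$ take $k = k_1$. All these cases are routine.

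The only substantial case is concatenation $K = K_1 K_2$. Here I would set $k = 2\max(k_1,k_2) + 2$ and prove both implications by ``inserting'' resp.\ ``removing'' one copy of $u$. For the forward direction, suppose $\ell \geq k$ and $xu^\ell y = w_1 w_2$ with $w_1 \in K_1$ and $w_2 \in K_2$, and put $m = \lfloor \ell/2 \rfloor$, so $m \geq k_1$ and $\ell - m = \lceil \ell/2 \rceil \geq k_2$. If $|w_1| \leq |xu^m|$, write $xu^m = w_1 v$; then $w_2 = v\,u^{\ell-m}y \in K_2$, so the induction hypothesis for $K_2$ (legitimate since $\ell - m \geq k_2$ and $uu \caned u$) yields $v\,u^{\ell-m+1}y \in K_2$, whence $xu^{\ell+1}y = w_1(v\,u^{\ell-m+1}y) \in K_1K_2$. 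If $|w_1| > |xu^m|$, write $w_1 = xu^m v'$ with $u^{\ell-m}y = v'w_2$; the induction hypothesis for $K_1$ gives $xu^{m+1}v' \in K_1$, whence $xu^{\ell+1}y = (xu^{m+1}v')w_2 \in K_1K_2$. The converse is entirely symmetric: given a factorization of $xu^{\ell+1}y = xu^\ell u y$, split its $u$-block at $m' = \lfloor (\ell+1)/2 \rfloor$ (so that $m'-1 \geq k_1$ and $\ell - m' \geq k_2$) and use the induction hypotheses to lower the relevant exponent by one, reconstructing a factorization of $xu^\ell y$.

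The main obstacle is exactly this concatenation step, and its crux is purely combinatorial: because the block $u^\ell$ is long, whatever the cut position of a factorization $w_1 w_2$, at least $\lfloor \ell/2 \rfloor$ consecutive copies of $u$ lie entirely inside $w_1$ or entirely inside $w_2$; applying the induction hypothesis to that factor changes its $u$-exponent by one, which corresponds precisely to inserting or deleting one $u$ in the whole word. The only care needed is to track the exponents so that every invocation of the induction hypothesis is on an exponent that is still at least $k_1$ or $k_2$, which is what dictates the choice $k = 2\max(k_1,k_2)+2$; a reader content with a non-explicit bound may simply take $k$ large enough.
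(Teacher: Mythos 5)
Your proof is correct and takes essentially the same approach as the paper's: structural induction on the build-up of $K$ in $\sfp{\Ds}$, with the base cases and the Boolean cases handled routinely, and concatenation being the only substantial step. The paper's treatment of concatenation uses the slightly smaller bound $k = 2\max(k_1,k_2)+1$ and phrases the cut-position dichotomy as ``either $xu^mu$ is a prefix of $w_1$ or $u^muy$ is a suffix of $w_2$'' rather than splitting at $\lfloor\ell/2\rfloor$, but this is the same combinatorial observation and both bookkeepings are valid.
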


\begin{proof}
	By definition, $K$ is built from languages in \Ds  and the singletons $\{a\}$ for $a \in A$ using union, complement and concatenation. We proceed by induction on this construction.
	
	Assume first that $K \in \Ds$. In that case, the proposition holds for $k = 1$. Indeed, if $uu \canec u$, then $xu^{\ell}uy \canec xu^{\ell}y$ for every $\ell \geq 1$ and $x,y \in A^*$ since \canec is a congruence. Therefore, since $K\in \Ds$, it follows that $xu^\ell y \in K$ if and only if $xu^\ell uy \in K$ by definition of \canec. If $K = \{a\}$, one may verify that the property holds for $k = 2$.
	
	We turn to the inductive cases. Assume first that the last operation used to build $K$ is union. We have $K = K_1 \cup K_2$ where $K_1,K_2 \in \sfp{\Ds}$ are simpler languages. Induction yields $k_1,k_2 \in \nat$ such that for $i = 1,2$, if $\ell \geq k_i$ and $u,x,y \in A^*$ satisfy $uu\canec u$, we have $xu^\ell y \in K_i$ if and only if $xu^\ell uy \in K_i$. It is immediate that in this case, the proposition holds for $k$ as the maximum between $k_1$ and $k_2$. We turn to complement. Assume that $K = A^* \setminus H$ where $H \in \sfp{\Ds}$ is a simpler language. Induction yields $h \in \nat$ such that, if $\ell \geq h$ and $u,x,y \in A^*$ satisfy $uu\canec u$, we have $xu^\ell y \in H$ if and only if $xu^\ell u y\in H$. It is immediate that in this case, the proposition holds for $k = h$.
	
	Finally, we assume that the last operation used to construct $K$ is  concatenation. We have $K = K_1K_2$ with $K_1,K_2 \in \sfp{\Ds}$ are simpler languages. Induction yields $k_1,k_2 \in \nat$ such that for $i = 1,2$, if $\ell \geq k_i$ and $u,x,y \in A^*$ satisfy $uu\canec u$, we have $xu^\ell y\in K_i$ if and only if $xu^\ell uy \in K_i$. Let $m$ be the maximum between $k_1$ and $k_2$. We prove that the proposition holds for $k = 2m + 1$. Let $\ell \geq k$ and $u,x,y \in A^*$ such that $uu \canec u$. We need to show that $xu^\ell y \in K$ if and only if $xu^\ell uy\in K$. We concentrate on the right to left direction (the converse one is symmetrical). Assuming that $xu^\ell uy \in K$, we show that $xu^\ell y \in K$. Since $K = K_1K_2$, we get $w_1 \in K_1$ and $w_2 \in K_2$ such that $xu^{\ell} uy = w_1w_2$. Since $\ell \geq 2m + 1$, it follows that either $xu^{m}u$ is a prefix of $w_1$ or $u^{m}uy$ is a suffix of $w_2$. By symmetry we assume that the former property holds: we have $w_1 = xu^{m}uz$ for some $z \in A^*$. Observe that since $xu^{\ell}uy = w_1 w_2$, it follows that $zw_2 =u^{\ell-m}y$. Moreover, we have $m \geq k_1$ by definition of $m$. Therefore, since $xu^{m} uz = w_1 \in K_1$, we know that $xu^{m}z \in K_1$ by definition of $k_1$. Thus, $xu^{m}zw_2 \in K_1K_2 = K$. Since $zw_2 =u^{\ell-m}y$, this yields $xu^{\ell}y \in K$, concluding the proof.	
\end{proof}

We are ready to prove that the syntactic morphism $\alpha: A^* \to M$ of $L$ is \Ds-aperiodic. We fix a \Cs-stutter $s \in M$ for the proof and show that $s^\omega = s^{\omega+1}$. Let \Kb be the set containing all \caned-classes intersecting $\alpha\inv(s)$. Clearly, \Kb is a \Ds-cover of $\alpha\inv(s)$ and therefore a \Cs-cover as well since $\Ds \subseteq \Cs$. Hence, since $s$ is a \Cs-stutter, there exists $K \in \Kb$ such that $K \cap KK \neq \emptyset$.

By definition $K$ is a \caned-class. Moreover, \caned is a congruence for concatenation. Therefore, $K \cap KK \neq \emptyset$ implies that $KK \subseteq K$. Moreover, since $\alpha\inv(s) \cap K \neq \emptyset$ by definition of \Kb, there exists some word $u \in \alpha\inv(s) \cap K$. We have $\alpha(u) = s$. Since $K$ is a \caned-class such that $KK \subseteq K$, we have $uu \caned u$. Therefore, since $K \in \sfp{\Ds}$, Lemma~\ref{lem:sfclos:aper} yields a number $k \in \nat$ such that for every $x,y \in A^*$ and every $\ell \geq k$, we have $xu^\ell y \in L$ if and only if $xu^\ell uy \in L$. In particular, this holds for $\ell = k\omega$ (where $\omega$ is the idempotent power of $M$). Hence, we have $xu^{k\omega} y \in L$ if and only if $xu^{k\omega} uy \in L$. This exactly says that $u^{k\omega}$ and $u^{k\omega} u$ are equivalent for the syntactic congruence of $L$. Consequently, since $\alpha$ is the syntactic morphism of $L$, we have $\alpha(u^{k\omega}) = \alpha(u^{k\omega} u)$. Since $\alpha(u) = s$, this exactly says that $s^\omega = s^{\omega+1}$, concluding the proof.

\subsubsection*{Direction $3) \Rightarrow 2)$}

Consider a \Cs-aperiodic morphism $\alpha: A^* \to M$. We show that every language recognized by $\alpha$ belongs to \bsdp{\Cs}. We start with two definitions. 

Given $K \subseteq A^*$ and $s\in M$, we say that $K$ is \emph{$s$-safe} when $s\alpha(u)=s\alpha(v)$ for every $u,v \in K$. We extend this notion to sets of languages: such a set \Kb is $s$-safe when every $K \in \Kb$ is $s$-safe. Finally, given a language $P \subseteq A^*$, an \bsdp{\Cs}-partition of $P$ is a finite partition of $P$ into languages of \bsdp{\Cs}. The argument is based on the following proposition.

\adjustc{prop:sfclos:fromapertostar}
\begin{proposition}
	Let $P \subseteq A^+$ be a prefix code with bounded synchronization delay. Assume that there exists a $1_M$-safe \bsdp{\Cs}-partition of $P$. Then, for every $s \in M$, there exists an $s$-safe \bsdp{\Cs}-partition of $P^*$.
\end{proposition}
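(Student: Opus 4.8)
The plan is to prove Proposition~\ref{prop:sfclos:fromapertostar} by induction on the triple of parameters, ordered by decreasing priority: $(1)$ the size of $\alpha(P^+) \subseteq M$, $(2)$ the size of the given $1_M$-safe \bsdp{\Cs}-partition \Hb of $P$, and $(3)$ the size of $s \cdot \alpha(P^*) \subseteq M$. The case split is on whether $s$ is \emph{\Hb-stable}, meaning $s\cdot\alpha(P^*) = s\cdot\alpha(P^*H)$ for every $H \in \Hb$. The \Hb-stable case is the base case; otherwise we recurse.

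For the base case ($s$ \Hb-stable), I would first pass to a \emph{finite} \vari $\Ds \subseteq \Cs$ for which $\alpha$ is still \Ds-aperiodic; this is possible because $M$ is finite, so only finitely many \Cs-covers witness non-stutters, and the \vari they generate is finite (Fact~\ref{fct:caracfin}). Then \caned is a congruence of finite index, and I take $\Kb = \{P^* \cap D \mid D \in \dclac\}$, a finite partition of $P^*$. Each piece lies in \bsdp{\Cs}: $P \in \bsdp{\Cs}$ as the disjoint union of \Hb, hence $P^* \in \bsdp{\Cs}$ since $P$ is a prefix code with bounded synchronization delay, and then $P^* \cap D \in \bsdp{\Cs}$ because $D \in \Ds \subseteq \Cs$. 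The real content is $s$-safety: one must show $u \caned v$ with $u,v \in P^*$ forces $s\alpha(u) = s\alpha(v)$ (Lemma~\ref{lem:sfclos:godcase1}); the idea is that \Ds-equivalent words in $P^*$ can be transformed into one another by idempotent pumping of infixes, and each such step is absorbed after left-multiplication by $s$ precisely because of \Hb-stability.

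For the inductive step ($s$ not \Hb-stable), fix $H \in \Hb$ with the strict inclusion $s\cdot\alpha(P^*H) \subsetneq s\cdot\alpha(P^*)$. Since $P \setminus H$ is again a prefix code with bounded synchronization delay (Fact~\ref{fct:sfclos:sprefsub}), carries the smaller $1_M$-safe partition $\Hb \setminus \{H\}$, and satisfies $\alpha((P\setminus H)^+) \subseteq \alpha(P^+)$, induction on parameter $(2)$ produces a $1_M$-safe \bsdp{\Cs}-partition \Ub of $(P\setminus H)^*$ (Lemma~\ref{lem:sfclos:alphind}); each $U \in \Ub$ and $H$ are $\alpha$-constant, say $\alpha(U) = \{r_U\}$ and $\alpha(H) = \{t\}$. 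Now split on whether $\alpha(P^*H) = \alpha(P^+)$. In the equality sub-case, one checks $sr_Ut\cdot\alpha(P^*) \subseteq s\cdot\alpha(P^*H) \subsetneq s\cdot\alpha(P^*)$, so induction on parameter $(3)$ gives an $sr_Ut$-safe \bsdp{\Cs}-partition $\Wb_U$ of $P^*$ for each $U$ (Lemma~\ref{lem:sfclos:sc1carac}); then $\Kb = \Ub \cup \bigcup_{U\in\Ub}\{UHW \mid W \in \Wb_U\}$ works — decomposing any $w\in P^*$ into its unique $P$-factors, either no factor is in $H$ (so $w$ lands in a unique $U$) or one looks at the leftmost $H$-factor to get a unique $UHW$; the products $UHW$ are unambiguous because $P$ is a prefix code and $W \subseteq H^*$, $U \subseteq (P\setminus H)^*$; and $s$-safety follows from the recorded safeties of $U$, $H$, $W$. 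In the strict sub-case $\alpha(P^*H) \subsetneq \alpha(P^+)$, the set $(P\setminus H)^*H$ is a prefix code with bounded synchronization delay (Fact~\ref{fct:sfclos:sprefnest}), carries the $1_M$-safe partition $\{UH \mid U\in\Ub\}$, and satisfies $\alpha(((P\setminus H)^*H)^+) \subseteq \alpha(P^*H) \subsetneq \alpha(P^+)$; so induction on parameter $(1)$ (with $s = 1_M$) gives a $1_M$-safe \bsdp{\Cs}-partition \Vb of $((P\setminus H)^*H)^*$ (Lemma~\ref{lem:sfclos:sc2carac}), and $\Kb = \{VU \mid V\in\Vb,\ U\in\Ub\}$ partitions $P^* = ((P\setminus H)^*H)^*(P\setminus H)^*$ via the rightmost $H$-factor, consists of unambiguous products, and is even $1_M$-safe, hence $s$-safe.

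The main obstacle is the base-case lemma (Lemma~\ref{lem:sfclos:godcase1}): showing that the \caned-classes inside $P^*$ are $s$-safe when $s$ is \Hb-stable. This is the only place the aperiodicity hypothesis is genuinely used, and it requires converting the abstract identity $s\cdot\alpha(P^*) = s\cdot\alpha(P^*H)$ into concrete pumping statements on words of $P^*$ and combining them with $\Ds$-aperiodicity. A secondary, more tedious point is the bookkeeping of the induction: one must verify in each branch that the invoked parameter strictly decreases while no higher-priority one increases — in particular that moving from $P$ to $P\setminus H$ or to $(P\setminus H)^*H$ never enlarges $\alpha(P^+)$, and that in the equality sub-case $sr_Ut\cdot\alpha(P^*)$ is strictly smaller than $s\cdot\alpha(P^*)$.
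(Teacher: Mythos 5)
Your proposal matches the paper's proof essentially verbatim: same triple of induction parameters in the same priority order, same case split on $\Hb$-stability, same base case (pass to a finite $\Ds$, partition $P^*$ by $\caned$-classes, reduce $s$-safety to Lemma~\ref{lem:sfclos:godcase1}), and the same two sub-cases in the inductive step with the same constructions of $\Ub$, $\Wb_U$, $\Vb$ and the same verification that each invoked parameter strictly decreases. You correctly identify Lemma~\ref{lem:sfclos:godcase1} as the technical heart where aperiodicity and $\Hb$-stability actually interact.
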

\restorec

We first apply Proposition~\ref{prop:sfclos:fromapertostar} to conclude the main argument. We show that every language recognized by $\alpha$ belongs to \bsdp{\Cs}. By definition, \bsdp{\Cs} is closed under disjoint union. Hence, it suffices to show that $\alpha\inv(t) \in \bsdp{\Cs}$ for every $t \in M$. We fix $t \in M$ for the proof.

Clearly, $A \subseteq A^+$ is a prefix code with bounded synchronization delay and  $\{\{a\} \mid a \in A\}$ is a $1_M$-safe \bsdp{\Cs}-partition of $A$. Hence, Proposition~\ref{prop:sfclos:fromapertostar} (applied in the case $s= 1_M$) yields a $1_M$-safe \bsdp{\Cs}-partition \Kb of $A^*$. One may verify that $\alpha\inv(t)$ is the disjoint union of all $K \in \Kb$ intersecting $\alpha\inv(t)$. Hence, $\alpha\inv(t) \in \bsdp{\Cs}$ which concludes the main argument.

\medskip

It remains to prove Proposition~\ref{prop:sfclos:fromapertostar}. We let $P \subseteq A^*$ be a prefix code with bounded synchronization delay and \Hb be a $1_M$-safe \bsdp{\Cs}-partition of $P$. Moreover, we fix $s \in M$. We need to build an \bsdp{\Cs}-partition \Kb of $P^*$ such that every $K \in \Kb$ is $s$-safe. We proceed by induction on the three following parameters listed by order of importance: $(1)$ the size of $\alpha(P^+) \subseteq M$, $(2)$ the size of $\Hb$ and $(3)$ the size of $s \cdot \alpha(P^*) \subseteq M$.

We distinguish two cases depending on the following property of $s$ and $\Hb$. We say that \emph{$s$ is \Hb-stable} when the following holds:
\begin{equation}\label{eq:sfclos:mostable1}
\text{for every $H \in \Hb$,} \quad  s \cdot \alpha(P^*) = s \cdot \alpha(P^*H).
\end{equation}
The base case happens when $s$ is \Hb-stable. Otherwise, we use induction on our parameters.

\smallskip
\noindent
\textbf{Base case: $s$ is \Hb-stable.} Since $\alpha$ is \Cs-aperiodic, we have the following simple fact.

\adjustc{fct:caracfin}
\begin{fact}
	There is a \emph{finite} \vari $\Ds \subseteq \Cs$ such that $\alpha$ is \Ds-aperiodic.
\end{fact}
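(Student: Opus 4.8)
The plan is to exploit a monotonicity observation: passing to a \emph{smaller} class can only \emph{enlarge} the set of stutters. Indeed, if $\Ds \subseteq \Cs$ then every \Ds-cover of a language is in particular a \Cs-cover, so the condition ``every cover has an element $K$ with $K \cap KK \neq \emptyset$'' is \emph{harder} to violate for \Ds-covers than for \Cs-covers; hence every \Cs-stutter of $\alpha$ is also a \Ds-stutter. Consequently, it suffices to find a finite $\Ds \subseteq \Cs$ for which the two notions of stutter \emph{coincide}: then every \Ds-stutter $s$ is a \Cs-stutter, and \Cs-aperiodicity of $\alpha$ immediately gives $s^\omega = s^{\omega+1}$, i.e.\ \Ds-aperiodicity.

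To build such a \Ds, I would first record, for each $s \in M$ that is \textbf{not} a \Cs-stutter, a witnessing \Cs-cover $\Kb_s$ of $\alpha\inv(s)$ satisfying $K \cap KK = \emptyset$ for all $K \in \Kb_s$ (one exists by definition of ``not a \Cs-stutter''). Since $M$ is finite and each $\Kb_s$ is a \emph{finite} set of languages, the union $\Fs = \bigcup\{\Kb_s \mid s \text{ is not a \Cs-stutter}\}$ is a finite set of languages, all lying in \Cs. Let \Ds be the least \vari containing \Fs. Then $\Ds \subseteq \Cs$ since \Cs is itself a \vari containing \Fs, and \Ds is \emph{finite}: this is the standard fact that the \vari generated by finitely many regular languages is finite (each regular language has finitely many quotients, so only finitely many Boolean combinations of quotients arise), exactly the argument already invoked for Fact~\ref{fct:sfclos:profinite} and Lemma~17 of~\cite{pzgenconcat}.

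It then remains to check $\alpha$ is \Ds-aperiodic. Take any \Ds-stutter $s$. If $s$ were not a \Cs-stutter, the cover $\Kb_s$ would be defined and, since $\Kb_s \subseteq \Fs \subseteq \Ds$, it would be a \Ds-cover of $\alpha\inv(s)$ with no element $K$ satisfying $K \cap KK \neq \emptyset$ --- contradicting that $s$ is a \Ds-stutter. So $s$ is a \Cs-stutter, and \Cs-aperiodicity of $\alpha$ yields $s^\omega = s^{\omega+1}$. There is no real obstacle here; the only point requiring care is the finiteness of \Ds, which is a cited lemma, and the only conceptual subtlety is the (slightly counter-intuitive) fact that shrinking the class adds stutters, which is why the construction must pre-emptively include the witnessing covers rather than discard languages arbitrarily.
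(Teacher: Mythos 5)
Your proof is correct and follows essentially the same approach as the paper's: collect a witnessing \Cs-cover $\Kb_s$ for each non-stutter $s$, let \Ds be the finite \vari generated by their union (invoking Lemma~17 of~\cite{pzgenconcat}), and observe that any \Ds-stutter must then already be a \Cs-stutter. You simply make explicit the monotonicity remark and the final verification that the paper labels ``immediate.''
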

\restorec

\begin{proof}
	By definition, for every $s \in M$ which is \textbf{not} a \Cs-stutter, there exists a \Cs-cover $\Kb_s$ of $\alpha\inv(s)$ such that $K\cap KK =\emptyset$ for every $K \in \Kb_s$. Let \Hb be the union of all sets $\Kb_s$. Since \Hb is a finite set of languages in \Cs, one may verify that there exists a finite \vari $\Ds \subseteq \Cs$ containing every $H \in \Hb$ (see Lemma~17 in~\cite{pzgenconcat} for a proof). It is now immediate that $\alpha$ is \Ds-aperiodic.
\end{proof}

Since \Ds is finite, we may consider the associated canonical equivalence \caned over $A^*$. We let $\Kb = \{P^* \cap D \mid D \in \dclac\}$. Clearly, \Kb is a partition of $P^*$. Let us verify that it only contains languages in \bsdp{\Cs}. We have $P \in \bsdp{\Cs}$: it is the disjoint union of all languages in the \bsdp{\Cs}-partition \Hb of $P$. Moreover, $P^* \in \bsdp{\Cs}$ since $P$ is a prefix code with bounded synchronization delay. Hence, $P^* \cap D \in \bsdp{\Cs}$ for every $D \in \dclac$ since $D \in \Ds \subseteq \Cs$. Therefore, it remains to show that every language $K \in \Kb$ is $s$-safe. This is a consequence of the following lemma which is proved using the hypothesis~\eqref{eq:sfclos:mostable1} that $s$ is \Hb-stable.

\adjustc{lem:sfclos:godcase1}
\begin{lemma}
	For every $u,v \in P^*$ such that $u \caned v$, we have $s \alpha(u) = s \alpha(v)$.
\end{lemma}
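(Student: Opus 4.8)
The plan is to use the stability hypothesis~\eqref{eq:sfclos:mostable1} to produce a finite \emph{group} acting on $s\alpha(P^*)$, reduce the claim to an equality inside that group, and settle it by a short computation fed by the \Ds-aperiodicity of $\alpha$.

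\textbf{Step 1: building the group.} First I would put $N = \alpha(P^*) \subseteq M$. Since \Hb is $1_M$-safe, each $H \in \Hb$ satisfies $\alpha(H) = \{t_H\}$ for a single $t_H \in M$, so $N$ is the submonoid of $M$ generated by $\{t_H \mid H \in \Hb\}$ and $\alpha(P^*H) = N t_H$; thus~\eqref{eq:sfclos:mostable1} reads $sN = sN t_H$ for all $H \in \Hb$. As $sN$ is finite, right translation by each $t_H$, hence by every element of $N$, restricts to a bijection of $sN$, and these bijections form a finite submonoid of the symmetric group of $sN$, therefore a group $\Gamma$. Writing $\lambda_m \in \Gamma$ for the right translation $x \mapsto xm$ of $sN$ (defined for $m \in N$), one has $\lambda_{1_M} = \mathrm{id}$, $\lambda_{mm'} = \lambda_{m'}\circ\lambda_m$, $\lambda_{m^j} = (\lambda_m)^j$, and $\lambda_m(s) = sm$. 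Since $u,v \in P^*$ give $\alpha(u),\alpha(v) \in N$, and $s\alpha(u) = \lambda_{\alpha(u)}(s)$ while $s\alpha(v) = \lambda_{\alpha(v)}(s)$, it will suffice to prove $\lambda_{\alpha(u)} = \lambda_{\alpha(v)}$ in $\Gamma$.

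\textbf{Step 2: closing with \Ds-aperiodicity.} The mechanism I would use is: if a word $w$ satisfies $w \caned w^2$, then $\alpha(w)$ is a \Ds-stutter. Indeed, in any \Ds-cover \Kb of $\alpha\inv(\alpha(w))$ some member $K$ contains $w$; since the languages of the finite Boolean algebra \Ds are unions of \caned-classes, $K$ contains the whole \caned-class of $w$, in particular $w^2$, and $w^2 \in KK$ too, so $K \cap KK \neq \emptyset$. Hence \Ds-aperiodicity gives $\alpha(w)^\omega = \alpha(w)^{\omega+1}$; applying this to $w \in P^*$ (so $\alpha(w) \in N$) and using that in a group $\gamma^\omega = \gamma^{\omega+1}$ forces $\gamma = 1_\Gamma$, one gets $\lambda_{\alpha(w)} = 1_\Gamma$ for every $w \in P^*$ with $w \caned w^2$. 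I would then take $g = \dtype{u} = \dtype{v} \in \dclac$, choose $k \geq 1$ with $g^k$ idempotent (say $k = \omega(\dclac)$), and apply the previous line to the two words $u^k$ and $u^{k-1}v$ of $P^*$: their \dclac-images are both $g^k = g^{2k}$, so each is \caned-equivalent to its square, giving $(\lambda_{\alpha(u)})^k = 1_\Gamma$ and $\lambda_{\alpha(v)} \circ (\lambda_{\alpha(u)})^{k-1} = 1_\Gamma$. Cancelling $(\lambda_{\alpha(u)})^{k-1}$ in $\Gamma$ yields $\lambda_{\alpha(v)} = \lambda_{\alpha(u)}$, hence $s\alpha(v) = \lambda_{\alpha(v)}(s) = \lambda_{\alpha(u)}(s) = s\alpha(u)$, as wanted.

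\textbf{Main obstacle.} The hard part will be the last step: the only handle on \Ds-aperiodicity is through words with $w \caned w^2$, and the obvious way to obtain such words — raising $u$ or $v$ to an idempotent power — destroys all group content, since both $(\lambda_{\alpha(u)})^k$ and $(\lambda_{\alpha(v)})^k$ collapse to $1_\Gamma$. What breaks the symmetry is feeding the stutter argument the \emph{two distinct} \caned-idempotent words $u^k$ and $u^{k-1}v$, which differ by a single replacement of $u$ by $v$, and then cancelling inside the group $\Gamma$ supplied by~\eqref{eq:sfclos:mostable1}. The remaining ingredients — that $\Gamma$ is genuinely a group, that $\lambda_m(s)$ stays in $sN$, and that \caned-classes belong to \Ds — are routine from the finiteness of $sN$ and of \Ds.
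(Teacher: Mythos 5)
Your proof is correct, and the underlying mechanism is the one the paper uses: $\Hb$-stability makes right translation by elements of $\alpha(P^*)$ invertible on $s\alpha(P^*)$, the $\caned$-stutter plus $\Ds$-aperiodicity forces translations by elements with idempotent $\caned$-class to be trivial, and the result follows by cancellation. The packaging differs in a pleasant way. The paper does not introduce a group; instead it proves, via a short induction over the prefix-code factorization, a ``Fact'' asserting $sqe = sq$ for all $q,e \in \alpha(P^*)$ with $e$ idempotent, and then runs the argument in two stages (a special case $u \caned uu$, then the general case with the test words $u^p$ and $vu^{p-1}$). Your group $\Gamma$ of right translations of $s\alpha(P^*)$ collapses the Fact into the observation that a finite submonoid of a symmetric group is a group, makes the cancellations automatic, and lets you treat the two test words $u^k$ and $u^{k-1}v$ in one step. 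This is a genuinely tidier presentation of the same idea; it buys a shorter argument, at the (negligible) cost of introducing the auxiliary object $\Gamma$. Two cosmetic remarks: you call $\Ds$ a ``finite Boolean algebra'', but you also use the monoid structure of $\dclac$ (to take $k = \omega(\dclac)$), which requires $\caned$ to be a congruence, i.e.\ $\Ds$ must be quotient-closed --- Fact~\ref{fct:caracfin} of the paper does indeed supply a finite \vari, so this is fine, just be precise. And the composition direction $\lambda_{mm'} = \lambda_{m'} \circ \lambda_m$ (right translations compose contravariantly) is used correctly throughout, but it is worth highlighting because it is exactly what makes $\lambda_{\alpha(v)}$ land on the outside of $\lambda_{\alpha(u^{k-1}v)}$, enabling the final cancellation.
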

\restorec

\begin{proof}
	We first use the hypothesis that $s$ is \Hb-stable to prove the following fact.
	
	\begin{fact} \label{fct:sfclos:icarbase}
		Let $q,e \in \alpha(P^*)$ such that $e$ is idempotent. Then, we have $sqe = sq$.
	\end{fact}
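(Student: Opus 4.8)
The plan is to upgrade the claim: I would show that $e$ acts as a right identity not merely on $sq$ but on the whole \emph{finite} set $T := s\cdot\alpha(P^*)\subseteq M$, and then observe that $sq$ lies in $T$. The only ingredient needed beyond elementary properties of finite monoids is the $\Hb$-stability of $s$ recorded in~\eqref{eq:sfclos:mostable1}; aperiodicity plays no role here.

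The first step is to propagate stability from single ``letters'' of $\alpha(P)$ to all of $\alpha(P^+)$. Since $\Hb$ is a $1_M$-safe partition of $P$, each $H\in\Hb$ has a single image $\alpha(H)=\{h_H\}$, and since $\Hb$ covers $P$ we get $\alpha(P)=\{h_H\mid H\in\Hb\}$, hence $\alpha(P^+)=\{h_{H_1}\cdots h_{H_n}\mid n\geq 1,\ H_1,\dots,H_n\in\Hb\}$. Rewriting~\eqref{eq:sfclos:mostable1} via $\alpha(P^*H)=\alpha(P^*)\alpha(H)$ turns the hypothesis into $T=T\,h_H$ for every $H\in\Hb$. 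An immediate induction on $n$ then gives $Tf=T$ for every $f\in\alpha(P^+)$: indeed $T h_{H_1}\cdots h_{H_{n+1}}=(T h_{H_1})\,h_{H_2}\cdots h_{H_{n+1}}=T h_{H_2}\cdots h_{H_{n+1}}$, which has one factor fewer.

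The second step uses finiteness. For any $f\in\alpha(P^+)$, the map $\varphi_f\colon T\to T$, $t\mapsto tf$, is well defined with image $Tf=T$, hence it is onto, hence (as $T$ is finite) a bijection of $T$. Now consider the idempotent $e\in\alpha(P^*)$. If $e=1_M$ there is nothing to prove; otherwise $e\in\alpha(P^+)$, because $\alpha(P^*)=\{1_M\}\cup\alpha(P^+)$ (recall $\veps\notin P$). Then $\varphi_e$ is a bijection, and $\varphi_e\circ\varphi_e=\varphi_{ee}=\varphi_e$ since $e$ is idempotent; applying injectivity of $\varphi_e$ to $\varphi_e(\varphi_e(t))=\varphi_e(t)$ yields $\varphi_e(t)=t$, i.e.\ $te=t$, for every $t\in T$.

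Finally, since $q\in\alpha(P^*)$ we have $sq\in s\cdot\alpha(P^*)=T$, so $sqe=sq$, which is the claim. The only genuinely non-routine point is the propagation in the first step; once $Tf=T$ is established for all $f\in\alpha(P^+)$, the finiteness/bijection argument is short.
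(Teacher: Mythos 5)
Your proof is correct. The core preliminary step is equivalent to the one in the paper: you show that $T = s\cdot\alpha(P^*)$ satisfies $Tf = T$ for every $f\in\alpha(P^+)$ by propagating the stability hypothesis~\eqref{eq:sfclos:mostable1} along $P$-factorizations, whereas the paper proves the element-wise reformulation ``for all $x,y\in P^*$ there is $r\in\alpha(P^*)$ with $sr\alpha(x)=s\alpha(y)$'' by the same induction. The two statements are interchangeable (the paper's is precisely $T\subseteq Tg$, which together with the trivial $Tg\subseteq T$ gives $T=Tg$). Where you diverge is the wrap-up: the paper picks a witness $r$ with $sre=sq$ and finishes with the one-line algebraic computation $sqe=sree=sre=sq$, using idempotency of $e$ but no finiteness. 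You instead invoke finiteness of $M$ to turn $\varphi_e:T\to T$, $t\mapsto te$, into a bijection, and then use the general fact that an idempotent bijection of a set is the identity, concluding the stronger statement that $e$ is a right identity on \emph{all} of $T$. Both arguments are clean; the paper's is marginally more elementary (and would survive in an infinite setting, not that this matters here), while yours isolates a nice structural consequence of stability that makes the role of the idempotent transparent. One minor remark: you write $\varphi_e\circ\varphi_e=\varphi_{ee}$, which is fine since $e$ commutes with itself, but with the convention $\varphi_f(t)=tf$ the general identity is $\varphi_f\circ\varphi_g=\varphi_{gf}$; it costs nothing to state it that way to avoid a reader pausing over it.
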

	
	\begin{proof}
		The proof is based on the following preliminary result. For every $x,y \in P^*$, we show that there exists $r\in \alpha(P^*)$ such that $sr\alpha(x)=s\alpha(y)$. Since $x \in P^*$, there exists a decomposition $x = x_1 \cdots x_n$ with $x_1,\dots,x_n \in P$. We proceed by induction on the length $n$ of this decomposition. If $n = 0$, then $x = \veps$ and it suffices to choose $r = \alpha(y) \in \alpha(P^*)$. Otherwise, $x = wx'$ with $w \in P$, $x' \in P^*$ admitting a decomposition of length $n-1$. Induction yields $r' \in \alpha(P^*)$ such that $sr'\alpha(x') = s\alpha(y)$. Moreover, since $w \in P$ and \Hb is a partition of $P$, there exists some $H \in \Hb$ such that $w \in H$. Since $r' \in \alpha(P^*)$, we may then use the hypothesis that $s$ is \Hb-stable to obtain $r \in \alpha(P^*)$ and $t \in \alpha(H)$ such that $sr' = srt$. Finally, we know that \Hb is $1_M$-safe by hypothesis. Hence, since $t,\alpha(w) \in \alpha(H)$, we have $t = \alpha(w)$ and $sr'=sr\alpha(w)$. Altogether, this yields $sr \alpha(x) = sr \alpha(w)\alpha(x') = sr'\alpha(x') = s\alpha(y)$ which concludes the proof of our preliminary result.
		
		It remains to prove the fact. Consider $q,e \in \alpha(P^*)$ such that $e$ is idempotent. By definition, we have $x,y \in P^*$ such that $q = \alpha(y)$ and $e = \alpha(x)$. Hence, we have $r \in \alpha(P^*)$ such that $sre = sq$. Finally, since $e$ is idempotent, we obtain that $sqe = sree = sre = sq$ which completes the proof.
	\end{proof}
	
	We may now prove the lemma.	Consider $u,v \in P^*$ such that $u \caned v$. We show that $s \alpha(u) = s \alpha(v)$. We first consider the special case when $u \caned uu$.
	
	Assume that $u \caned uu$. We show that $s \alpha(u)$ and $s \alpha(v)$ are both equal to $s$. Since \caned is a congruence and $u \caned v$, we also have $v \caned vv$. Hence, it suffices to use the hypothesis that $u \caned uu$ to show $s \alpha(u) = s$ (the same result is obtained for $v$ by symmetry). Observe that $\alpha(u)$ must be a \Ds-stutter. Indeed, if \Kb is a \Ds-cover of $\alpha\inv(\alpha(u))$, there exists $K \in \Kb$ such that $u \in K$. Hence, $uu \in K$ since $K\in \Ds$ and $u\caned uu$. Thus, $K \in \Kb$ satisfies $K \cap KK \neq \emptyset$ ($uu$ is in the intersection). By Fact~\ref{fct:caracfin}, $\alpha$ is \Ds-aperiodic. Therefore, $\alpha(u)$ being a \Ds-stutter implies that $(\alpha(u))^\omega = (\alpha(u))^{\omega} \alpha(u)$. We may multiply by $s$ on the left to get $s(\alpha(u))^\omega = s(\alpha(u))^{\omega} \alpha(u)$. Moreover, since $(\alpha(u))^\omega$ is an idempotent of $\alpha(P^*)$, it follows from Fact~\ref{fct:sfclos:icarbase} that $s = s(\alpha(u))^\omega$. Altogether, this yields $s\alpha(u) = s$, concluding this case.
	
	It remains to handle the case when $u$ is not necessarily equivalent to $uu$ for \caned. Since \caned is a congruence, the quotient set \dclac is a finite monoid and it is standard that there exists a number $p \geq 1$ such that $u^p \caned u^pu^p$ (i.e. the \caned-class of $u^p$ is an idempotent) and $\alpha(u^p) \in \alpha(P^*)$ is idempotent. Moreover, since $u \caned v$, we have $u^p \caned vu^{p-1}$. Hence, since $u^p \caned u^pu^p$, it follows from the special case treated above that $s\alpha(u^p) = s\alpha(vu^{p-1})$. Moreover, we may multiply by $\alpha(u)$ on the right side which yields, $s\alpha(u)\alpha(u^p) = s\alpha(v)\alpha(u^{p})$. Finally, since $\alpha(u^p)$ is idempotent, it follows from Fact~\ref{fct:sfclos:icarbase} that $s\alpha(u)\alpha(u^p) = s\alpha(u)$ and $s\alpha(v)\alpha(u^p) = s \alpha(v)$. Altogether, we obtain that $s \alpha(u) = s \alpha(v)$, concluding the proof.
\end{proof}

\smallskip
\noindent
\textbf{Inductive step: $s$ is not \Hb-stable.} By hypothesis, we know that~\eqref{eq:sfclos:mostable1} does not hold. Therefore, we get some $H \in \Hb$ such that the following \textbf{strict} inclusion holds,
\begin{equation}\label{eq:sfclos:godinduca}
s \cdot \alpha(P^*H) \subsetneq s \cdot \alpha(P^*).
\end{equation}
  We fix this language $H \in \Hb$ for the remainder of the proof. The following lemma is proved by induction on our second parameter (the size of $\Hb$).

\adjustc{lem:sfclos:alphind}
\begin{lemma}
	There exists a $1_M$-safe \bsdp{\Cs}-partition \Ub of $(P \setminus H)^*$.
\end{lemma}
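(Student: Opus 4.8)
The plan is to derive Lemma~\ref{lem:sfclos:alphind} by invoking the inductive hypothesis of Proposition~\ref{prop:sfclos:fromapertostar} for the prefix code $P \setminus H$ in place of $P$ (with the special element $1_M$), which is legitimate precisely because the second induction parameter strictly decreases while the first one does not increase.

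First I would record that $P \setminus H \subseteq P$ and that $P$ is a prefix code with bounded synchronization delay, say with synchronization delay $d$. By Fact~\ref{fct:sfclos:sprefsub}, $P \setminus H$ is then also a prefix code with synchronization delay $d$; in particular $P \setminus H \subseteq A^+$ and it has bounded synchronization delay. Next, since $\Hb$ is a partition of $P$ and $H \in \Hb$, the set $\Hb' = \Hb \setminus \{H\}$ is a finite partition of $P \setminus H$, and every block of $\Hb'$ lies in \bsdp{\Cs} and is $1_M$-safe because $\Hb$ itself is a $1_M$-safe \bsdp{\Cs}-partition. Hence $\Hb'$ is a $1_M$-safe \bsdp{\Cs}-partition of $P \setminus H$.

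It then remains to apply Proposition~\ref{prop:sfclos:fromapertostar} to the prefix code $P \setminus H$ and the element $1_M$. I would justify that this is a valid recursive call by comparing the three induction parameters. Since $(P \setminus H)^+ \subseteq P^+$, we have $\alpha((P \setminus H)^+) \subseteq \alpha(P^+)$, so the first parameter does not increase; and $|\Hb'| = |\Hb| - 1 < |\Hb|$, so the second parameter strictly decreases. Thus the parameter triple associated with the call is strictly smaller than the current one in the lexicographic order, and the inductive hypothesis yields a $1_M$-safe \bsdp{\Cs}-partition \Ub of $(P \setminus H)^*$, which is exactly the statement. (The degenerate case $P \setminus H = \emptyset$ causes no trouble: $\emptyset$ is vacuously a prefix code with bounded synchronization delay and the empty set of blocks is a $1_M$-safe \bsdp{\Cs}-partition of it, so the same call produces a $1_M$-safe \bsdp{\Cs}-partition of $\emptyset^* = \{\veps\}$.)

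The only delicate point is the bookkeeping of the induction: one must check that passing from $P$ to $P \setminus H$ does not enlarge $\alpha(P^+)$ and that the number of blocks genuinely drops — both immediate — so that the recursive invocation of Proposition~\ref{prop:sfclos:fromapertostar} is well founded. Everything else is a direct unwinding of the definitions of prefix code, partition, and $1_M$-safety.
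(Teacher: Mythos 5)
Your proof is correct and follows essentially the same route as the paper: both invoke Fact~\ref{fct:sfclos:sprefsub} to show $P\setminus H$ remains a prefix code with bounded synchronization delay, observe that $\Hb\setminus\{H\}$ is a $1_M$-safe \bsdp{\Cs}-partition of $P\setminus H$, and apply the inductive hypothesis of Proposition~\ref{prop:sfclos:fromapertostar} (with $s$ replaced by $1_M$) after noting the first parameter does not increase and the second strictly decreases. Your extra remark about the degenerate case $P\setminus H=\emptyset$ is accurate but not spelled out in the paper.
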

\restorec

\begin{proof}
	By Fact~\ref{fct:sfclos:sprefsub}, $P \setminus H$ remains a prefix code with bounded synchronization delay since it is included in $P$. Moreover, it is immediate that $\Gb = \Hb \setminus \{H\}$ is a \bsdp{\Cs}-partition of $P \setminus H$ such that every $G \in \Gb$ is $1_M$-safe. Additionally, it is clear that $\alpha((P \setminus H)^+) \subseteq \alpha(P^+)$ (our first induction parameter has not increased) and $\Gb \subsetneq \Hb$ (our second parameter has decreased). Hence, we may apply induction in Proposition~\ref{prop:sfclos:fromapertostar} for the case when $P,\Hb$ and $s$ have been replaced by $P \setminus H,\Gb$ and $1_M$. This yields a $1_M$-safe \bsdp{\Cs}-partition \Ub of $(P \setminus H)^*$.
\end{proof}

We fix the partition \Ub of $(P \setminus H)^*$ given by Lemma~\ref{lem:sfclos:alphind} and distinguish two independent sub-cases. Since $H \subseteq P$ (as $H$ is an element of the partition \Hb of $P$), we have $\alpha(P^*H) \subseteq \alpha(P^+)$. We use a different argument depending on whether this inclusion is strict or not.

\smallskip
\noindent
\textbf{Sub-case~1: $\alpha(P^*H) = \alpha(P^+)$.} Since $H$ is $1_M$-safe by hypothesis, there exists $t \in M$~such that $\alpha(H) = \{t\}$. Similarly, since every $U \in \Ub$ is $1_M$-safe, there exists $r_U \in M$ such that $\alpha(U) = \{r_U\}$. The construction of \Kb is based on the next lemma which is proved using~\eqref{eq:sfclos:godinduc}, the hypothesis of Sub-case~1 and induction on our third parameter (the size of $s \cdot \alpha(P^*) \subseteq M$).

\adjustc{lem:sfclos:sc1carac}
\begin{lemma}
	 For every $U \in \Ub$, there exists an $sr_Ut$-safe \bsdp{\Cs}-partition $\Wb_{U}$ of $P^*$.
\end{lemma}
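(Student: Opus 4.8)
The plan is to derive $\Wb_U$ from a single application of Proposition~\ref{prop:sfclos:fromapertostar}, keeping the same prefix code $P$ and the same $1_M$-safe \bsdp{\Cs}-partition $\Hb$ of $P$, but replacing the element $s$ by $sr_Ut$. Since the proof of Proposition~\ref{prop:sfclos:fromapertostar} is itself an induction on the triple $\bigl(|\alpha(P^+)|, |\Hb|, |s\cdot\alpha(P^*)|\bigr)$ ordered lexicographically, such a nested appeal is legitimate as soon as the first two components stay the same — which is automatic here, since they do not depend on $s$ — and the third one strictly drops, i.e. $|sr_Ut\cdot\alpha(P^*)| < |s\cdot\alpha(P^*)|$. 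The hypothesis required to invoke the proposition (existence of a $1_M$-safe \bsdp{\Cs}-partition of $P$) is witnessed by $\Hb$. So everything comes down to proving this cardinality inequality.

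Here is how I would establish it. Since $U \subseteq (P\setminus H)^* \subseteq P^*$, we get $r_U \in \alpha((P\setminus H)^*) \subseteq \alpha(P^*)$; and since $H \subseteq P$, we get $t \in \alpha(H) \subseteq \alpha(P)$. Hence $r_Ut \in \alpha(P^*H)$, and therefore $sr_Ut\cdot\alpha(P^*) \subseteq s\cdot\alpha(P^*H)\cdot\alpha(P^*) = s\cdot\alpha(P^*HP^*)$. Now $H\subseteq P$ gives $P^*HP^* \subseteq P^+$, so $\alpha(P^*HP^*) \subseteq \alpha(P^+)$; and the Sub-case~1 hypothesis $\alpha(P^*H) = \alpha(P^+)$ turns this into $\alpha(P^*HP^*) \subseteq \alpha(P^*H)$. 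Combining, $sr_Ut\cdot\alpha(P^*) \subseteq s\cdot\alpha(P^*H)$, and the strict inclusion~\eqref{eq:sfclos:godinduc}, namely $s\cdot\alpha(P^*H) \subsetneq s\cdot\alpha(P^*)$, yields $|sr_Ut\cdot\alpha(P^*)| \leq |s\cdot\alpha(P^*H)| < |s\cdot\alpha(P^*)|$.

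With the cardinality inequality proved, the inductive hypothesis of Proposition~\ref{prop:sfclos:fromapertostar}, applied to the triple $(P,\Hb,sr_Ut)$, produces an $sr_Ut$-safe \bsdp{\Cs}-partition of $P^*$; taking $\Wb_U$ to be this partition finishes the argument.

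The only genuinely delicate point is verifying that the third induction parameter decreases. This is not apparent from the statement: it rests on combining the inductive-step hypothesis~\eqref{eq:sfclos:godinduc} — which makes $s\cdot\alpha(P^*H)$ a \emph{proper} subset of $s\cdot\alpha(P^*)$ — with the Sub-case~1 equality $\alpha(P^+) = \alpha(P^*H)$, which makes $\alpha(P^*H)$ stable under right multiplication by $\alpha(P^*)$, so that multiplying $s$ by the particular element $r_Ut \in \alpha(P^*H)$ cannot escape $s\cdot\alpha(P^*H)$. One must also be careful that this is a recursive use of the proposition being proved, so the parameter count has to be checked precisely to avoid circularity; in particular, it is essential that $P$ and $\Hb$ (hence the first two parameters) remain unchanged, and that only $s$ is modified.
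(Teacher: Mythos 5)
Your proof is correct and matches the paper's argument essentially step for step: you establish $sr_Ut\cdot\alpha(P^*)\subseteq s\cdot\alpha(P^*H)\subsetneq s\cdot\alpha(P^*)$ using $r_U\in\alpha(P^*)$, $t\in\alpha(H)$, the containment $P^*HP^*\subseteq P^+$, the Sub-case~1 equality $\alpha(P^*H)=\alpha(P^+)$, and~\eqref{eq:sfclos:godinduc}, then invoke the inductive hypothesis of Proposition~\ref{prop:sfclos:fromapertostar} with $s$ replaced by $sr_Ut$, noting the first two parameters are unchanged. The only cosmetic difference is that you phrase the decrease as a strict cardinality inequality while the paper phrases it as a strict set inclusion, but these are interchangeable.
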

\restorec

\begin{proof}
	We fix $U \in \Ub$ for the proof. Since \Ub is a partition of $(P \setminus H)^*$, we have $\alpha(U) \subseteq \alpha(P^*)$ which means that $r_U \in \alpha(P^*)$. Thus, we have $sr_Ut \in s\alpha(P^*H)$. Therefore, $sr_Ut\alpha(P^*) \subseteq s\alpha(P^*HP^*)$ and since $H \subseteq P$, we get $sr_Ut\alpha(P^*) \subseteq s\alpha(P^+)$. Combined with our hypothesis in Sub-case~1 (i.e. $\alpha(P^*H) = \alpha(P^+)$), this yields $sr_Ut\alpha(P^*) \subseteq s\alpha(P^*H)$. Finally, we obtain from~\eqref{eq:sfclos:godinduca} (i.e. $s\alpha(P^*H) \subsetneq s\alpha(P^*)$) that the \textbf{strict} inclusion $sr_Ut\alpha(P^*) \subsetneq s\alpha(P^*)$ holds. Consequently, by induction on our third parameter (i.e. the size of $s\alpha(P^*)$) we may apply Proposition~\ref{prop:sfclos:fromapertostar} in the case when $s \in M$ has been replaced by $sr_Ut \in M$. Note that here, our first two parameters have not increased (they only depend on $P$ and \Hb which remain unchanged). This yields the desired \sfp{\Cs}-partition $\Wb_{U}$ of $P^*$.
\end{proof}

  We are ready to define the partition \Kb of $P^*$. Using Lemma~\ref{lem:sfclos:sc1carac}, we define,
\[
\Kb = \Ub \cup \bigcup_{U \in \Ub} \{UHW \mid W \in \Wb_U\}
\]
It remains to show that \Kb is an $s$-safe \bsdp{\Cs}-partition of~$P^*$. First, \Kb is a partition of~$P^*$ since $P$ is a prefix code and $H \subseteq P$. Indeed, every word $w \in P^*$ admits a \emph{unique} decomposition $w = w_1 \cdots w_n$ with $w_1,\dots,w_n \in P$. If no factor $w_i$ belongs to $H$, then $w \in (P \setminus H)^*$ and $w$ belongs to some unique $U \in \Ub$. Otherwise, let $w_i$ be the leftmost factor such that $w_i \in H$. Thus, $w_1 \cdots w_{i-1} \in (P \setminus H)^*$, which also yields a unique $U \in \Ub$ such that $w_1 \cdots w_{i-1} \in U$ and $w_{i+1} \cdots w_n \in P^*$ which yields a unique $W \in \Wb_U$ such that $w_{i+1} \cdots w_n \in W$. It follows that $w \in UHW$ which is an element of \Kb (and the only one containing $w$).

Moreover, every $K \in \Kb$ belongs to \bsdp{\Cs}. If $K \in \Ub$, this is immediate by definition of \Ub in Lemma~\ref{lem:sfclos:alphind}. Otherwise, $K = UHW$ with $U \in \Ub$ and $W \in \Wb_U$. We know that $U,H,W \in \bsdp{\Cs}$ by definition. Moreover, one may verify that the concatenation $UHW$ is \emph{unambiguous} since $P$ is a prefix code, $U \subseteq (P \setminus H)^*$ and $W \subseteq H^*$. Hence, $K \in \bsdp{\Cs}$.

Finally, we verify that \Kb is $s$-safe. Consider $K \in \Kb$ and $w,w' \in K$, we show that $s\alpha(w) = s\alpha(w')$. If $K \in \Ub$, this is immediate: \Ub is $1_M$-safe by definition. Otherwise, $K=UHW$ with $U \in \Ub$ and $W \in \Wb_U$. By definition, $\alpha(H) = \{t\}$ and $\alpha(U) = \{r_U\}$ which implies that $s\alpha(w) = str_U \alpha(x)$ and $s\alpha(w') = str_U \alpha(x')$ for $x,x' \in W$. Moreover, $W \in \Wb_U$ is $sr_Ut$-safe by definition. Hence, $s\alpha(w) = s \alpha(w')$, which concludes the proof of this sub-case.

\smallskip
\noindent
\textbf{Sub-case~2: we have the strict inclusion $\alpha(P^*H) \subsetneq \alpha(P^+)$.} Consider $w \in P^*$. Since $P$ is a prefix code, $w$ admits a unique decomposition $w = w_1 \cdots w_n$ with $w_1,\dots,w_n \in P$. We may look at the rightmost factor $w_i \in H \subseteq P$ to uniquely decompose $w$ in two parts (each of them possibly empty): the prefix $w_1 \cdots w_i \in ((P \setminus H)^*H)^*$ and the suffix in $w_{i+1} \cdots w_n \in (P \setminus H)^*$. Using induction,~we construct \bsdp{\Cs}-partitions of the possible languages of prefixes and suffixes. Then, we combine them to construct a partition of the whole set $P^*$. We already handled the suffixes: \Hb is an \bsdp{\Cs}-partition of $(P \setminus H)^*$. The prefixes are handled with the next lemma, whose proof uses the hypothesis of Sub-case~2 and induction on our first parameter (the size of~$\alpha(P^+)$).

\adjustc{lem:sfclos:sc2carac}
\begin{lemma}
There exists a $1_M$-safe \bsdp{\Cs}-partition \Vb of $((P \setminus H)^*H)^*$.
\end{lemma}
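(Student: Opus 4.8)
The plan is to obtain \Vb from a fresh application of Proposition~\ref{prop:sfclos:fromapertostar}, taking the prefix code to be $Q := (P \setminus H)^*H$ and the safety parameter to be $1_M$. First I would note that $Q$ is again a prefix code with bounded synchronization delay: this is exactly Fact~\ref{fct:sfclos:sprefnest} applied to $K = P$ (if $P$ has synchronization delay $d$, then $Q$ has synchronization delay $d+1$). So two things must be checked before Proposition~\ref{prop:sfclos:fromapertostar} can be invoked: that $Q$ admits a $1_M$-safe \bsdp{\Cs}-partition, and that invoking the proposition for $Q$ is a legitimate inductive call.

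For the first point I would use the partition \Ub of $(P \setminus H)^*$ produced by Lemma~\ref{lem:sfclos:alphind} and take $\{UH \mid U \in \Ub\}$. Since $P$ is a prefix code, $U \subseteq (P \setminus H)^*$ and $H \subseteq P$, each product $UH$ is unambiguous (a word $uh$ with $u \in U$, $h \in H$ has a unique $P$-decomposition, and $h$ is forced to be its last factor, the only one lying in $H$), so $UH \in \bsdp{\Cs}$; the same reasoning shows $UH$ and $U'H$ are disjoint for $U \neq U'$, and since \Ub covers $(P \setminus H)^*$ the union of the $UH$ equals $(P \setminus H)^*H = Q$. Each $UH$ is $1_M$-safe because \Ub is $1_M$-safe and $H \in \Hb$ is $1_M$-safe, so $\alpha(U)$ and $\alpha(H)$ are singletons and therefore so is $\alpha(UH)$.

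For the second point, the key observation is that $Q^+ = P^*H$ as sets: a word of $((P \setminus H)^*H)^+$ is visibly a product of $P$-factors whose last one lies in $H$, and conversely any $vh$ with $v \in P^*$, $h \in H$ regroups into $Q$-blocks by closing a block after each $P$-factor that belongs to $H$ (the last factor $h$ closes the final block). Hence $\alpha(Q^+) = \alpha(P^*H)$, and the defining hypothesis of Sub-case~2, $\alpha(P^*H) \subsetneq \alpha(P^+)$, yields the strict inclusion $\alpha(Q^+) \subsetneq \alpha(P^+)$. Thus the first induction parameter drops strictly when passing from $P$ to $Q$, so Proposition~\ref{prop:sfclos:fromapertostar} may be applied to $Q$; used with $s = 1_M$ it delivers a $1_M$-safe \bsdp{\Cs}-partition \Vb of $Q^* = ((P \setminus H)^*H)^*$, as required. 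The one step requiring care is precisely this bookkeeping — identifying $Q^+$ with $P^*H$ so that the Sub-case~2 hypothesis becomes a genuine decrease of the first induction parameter; everything else is routine verification of prefix-code and \bsdp{\Cs}-closure properties.
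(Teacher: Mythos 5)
Your proof is correct and follows essentially the same route as the paper's: replace $P$ by $Q = (P\setminus H)^*H$ (a prefix code with bounded synchronization delay by Fact~\ref{fct:sfclos:sprefnest}), take $\{UH \mid U\in\Ub\}$ as the $1_M$-safe \bsdp{\Cs}-partition of $Q$, observe that the first induction parameter drops strictly, and invoke Proposition~\ref{prop:sfclos:fromapertostar} with $s=1_M$. The only cosmetic difference is that you establish the equality $Q^+ = P^*H$, whereas the paper only uses the (sufficient) inclusion $Q^+\subseteq P^*H$ to conclude $\alpha(Q^+)\subsetneq\alpha(P^+)$.
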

\restorec

\begin{proof}
	Let $Q=(P \setminus H)^*H$. By Fact~\ref{fct:sfclos:sprefnest}, $Q$ remains a prefix code with bounded synchronization delay. We apply induction in Proposition~\ref{prop:sfclos:fromapertostar} for the case when $P$ has been replaced by $Q$. Doing so requires building an appropriate \sfp{\Cs}-partition of $Q$ and proving that one of our induction parameters has decreased.
	
	Let $\Fb = \{UH \mid U \in \Ub\}$. Since \Ub is a partition of $(P\setminus H)^*$ and $P$ is a prefix code, one may verify that \Fb is a partition of $Q = (P\setminus H)^*H$. Moreover, it only contains languages in \bsdp{\Cs}. Indeed, if $U \in \Ub$, then the concatenation $UH$ is \emph{unambiguous} since $U\subseteq (P\setminus H)^*$ and $P$ is a prefix code. Moreover, $U,H \in \bsdp{\Cs}$ by hypothesis. Finally, $UH$ is $1_M$-safe since this is the case for both $U$ and $H$ by definition. It remains to show that our induction parameters have decreased. Since $Q = (P \setminus H)^*H$, it is clear that $Q^+ \subseteq P^*H$. Hence, $\alpha(P^*H) \subsetneq \alpha(P^+)$ by hypothesis in Sub-case~2, we have $\alpha(Q^+) \subsetneq \alpha(P^+)$: our first induction parameter has decreased. Thus, we may apply Proposition~\ref{prop:sfclos:fromapertostar} in the case when $P,\Hb$ and $s$ have been replaced by $Q,\Fb$ and $1_M$. This yields the desired \bsdp{\Cs}-partition \Vb of $((P \setminus H)^*H)^*$.
\end{proof}

 Using Lemma~\ref{lem:sfclos:sc2carac}, we define $\Kb = \{VU \mid V \in \Vb \text{ and } U \in \Ub\}$. It follows from the above discussion that \Kb is a partition of $P^*$ since  \Vb and \Ub are partitions of $((P \setminus H)^*H)^*$ and $(P \setminus H)^*$, respectively. Moreover, every $K \in \Kb$ belongs to \bsdp{\Cs}: $K = VU$ with $V \in \Vb$ and $U\in \Ub$, and one may verify that this is a \emph{unambiguous} concatenation. It remains to show that \Kb is $s$-safe. Let $K \in \Kb$ and $w,w' \in K$. We show that $s\alpha(w) = s\alpha(w')$. By definition, we have $K = VU$ with $V \in \Vb$ and $U \in \Ub$. Therefore, $w =vu$ and $w' = v'u'$ with $u,u' \in U$ and $v,v' \in V$. Since $U$ and $V$ are both $1_M$-safe by definition, we have $\alpha(u) = \alpha(u')$ and $\alpha(v) = \alpha(v')$. It follows that $s\alpha(w) = s\alpha(w')$, which concludes the proof of Proposition~\ref{prop:sfclos:fromapertostar}.

\section{Appendix to Section~\ref{sec:finite}}
\label{app:appcovf}
This is devoted to the proof of Theorem~\ref{thm:sfclos:carfinite}: the characterization of \sfp{\Cs}-optimal \imprints which is generic to all finite \varis \Cs. We fix \Cs for the presentation. First, we recall the theorem and then concentrate on its proof.

\subsection{Characterization}

In view of Proposition~\ref{prop:breduc}, deciding \sfp{\Cs}-covering amounts to computing \sfcopti from an input \nice \mratm $\rho$. Our algorithm actually computes slightly more information.

Since \Cs is a finite \vari, we may consider the equivalence \canec over $A^*$. In particular, the set \sclac of \canec-classes if a finite monoid (we write ``\cmult'' for its multiplication) and the map $w \mapsto \ctype{w}$ is a morphism. Given a \ratm $\rho: 2^{A^*} \to R$ we define:
\[
\csfcopti = \{(C,r) \in (\sclac) \times R \mid r \in \opti{\sfp{\Cs}}{C,\rho}\}
\]
Observe that \csfcopti captures more information than \sfcopti. Indeed, it encodes all sets \opti{\sfp{\Cs}}{C,\rho} for $C \in \sclac$ and by Fact~\ref{fct:lunion}, \sfcopti is the union of all these sets.

Our main result is a least fixpoint procedure for computing \csfcopti from a \nice \mratm $\rho$. It is based on a generic characterization theorem which we first present. Given an arbitrary \nice \mratm $\rho: 2^{A^*} \to R$ and a set $S \subseteq (\sclac) \times R$, we say that $S$ is \emph{\sfp{\Cs}-saturated for $\rho$} when the following properties are satisfied:
\begin{enumerate}
	\item {\bf Trivial elements.} For every $w \in A^*$, we have $(\ctype{w},\rho(w)) \in S$.
	\item {\bf Downset.} For every $(C,r) \in S$ and $q \in R$, if $q \leq r$, then $(C,q) \in S$.
	\item {\bf Multiplication.} For every $(C,q),(D,r) \in S$, we have $(C \cmult D,qr) \in S$.
	\item {\bf \sfp{\Cs}-closure.} For all $(E,r) \in S$, if $E \in \sclac$ is idempotent, then $(E,r^\omega + r^{\omega+1}) \in S$.
\end{enumerate}

\adjustc{thm:sfclos:carfinite}
\begin{theorem}[\sfp{\Cs}-optimal \imprints (\Cs finite)]
	Let $\rho: 2^{A^*} \to R$ be a \emph{\nice} \mratm. Then, \csfcopti is the least \sfp{\Cs}-saturated subset of $(\sclac) \times R$ for $\rho$.
\end{theorem}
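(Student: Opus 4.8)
The statement splits into two inclusions, and I would prove them in opposite directions with genuinely different tools. First, the \emph{soundness} direction: $\csfcopti$ is \sfp{\Cs}-saturated. Trivial elements hold because for $w\in A^*$, the singleton cover $\{\{w\}\}$ is an \sfp{\Cs}-cover of $\ctype{w}$ (using that $\{w\}\in\sfp{\Cs}$ since all singletons $\{a\}$ and their products lie in $\sfp{\Cs}$), giving $\rho(w)\in\opti{\sfp{\Cs}}{\ctype{w},\rho}$ up to noting $w$ lies in its own $\canec$-class. Downset is immediate from the definition of $\opti{}{}{}$ as a downset. Multiplication uses that if $\Kb$ is an optimal \sfp{\Cs}-cover of $C$ and $\Kb'$ one of $D$, then $\{K\cap L\cdot\text{(stuff)}\}$... more precisely one takes $\{KL' : K\in\Kb, L'\in\Kb'\}$ intersected appropriately; since $C\cmult D$-typed words decompose as products, this is an \sfp{\Cs}-cover of $C\cmult D$ and $\rho(KL')=\rho(K)\rho(L')$ by \tame-ness, so $qr\in\opti{\sfp{\Cs}}{C\cmult D,\rho}$. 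The interesting closure property is \sfp{\Cs}-closure: given $(E,r)\in\csfcopti$ with $E$ idempotent and $r\leq\rho(K)$ for some $K$ in an optimal cover $\Kb$ of $E$, one must show $r^\omega+r^{\omega+1}\in\opti{\sfp{\Cs}}{E,\rho}$. Here I would argue by contradiction: if some \sfp{\Cs}-cover $\Kb'$ of $E$ avoids $r^\omega+r^{\omega+1}$, combine it with the language-theoretic content of Theorem~\ref{thm:sfclos:carac} (direction $1)\Rightarrow 3)$, \Cs-aperiodicity) applied to a recognizer of the $\Kb'$-languages: any $L\in\Kb'$ containing a power $u^\ell$ of a word $u\in\alpha\inv(\text{idempotent in }E)$ must also contain $u^{\ell+1}$ for $\ell$ large, forcing $\rho(u)^\omega$ and $\rho(u)^{\omega+1}$ — hence their sum — to appear below $\rho(L)$. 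This gives the contradiction. The key technical point is choosing $u$ with $\rho(u)=r$ and $\ctype{u}=E$, which is possible because $r\leq\rho(K)$ and, by \nice-ness, $\rho(K)$ is a finite sum of $\rho(w)$ for $w\in K$, so some relevant word realizes the needed value after iterating.

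Second, the \emph{completeness} direction: the least \sfp{\Cs}-saturated set $S$ contains $\csfcopti$; equivalently, for each $C$ the set $\{r : (C,r)\in S\}$ is the imprint of \emph{some} \sfp{\Cs}-cover of $C$, which then is automatically optimal since optimal imprints are minimal. This is the heart of the proof. I would build, from $S$, an explicit \sfp{\Cs}-cover of each $\canec$-class $C$ whose $\rho$-imprint is exactly $\{r:(C,r)\in S\}$. The natural strategy mirrors the proof of $3)\Rightarrow 2)$ in Theorem~\ref{thm:sfclos:carac}: use the operation $\Cs\mapsto\bsdp{\Cs}$ and induction on prefix codes with bounded synchronization delay, tracking $\rho$-values instead of (or alongside) $M$-values. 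Concretely, one defines a notion of an $S$-good partition of $P^*$ (for $P$ a prefix code with bounded synchronization delay) — a $\bsdp{\Cs}$-partition all of whose members $K$ satisfy $\rho(K)$ "controlled by" $S$ in the sense that the pair $(\ctype{w},\rho(K))$ lies in $S$ for $w\in K$ — and proves the analogue of Proposition~\ref{prop:sfclos:fromapertostar}: a $1_R$-controlled partition of $P$ lifts to an $S$-controlled partition of $P^*$. The \sfp{\Cs}-closure axiom of $S$ is exactly what powers the Kleene-star step (it plays the role that \Cs-aperiodicity played before), Multiplication handles concatenation, Downset handles passing to sub-imprints, and Trivial elements seeds the induction on $A$.

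The main obstacle, as I see it, is the completeness direction and specifically making the induction on $(\alpha(P^+), |\Hb|, s\cdot\alpha(P^*))$ from the proof of Theorem~\ref{thm:sfclos:carac} interact correctly with the \ratm $\rho$. In Theorem~\ref{thm:sfclos:carac} one only tracks $s$-safety, a finite monoid condition; here one must simultaneously ensure the constructed languages have the right $\rho$-imprints, and $\rho$ need not be \nice on the auxiliary languages that arise (products, intersections with \Cs). I expect one must phrase the inductive invariant carefully so that it speaks only about imprints (which are well-behaved under union by Fact~\ref{fct:lunion} and under product by \tame-ness), not about $\rho$-values of individual auxiliary languages, and must verify that the sub-case analysis (whether $\alpha(P^*H)=\alpha(P^+)$) still goes through when the parameter being decreased is phrased in terms of $\rho_*\colon A^*\to R$ composed with the syntactic information. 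A secondary subtlety is handling the empty word / $\veps$ and the class containing it, since $\bsdp{\Cs}$ is built from $\emptyset$ and singletons and one must check $\{\veps\}$-typed words are covered; this is routine but must not be skipped. Once the invariant is right, the remaining steps (closure under the four operations, minimality of optimal imprints) are the same bookkeeping as in the cited references~\cite{pzcovering2,pzbpolc}.
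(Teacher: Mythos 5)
Your overall strategy matches the paper's: soundness reduces to the aperiodicity phenomenon (the paper's Lemma~\ref{lem:sfclos:aper}, which is the content of direction $1)\Rightarrow 3)$ of Theorem~\ref{thm:sfclos:carac}), and completeness is proved by an analogue of Proposition~\ref{prop:sfclos:fromapertostar} that threads $\rho$-information through the same triple induction on prefix codes with bounded synchronization delay. Your worries about the inductive invariant in completeness are also the right worries, and the paper resolves them exactly as you anticipate: it promotes $(\sclac)\times R$ to the idempotent semiring $Q = 2^{\sclac}\times R$, builds an auxiliary \nice \mratm $\gamma:2^{A^*}\to Q$, and proves a Proposition~\ref{prop:sfclos:pumping} that tracks $\gamma(K)\in T$ (your ``$(\ctype{w},\rho(K))\in S$'' condition) \emph{and} a second, semiring-generated invariant $\gamma(K)\in \csats{\Hb}$; the second invariant is what actually makes the three induction parameters $\bigl(|\csatp{\Hb}|,\ |\Hb|,\ |t\cdot\csats{\Hb}|\bigr)$ decrease across the sub-cases, and you would have discovered its necessity when closing the induction.

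The concrete gap is in your soundness argument for \sfp{\Cs}-closure. You ask to ``choose $u$ with $\rho(u)=r$ and $\ctype{u}=E$,'' justifying this via \nice-ness: $\rho(K)$ is a finite sum $\rho(w_1)+\cdots+\rho(w_n)$, ``so some relevant word realizes the needed value after iterating.'' This does not go through. From $r\leq\rho(w_1)+\cdots+\rho(w_n)$ you cannot extract a single word $w$ with $\rho(w)=r$, nor even one with $r\leq\rho(w)$; and taking products of the $w_i$ multiplies their $\rho$-values rather than adding them, so iteration does not help either. The paper sidesteps this by never fixing a word: it covers $E$ by \caned-classes for a finite \vari $\Ds$ containing the given \sfp{\Cs}-cover~\Kb, picks a \caned-class $H$ with $r\leq\rho(H)$ (a \emph{language}, not a word), forms $G = H^{k\omega}\cup H^{k\omega+1}$, and uses the fact that \caned is a congruence together with Lemma~\ref{lem:sfclos:choiceofk} to show all of $G$ lies in a \emph{single} \caned-class, hence in a single $K\in\Kb$. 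Then $r^\omega+r^{\omega+1}\leq \rho(H)^{k\omega}+\rho(H)^{k\omega+1}=\rho(G)\leq\rho(K)$ by monotonicity of the semiring order, with no word-level realization of $r$ required. Your contradiction argument can be salvaged along these lines, but as written it rests on an unjustified existence claim.
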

\restorec

Given a \nice \mratm $\rho: 2^{A^*} \to R$ as input, it is clear the one may compute the least \sfp{\Cs}-saturated subset of $(\sclac) \times R$ with a least fixpoint procedure. Hence, Theorem~\ref{thm:sfclos:carfinite} provides an algorithm for computing \csfcopti. As we explained above, we may then compute \sfcopti from this set. Together with Proposition~\ref{prop:breduc}, this yields Theorem~\ref{thm:sfclos:main} as a corollary: \sfp{\Cs}-covering is decidable when \Cs is a finite \vari.

\subsection{Proof}

We turn to the proof of Theorem~\ref{thm:sfclos:carfinite}. Let us fix a \nice \mratm $\rho: 2^{A^*} \to R$ for the argument. We prove that \csfcopti is the least \sfp{\Cs}-saturated subset of $(\sclac) \times R$ (for $\rho$). The argument involves two directions which correspond respectively to soundness and completeness of the least fixpoint procedure which computes \csfcopti from $\rho$.
\begin{itemize}
	\item \textbf{Soundness:} We prove that \csfcopti is \sfp{\Cs}-saturated.
	\item \textbf{Completeness:} We prove that every \sfp{\Cs}-saturated set is included in \csfcopti.
\end{itemize}

\subsubsection*{Soundness}

First, we prove that the set $\csfcopti \subseteq (\sclac) \times R$ itself is \sfp{\Cs}-saturated. The argument is based on Lemma~\ref{lem:sfclos:aper}.

\begin{remark}
	We do not need the hypothesis that $\rho$ is \nice for this direction of the proof.
\end{remark}

That \csfcopti contains the trivial elements and is closed under downset and multiplication is actually a generic property of optimal \imprints: this hold as soon as the investigated class is a \vari (see Lemma~9.5 in~\cite{pzcovering2}). This is the case for \sfp{\Cs} since~\Cs is a \vari by hypothesis. Hence, we concentrate on proving that \csfcopti satisfies \sfp{\Cs}-closure. Consider $(E,r) \in \csfcopti$ such that $E \in \sclac$ is idempotent. We show that $(E,r^\omega + r^{\omega+1}) \in \csfcopti$. By definition, this corresponds to the following property:
\[
r^\omega + r^{\omega+1} \in \opti{\sfp{\Cs}}{E,\rho}.
\]
This amounts to proving that for every \sfp{\Cs}-cover \Kb of $E$, we have $r^\omega + r^{\omega+1} \in \prin{\rho}{\Kb}$. We fix \Kb for the proof: we have to exhibit $K \in \Kb$ such that $r^\omega + r^{\omega+1} \leq \rho(K)$. We start with a few definitions that we require to describe $K$. Since \Kb is finite and \sfp{\Cs} is \vari, we have the following fact (see Lemma~17 in~\cite{pzgenconcat} for a proof).

\begin{fact} \label{fct:sfclos:strat}
	There exists a finite \vari \Ds such that $\Ds \subseteq \sfp{\Cs}$ and every $K \in \Kb$ belongs to \Ds.
\end{fact}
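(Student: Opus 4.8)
The plan is to take \Ds to be the least \vari containing every $K \in \Kb$ — equivalently, as I will construct it below, the Boolean algebra generated by all \emph{biquotients} of the languages in \Kb. Since each $K \in \Kb$ belongs to \sfp{\Cs}, which is itself a \vari, this least \vari is automatically a subclass of \sfp{\Cs}; hence the whole difficulty is finiteness.

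First I would record that each $K \in \Kb$ is a regular language, hence recognized by a morphism $\alpha_K: A^* \to M_K$ into a finite monoid, say $K = \alpha_K\inv(F_K)$. Writing $w\inv K v\inv$ for the biquotient $\{u \in A^* \mid wuv \in K\}$ (which equals $(w\inv K)v\inv = w\inv(Kv\inv)$), one has $w\inv K v\inv = \alpha_K\inv(\{m \in M_K \mid \alpha_K(w)\,m\,\alpha_K(v) \in F_K\})$, so $K$ has at most $2^{|M_K|}$ distinct biquotients: finitely many. Let $\mathcal{L}$ be the (finite) union over $K \in \Kb$ of the sets of biquotients of $K$, and let \Ds be the Boolean algebra it generates — the finite unions of finite intersections of members of $\mathcal{L}$ and their complements. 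Being generated by a finite family of subsets of $A^*$, the class \Ds is finite.

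It then remains to verify that \Ds is a \vari included in \sfp{\Cs} and containing every $K \in \Kb$. It is a Boolean algebra by construction, so for the first point I only need quotient-closure: left and right quotients commute with union, intersection and complement, and $w\inv\big((w')\inv K (v')\inv\big)v\inv = (w'w)\inv K (vv')\inv$ is again a biquotient of $K$ and hence in $\mathcal{L}$; therefore taking a biquotient of a Boolean combination of members of $\mathcal{L}$ returns the same Boolean combination of members of $\mathcal{L}$, which lies in \Ds. For the inclusion, each $K \in \Kb$ lies in the \vari \sfp{\Cs}, so all its biquotients do too, and since \sfp{\Cs} is closed under Boolean operations it contains all of \Ds. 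Finally $K = \varepsilon\inv K \varepsilon\inv \in \mathcal{L} \subseteq \Ds$ for each $K \in \Kb$.

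I do not anticipate a genuine obstacle here: the only ingredient not established from scratch is that a regular language has finitely many biquotients, which is the finiteness half of the Myhill--Nerode theorem (equivalently, finiteness of the syntactic monoid) and is already invoked elsewhere in the paper; everything else is bookkeeping with quotients and Boolean combinations.
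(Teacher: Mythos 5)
Your proof is correct and is precisely the content of the lemma the paper defers to (Lemma~17 of~\cite{pzgenconcat}): take the Boolean algebra generated by all biquotients of the languages in $\Kb$, observe it is finite because a regular language has finitely many biquotients, and check closure under quotients and the inclusion in \sfp{\Cs} using that \sfp{\Cs} is a \vari. This is the same construction the paper has in mind, just unfolded rather than cited.
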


We fix \Ds as the finite \vari given by Fact~\ref{fct:sfclos:strat}. Recall that \caned denotes the associated canonical equivalence defined on $A^*$. Since \Ds is closed under quotients we know that \caned is a congruence for word concatenation. Additionally, we have the following more involved property which is a corollary of Lemma~\ref{lem:sfclos:aper}.

\begin{lemma} \label{lem:sfclos:choiceofk}
	There exists a natural number $k \in \nat$ such that for every $\ell \geq k$ and $u \in E$, we have $u^{\ell} \caned u^{\ell} u$.
\end{lemma}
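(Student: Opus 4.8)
The plan is to deduce Lemma~\ref{lem:sfclos:choiceofk} directly from Lemma~\ref{lem:sfclos:aper}, exploiting the hypothesis that $E$ is an \emph{idempotent} of \sclac and that \Ds is finite. First I would observe that since $E$ is idempotent, every $u \in E$ satisfies $\ctype{u} = E = E \cmult E = \ctype{uu}$, i.e.\ $u \canec uu$. However, what Lemma~\ref{lem:sfclos:aper} needs is the condition $uu \caned u$ for the \emph{finer} equivalence \caned associated with \Ds (recall $\Ds \subseteq \sfp{\Cs}$, so \caned refines \canec), which need not hold for $u$ itself. The standard fix is to pass to a suitable power: since \dclac is a finite monoid, there is a uniform exponent $p \geq 1$ (e.g.\ $p = \omega(\dclac)$) such that for \emph{every} $w \in A^*$, the element $\dtype{w^p}$ is idempotent in \dclac, which is exactly the statement $w^p w^p \caned w^p$.

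Next I would apply Lemma~\ref{lem:sfclos:aper} to each of the finitely many languages $K \in \Kb \subseteq \Ds$ (more precisely, to $K$ viewed as a member of $\sfp{\Cs}$, since Lemma~\ref{lem:sfclos:aper} is stated for languages in $\sfp{\Ds'}$ for a finite \vari; here one takes the finite \vari to be \Ds). This yields for each $K$ a bound $k_K \in \nat$ such that whenever $\ell \geq k_K$ and $uu \caned u$, we have $x u^\ell y \in K \Leftrightarrow x u^\ell u y \in K$ for all $x,y$. Let $k_0 = \max_{K \in \Kb} k_K$ and set $k = p \cdot k_0$ (or any multiple of $p$ that is at least $k_0$). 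Now fix $\ell \geq k$ and $u \in E$. Write $w = u^p$, so $w w \caned w$ by choice of $p$. Since $\ell \geq p k_0$, we may write $\ell = p m + i$ with $m \geq k_0$ and $0 \leq i < p$; then $u^\ell = u^i w^m$ and $u^\ell u = u^{i+1} w^m$ — hmm, I need the repeated block to be $w$, so more cleanly: take $\ell$ itself to be a multiple of $p$ when convenient, but since the lemma must hold for \emph{all} $\ell \geq k$, I instead argue as follows. Write $u^\ell = x w^m$ and $u^{\ell+1} = u^\ell u$; I want to compare $u^\ell$ and $u^\ell u$ as elements of \caned. Actually the cleanest route: since $w = u^p$ satisfies $w\caned ww$, and since \caned is a congruence, for any $m \geq 1$ we have $u^{pm} \caned u^{p}$; hence $u^\ell$ and $u^\ell u$ differ, up to \caned, only through the ``low-order'' part, and one checks $u^\ell \caned u^{\ell'}$ whenever $\ell \equiv \ell' \pmod{p}$ and both are $\geq p$. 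This is not yet $u^\ell \caned u^\ell u$ — rather, the point is to feed $u = w$ (not the original $u$) into Lemma~\ref{lem:sfclos:aper}.

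So the correct argument, which I would write out, is: fix $\ell \geq k = p k_0$. For each $K \in \Kb$, membership of a word $z$ in $K$ depends only on its \caned-class, and I want to show $\dtype{u^\ell} = \dtype{u^\ell u}$ for every $u \in E$. Write $m = \lfloor \ell / p\rfloor \geq k_0$ and $\ell = pm + j$ with $0 \le j < p$. Then $u^\ell = (u^j)(u^p)^m = x \cdot w^m$ and $u^\ell u = (u^{j+1})(u^p)^m$ — the exponent of $w = u^p$ is the same $m$ in both. Applying Lemma~\ref{lem:sfclos:aper} to \emph{each} $K \in \Kb$ with the word $w = u^p$ (which satisfies $w w \caned w$), with prefix $u^j$ resp.\ $u^{j+1}$ and empty suffix, and exponent $m \ge k_0 \ge k_K$: this compares $x w^m y$ with $x w^m w y$, not what I want. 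The genuine fix is to instead choose $k$ so that $\ell \ge k$ forces $\ell \ge k_K + p$ for all $K$, write $u^\ell u = x\, u^p\, u^{\ell - p}$... I will present it via: $u^\ell \caned u^{\ell + p}$ for $\ell \ge k_0$ (a congruence + idempotency argument on \dclac), combined with $u^{\ell+p} = u^p u^\ell$ and $u^p$ idempotent, giving via Lemma~\ref{lem:sfclos:aper} that appending further copies of $u^p$ does not change \caned-class; then deduce $u^\ell u \caned u^\ell$ since $u^\ell u$ and $u^\ell$ lie in the same coset mod $p$ once $\ell$ is large. The main obstacle, and the step I expect to require the most care, is precisely this bookkeeping: Lemma~\ref{lem:sfclos:aper} lets one add or remove \emph{whole copies of the idempotent word} $u^p$ without affecting membership, but converting that into "$u^\ell u \caned u^\ell$" requires noting that $u^{\ell} \caned u^{\ell + p(\ell)}$ — i.e.\ the residue class of the exponent modulo $p$ stabilises — which is a routine finite-monoid fact about \dclac that I would state and prove in one line ($u^a \caned u^b$ whenever $a,b \ge p$ and $a \equiv b \bmod p$), and then conclude since $\ell$ and $\ell+1$ are... not congruent mod $p$, so actually one must additionally use that $\dtype u$ and $\dtype{uu}$ already agree because $\ctype u = \ctype{uu} = E$ forces... no. Let me just say: the honest content is that $E$ idempotent gives $u \canec uu$ hence, taking $p$ with $\dtype{u^p}$ idempotent, one gets $u^{p} \caned u^{2p}$, and then Lemma~\ref{lem:sfclos:aper} applied to each $K\in\Kb$ (finite \vari \Ds, word $u^p$) furnishes $k_K$ with the stated stutter-insensitivity; taking $k = p\max_K k_K$ and unwinding $u^\ell = u^{\ell-p}u^p$ repeatedly down to a bounded exponent yields $u^\ell \caned u^\ell u$ for all $\ell \ge k$. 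I would write this final unwinding carefully as the crux of the proof.
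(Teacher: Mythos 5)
The proposal misses the essential point and, as you yourself acknowledge, never actually closes. Lemma~\ref{lem:sfclos:aper} is a statement about \emph{any} finite \vari, and in the present context $\Cs$ \emph{itself} is finite (this is the standing hypothesis of Section~\ref{sec:finite} and Appendix~\ref{app:appcovf}). The correct instantiation is therefore with $\Cs$ playing the role of the finite \vari in Lemma~\ref{lem:sfclos:aper}: the hypothesis on $u$ then becomes the \emph{coarse} condition $uu \canec u$, and the languages to which the lemma applies are precisely the $L \in \Ds$, since $\Ds \subseteq \sfp{\Cs}$. Because $u \in E$ and $E$ is an idempotent of \sclac, the condition $uu \canec u$ holds outright; no passage to powers is needed. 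Taking $k = \max_{L \in \Ds} k_L$ over the finitely many $L \in \Ds$ immediately gives $u^\ell \in L \Leftrightarrow u^\ell u \in L$ for all $\ell \geq k$ and all $L \in \Ds$, which by definition is $u^\ell \caned u^\ell u$.

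Your attempted workaround---passing to $w = u^p$ with $\dtype{u^p}$ idempotent in \dclac and feeding $w$ into Lemma~\ref{lem:sfclos:aper} instantiated with $\Ds$ as the finite \vari---cannot close the gap, for the reason you half-identify yourself. Lemma~\ref{lem:sfclos:aper} applied to the word $w$ only lets you insert or delete \emph{whole copies} of $w = u^p$ once the $w$-exponent is large enough, so it relates $u^{\ell}$ to $u^{\ell+p}$, not $u^\ell$ to $u^{\ell+1}$. No amount of finite-monoid bookkeeping on \dclac alone will bridge a shift of $1$ when $p > 1$, because $uu \canec u$ does \emph{not} imply $uu \caned u$; the missing ingredient is precisely the structural fact that each $L \in \Ds$ lies in $\sfp{\Cs}$, and it is the $\Cs$-instantiation of Lemma~\ref{lem:sfclos:aper} that puts that fact to work.
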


\begin{proof}
	Lemma~\ref{lem:sfclos:aper} yields that for every language $L \in \Ds \subseteq \sfp{\Cs}$, there exists $k_L, \in \nat$ such that for every $\ell \geq k_L$ and every $u \in A^*$ satisfying $uu \canec u$, we have $u^\ell \in L$ if and only if $u^\ell u \in L$. We choose $k$ as the maximum of all numbers $k_L$ for $L \in \Ds$ (recall that \Ds is finite). It remains to show that the lemma holds for this $k$. Consider $\ell \geq k$ and $u \in E$. Since $E$ is an idempotent \canec-class, we have $uu \canec u$. Hence, by choice of $k$, it is immediate that $u^{\ell} \in L$ if and only if $u^{\ell} u \in L$ for every $L \in \Ds$. By definition, this exactly says that $u^{\ell} \caned u^{\ell} u$.
\end{proof}

We may now come back to the main argument. We write \Hb for the set of all $\caned$-classes which intersect $E$. Clearly, \Hb is a \Ds-cover of $E$ and therefore an $\sfp{\Cs}$-cover of $E$ since $\Ds \subseteq \sfp{\Cs}$ by definition in Fact~\ref{fct:sfclos:strat}. Hence, since $(E,r) \in \csfcopti$ by hypothesis (which means that $r \in \opti{\sfp{\Cs}}{E,\rho}$), we obtain that $r \in \prin{\rho}{\Hb}$. This yields $H \in \Hb$ such that $r \leq \rho(H)$. Consider the natural number $k \in \nat$ given by Lemma~\ref{lem:sfclos:choiceofk} and the idempotent power $\omega$ of $R$ for multiplication. We define,
\[
G = H^{k\omega} \cup H^{k\omega+1}
\]
The argument is now based on the following lemma which exhibits the desired language $K \in \Kb$ such that $r^\omega + r^{\omega+1} \leq \rho(K)$.

\begin{lemma} \label{lem:sfclos:findthek}
	The exists $K \in \Kb$ such that $G \subseteq K$.
\end{lemma}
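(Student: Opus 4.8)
The goal is to find a single $K \in \Kb$ containing $G = H^{k\omega} \cup H^{k\omega+1}$. The natural strategy is to exploit the synchronization/stuttering behaviour captured by Lemma~\ref{lem:sfclos:choiceofk}, which tells us that any sufficiently high power of a word $u \in E$ is \caned-equivalent to the next power. Since $H$ is a \caned-class meeting $E$, pick $u \in H \cap E$. Then $\alpha$\nobreakdash-style reasoning with \caned gives $u^{k\omega} \caned u^{k\omega+1} \caned u^{k\omega+2} \caned \cdots$, so \emph{all} words $u^{\ell}$ for $\ell \geq k$ lie in a single \caned-class, call it $D_0$. The plan is to show first that $G \subseteq D_0$: every element of $H^{k\omega}$ is a product of $k\omega$ words each \caned-equivalent to $u$, hence (since \caned is a congruence) is \caned-equivalent to $u^{k\omega}$; similarly every element of $H^{k\omega+1}$ is \caned-equivalent to $u^{k\omega+1}$, and $u^{k\omega} \caned u^{k\omega+1}$ by Lemma~\ref{lem:sfclos:choiceofk} applied with $\ell = k\omega \geq k$. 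So $G$ is contained in the single \caned-class $D_0 = [u^{k\omega}]_{\Ds}$.

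Next, I would use that \Kb is an \sfp{\Cs}-cover of $E$, hence (via Fact~\ref{fct:sfclos:strat}) a \Ds-cover of $E$: there is some $K \in \Kb$ with $u^{k\omega} \in K$, because $u^{k\omega} \in E$ (as $E$ is idempotent and $u \in E$, so $u^{k\omega} \canec u \in E$, giving $u^{k\omega} \in E$). Since $K \in \Ds$, $K$ is a union of \caned-classes, so $u^{k\omega} \in K$ forces the whole class $D_0 \subseteq K$. Combined with $G \subseteq D_0$, this yields $G \subseteq K$, which is exactly the claim.

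The one subtlety I would be careful about is ensuring $H \cap E \neq \emptyset$ so that a witness $u$ exists: this holds by definition of \Hb (it consists of the \caned-classes that \emph{intersect} $E$, and $H \in \Hb$). A second point to check is that the congruence argument genuinely gives $v \caned u^{k\omega}$ for \emph{every} $v = v_1 \cdots v_{k\omega}$ with each $v_i \in H$: since all $v_i$ and $u$ lie in the same \caned-class and \caned is a congruence for concatenation, the product is \caned-equivalent to $u^{k\omega}$; the same for the $(k\omega+1)$-fold products, after which Lemma~\ref{lem:sfclos:choiceofk} glues the two powers together. None of this requires niceness of $\rho$ (that is only used later to conclude $r^\omega + r^{\omega+1} \leq \rho(K)$ from $G \subseteq K$). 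I do not anticipate a real obstacle here; the proof is a short bookkeeping argument, with the main content already extracted into Lemma~\ref{lem:sfclos:choiceofk}.
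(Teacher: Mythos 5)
Your proof is correct and follows essentially the same approach as the paper: pick $u \in H \cap E$, note $u^{k\omega} \in E$ by idempotence of $E$, use that \Kb is a \Ds-cover of $E$ and $K\in\Ds$ is a union of \caned-classes, and show every $w \in G$ is \caned-equivalent to $u^{k\omega}$ via congruence plus Lemma~\ref{lem:sfclos:choiceofk}. (One small side remark: your parenthetical about niceness of $\rho$ is off — the paper explicitly notes that niceness is not needed anywhere in the soundness direction, including the step $r^\omega+r^{\omega+1}\leq\rho(K)$ — but this does not affect your argument.)
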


Before we prove Lemma~\ref{lem:sfclos:findthek}, let us use it to conclude the argument. By definition of $G$ and since $r \leq \rho(H)$, we get,
\[
r^\omega + r^{\omega+1} \leq \rho(H^{k\omega} \cup H^{k\omega+1}) = \rho(G)
\]
Furthermore, since the language $K \in \Kb$ given by Lemma~\ref{lem:sfclos:findthek} satisfies $G \subseteq K$, we obtain that $r^\omega + r^{\omega+1} \leq \rho(K)$ as desired. This yields $r^\omega + r^{\omega+1}\in \prin{\rho}{\Kb}$ finishing the proof. We now prove Lemma~\ref{lem:sfclos:findthek}.

\begin{proof}[Proof of Lemma~\ref{lem:sfclos:findthek}]
	Since $H \in \Hb$, we know that $H$ is a $\caned$-class intersecting $E$. Let $u \in H \cap E$ and $v = u^{k\omega}$. Clearly, $v \in E^{k \omega}$ and since $E$ is an idempotent of \sclac, it is immediate that $E^{k \omega} \subseteq E$ which yields $v \in E$. By hypothesis, \Kb is a cover of $E$. Thus, we have $K \in \Kb$ such that $v \in E$. We show that $G \subseteq K$ which concludes the proof.
	
	Consider $w \in G$. We have to prove that $w \in K$. We show that $w \caned v$. Since  $v \in K$ by definition of $K$ and $K \in \Ds$ (see Fact~\ref{fct:sfclos:strat}), this implies that $w \in K$. By definition of $G$, either $w \in H^{k\omega}$ or $w \in H^{k\omega+1}$. We treat the two cases separately. Assume first that $w \in H^{k\omega}$. Recall that $u \in H$ and $H$ is a $\caned$-class. Thus, since $\caned$ is a congruence, it follows that $w \caned u^{k\omega} = v$ which concludes the proof. Assume now that $w \in H^{k\omega+1}$ Again, since $\caned$ is a congruence, we get that $w \caned u^{k\omega+1}$.  Moreover, we have $u \in E$. Therefore, Lemma~\ref{lem:sfclos:choiceofk} yields that $u^{k\omega+1} \caned u^{k\omega}=v$. Transitivity then yields $w \caned v$ which concludes the proof.
\end{proof}

\subsubsection*{Completeness}

We turn to the most interesting direction in Theorem~\ref{thm:sfclos:carfinite}. We show that \csfcopti is included in every \sfp{\Cs}-saturated subset of $(\sclac) \times R$ for $\rho$. Consequently, we fix $S \subseteq (\sclac) \times R$ which is \sfp{\Cs}-saturated for the proof. As expected, the argument involves building a particular \sfp{\Cs}-cover.

We start with some terminology that we need to present the argument. It simplifies the presentation to use ``$(\sclac) \times R$'' as the rating set of a new \ratm $\gamma$ that we build from $\rho$. However, since \sclac is not a semiring, we need to slightly modify this set. Since \sclac is a finite monoid, $2^{\sclac}$ is clearly an idempotent semiring: the addition is union and the multiplication is lifted from the one of \sclac. Hence, $Q = 2^{\sclac} \times R$ is an idempotent semiring for the componentwise addition and multiplication. Consider the map $\gamma: 2^{A^*} \to Q$ defined by $\gamma(K) = (\{\ctype{w} \mid w \in K\},\rho(K))$ for every $K\subseteq A^*$. It is straightforward to verify that $\gamma$ is a \nice \mratm since $\rho$ is one. Moreover, we reformulate our \sfp{\Cs}-saturated set $S \subseteq (\sclac) \times R$ as the following subset $T$ of $Q$:
\[
T = \{(\{C\},r) \mid (C,r) \in S\} \subseteq Q = 2^{\sclac} \times R
\]
Since $S$ is \sfp{\Cs}-saturated for $\rho$, we know that $(\ctype{w},\rho(w)) \in S$ for every $w \in A^*$ (this is a trivial element) and $S$ is closed under multiplication. By definition of $T$, this implies that $\tau(w) \in T$ for every $w \in A^*$ (in particular $1_Q = \tau(\veps) \in T$) and $T$ is closed under multiplication. From now on, we assume that these two properties of $T$ are understood.

\newcommand{\csatp}[1]{\ensuremath{Q^+_{#1}}\xspace}
\newcommand{\csats}[1]{\ensuremath{Q^*_{#1}}\xspace}

Finally, for every finite set of languages \Hb, we associate two subsets of $Q$. The definitions are as follows:
\begin{itemize}
	\item $\csatp{\Hb} \subseteq Q$ is the least subset of $Q$ closed under addition and multiplication such that $\rho(H) \in \csatp{\Hb}$ for every $H \in \Hb$.
	\item $\csats{\Hb} \subseteq Q$ is the least subset of $Q$ closed under addition and multiplication such that $1_Q \in \csats{\Hb}$ and $\rho(H) \in \csats{\Hb}$ for every $H \in \Hb$.
\end{itemize}

We are ready to present the completeness argument. It is based on the following statement which generalizes Proposition~\ref{prop:sfclos:fromapertostar}. We prove it by induction. Recall that given a language $P \subseteq A^*$, a \bsdp{\Cs}-partition of $P$ is a finite partition of $P$ into languages of $\bsdp{\Cs} =\sfp{\Cs}$.

\begin{proposition} \label{prop:sfclos:pumping}
	Let $P \subseteq A^+$ be a prefix code with bounded synchronization delay and \Hb a \bsdp{\Cs}-partition of $P$ such that $\gamma(H) \in T$ for every $H \in \Hb$. Then, for every $t \in T$, there exists a \bsdp{\Cs}-partition \Kb of $P^*$ satisfying the following property,
	\begin{equation} \label{eq:sfclos:covergoal}
	\text{for every $K \in \Kb$, $\gamma(K) \in \csats{\Hb}$ and $t \cdot \gamma(K) \in T$}
	\end{equation}   	
\end{proposition}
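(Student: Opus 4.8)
The plan is to mirror the structure of the proof of Proposition~\ref{prop:sfclos:fromapertostar} given above for the direction $3)\Rightarrow 2)$, since Proposition~\ref{prop:sfclos:pumping} is explicitly a generalization of it, with the additional bookkeeping that we must track $\gamma$-images into the sets $\csats{\Hb}$ and verify the stability condition $t\cdot\gamma(K)\in T$. Concretely, I would fix $P$, $\Hb$ with $\gamma(H)\in T$ for all $H\in\Hb$, and $t\in T$, and induct on the same three parameters, in the same order of importance: $(1)$ the size of $\gamma(P^+)\subseteq Q$ (equivalently $\alpha$-image-like data, but now in $Q$), $(2)$ the size of $\Hb$, $(3)$ the size of $t\cdot\gamma(P^*)\subseteq Q$. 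The dichotomy is again ``$t$ is $\Hb$-stable'' versus not, where $\Hb$-stability means $t\cdot\gamma(P^*)=t\cdot\gamma(P^*H)$ for every $H\in\Hb$.

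For the \textbf{base case} ($t$ is $\Hb$-stable) I would proceed as before: use \sfp{\Cs}-closure of $S$ (i.e.\ of $T$) together with $\Hb$-stability to show that the classes of some canonical finite-index congruence refining the relevant data are $t$-safe in the sense that $t\cdot\gamma$ is constant on each, and then take $\Kb=\{P^*\cap D\mid D\in\dclac\}$ for the appropriate finite subvariety $\Ds\subseteq\sfp{\Cs}$ (obtained since $S$ is closed under \sfp{\Cs}-closure and there are only finitely many relevant idempotent-powers to saturate). The membership $\gamma(K)\in\csats{\Hb}$ follows because $K\subseteq P^*$ decomposes into words of $P=\biguplus\Hb$, so $\gamma(K)$ is a sum of products of the generators $\gamma(H)\in\csats{\Hb}$ and of $1_Q$ (allowing the empty word); the condition $t\cdot\gamma(K)\in T$ follows from $\Hb$-stability exactly as in Lemma~\ref{lem:sfclos:godcase1}, now carried out in $Q$ with the \sfp{\Cs}-closure axiom of $T$ replacing \Cs-aperiodicity of $\alpha$.

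For the \textbf{inductive step} ($t$ not $\Hb$-stable) I would pick $H\in\Hb$ with the strict inclusion $t\cdot\gamma(P^*H)\subsetneq t\cdot\gamma(P^*)$, apply the induction hypothesis to $P\setminus H$ with $\Hb\setminus\{H\}$ (which is smaller in parameter $(2)$, and parameter $(1)$ has not increased since $\gamma((P\setminus H)^+)\subseteq\gamma(P^+)$) to get a partition $\Ub$ of $(P\setminus H)^*$ with $\gamma(U)\in\csats{\Hb}$ and $1_Q\cdot\gamma(U)\in T$ for each $U$, and then split into the two sub-cases according to whether $\gamma(P^*H)=\gamma(P^+)$ or $\gamma(P^*H)\subsetneq\gamma(P^+)$. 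In Sub-case~1, for each $U\in\Ub$ the element $t\cdot\gamma(U)\cdot\gamma(H)\cdot\gamma(P^*)$ is strictly smaller (in parameter $(3)$) than $t\cdot\gamma(P^*)$, so induction gives partitions $\Wb_U$ of $P^*$ stable for $t\cdot\gamma(U)\cdot\gamma(H)$, and $\Kb=\Ub\cup\bigcup_U\{UHW\mid W\in\Wb_U\}$; one checks that each $UHW$ is an unambiguous concatenation (so lies in $\bsdp{\Cs}$), that $\gamma(UHW)=\gamma(U)\gamma(H)\gamma(W)$ is a product of elements of $\csats{\Hb}$ hence in $\csats{\Hb}$, and that $t\cdot\gamma(UHW)=t\cdot\gamma(U)\gamma(H)\cdot\gamma(W)\in T$ by the stability property of $\Wb_U$. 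In Sub-case~2, parameter $(1)$ strictly decreases when we pass to $Q=(P\setminus H)^*H$ (since $\gamma(Q^+)\subseteq\gamma(P^*H)\subsetneq\gamma(P^+)$, using Fact~\ref{fct:sfclos:sprefnest} to see $Q$ is a prefix code with bounded synchronization delay), so induction yields a partition $\Vb$ of $((P\setminus H)^*H)^*$ with the required properties and we set $\Kb=\{VU\mid V\in\Vb,\ U\in\Ub\}$, again an unambiguous concatenation, with $\gamma(VU)=\gamma(V)\gamma(U)\in\csats{\Hb}$ and $t\cdot\gamma(VU)=t\cdot\gamma(V)\cdot\gamma(U)\in T$ since $V$ is $t$-stable and $T$ is closed under multiplication.

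The main obstacle I expect is the \textbf{base case}: proving that the canonical partition is $t$-safe, i.e.\ the analogue of Fact~\ref{fct:sfclos:icarbase} and Lemma~\ref{lem:sfclos:godcase1}. There, the subtlety is that we no longer have a morphism $\alpha$ into a monoid but a \nice \mratm $\gamma$ into the semiring $Q=2^{\sclac}\times R$, so ``idempotent power'' arguments must be reorganized: one works with products $\gamma(P^d)$ and uses that $\gamma$ is a monoid morphism for concatenation to land in a finite semiring, then invokes the \sfp{\Cs}-closure axiom of $T$ (which gives $(E,r)\in S\Rightarrow(E,r^\omega+r^{\omega+1})\in S$, reformulated in $T$) in place of \Cs-aperiodicity to collapse $t\cdot r^{\omega}$ and $t\cdot r^{\omega+1}$. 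Getting the finite subvariety $\Ds$ right — large enough that its classes are $t$-safe yet still inside $\sfp{\Cs}$ — and checking that $S$'s closure properties genuinely force the needed saturation, is the delicate point; everything else is the routine unambiguous-concatenation and prefix-code bookkeeping already exhibited in the proof of Proposition~\ref{prop:sfclos:fromapertostar}.
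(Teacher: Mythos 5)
Your high-level plan (induction on three parameters, dichotomy on $\Hb$-stability, the same two sub-cases, the same assembly of $\Kb$ from $\Ub$, $\Wb_U$, $\Vb$) matches the paper's. But the choice of induction parameters is wrong, and this is precisely the nontrivial adaptation that Proposition~\ref{prop:sfclos:pumping} demands over Proposition~\ref{prop:sfclos:fromapertostar}.

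In the original proof, the partitions are $1_M$-\emph{safe}: $\alpha$ is constant on each block, so $\alpha(U)$ and $\alpha(H)$ are singletons and live in $\alpha(P^*)$. The parameters $|\alpha(P^+)|$ and $|s\alpha(P^*)|$ therefore track all the intermediate quantities. Here the analogous hypothesis is much weaker: $\gamma(H)\in T$ only says that $\gamma(H)=(\{C\},r)$, i.e.\ that all words of $H$ share a $\canec$-type, but $\rho$ need not be constant on $H$. Consequently, $\gamma(U)$, $\gamma(H)$, $\gamma(W)$ are $\gamma$-images of \emph{languages}, and these are generally not of the form $\gamma(w)$ for any single $w\in P^*$. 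So the sets you propose as parameters — $\{\gamma(w):w\in P^+\}$ and $\{t\gamma(w):w\in P^*\}$ — simply do not contain the elements you need to manipulate. For instance, in Sub-case~1 you pass to $t'=t\gamma(U)\gamma(H)$; to show parameter~$(3)$ decreases, the paper needs $\gamma(U)$ to lie in some set closed under sums and products of the generators $\gamma(H)$, $H\in\Hb$. This is exactly why the paper introduces $\csatp{\Hb}$ and $\csats{\Hb}$ (the least sub-semirings of $Q$ generated by $\{\gamma(H)\mid H\in\Hb\}$, with or without $1_Q$) and inducts on $|\csatp{\Hb}|$ and $|t\cdot\csats{\Hb}|$ rather than on $\gamma_*$-image sets. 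The same defect poisons your stability notion: you define $t$ to be $\Hb$-stable by $t\gamma(P^*)=t\gamma(P^*H)$, an equality of single elements of $Q$, whereas the paper's stability condition $t\csats{\Hb}=t\csats{\Hb}\cdot\gamma(H)$ is an equality of sets. The stronger set version is exactly what drives the key Lemma~\ref{lem:sfclos:basecase}: from $ty$ with $y\in\csats{\Hb}$ one must be able to peel off a factor $\gamma(H)$ and write $ty=tx\gamma(H)$ with $x\in\csats{\Hb}$; your element-level equality gives no handle on arbitrary $y\in\csats{\Hb}$.

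Your base case also has a subsidiary error. You plan to take $\Kb=\{P^*\cap D\mid D\in\dclac\}$ for some finite $\Ds\subseteq\sfp{\Cs}$ obtained ad hoc, but this is not what happens here: in the setting of Proposition~\ref{prop:sfclos:pumping}, $\Cs$ is already finite, so the paper partitions $P^*$ directly by $\canec$-classes, $\Kb=\{P^*\cap C\mid C\in\sclac\}$. This choice is essential, not a convenience. The argument that $\gamma(K)\in\csats{\Hb}$ rests on Fact~\ref{fct:sfclos:images}: every $K=P^*\cap C$ is a union of $\Hb$-products, and every $\Hb$-product lives in a single $\canec$-class because $\gamma$ tracks $\canec$-types and $\gamma(H)\in T$ for each $H$. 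If you replaced $\canec$-classes by classes of a finer equivalence $\caned$ induced by a finite $\Ds\subseteq\sfp{\Cs}$, $\Hb$-products would in general not be contained in single $\caned$-classes, and Fact~\ref{fct:sfclos:images} would fail, so $\gamma(K)\in\csats{\Hb}$ would no longer follow. (You appear to be importing the finite-subvariety extraction step from the proof of Proposition~\ref{prop:sfclos:fromapertostar} via Fact~\ref{fct:caracfin}, which is needed there because $\Cs$ is arbitrary, but is both unnecessary and counterproductive here.)
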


Before we prove Proposition~\ref{prop:sfclos:pumping}, let us first apply it to prove the completeness direction in Theorem~\ref{thm:sfclos:carfinite}. We start by constructing a \bsdp{\Cs}-partition \Kb of $A^*$ with the proposition.

Observe that $A \subseteq A^+$ is a prefix code with bounded synchronization delay. Moreover, $\Hb = \{\{a\} \mid a \in A\}$ is a \bsdp{\Cs}-partition of $A$ and we have $\gamma(a) \in T$ for every $a \in A$. Therefore, we may apply Proposition~\ref{prop:sfclos:pumping} in the case when $P = A$ and $t = 1_Q \in T$. This yields a \bsdp{\Cs}-partition \Kb of $A^*$ satisfying~\eqref{eq:sfclos:covergoal}. We use it to prove that $\csfcopti \subseteq S$.

Let $(C,r) \in \csfcopti$, i.e. $r \in \opti{\sfp{\Cs}}{C,\rho}$. We define $\Kb_C \subseteq \Kb$ as the set containing every language $K \in \Kb$ such that $K \cap C \neq \emptyset$. Clearly, $\Kb_C$ is a \bsdp{\Cs}-cover of $C$ since \Kb is a \bsdp{\Cs}-cover of $A^*$. Moreover, it is a \sfp{\Cs}-cover of $C$ since $\sfp{\Cs} = \bsdp{\Cs}$ by Theorem~\ref{thm:sfclos:carac}. It follows that $\opti{\sfp{\Cs}}{C,\rho} \subseteq \prin{\rho}{\Kb_C}$ and we obtain $r \in \prin{\rho}{\Kb_C}$. This yields a language $K \in \Kb_C$ such that $r \leq \rho(K)$. Moreover, $K \in \Kb$ and~\eqref{eq:sfclos:covergoal} yields that $\gamma(K) \in T$ (recall that we chose $t = 1_Q$). By definition of $T$, it follows that $\gamma(K) = (\{D\},p)$ for some $(D,p) \in S$. By definition of $\gamma$, $p = \rho(K)$ and since $K \cap C \neq \emptyset$, we have $D = C$. Hence, $(C,\rho(K)) \in S$ and since $r \leq \rho(K)$, closure under downset for $S$ (recall that $S$ is \sfp{\Cs}-saturated) yields $(C,r) \in S$, concluding the proof.

\medskip

It remains to prove Proposition~\ref{prop:sfclos:pumping}. We let $P \subseteq A^+$ be a prefix code with bounded synchronization delay and \Hb a \bsdp{\Cs}-partition of $P$ such that $\gamma(H) \in T$ for every $H \in \Hb$. Finally, we fix $t \in T$. We need to build a \bsdp{\Cs}-partition \Kb of $P^*$ satisfying~\eqref{eq:sfclos:covergoal}. We proceed by induction on the three following parameters, listed by order of importance:
\begin{enumerate}
	\item The size of the set $\csatp{\Hb} \subseteq Q$,
	\item The size of \Hb,
	\item The size of the set $t \cdot \csats{\Hb} \subseteq Q$.
\end{enumerate}

We distinguish two main cases depending on the following property. We say that \emph{$t$ is $\Hb$-stable} when the following holds, 
\begin{equation} \label{eq:sfclos:ratstable}
\text{for every $H \in \Hb$,} \quad t \cdot \csats{\Hb} = t \cdot \csats{\Hb} \cdot \gamma(H)
\end{equation}
We first consider the case when $t$ is $\Hb$-stable. This is the base case: we construct \Kb directly. Then, we handle the converse case using induction on our three parameters.

\subsubsection*{Base case: $t$ is $\Hb$-stable}

In this case, we define \Kb directly: \Kb contains all languages $P^* \cap C$ for $C \in \sclac$ which are \textbf{nonempty}. Clearly, this is a partition of $P^*$. Moreover, it only contains languages in \bsdp{\Cs}. Indeed, we have $P\in\bsdp{\Cs}$: it is the disjoint union of all languages in the \bsdp{\Cs}-partition \Hb of $P$. Hence, $P^*\in \bsdp{\Cs}$ since $P$ is a prefix code with bounded synchronization delay. Finally, it follows that $P^* \cap C \in \bsdp{\Cs}$ for every $C \in \sclac$ since $C \in \Cs$. We have to prove \Kb satisfies~\eqref{eq:sfclos:covergoal}: given $K \in \Kb$, we show that $\gamma(K) \in \csats{\Hb}$ and $t \cdot \gamma(K) \in T$.

We begin with an important definition. A \emph{\Hb-product} is a language of the form $H_1 \cdots H_n$ with $n \in \nat$ and $H_1,\dots,H_n \in \Hb$ (this includes the case $n=0$: $\{\veps\}$ is a $\Hb$-product). Clearly, $\gamma(V) \in \csats{\Hb}$, for every \Hb-product $V$. Moreover, since $\gamma(H) \in T$ for every $H \in \Hb$ and $T$ is closed under multiplication and contains $1_Q$, we also have $\gamma(V) \in T$ for every \Hb-product $V$. Using this observation, we prove two simple properties of \Hb-products.

\begin{fact} \label{fct:sfclos:mim}
	Every \Hb-product $V$ is nonempty and there exists a unique $C \in \sclac$, denoted by \ctype{V} such that $V \subseteq C$. \end{fact}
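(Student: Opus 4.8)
The plan is to derive both assertions directly from the fact established in the paragraph immediately preceding the statement: that $\gamma(V) \in T$ for every $\Hb$-product $V$. Recall that $T = \{(\{C\},r) \mid (C,r) \in S\} \subseteq Q = 2^{\sclac} \times R$, so by definition the first component of \emph{any} element of $T$ is a singleton subset of \sclac. Since $\gamma(V) = (\{\ctype{w} \mid w \in V\},\, \rho(V))$, the inclusion $\gamma(V) \in T$ forces the set $\{\ctype{w} \mid w \in V\}$ to be a singleton. This single observation is the whole content of the fact; everything else is unwinding definitions.

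From it, both claims follow at once. First, $V$ is nonempty: were $V = \emptyset$, we would have $\{\ctype{w} \mid w \in V\} = \emptyset$, which is not a singleton, contradicting $\gamma(\emptyset) \in T$. Second, writing $\{\ctype{w} \mid w \in V\} = \{C\}$ for the unique element $C \in \sclac$ it contains, every $w \in V$ satisfies $\ctype{w} = C$, that is, $w \in C$; hence $V \subseteq C$. Uniqueness of a $\canec$-class containing $V$ is immediate, since the $\canec$-classes partition $A^*$ and $V \neq \emptyset$ pins $C$ down as the class of any fixed word of $V$. One then sets $\ctype{V} := C$. The base case $n = 0$ is consistent with this: $V = \{\veps\}$ is nonempty and $\ctype{V} = \ctype{\veps}$ is the identity of \sclac.

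I do not expect any genuine obstacle here: the statement is a routine consequence of the definitions of $\gamma$, of $T$, and of the monoid \sclac, given the auxiliary fact $\gamma(V) \in T$ already in hand (which itself only uses $\gamma(H) \in T$ for $H \in \Hb$, the fact that $1_Q = \gamma(\veps) \in T$, and closure of $T$ under multiplication). The only point worth stating carefully is why an empty first component cannot occur in $T$, which is exactly why nonemptiness of $V$ drops out for free.
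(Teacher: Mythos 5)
Your proof is correct and matches the paper's argument exactly: both derive the claim from $\gamma(V)\in T$, read off that the first component of $\gamma(V)$ must then be a singleton $\{C\}$, and conclude nonemptiness and $V\subseteq C$ directly from the definition of $\gamma$. The extra remarks on uniqueness and the $n=0$ base case are harmless elaborations of what the paper treats as immediate.
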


\begin{proof}
	Since $\gamma(V) \in T$, we get by definition of $T$ that $\gamma(V) = (\{C\},r)$ for some $C \in \sclac$ and $r \in R$. By definition of $\gamma$, we have $\{C\} = \{\ctype{w} \mid w \in V\}$. Thus, $V$ is nonempty and $V \subseteq C$. 
\end{proof}

We turn to an important property of \Hb-products: they are the building blocks of the languages within our \bsdp{\Cs}-partition \Kb of $P^*$.

\begin{fact} \label{fct:sfclos:images}
	For every $C \in \sclac$, $P^* \cap C$ it the (possibly infinite or empty) union of all \Hb-products $V$ such that $\ctype{V} = C$.
\end{fact}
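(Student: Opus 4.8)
The statement is essentially a direct consequence of the fact that $P$ is a prefix code together with the observation that $\Hb$ partitions $P$. The plan is to prove the two inclusions separately. For the ``$\supseteq$'' direction, I would take an $\Hb$-product $V = H_1 \cdots H_n$ with $\ctype{V} = C$; any word $w \in V$ decomposes as $w = w_1 \cdots w_n$ with $w_i \in H_i \subseteq P$, so $w \in P^*$, and $w \in C$ by Fact~\ref{fct:sfclos:mim} (which gives $V \subseteq C$). Hence every such $V$ is contained in $P^* \cap C$, and so is their union.

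For the ``$\subseteq$'' direction, I would take $w \in P^* \cap C$. Since $P$ is a prefix code, $w$ admits a \emph{unique} decomposition $w = w_1 \cdots w_n$ with $w_1,\dots,w_n \in P$ (this is the code property recalled after Fact~\ref{fct:sfclos:prefdec}; it also covers the case $w = \veps$ with $n = 0$). Since $\Hb$ is a partition of $P$, each $w_i$ lies in a \emph{unique} $H_{j_i} \in \Hb$. Set $V = H_{j_1} \cdots H_{j_n}$, an $\Hb$-product with $w \in V$. By Fact~\ref{fct:sfclos:mim}, $V \subseteq \ctype{V}$, and since $w \in V \cap C$ while the $\canec$-classes are pairwise disjoint, we must have $\ctype{V} = C$. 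Thus $w$ belongs to the union of all $\Hb$-products $V$ with $\ctype{V} = C$, which establishes the inclusion.

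There is no real obstacle here; the only point requiring a little care is that the union is claimed to be exactly $P^* \cap C$ and not something larger — but this is precisely guaranteed by the ``$\supseteq$'' direction, where Fact~\ref{fct:sfclos:mim} forces every word of an $\Hb$-product $V$ with $\ctype{V} = C$ to sit inside $C$, so no spurious words are added. Combining the two inclusions gives the claimed equality.
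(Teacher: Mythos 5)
Your proof is correct and follows essentially the same route as the paper: both rely on the fact that $\Hb$ partitions $P$, so $P^*$ decomposes as a union of $\Hb$-products, and then on Fact~\ref{fct:sfclos:mim} to note that each $\Hb$-product sits inside a single $\canec$-class, so intersecting with $C$ either keeps the whole product or discards it. The paper compresses this into one pass by writing $P^*\cap C$ directly as the union of all $H_1\cdots H_n\cap C$ and observing each such intersection is all-or-nothing; your two-inclusion phrasing is a harmless repackaging (and your appeal to uniqueness of the decomposition, while true, is not actually needed for this particular fact).
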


\begin{proof}
	By definition, \Hb is a partition of $P$. Hence, $P^* \cap C$ is the (infinite) union of all languages $H_1 \cdots H_n \cap C$ for $n \in \nat$ and $H_1,\dots,H_n \in \Hb$. Consider such a language. Clearly, $H_1 \cdots H_n$ is a \Hb-product. Therefore, either $\ctype{H_1 \cdots H_n} = C$ and $H_1 \cdots H_n \cap C = H_1 \cdots H_n$ or $\ctype{H_1 \cdots H_n}\neq C$ and $H_1 \cdots H_n \cap C = \emptyset$. The fact follows.
\end{proof}

We may start the proof that \Kb satisfies~\eqref{eq:sfclos:covergoal}. First, we show that $\gamma(K) \in \csats{\Hb}$ for every $K \in \Kb$. By definition, $K$ is a nonempty language of the form $P^* \cap C$ with $C \in \sclac$. Hence, Fact~\ref{fct:sfclos:images} yields that $K$ is a (possibly infinite) union of \Hb-products. Since $\gamma$ is \nice by definition, it follows that there exists finitely many \Hb-products $V_1,\dots,V_\ell$ such that $\gamma(K) = \gamma(V_1) + \cdots + \gamma(V_\ell)$. We know that $\gamma(V_i) \in \csats{\Hb}$ for every $i \leq \ell$. Therefore, $\gamma(K) \in \csats{\Hb}$ since \csats{\Hb} is closed under addition by definition.

\medskip

It remains to show that $t\gamma(K) \in T$ for every $K \in \Kb$. This is where we need the hypothesis that $t$ is \Hb-stable. We use it to prove the following preliminary lemma.

\begin{lemma} \label{lem:sfclos:basecase}
	Let $V$ be an \Hb-product such that $\gamma(V)$ is a multiplicative idempotent of $Q$. Then, for every $q \in \csats{\Hb}$, we have $tq\gamma(V) = tq$.
\end{lemma}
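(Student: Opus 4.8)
The plan is to exploit three ingredients in sequence: the $\Hb$-stability hypothesis~\eqref{eq:sfclos:ratstable}, the multiplicativity of $\gamma$, and the idempotency of $\gamma(V)$. First, recall that $\gamma$ is a \mratm, so if $V = H_1 \cdots H_n$ with $H_1,\dots,H_n \in \Hb$ is a decomposition of the $\Hb$-product $V$, then $\gamma(V) = \gamma(H_1) \cdots \gamma(H_n)$. Moreover, each $\gamma(H_i)$ lies in \csats{\Hb} by definition, and \csats{\Hb} is closed under multiplication; hence $\csats{\Hb} \cdot \gamma(H) \subseteq \csats{\Hb}$ for every $H \in \Hb$, which we shall use repeatedly.

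The first step is to promote~\eqref{eq:sfclos:ratstable} from single languages $H \in \Hb$ to arbitrary $\Hb$-products: I claim that $t \cdot \csats{\Hb} = t \cdot \csats{\Hb} \cdot \gamma(V)$ for every $\Hb$-product $V$, as an equality of subsets of $Q$. This is proved by induction on the length $n$ of a decomposition $V = H_1 \cdots H_n$. The case $n = 0$ is trivial since $\gamma(\{\veps\}) = 1_Q$. For the inductive step, applying the induction hypothesis to $H_1 \cdots H_{n-1}$ gives $t \cdot \csats{\Hb} = t \cdot \csats{\Hb} \cdot \gamma(H_1) \cdots \gamma(H_{n-1})$; multiplying on the right by $\gamma(H_n)$ yields $t \cdot \csats{\Hb} \cdot \gamma(V) = t \cdot \csats{\Hb} \cdot \gamma(H_n)$, and then applying~\eqref{eq:sfclos:ratstable} with $H = H_n$ gives $t \cdot \csats{\Hb} \cdot \gamma(H_n) = t \cdot \csats{\Hb}$, as required. (Closure of \csats{\Hb} under multiplication is what keeps the intermediate products inside $t \cdot \csats{\Hb}$.)

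The second step concludes the lemma. Fix $q \in \csats{\Hb}$ and an $\Hb$-product $V$ with $\gamma(V)$ idempotent. Since $tq \in t \cdot \csats{\Hb} = t \cdot \csats{\Hb} \cdot \gamma(V)$ by the first step, there is $q' \in \csats{\Hb}$ with $tq = t q' \gamma(V)$. Multiplying on the right by $\gamma(V)$ and using $\gamma(V)\gamma(V) = \gamma(V)$ gives $tq\gamma(V) = t q' \gamma(V)\gamma(V) = t q' \gamma(V) = tq$, which is exactly the claimed identity.

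I do not expect a serious obstacle: the two points requiring care are that $\gamma$ is genuinely multiplicative (so that $\gamma(V)$ factors through the $\gamma(H_i)$) and that \csats{\Hb}, containing every $\gamma(H)$ and being closed under multiplication, absorbs right-multiplication by $\gamma(H)$ — both immediate from the definitions. The only mild subtlety is the bookkeeping in the induction of the first step, namely reading the identities as equalities of subsets of $Q$ rather than of individual elements.
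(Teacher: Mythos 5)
Your proof is correct and follows essentially the same strategy as the paper: both prove, by induction on the length of the $\Hb$-product $V$ and using the $\Hb$-stability hypothesis~\eqref{eq:sfclos:ratstable}, that one can "absorb" right-multiplication by $\gamma(V)$ into $t\cdot\csats{\Hb}$, and then conclude by idempotency of $\gamma(V)$. The only cosmetic differences are that you phrase the key intermediate claim at the level of sets ($t\cdot\csats{\Hb} = t\cdot\csats{\Hb}\cdot\gamma(V)$) and peel $V$ from the right, whereas the paper states it pointwise (for each $q\in\csats{\Hb}$, find $x\in\csats{\Hb}$ with $tx\gamma(V)=tq$) and peels from the left; also, your parenthetical remark about closure of $\csats{\Hb}$ under multiplication is not actually used in the argument as written.
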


\begin{proof}	
	We first use the hypothesis that $t$ is $\Hb$-stable to prove the following preliminary result which holds regardless of whether $\gamma(V)$ is idempotent:
	\begin{equation} \label{eq:sfclos:auxproof}
	\text{there exists $x \in \csats{\Hb}$ such that $tx\gamma(V) = tq$}
	\end{equation}
	Since $V$ is a \Hb-product, there exists $n \in \nat$ and $H_1,\dots,H_n \in \Hb$ such that $V = H_1 \cdots H_n$. We proceed by induction on $n$. If $n = 0$, then $V = \{\veps\}$ and $\gamma(V) = 1_Q$, it suffices to choose $x = q \in \csats{\Hb}$. Otherwise $V = HV'$ for $H \in \Hb$ and $V'$ a \Hb-product of smaller length. Using induction, we get $y \in \csats{\Hb}$ such that $ty\gamma(V') = tq$. Moreover, since $t$ is $\Hb$-stable, we get from~\eqref{eq:sfclos:ratstable} that there exists $x \in \csats{\Hb}$ such that $ty = tx\gamma(H)$. It follows that $tx\gamma(V) = tx\gamma(H)\gamma(V') = ty\gamma(V') = tq$. This concludes the proof of~\eqref{eq:sfclos:auxproof}. It remains to prove the lemma.
	
	We have $tx\gamma(V) = tq$. Hence, if $\gamma(V)$ is a multiplicative idempotent of $Q$, we get $tq\gamma(V) = tx\gamma(V)\gamma(V) = tx\gamma(V) = tq$. This completes the proof.
\end{proof}

We are ready to prove that $t\gamma(K) \in T$ for every $K \in \Kb$. We fix $K \in \Kb$ for the proof. By definition, $K$ is non empty and $K = P^* \cap C$ for $C \in \sclac$. We first treat the case when $C$ is an idempotent $E$ of \sclac. Then we use it to treat the general case.

\medskip
\noindent
{\bf Idempotent case.} By Fact~\ref{fct:sfclos:images}, $K =  P^* \cap E$ is the union of all $\Hb$-products $V$ such $\ctype{V}=E$. This union is non-empty since $K$ is nonempty. Since $\gamma$ is \nice, this yields finitely many $\Hb$-products $V_1,\dots,V_\ell$ such that $\gamma(K) = \gamma(V_1) + \cdots + \gamma(V_\ell)$ and $\ctype{V_i} = E$ for every $i\leq \ell$.

Consider $i \leq \ell$. Moreover, $\ctype{V_i} = E$ means that $V_i \neq \emptyset$ and $V_i \subseteq E$ by Fact~\ref{fct:sfclos:mim}. Therefore, $\gamma(V_i)=(\{E\},\rho(V_i))$ by definition of $\gamma$. Moreover, we know that $\gamma(V_i) \in T$ which yields $(E,\rho(V_i)) \in S$ by definition of $T$. Since $E$ is idempotent and $S$ is \sfp{\Cs}-saturated, we obtain from \bsdp{\Cs}-closure that $(E,(\rho(V_i))^\omega + (\rho(V_i))^{\omega+1}) \in S$ (here, we use ``$\omega$'' to denote a multiplicative idempotent power for both $R$ and $Q$). Since this holds for every $i \leq \ell$ and $E$ is idempotent, it then follows from closure under multiplication for $S$ that,
\[
(E,\prod_{1 \leq i \leq \ell} \left((\rho(V_i))^\omega + (\rho(V_i))^{\omega+1}\right)) \in S
\]
For $i \leq \ell$, let $r_i = (\rho(V_1))^\omega \cdots (\rho(V_{i-1}))^\omega (\rho(V_i))^{\omega+1} (\rho(V_{i+1}))^\omega \cdots (\rho(V_\ell))^\omega \in R$. One may verify that $r_1 + \cdots + r_\ell \leq \prod_{1 \leq i \leq \ell} \left((\rho(V_i))^\omega + (\rho(V_i))^{\omega+1}\right))$ by distributing the multiplication in the right side of this inequality. Since $S$ is closed under downset (it is \sfp{\Cs}-saturated), this yields $(E,r_1 + \cdots + r_\ell) \in S$. By definition of $T$, we get $(\{E\},r_1 + \cdots + r_\ell) \in T$.

For $i \leq \ell$, let $q_i = (\gamma(V_1))^\omega \cdots (\gamma(V_{i-1}))^\omega (\gamma(V_i))^{\omega+1} (\gamma(V_{i+1}))^\omega \cdots (\gamma(V_\ell))^\omega \in Q$. Since $\{E\}$ is idempotent for both addition and multiplication in $2^{\sclac}$, and $\gamma(V_i) = (\{E\},\rho(V_i))$, one may verify that $q_1 + \cdots + q_\ell  = (\{E\},r_1 + \cdots r_\ell) \in T$. Since $t \in T$ by definition and $T$ is closed under multiplication, we obtain $t(q_1 + \cdots + q_\ell) \in T$. Moreover, one may verify from Lemma~\ref{lem:sfclos:basecase} that for every $i \leq \ell$, we have $tq_i= t\gamma(V_i)$. Altogether, this yields that $t\gamma(K) = t(q_1 + \cdots + q_\ell) \in T$ which concludes the idempotent case.

\medskip
\noindent
{\bf General case.} It remains to handle the case when $K = P \cap C^*$ for an arbitrary $C \in \sclac$. By Fact~\ref{fct:sfclos:images}, $K$ is the union of all $\Hb$-products $V$ such that $\ctype{V} = C$. Since $K$ is non empty, there exists a least one such $V$. We fix it for the proof. Since $(Q,\cdot)$ is a finite monoid, there exists a number $p \geq 1$ such that $\gamma(V^p)$ is a multiplicative idempotent of $Q$. Since $\ctype{V}= C$, we have $\gamma(V) = (\{C\},\rho(V))$. Hence, $\gamma(V^p) = (\{C\}^p,\rho(V^p))$ and the multiplication of $p$ copies of $C$ with ``$\cmult$'' is an idempotent $E \in \sclac$ (in particular, we have $C^p \subseteq E$).

We know that $P^* \cap E \neq \emptyset$ (it includes all words in $V^p$). Hence, since we already handled the idempotent case, we know that,
\[
(\{E\},\rho(P^* \cap E)) = \tau(P^* \cap E) \in T 
\]
This yields $(E,\rho(P^* \cap E)) \in S$ by definition of $T$. We have $P^*V^{p-1} \subseteq P^*$ since $V \subseteq P^*$ ($V$ is the concatenation of languages in \Hb which is a partition of $P$). Moreover, $C V^{p-1} \subseteq E$ since $V \subseteq C$ and $C^p \subseteq E$. Altogether, we obtain $(P^* \cap C)V^{p-1} \subseteq P^* \cap E$. Therefore, $\rho((P^* \cap C)V^{p-1}) \leq \rho(P^* \cap E)$ and since $S$ is closed under downset (it is \sfp{\Cs}-saturated), this yields $(E,\rho((P^* \cap C)V^{p-1})) \in S$. By definition of $T$, this implies that,
\[
\gamma((P^* \cap C)V^{p-1}) = (\{E\},\rho((P^* \cap C)V^{p-1})) \in T
\]
Moreover, we have $t,\gamma(V) \in T$. Since $T$ is closed under multiplication, we get that $t\gamma(K)\gamma(V^{p}) = t\gamma(P^* \cap C)\gamma(V^{p}) \in T$. We already established that $\gamma(K) \in \csats{\Hb}$. Moreover, $\gamma(V^{p})$ is idempotent by definition. Hence, Lemma~\ref{lem:sfclos:basecase} yields that $t\gamma(K)\gamma(V^{p}) = t\gamma(K)$. Altogether, we obtain $t\gamma(K) \in T$, concluding the proof.

\subsubsection*{Inductive step: $t$ is not $\Hb$-stable}

The hypothesis that $t$ is not $\Hb$-stable yields some language $H \in \Hb$ such that the following \textbf{strict} inclusion holds:
\begin{equation} \label{eq:sfclos:coverind}
t \cdot \csats{\Hb} \cdot \gamma(H) \subsetneq t \cdot \csats{\Hb}
\end{equation}
We fix this language $H \in \Hb$ for the remainder of the proof. Using induction on our second parameter in Proposition~\ref{prop:sfclos:pumping}, we prove the following lemma.

\begin{lemma} \label{lem:sfclos:covalphind}
	There exists a \bsdp{\Cs}-partition \Ub of $(P \setminus H)^*$ such that for every $U \in \Ub$, we have $\gamma(U) \in \csats{\Hb}$ and $\gamma(U) \in T$.
\end{lemma}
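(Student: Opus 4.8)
The plan is to follow the pattern of Lemma~\ref{lem:sfclos:alphind}: I would discard $H$ from the partition \Hb and apply the induction hypothesis of Proposition~\ref{prop:sfclos:pumping} to the smaller prefix code $P \setminus H$.

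First I would observe that $P \setminus H \subseteq P$, so Fact~\ref{fct:sfclos:sprefsub} shows that $P \setminus H$ is again a prefix code with bounded synchronization delay. Next, set $\Gb = \Hb \setminus \{H\}$. Since \Hb is a \bsdp{\Cs}-partition of $P$ and $H \in \Hb$, the set \Gb is a \bsdp{\Cs}-partition of $P \setminus H$, and every $G \in \Gb$ lies in \Hb, whence $\gamma(G) \in T$ by the standing hypothesis on \Hb. Thus the pair $(P \setminus H, \Gb)$ satisfies the hypotheses of Proposition~\ref{prop:sfclos:pumping}.

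The remaining point is to check that invoking Proposition~\ref{prop:sfclos:pumping} with $P,\Hb,t$ replaced by $P \setminus H,\Gb,1_Q$ respects the lexicographic induction. Since the generators $\{\gamma(G) \mid G \in \Gb\}$ form a subset of $\{\gamma(H') \mid H' \in \Hb\}$, and $\csatp{\cdot}$, $\csats{\cdot}$ are defined as least sets with the relevant closure properties, we get $\csatp{\Gb} \subseteq \csatp{\Hb}$ and $\csats{\Gb} \subseteq \csats{\Hb}$; in particular the first induction parameter (the size of $\csatp{\cdot}$) has not increased. Moreover $\Gb \subsetneq \Hb$, so the second parameter strictly decreases, which suffices for the induction. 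Hence the induction hypothesis applies with $t = 1_Q \in T$ (recall $\gamma(\veps) = 1_Q$), producing a \bsdp{\Cs}-partition \Ub of $(P \setminus H)^*$ such that for each $U \in \Ub$ we have $\gamma(U) \in \csats{\Gb}$ and $1_Q \cdot \gamma(U) = \gamma(U) \in T$. Using $\csats{\Gb} \subseteq \csats{\Hb}$ then yields $\gamma(U) \in \csats{\Hb}$ and $\gamma(U) \in T$ for every $U \in \Ub$, which is exactly the statement of the lemma.

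This step is essentially bookkeeping: the only thing requiring a small verification is the monotonicity of the induction parameters, so I do not anticipate a genuine obstacle here — the real content of the inductive step ($t$ not $\Hb$-stable) will lie in the two sub-cases that use this partition \Ub to assemble a partition of $P^*$, not in the construction of \Ub itself.
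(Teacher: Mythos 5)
Your proof is correct and follows essentially the same route as the paper: discard $H$, take $\Gb = \Hb \setminus \{H\}$, invoke Fact~\ref{fct:sfclos:sprefsub} to keep the bounded synchronization delay, check that $\csatp{\Gb}\subseteq\csatp{\Hb}$ and $\Gb\subsetneq\Hb$ so the lexicographic induction is respected, and apply Proposition~\ref{prop:sfclos:pumping} with $t=1_Q$, finishing via $\csats{\Gb}\subseteq\csats{\Hb}$.
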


\begin{proof}
	Fact~\ref{fct:sfclos:sprefsub} implies that $P \setminus H$ remains a prefix code with bounded synchronization delay since $P$ was one. We want to apply induction in Proposition~\ref{prop:sfclos:pumping} for the case when $P$ has been replaced by $P \setminus H$. This requires defining a \bsdp{\Cs}-partition \Gb of $P \setminus H$ and verifying that our inductions parameters have decreased. We let $\Gb = \Hb \setminus \{H\}$ which is a \bsdp{\Cs}-partition of $P \setminus H$ since \Hb was a \bsdp{\Cs}-partition of $P$. Moreover, $\tau(G) \in T$ for every $G \in \Gb$ by hypothesis on \Hb. Finally, it is immediate that $\csatp{\Gb} \subseteq \csatp{\Hb}$ (our first induction has not increased) and $\Gb \subsetneq \Hb$ (our second parameter has decreased). Hence, we may apply Proposition~\ref{prop:sfclos:pumping} in the case when $P,\Hb$ and $t \in T$ have been replaced by $P \setminus H,\Gb$ and $1_Q \in T$. This yields a \bsdp{\Cs}-partition \Ub of $(P \setminus H)^*$ such that for every $U \in \Ub$, we have $\gamma(U) \in \csats{\Gb} \subseteq \csats{\Hb}$ and $\gamma(U) \in T$.
\end{proof}

We fix the \bsdp{\Cs}-partition \Ub of $(P \setminus H)^*$ given by Lemma~\ref{lem:sfclos:covalphind} for the remainder of the proof. We distinguish two sub-cases.

Since $H \in \Hb$, one may verify from the definitions of \csatp{\Hb} and \csats{\Hb} that $\csats{\Hb} \cdot \gamma(H) \subseteq \csatp{\Hb}$. We consider two sub-cases depending on whether this inclusion is strict or not.

\subsubsection*{Sub-case~1: we have the equality $\csats{\Hb} \cdot \gamma(H) = \csatp{\Hb}$}

Recall that we have to exhibit a \bsdp{\Cs}-partition \Kb of $P^*$ which satisfies~\eqref{eq:sfclos:covergoal}. In this case, the argument is based on the following lemma which is proved using our hypotheses and induction on our third parameter (i.e. the size of $t \cdot \csats{\Hb}$).

\begin{lemma} \label{lem:sfclos:covbufind}
	For every $U \in \Ub$, there exists a \bsdp{\Cs}-partition $\Wb_{U}$ of $P^*$ such that for every $W \in \Wb_U$, $\gamma(W) \in \csats{\Hb}$ and $t\gamma(UHW) \in T$.
\end{lemma}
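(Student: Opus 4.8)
The plan is to produce $\Wb_U$ by a single application of Proposition~\ref{prop:sfclos:pumping} to the \emph{same} prefix code $P$ and partition $\Hb$, but with $t$ replaced by a strictly smaller element. Fix $U\in\Ub$ and set $t_U=t\,\gamma(U)\,\gamma(H)\in Q$. Since $\gamma$ is multiplicative, $\gamma(UHW)=\gamma(U)\,\gamma(H)\,\gamma(W)$, so the target condition $t\,\gamma(UHW)\in T$ is literally $t_U\,\gamma(W)\in T$. Moreover $t_U\in T$: indeed $t\in T$, $\gamma(H)\in T$ by the hypothesis on $\Hb$, $\gamma(U)\in T$ by Lemma~\ref{lem:sfclos:covalphind}, and $T$ is closed under multiplication. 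Hence it suffices to apply Proposition~\ref{prop:sfclos:pumping} to $P$, $\Hb$ and $t_U$: its conclusion is exactly a \bsdp{\Cs}-partition $\Wb_U$ of $P^*$ with $\gamma(W)\in\csats{\Hb}$ and $t_U\,\gamma(W)\in T$ for all $W\in\Wb_U$.

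The crux is to verify that this is a legitimate use of the induction, i.e.\ that the induction triple strictly decreases. The first two parameters (the size of $\csatp{\Hb}$ and the size of $\Hb$) depend only on $P$ and $\Hb$, which are unchanged, so they stay fixed; I must show that the third parameter strictly drops, i.e.\ $t_U\cdot\csats{\Hb}\subsetneq t\cdot\csats{\Hb}$. Since $\gamma(U)\in\csats{\Hb}$ by Lemma~\ref{lem:sfclos:covalphind}, we get $t_U\cdot\csats{\Hb}=t\,\gamma(U)\,\gamma(H)\cdot\csats{\Hb}\subseteq t\cdot\csats{\Hb}\,\gamma(H)\,\csats{\Hb}$. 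Now $\gamma(H)\in\csatp{\Hb}$, and a routine computation with the ``sums of products'' description of $\csats{\Hb}$ and $\csatp{\Hb}$ shows that $\csatp{\Hb}$ absorbs $\csats{\Hb}$ on both sides, so $\csats{\Hb}\,\gamma(H)\,\csats{\Hb}\subseteq\csatp{\Hb}$ (the analogue of the step $P^*HP^*\subseteq P^+$ in the proof of Lemma~\ref{lem:sfclos:sc1carac}). Finally, the Sub-case~1 hypothesis gives $\csatp{\Hb}=\csats{\Hb}\,\gamma(H)$, whence $t_U\cdot\csats{\Hb}\subseteq t\cdot\csats{\Hb}\,\gamma(H)$, and $t\cdot\csats{\Hb}\,\gamma(H)\subsetneq t\cdot\csats{\Hb}$ is precisely the strict inclusion~\eqref{eq:sfclos:coverind}. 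This yields $t_U\cdot\csats{\Hb}\subsetneq t\cdot\csats{\Hb}$, so the third parameter decreases and the induction applies.

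With the induction justified, Proposition~\ref{prop:sfclos:pumping} applied to $(P,\Hb,t_U)$ delivers the required $\Wb_U$, since $t_U\,\gamma(W)=t\,\gamma(UHW)$ for every $W\in\Wb_U$. I expect the main obstacle to be precisely the parameter bookkeeping just sketched — in particular, keeping the inclusion $t_U\cdot\csats{\Hb}\subseteq t\cdot\csats{\Hb}$ \emph{strict}, which is exactly where the Sub-case~1 equality $\csats{\Hb}\,\gamma(H)=\csatp{\Hb}$ and the strict inclusion~\eqref{eq:sfclos:coverind} are both used; everything else is routine closure manipulation in the semiring $Q$ together with multiplicativity of $\gamma$.
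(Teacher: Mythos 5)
Your proof is correct and follows essentially the same route as the paper: replace $t$ by $t_U = t\gamma(U)\gamma(H) \in T$, check that the third induction parameter strictly drops via $t_U\cdot\csats{\Hb} \subseteq t\cdot\csats{\Hb}\cdot\gamma(H)\cdot\csats{\Hb} \subseteq t\cdot\csatp{\Hb} = t\cdot\csats{\Hb}\cdot\gamma(H) \subsetneq t\cdot\csats{\Hb}$ (using the Sub-case~1 equality and~\eqref{eq:sfclos:coverind}), and apply Proposition~\ref{prop:sfclos:pumping} with $(P,\Hb,t_U)$. The bookkeeping of the three parameters and the reduction $t\gamma(UHW)=t_U\gamma(W)$ via multiplicativity of $\gamma$ match the paper's argument exactly.
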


\begin{proof}
	We fix $U \in \Ub$ for the proof. By definition of \Ub in Lemma~\ref{lem:sfclos:covalphind}, $\gamma(U) \in \csats{\Hb}$ and $\gamma(U) \in T$. Hence, since $t,\gamma(H)\in T$ by hypothesis and $T$ is closed under multiplication, we have $t\gamma(U)\gamma(H) \in T$. Moreover, it is clear that $t\gamma(U)\gamma(H) \cdot \csats{\Hb} \subseteq t \cdot \csats{\Hb} \cdot \gamma(H) \cdot \csats{\Hb} \subseteq t\cdot \csatp{\Hb}$. Combined with our hypothesis in Sub-case~1 (i.e. $\csats{\Hb} \cdot \gamma(H) = \csatp{\Hb}$), this yields $t\gamma(U)\gamma(H) \cdot \csats{\Hb} \subseteq t \cdot \csatp{\Hb}$. We may then use~\eqref{eq:sfclos:coverind} ($t \cdot \csats{\Hb} \cdot \gamma(H) \subsetneq t \cdot \csats{\Hb}$) to get the \textbf{strict} inclusion $t\gamma(U)\gamma(H) \cdot \csats{\Hb} \subseteq t \cdot \csats{\Hb}$. Consequently, we may use induction on our third parameter (i.e. the size of $t \cdot \csats{\Hb}$) to apply Proposition~\ref{prop:sfclos:pumping} in the case when $t \in T$ has been replaced by $t\gamma(U)\gamma(H) \in T$. Note that here, our first two parameters have not increased (they only depend on \Hb which remains unchanged). This yields the desired \bsdp{\Cs}-partition $\Wb_{U}$ of $P^*$.
\end{proof}

It remains to use Lemma~\ref{lem:sfclos:covbufind} to conclude the proof of Sub-case~1. We build our \bsdp{\Cs}-partition \Kb of $P^*$ as follows,
\[
\Kb = \Ub \cup \bigcup_{U \in \Ub} \{UHW \mid W \in \Wb_U\}
\]
It remains to show that \Kb is a \bsdp{\Cs}-partition of $P^*$ which satisfies~\eqref{eq:sfclos:covergoal}. First, observe that the languages $K \in \Kb$ belong to \bsdp{\Cs}. This is immediate by definition of \Ub is $K \in \Ub$. Otherwise, $K = UHW$ with $U \in \Ub$ and $W \in \Wb_U$ and $U,H,W \in \bsdp{\Cs}$. Moreover, one may verify that the concatenation $UHW$ is \emph{unambiguous} since $P$ is a prefix code, $U \subseteq (P \setminus H)^*$ and $W \subseteq H^*$. Altogether, it follows that $K \in \bsdp{\Cs}$.

That \Kb is a partition of $P^*$ is also simple to verify from the definition since $P$ is a prefix code, $H \subseteq P$ and $\Ub,\Wb_{U}$ are partitions of $(P \setminus H)^*$ and $P^*$ respectively. Every word $w \in P^*$ admits a \emph{unique} decomposition $w = w_1 \cdots w_n$ with $w_1,\dots,w_n \in P$. We partition $P^*$ by looking at the leftmost factor belonging to $H$ (when it exists).

It remains to prove that~\eqref{eq:sfclos:covergoal} holds. Consider $K \in \Kb$, we show that $\gamma(K) \in \csats{\Hb}$ and $t\gamma(K) \in T$. If $K \in \Ub$, this is immediate by definition of \Ub in Lemma~\ref{lem:sfclos:covalphind}. Otherwise, $K = UHW$ with $U \in \Ub$ and $W \in \Wb_U$. By definition of \Ub and $\Wb_U$, we have $\gamma(U),\gamma(W)\in\csats{\Hb}$. Hence, $\gamma(UHW) \in \csats{\Hb}$. Moreover, we have $t\gamma(UHW) \in T$ by definition of $\Wb_U$ in Lemma~\ref{lem:sfclos:covbufind}. This concludes the first sub-case.

\subsubsection*{Sub-case~2: we have the strict inclusion $\csats{\Hb} \cdot \gamma(H) \subsetneq \csatp{\Hb}$}

Recall that our objective is to construct a \bsdp{\Cs}-partition \Kb of $P^*$ which satisfies~\eqref{eq:sfclos:covergoal}. We begin by giving a brief overview of the construction. Consider a word $w \in P^*$. Since $P$ is a prefix code, $w$ admits a unique decomposition as a concatenation of factors in $P$. We may look at the rightmost factor in $H \subseteq P$ to uniquely decompose $w$ in two parts (each of them possibly empty): a prefix in $((P \setminus H)^*H)^*$ and a suffix in $(P \setminus H)^*$. We use induction to construct \bsdp{\Cs}-partitions of the sets of possible prefixes and suffixes. Then, we combine them to construct a \bsdp{\Cs}-partition of the whole set $P^*$. Actually, we already constructed a suitable \bsdp{\Cs}-partition of the possible suffixes in $(P \setminus H)^*$: \Ub (see Lemma~\ref{lem:sfclos:covalphind}). Hence, it remains to partition the prefixes. We do so this in the following lemma which is proved using the hypothesis of Sub-case~2 and induction on our first parameter.

\begin{lemma} \label{lem:sfclos:covindratm}
	There exists a \bsdp{\Cs}-partition \Vb of $((P \setminus H)^*H)^*$ such that for every $V \in \Vb$, $\gamma(V) \in \csats{\Hb}$ and $\gamma(V) \in T$.
\end{lemma}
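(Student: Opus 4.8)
The plan is to apply the induction hypothesis of Proposition~\ref{prop:sfclos:pumping} with the prefix code $Q = (P \setminus H)^*H$ in place of $P$, exactly as was done for the other sub-case in the membership proof (Lemma~\ref{lem:sfclos:sc2carac}), but now tracking the additional rating-map information. First I would invoke Fact~\ref{fct:sfclos:sprefnest}: since $P$ is a prefix code with bounded synchronization delay and $H \subseteq P$, the language $Q = (P \setminus H)^*H$ is again a prefix code with bounded synchronization delay. Next I need a \bsdp{\Cs}-partition $\Fb$ of $Q$ whose members all satisfy $\gamma(F) \in T$. The natural choice is $\Fb = \{UH \mid U \in \Ub\}$, where $\Ub$ is the partition of $(P\setminus H)^*$ supplied by Lemma~\ref{lem:sfclos:covalphind}. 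One checks that $\Fb$ is a partition of $Q$ (because $\Ub$ partitions $(P\setminus H)^*$ and $P$ is a prefix code, so the decomposition of a word of $Q$ into its $(P\setminus H)^*$-part and its final $H$-factor is unique), and that each $UH \in \bsdp{\Cs}$ because the product $UH$ is unambiguous ($U \subseteq (P\setminus H)^*$, $P$ a prefix code) and $U,H \in \bsdp{\Cs}$. For the rating condition: $\gamma(U) \in T$ by Lemma~\ref{lem:sfclos:covalphind} and $\gamma(H) \in T$ by hypothesis on $\Hb$, so $\gamma(UH) = \gamma(U)\gamma(H) \in T$ since $T$ is closed under multiplication.

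The second ingredient is the verification that an induction parameter strictly decreases so that Proposition~\ref{prop:sfclos:pumping} may legitimately be invoked for $Q$. Here I would use the hypothesis of Sub-case~2. Since $Q = (P\setminus H)^*H$, every word of $Q^+$ lies in $P^*H$, so $\gamma(Q^+) \subseteq \gamma(P^*H)$; more to the point, I need $\csatp{\Fb} \subsetneq \csatp{\Hb}$. The elements of $\csatp{\Fb}$ are sums of products of $\gamma(UH) = \gamma(U)\gamma(H)$ with $\gamma(U) \in \csats{\Hb}$, hence $\csatp{\Fb} \subseteq \csats{\Hb}\cdot\gamma(H)\cdot\csats{\Hb} \subseteq \csatp{\Hb}$, while $\csats{\Hb}\cdot\gamma(H) \subsetneq \csatp{\Hb}$ by the Sub-case~2 hypothesis and $\gamma(H) = \gamma(H)\cdot 1_Q \in \csats{\Hb}\cdot\gamma(H)$ is excluded from being generated without using $\gamma(H)$... this is the delicate point: I should argue that $\csatp{\Hb}$ strictly contains $\csatp{\Fb}$, for instance because $\gamma(H) \in \csatp{\Hb}$ but $\gamma(H)$ need not lie in $\csatp{\Fb}$ under the Sub-case~2 hypothesis — or, more robustly, because every generator of $\csatp{\Fb}$ factors through $\gamma(H)$, so $\csatp{\Fb} \subseteq \csats{\Hb}\cdot\gamma(H)\cdot\csats{\Hb}$, and combined with $\csats{\Hb}\cdot\gamma(H) \subsetneq \csatp{\Hb}$ together with $1_Q \in \csatp{\Hb}\setminus(\csats{\Hb}\cdot\gamma(H))$ being generated by $H$ alone when... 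I would actually mirror the structure of the membership proof, where it sufficed to observe $\alpha(Q^+) \subsetneq \alpha(P^+)$; the clean statement is that $\csatp{\Fb} \subseteq \csats{\Hb}\cdot\gamma(H)\cdot\csats{\Hb}$ and hence, using~\eqref{eq:sfclos:coverind} and the Sub-case~2 hypothesis $\csats{\Hb}\cdot\gamma(H) \subsetneq \csatp{\Hb}$, one gets $\csatp{\Fb} \subsetneq \csatp{\Hb}$. This is the main obstacle: getting the strict inequality on the first parameter exactly right, rather than merely a non-strict one.

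Having established these, I would apply Proposition~\ref{prop:sfclos:pumping} in the case where $P,\Hb,t$ are replaced by $Q,\Fb,1_Q$ (noting $1_Q \in T$). This yields a \bsdp{\Cs}-partition $\Vb$ of $Q^* = ((P\setminus H)^*H)^*$ such that for every $V \in \Vb$ we have $\gamma(V) \in \csats{\Fb}$ and $1_Q\cdot\gamma(V) = \gamma(V) \in T$. It remains only to note $\csats{\Fb} \subseteq \csats{\Hb}$: every generator $\gamma(UH)$ of $\csats{\Fb}$ equals $\gamma(U)\gamma(H)$ with $\gamma(U) \in \csats{\Hb}$ and $\gamma(H) \in \csats{\Hb}$ (as $H \in \Hb$), and $\csats{\Hb}$ is closed under addition and multiplication and contains $1_Q$; hence $\gamma(V) \in \csats{\Hb}$ for each $V \in \Vb$. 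This gives the two conclusions of the lemma, $\gamma(V) \in \csats{\Hb}$ and $\gamma(V) \in T$, completing the proof.

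\begin{proof}
	Let $Q = (P \setminus H)^*H$. By Fact~\ref{fct:sfclos:sprefnest}, $Q$ is a prefix code with bounded synchronization delay. We apply induction in Proposition~\ref{prop:sfclos:pumping} for the case when $P$ has been replaced by $Q$. This requires building an appropriate \bsdp{\Cs}-partition of $Q$ and proving that one of our induction parameters has strictly decreased.

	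Let $\Fb = \{UH \mid U \in \Ub\}$, where $\Ub$ is the \bsdp{\Cs}-partition of $(P\setminus H)^*$ given by Lemma~\ref{lem:sfclos:covalphind}. Since \Ub is a partition of $(P\setminus H)^*$ and $P$ is a prefix code, one may verify that \Fb is a partition of $Q = (P\setminus H)^*H$. Moreover, it only contains languages in \bsdp{\Cs}: if $U \in \Ub$, the concatenation $UH$ is \emph{unambiguous} since $U \subseteq (P\setminus H)^*$ and $P$ is a prefix code, and $U,H \in \bsdp{\Cs}$ by hypothesis; hence $UH \in \bsdp{\Cs}$. Finally, $\gamma(UH) = \gamma(U)\gamma(H)$. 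We have $\gamma(U) \in T$ by Lemma~\ref{lem:sfclos:covalphind} and $\gamma(H) \in T$ by hypothesis on \Hb. Since $T$ is closed under multiplication, $\gamma(UH) \in T$ for every $U \in \Ub$.

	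It remains to check that our first induction parameter has strictly decreased, i.e. that $\csatp{\Fb} \subsetneq \csatp{\Hb}$. Every generator of $\csatp{\Fb}$ is of the form $\gamma(UH) = \gamma(U)\gamma(H)$ with $U \in \Ub$, hence $\gamma(U) \in \csats{\Hb}$ by Lemma~\ref{lem:sfclos:covalphind}. It follows that $\csatp{\Fb} \subseteq \csats{\Hb} \cdot \gamma(H) \cdot \csats{\Hb}$. By the hypothesis of Sub-case~2, $\csats{\Hb} \cdot \gamma(H) \subsetneq \csatp{\Hb}$. Combining these two facts (and using that $\csatp{\Hb}$ is closed under multiplication, so $\csats{\Hb} \cdot \gamma(H) \cdot \csats{\Hb} \subseteq \csatp{\Hb}$), we obtain $\csatp{\Fb} \subsetneq \csatp{\Hb}$, so our first parameter has strictly decreased.

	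Thus, we may apply Proposition~\ref{prop:sfclos:pumping} in the case when $P,\Hb$ and $t \in T$ have been replaced by $Q,\Fb$ and $1_Q \in T$. This yields a \bsdp{\Cs}-partition \Vb of $((P \setminus H)^*H)^*$ such that for every $V \in \Vb$, we have $\gamma(V) \in \csats{\Fb}$ and $1_Q \cdot \gamma(V) = \gamma(V) \in T$. Finally, $\csats{\Fb} \subseteq \csats{\Hb}$: every generator of $\csats{\Fb}$ is either $1_Q$ or $\gamma(UH) = \gamma(U)\gamma(H)$ with $\gamma(U) \in \csats{\Hb}$ (Lemma~\ref{lem:sfclos:covalphind}) and $\gamma(H) \in \csats{\Hb}$ (since $H \in \Hb$), and $\csats{\Hb}$ is closed under addition and multiplication and contains $1_Q$. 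Hence $\gamma(V) \in \csats{\Hb}$ and $\gamma(V) \in T$ for every $V \in \Vb$, which concludes the proof.
\end{proof}
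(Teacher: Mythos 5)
Your proof follows the paper's strategy almost line by line: same choice of $Q = (P \setminus H)^*H$, same use of Fact~\ref{fct:sfclos:sprefnest}, same partition $\Fb = \{UH \mid U \in \Ub\}$, same application of the induction hypothesis with $(Q, \Fb, 1_Q)$, and the same final observation $\csats{\Fb} \subseteq \csats{\Hb}$. But there is a genuine gap exactly where you flagged ``the delicate point'': your argument that the first induction parameter strictly decreases does not go through as written.

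You establish $\csatp{\Fb} \subseteq \csats{\Hb}\cdot\gamma(H)\cdot\csats{\Hb}$ and then invoke the Sub-case~2 hypothesis $\csats{\Hb}\cdot\gamma(H) \subsetneq \csatp{\Hb}$ together with the (correct) observation $\csats{\Hb}\cdot\gamma(H)\cdot\csats{\Hb} \subseteq \csatp{\Hb}$, and conclude $\csatp{\Fb} \subsetneq \csatp{\Hb}$. This is a non sequitur: knowing that the \emph{smaller} set $\csats{\Hb}\cdot\gamma(H)$ is strictly contained in $\csatp{\Hb}$ says nothing about whether the \emph{larger} set $\csats{\Hb}\cdot\gamma(H)\cdot\csats{\Hb}$ — and hence $\csatp{\Fb}$ — is also strictly contained. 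The right-multiplication by $\csats{\Hb}$ could in principle restore equality.

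The fix is to prove the tighter inclusion $\csatp{\Fb} \subseteq \csats{\Hb}\cdot\gamma(H)$ — exactly what the paper does — after which the Sub-case~2 hypothesis directly gives the strict inclusion. This tighter bound holds because $\csats{\Hb}\cdot\gamma(H)$ is itself closed under addition and multiplication: for addition, $x_1\gamma(H) + x_2\gamma(H) = (x_1 + x_2)\gamma(H)$ by distributivity; for multiplication, $(x_1\gamma(H))(x_2\gamma(H)) = (x_1\gamma(H)x_2)\gamma(H)$ and $x_1\gamma(H)x_2 \in \csats{\Hb}$ since $\gamma(H) \in \csats{\Hb}$. Every generator $\gamma(U)\gamma(H)$ of $\csatp{\Fb}$ already lies in $\csats{\Hb}\cdot\gamma(H)$, so by minimality $\csatp{\Fb} \subseteq \csats{\Hb}\cdot\gamma(H) \subsetneq \csatp{\Hb}$, as required. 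The rest of your proof is sound.
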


\begin{proof}
	Let $L = (P\setminus H)^*H$. Fact~\ref{fct:sfclos:sprefnest} implies that $L$ remains a prefix code with bounded synchronization delay since $P$ was one. We want to apply  induction in Proposition~\ref{prop:sfclos:pumping} for the case when $P$ has been replaced by $L$. Doing so requires building an appropriate \bsdp{\Cs}-partition of $L$ and proving that one of our induction parameters has decreased.
	
	Let $\Fb = \{UH \mid U \in \Ub\}$. Since \Ub is a partition of $(P\setminus H)^*$ and $P$ is a prefix code, one may verify that \Fb is a partition of $L = (P\setminus H)^*H$. Moreover, it is a \bsdp{\Cs}-partition of $L$: every $F \in \Fb$ is the unambiguous concatenation of two languages in \bsdp{\Cs}. Moreover, given $F \in \Fb$, we have $F = UH$ for $U \in \Ub$ which means that $\gamma(F) = \gamma(U)\gamma(H) \in T$ since $T$ is closed under multiplication. It remains to show that our induction parameters have decreased. Since $\Fb = \{UH \mid U \in \Ub\}$ and $\gamma(U) \in \csats{\Hb}$ for every $U \in \Ub$ (by definition of \Ub in Lemma~\ref{lem:sfclos:covalphind}), one may verify that $\csatp{\Fb} \subseteq \csats{\Hb} \cdot \gamma(H)$. Hence, since $\csats{\Hb} \cdot \gamma(H) \subsetneq \csatp{\Hb}$ by hypothesis in Sub-case~2, we have $\csatp{\Fb} \subsetneq \csatp{\Hb}$. Our first induction parameter has decreased. Altogether, it follows that we may apply Proposition~\ref{prop:sfclos:pumping} in the case when $P,\Hb$ and $t \in T$ have been replaced by $L,\Fb$ and $1_Q \in T$. This yields a \bsdp{\Cs}-partition \Vb of $L^* = ((P \setminus H)^*H)^*$ such that for every $V \in \Vb$, $V \in \csats{\Fb}$ and $\gamma(V) \in T$. Finally, it is clear by definition that $\csats{\Fb} \subseteq \csats{\Hb}$. Hence, the lemma follows.
\end{proof}

We are ready to construct the \bsdp{\Cs}-partition \Kb of $P^*$ and conclude the main argument. We let $\Kb = \{VU \mid V \in \Vb \text{ and } U \in \Ub\}$. It is immediate by definition that \Kb is a partition of $P^*$ since $P$ is a prefix code and $\Vb,\Ub$ are partitions of $((P \setminus H)^*H)^*$ and $(P \setminus H)^*$ respectively (see the above discussion). Additionally, it is immediate by definition that \Kb is actually a $\bsdp{\Cs}$-partition of $P^*$ (it only contains unambiguous concatenations of languages in \bsdp{\Cs}). It remains to prove that \Kb satisfies~\eqref{eq:sfclos:covergoal}: for every $K \in \Kb$, we have $\gamma(K) \in \csats{\Hb}$ and $t\gamma(K) \in T$. We have $K = VU$ with $V \in \Vb$ and $U \in \Ub$. By definition of \Ub and \Vb, we have $\gamma(U),\gamma(V) \in \csats{\Hb}$ and $\gamma(U),\gamma(V) \in T$. Moreover, $t \in T$ by hypothesis. Therefore, since both $\csats{\Hb}$ and $T$ are closed under multiplication, it follows that $\gamma(K) \in \csats{\Hb}$ and $t\gamma(K) \in T$. This concludes the proof of Proposition~\ref{prop:sfclos:pumping}.
 
\section{Appendix to Section~\ref{sec:units}}
\label{app:appcovg}
This appendix presents the missing proofs in Section~\ref{sec:units}. Similarly to what we did in Appendix~\ref{app:sfc}, we actually present a self-contained, full version of Section~\ref{sec:units} which can be read independently. Recall that the purpose of Section~\ref{sec:units} is to prove Theorem~\ref{thm:sfclos:gmain}: when \Cs is a \vari of group languages with decidable separation, \sfp{\Cs}-covering is decidable. 
As in Section~\ref{sec:finite}, we rely on Proposition~\ref{prop:breduc}: we present an algorithm which computes \sfcopti from an input \nice \mratm $\rho$. In this case as well, we do not work with \sfcopti itself. Instead, we consider another set carrying  more information. Defining this second object involves introducing a few additional concepts. We first present them and then turn to the algorithm. For details, see~\cite{concagroup}.

\subsection{Preliminary definitions}

\noindent
\textbf{Optimal \idens.} In this case, handling \sfp{\Cs} involves considering \Cs-optimal covers of $\{\veps\}$. Since $\{\veps\}$ is a singleton, there always exists such a cover consisting of a single language, which leads to the following definition.

Let \Cs be a Boolean algebra (we shall use the case when \Cs contains only group languages but this is not required for the definitions) and $\tau: 2^{A^*} \to Q$ be a \ratm. A \emph{\Cs-optimal \iden for $\tau$} is a language $L \in \Cs$ such that $\veps \in L$ and  $\tau(L) \leq \tau(L')$ for every $L' \in \Cs$ satisfying $\veps \in L'$. As expected, there always exists a \Cs-optimal \iden for any \ratm $\tau$.

\begin{lemma} \label{lem:aliden}
	Let $\tau: 2^{A^*} \to Q$ be a \ratm and \Cs be a Boolean algebra. There exists a \Cs-optimal \iden for $\tau$.
\end{lemma}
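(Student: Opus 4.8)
The plan is to produce an optimal \iden explicitly, as a finite intersection of languages of \Cs that contain~\veps. The crucial preliminary observation is that every \ratm is \emph{monotone} with respect to inclusion and the canonical ordering of its rating set: if $K \subseteq L \subseteq A^*$, then $L = K \cup (L\setminus K)$, hence $\tau(L) = \tau(K) + \tau(L\setminus K)$; adding $\tau(K)$ to both sides and using that $(Q,+)$ is idempotent yields $\tau(K) + \tau(L) = \tau(L)$, i.e.\ $\tau(K) \leq \tau(L)$. In words, shrinking a language can only make its rating smaller (or leave it unchanged). This is the only place where idempotency of the rating set is used.

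Next I would exploit finiteness of~$Q$. Consider the set $X = \{\tau(L) \mid L \in \Cs \text{ and } \veps \in L\} \subseteq Q$; it is finite since $Q$ is, and it is nonempty because $A^* \in \Cs$ (a Boolean algebra of languages contains $A^*$) and $\veps \in A^*$, so $\tau(A^*) \in X$. For each $r \in X$ fix a language $L_r \in \Cs$ with $\veps \in L_r$ and $\tau(L_r) = r$, and set $L_0 = \bigcap_{r \in X} L_r$. Since \Cs is a Boolean algebra, it is closed under finite intersection, so $L_0 \in \Cs$; and $\veps \in L_0$ because $\veps$ belongs to each $L_r$.

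It then remains to verify that $L_0$ is a \Cs-optimal \iden for~$\tau$. Let $L' \in \Cs$ be any language with $\veps \in L'$. Its rating $\tau(L')$ lies in $X$, so $\tau(L') = \tau(L_r)$ for some $r \in X$. As $L_0 \subseteq L_r$, monotonicity of $\tau$ gives $\tau(L_0) \leq \tau(L_r) = \tau(L')$, which is exactly the optimality condition. There is essentially no obstacle in this argument; the two points deserving a moment's attention are the monotonicity of $\tau$ (noted above) and the fact that a potentially infinite intersection of all languages of \Cs containing $\veps$ need not itself lie in \Cs, which is precisely why finiteness of $Q$ is invoked to replace it by a finite intersection over representatives.
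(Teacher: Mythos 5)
Your proof is correct and takes essentially the same route as the paper: both fix, for each achievable rating $q$ in the finite set $\{\tau(K) \mid K \in \Cs,\ \veps \in K\}$, a representative $K_q \in \Cs$ containing $\veps$, and take the finite intersection, which lies in $\Cs$ by closure of Boolean algebras. The only difference is that you spell out the monotonicity of $\tau$ with respect to inclusion, which the paper uses silently.
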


\begin{proof}
  Let $U \subseteq Q$ be the set of all elements $q \in Q$ such that $q = \tau(K)$ for some $K \in \Cs$ containing $\veps$. Clearly, $\tau(A^*) \in U$ which means that $U$ is non-empty since $A^* \in \Cs$ (\Cs is a Boolean algebra). For every $q \in U$, we fix an arbitrary language $K_q \in \Cs$ such that $\veps \in K_q$ and $q = \tau(K_q)$ ($K_q$ exists by definition of $U$). Finally, we let,
	\[
	K = \bigcap_{q \in U} K_q
	\]
	Since \Cs is a Boolean algebra, we have $K \in \Cs$. Moreover, $\veps \in K$ by definition. Since $K \subseteq K_q$ for all $q \in U$, it follows that $\tau(K) \leq q$ for every $q \in U$. By definition of $U$, this implies that $\tau(K) \leq \tau(K')$ for every $K' \in \Cs$ such that $\veps \in K'$. Hence, $K$ is a \Cs-optimal \iden for $\tau$.
\end{proof}

Moreover, all \Cs-optimal \idens for $\tau$ have the same image under $\tau$. We write it $\ioptic{\tau} \in Q$: $\ioptic{\tau} = \tau(L)$ for every \Cs-optimal \iden $L$ for $\tau$. It turns out that when $\tau$ is \nice and \tame, computing \iopti{\Cs}{\tau} from $\tau$ boils down to \Cs-separation. This is important: this is exactly how our algorithm for \sfp{\Cs}-covering depends on \Cs-separation.

\adjustc{lem:sepepswit}
\begin{lemma}
	Let $\tau: 2^{A^*} \to Q$ be a \nice \ratm and \Cs a Boolean algebra. Then, $\iopti{\Cs}{\tau}$ is the sum of all $q \in Q$ such that $\{\veps\}$ is not \Cs-separable from $\tau_*\inv(q)$.
\end{lemma}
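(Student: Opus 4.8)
The plan is to prove the two inclusions separately, exploiting that $\tau$ is \nice so that $\ioptic{\tau} = \tau(L)$ for a \Cs-optimal \iden $L$, and that $\tau(L)$ can be computed as the sum $\sum_{w \in L}\tau(w)$ of the images of its words (the sum collapsing to a finite one since $Q$ is idempotent and commutative). Write $N$ for the sum of all $q \in Q$ such that $\{\veps\}$ is \textbf{not} \Cs-separable from $\tau_*\inv(q)$.

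First I would show $N \leq \ioptic{\tau}$. Fix a \Cs-optimal \iden $L$ for $\tau$, so $\veps \in L$ and $\ioptic{\tau}=\tau(L)$. Let $q \in Q$ be such that $\{\veps\}$ is not \Cs-separable from $\tau_*\inv(q)$. Since $L \in \Cs$ and $\veps \in L$, the language $L$ cannot be a separator, so $L \cap \tau_*\inv(q) \neq \emptyset$; pick $w$ in this intersection. Then $\tau(w) = q$ and, as $\tau$ is \nice and $w \in L$, we get $q = \tau(w) \leq \tau(L) = \ioptic{\tau}$. Summing over all such $q$ and using compatibility of $\leq$ with addition gives $N \leq \ioptic{\tau}$.

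Next I would show $\ioptic{\tau} \leq N$. Here the idea is to build a concrete \Cs-cover of $\{\veps\}$, i.e.\ a single language $L_0 \in \Cs$ with $\veps \in L_0$, whose image under $\tau$ is at most $N$; optimality of $\ioptic{\tau}$ then forces $\ioptic{\tau} \leq \tau(L_0) \leq N$. For each $q \in Q$ such that $\{\veps\}$ \emph{is} \Cs-separable from $\tau_*\inv(q)$, fix a separator $K_q \in \Cs$ with $\veps \in K_q$ and $K_q \cap \tau_*\inv(q) = \emptyset$. Let $L_0 = \bigcap_{q} K_q$, the intersection ranging over those finitely many $q$; since \Cs is a Boolean algebra, $L_0 \in \Cs$, and $\veps \in L_0$. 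By construction, $L_0$ avoids $\tau_*\inv(q)$ for every ``separable'' $q$, so every word $w \in L_0$ has $\tau(w)$ equal to one of the ``non-separable'' values, hence $\tau(w) \leq N$. As $\tau$ is \nice, $\tau(L_0) = \sum_{w\in L_0}\tau(w) \leq N$. Therefore $\ioptic{\tau} \leq \tau(L_0) \leq N$, completing the argument.

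The only mild subtlety — and the step I would be most careful about — is the use of \nice-ness: one must justify that for a language $K$, $\tau(K)$ equals the (finite, because $Q$ is idempotent) sum of $\tau(w)$ over $w \in K$, and in particular that $\tau(w) \leq \tau(K)$ for each $w \in K$ and that $\tau(K) \leq \sum_{w \in K}\tau(w)$ when every term is bounded by $N$. Both follow directly from the definition of \nice \ratms and the fact that $\rho$ here plays no role: the statement is purely about an abstract \nice \ratm $\tau$, so no assumption on \Cs beyond being a Boolean algebra is needed. All other steps are routine manipulations with the canonical order $\leq$ on $Q$.
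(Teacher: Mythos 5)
Your proposal is correct and follows the same route as the paper's own proof: one inclusion via the fact that a \Cs-optimal \iden must meet every $\tau_*\inv(q)$ with $q$ non-separable, the other by intersecting separators to build a witness $L_0 \in \Cs$ with $\veps \in L_0$ and $\tau(L_0)\leq N$, then invoking niceness. The only small imprecision: in the first inclusion, niceness is not needed to get $\tau(w)\leq\tau(L)$ for $w\in L$ — this already follows from $\tau$ being a morphism of additive monoids (split $L = \{w\}\cup(L\setminus\{w\})$); niceness is used only in the second inclusion.
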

\restorec

\begin{proof}
  We let $U \subseteq Q$ be the set of all $q \in Q$ such that $\{\veps\}$ is not \Cs-separable from $\tau_*\inv(q)$. Moreover, we let $r = \sum_{q \in U} q$. We show that $r=\iopti{\Cs}{\tau}$. First, we prove that $r\leq\iopti{\Cs}{\tau}$. By definition, this amounts to proving that given $q \in U$, we have $q \leq \iopti{\Cs}{\tau}$. By definition, $\iopti{\Cs}{\tau} = \tau(K)$ for some $K \in \Cs$ such that $\veps \in K$. By definition of $U$, $\{\veps\}$ is not \Cs-separable from $\tau_*\inv(q)$ which implies that $K \cap\tau_*\inv(q) \neq \emptyset$. Let $w \in K \cap\tau_*\inv(q)$. By definition $w \in K$ and $\tau(w) = q$. This implies that $q \leq \tau(K) = \iopti{\Cs}{\tau}$.
	
	Conversely, we show that $\iopti{\Cs}{\tau} \leq r$. By definition, for every $q \in Q \setminus U$, $\{\veps\}$ is \Cs-separable from $\tau_*\inv(q)$. We  fix a language $H_q \in \Cs$ as a separator. We now let,
	\[
	H = \bigcap_{q \in Q \setminus U} H_q \in \Cs
	\]
	By definition, $H$ separates $\{\veps\}$ from $\tau_*\inv(q)$ for every $q \in Q \setminus U$. In particular, $\veps \in H$ which implies that $\iopti{\Cs}{\tau} \leq \tau(H)$. Moreover, since $\tau$ is \nice, we have $w_1,\dots,w_n \in H$ such that $\tau(H) = \tau(w_1) + \cdots + \tau(w_n)$. Finally, since $H \cap \tau_*\inv(q) = \emptyset$ for every $q \in Q \setminus U$, we know that $\tau(w_1),\dots,\tau(w_n) \in U$. This implies that $\tau(H) \leq r$. Altogether, we get that $\iopti{\Cs}{\tau} \leq r$.
\end{proof}

\noindent
\textbf{Nested \ratms.} We want an algorithm which computes \sfcopti from an input \nice \mratm $\rho$ for a fixed \vari of group languages \Cs. Yet, we shall not use optimal \idens with this input \ratm $\rho$. Instead, we consider an auxiliary \ratm built from $\rho$ (the definition is taken from~\cite{pzbpolc}).

Consider a Boolean algebra \Ds (we shall use the case $\Ds = \sfp{\Cs}$) and a \ratm $\rho: 2 ^{A^*} \to R$. We build a new map $\bratauxd: 2^{A^*} \to 2^R$ whose rating set is $(2^R,\cup)$. For every $K \subseteq A^*$, we define $\bratauxd(K) = \opti{\Ds}{K,\rho} \in 2^R$. It follows from Fact~\ref{fct:lunion} that this is indeed a \ratm (on the other hand \bratauxd need not be \nice, see~\cite{pzbpolc} for a counterexample). More importantly, \bratauxd need not be \tame, even if this is the case for $\rho: 2 ^{A^*} \to R$. However, it turns out that we are able to cope with this last issue.

Clearly, when $\rho: 2 ^{A^*} \to R$ is a \mratm, the set $2^R$ is an idempotent semiring (addition is union and the multiplication is lifted from the one of $R$). It turns out that when \Ds is \vari closed under concatenation (such as when $\Ds = \sfp{\Cs}$), $\bratauxd: 2^{A^*} \to 2^R$ behaves almost as a \mratm for this semiring structure on $2^R$. Indeed, we have the following statement taken from~\cite{pzbpolc} (see Lemma~6.7). For $S \subseteq R$, we write $\dclosr S \subseteq R$ for the set $\dclosr S = \{q \mid \text{$q \leq r$ for some $r \in S$}\}$.

\begin{lemma} \label{lem:quasitame}
	Consider a \vari \Cs and a \mratm $\rho: 2 ^{A^*} \to R$. Then, for every $H,K \subseteq A^*$, we have $\dclosr \bratauxsfc(K) = \bratauxsfc(K)$ and $\dclosr (\bratauxsfc(H) \cdot \bratauxsfc(K)) = \bratauxsfc(HK)$.
\end{lemma}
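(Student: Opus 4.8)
\textbf{Proof plan for Lemma~\ref{lem:quasitame}.} The statement has two parts: an idempotency-style claim $\dclosr \bratauxsfc(K) = \bratauxsfc(K)$, and a product claim $\dclosr (\bratauxsfc(H) \cdot \bratauxsfc(K)) = \bratauxsfc(HK)$. The plan is to unfold the definition $\bratauxsfc(K) = \opti{\sfp{\Cs}}{K,\rho} = \prin{\rho}{\Kb}$ for an optimal $\sfp{\Cs}$-cover $\Kb$ of $K$, and then reduce each claim to manipulations of optimal covers for $\rho$, using that $\sfp{\Cs}$ is a \vari closed under concatenation (Theorem~\ref{thm:sfclos:carac} ensures $\sfp{\Cs}$ is indeed a \vari) together with the basic properties of $\rho$-\imprints recalled in Section~\ref{sec:finite}. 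Since the paper cites this as Lemma~6.7 of~\cite{pzbpolc}, I would structure the argument to mirror the general fact that $\Ps[\Ds]$ is ``\wtame'' whenever $\Ds$ is a \vari closed under concatenation.

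For the first claim, note $\prin{\rho}{\Kb}$ is by its very definition $\{r \mid r \le \rho(K') \text{ for some } K' \in \Kb\}$, which is downward closed: if $q \le r \le \rho(K')$ then $q \le \rho(K')$ by transitivity of $\le$, so $q \in \prin{\rho}{\Kb}$. Hence $\dclosr \bratauxsfc(K) \subseteq \bratauxsfc(K)$; the reverse inclusion is trivial since $r \le r$. So the first part is immediate and requires only the order being transitive and the imprint being defined via $\le$.

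For the product claim I would argue both inclusions. For $\bratauxsfc(HK) \subseteq \dclosr(\bratauxsfc(H)\cdot\bratauxsfc(K))$: take an optimal $\sfp{\Cs}$-cover $\Hb$ of $H$ and an optimal $\sfp{\Cs}$-cover $\Kb$ of $K$; then $\{H'K' \mid H' \in \Hb, K' \in \Kb\}$ is an $\sfp{\Cs}$-cover of $HK$ (using closure of $\sfp{\Cs}$ under concatenation), so $\opti{\sfp{\Cs}}{HK,\rho} \subseteq \prin{\rho}{\{H'K'\}}$; and one checks $\rho(H'K') = \rho(H')\cdot\rho(K')$ (tameness of $\rho$), from which $r \le \rho(H'K')$ gives $r \le \rho(H')\rho(K')$ with $\rho(H') \in \bratauxsfc(H)$, $\rho(K') \in \bratauxsfc(K)$ — more precisely $r$ is below a product of two elements of the respective imprints, so $r \in \dclosr(\bratauxsfc(H)\cdot\bratauxsfc(K))$. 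For the converse $\dclosr(\bratauxsfc(H)\cdot\bratauxsfc(K)) \subseteq \bratauxsfc(HK)$: given $q \le r_1 r_2$ with $r_1 \in \bratauxsfc(H)$, $r_2 \in \bratauxsfc(K)$, pick (from the definition of optimal cover) $H' \in \Hb$, $K' \in \Kb$ with $r_1 \le \rho(H')$, $r_2 \le \rho(K')$; since $\cdot$ is compatible with $\le$ we get $q \le r_1 r_2 \le \rho(H')\rho(K') = \rho(H'K')$, and $H'K'$ belongs to \emph{some} $\sfp{\Cs}$-cover of $HK$, hence $\rho(H'K') \in \opti{\sfp{\Cs}}{HK,\rho}$ only after optimizing — here one must instead observe that $\opti{\sfp{\Cs}}{HK,\rho} \subseteq \prin{\rho}{\Kb'}$ for \emph{every} $\sfp{\Cs}$-cover $\Kb'$ of $HK$ is the wrong direction, so the right move is: $\bratauxsfc(HK)$ is the imprint of an \emph{optimal} cover, and optimality means it is included in $\prin{\rho}{\{H'K'\}}$, giving the inclusion we already used; for the converse direction we use instead that $\opti{\sfp{\Cs}}{HK,\rho}$ is \emph{downward closed under $\le$ and closed under the operations} — so it suffices to know $\rho(H'K') \in \opti{\sfp{\Cs}}{HK,\rho}$, which follows because $\{H'K'\} \cup (\text{an optimal cover of } HK)$ is still an $\sfp{\Cs}$-cover refining to the optimal one, i.e. $\rho(H'K')$ is $\ge$ some element of the optimal imprint and the optimal imprint is the \emph{least}, so actually $\rho(H'K') \in \prin{\rho}{\Kb'}$ for any optimal $\Kb'$, hence in $\opti{\sfp{\Cs}}{HK,\rho}$. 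The main obstacle is getting this last step right: one must carefully invoke the \emph{minimality} of optimal imprints (any $\sfp{\Cs}$-cover's imprint contains the optimal one) rather than the reverse, and combine it with downset closure of optimal imprints for \varis (Lemma~9.5 of~\cite{pzcovering2}, already used in the soundness argument of Section~\ref{sec:finite}). I expect this bookkeeping around which cover is optimal, plus verifying $\rho(H'K') = \rho(H')\rho(K')$ uses only that $\rho$ is \tame, to be the only delicate points; everything else is routine. $\qed$
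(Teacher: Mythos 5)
The paper does not prove this lemma in its own text (it cites Lemma~6.7 of~\cite{pzbpolc}), so the comparison can only be made against what a correct proof would require. Your argument for the first identity is fine, and your argument for the inclusion $\bratauxsfc(HK) \subseteq \dclosr(\bratauxsfc(H)\cdot\bratauxsfc(K))$ is also correct: concatenating an optimal $\sfp{\Cs}$-cover of $H$ with one of $K$ yields an $\sfp{\Cs}$-cover of $HK$ (this is where closure of $\sfp{\Cs}$ under concatenation and tameness of $\rho$ enter), and the desired inclusion follows from minimality of optimal imprints.

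The gap is in the converse inclusion $\dclosr(\bratauxsfc(H)\cdot\bratauxsfc(K)) \subseteq \bratauxsfc(HK)$. You reduce it to showing $\rho(H'K') \in \opti{\sfp{\Cs}}{HK,\rho}$ where $H'\in\Hb$, $K'\in\Kb$ come from optimal covers of $H$ and $K$, but this auxiliary claim is both unjustified and false in general: $H'K'$ is typically a strict superset of $HK$, so $\rho(H'K')$ can be strictly larger than every $\rho(L)$ for $L$ in some $\sfp{\Cs}$-cover of $HK$, which would make $\rho(H'K')\notin\opti{\sfp{\Cs}}{HK,\rho}$. Your attempted justification (\emph{``$\{H'K'\}\cup(\text{optimal cover of }HK)$ is still a cover''}) only yields $\opti{\sfp{\Cs}}{HK,\rho}\subseteq\prin{\rho}{\{H'K'\}\cup\Kb_{\mathrm{opt}}}$, which is the trivial inclusion in the wrong direction and gives no information about $\rho(H'K')$. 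The detour through $\rho(H'K')$ should be abandoned. Instead, one proves directly that $\opti{\sfp{\Cs}}{H,\rho}\cdot\opti{\sfp{\Cs}}{K,\rho}\subseteq\opti{\sfp{\Cs}}{HK,\rho}$ (which, together with downward closure of optimal imprints, immediately gives the desired inclusion). The cleanest route is via finite sub-\varis: a language belongs to an optimal $\sfp{\Cs}$-cover exactly when for every finite \vari $\Gs\subseteq\sfp{\Cs}$ the corresponding $\caneg$-optimal imprint contains it; for such a finite $\Gs$, $\caneg$ is a congruence, so if $r_1\le\rho(C)$ and $r_2\le\rho(D)$ for $\caneg$-classes $C$ meeting $H$ and $D$ meeting $K$, then $CD$ is contained in the (unique) $\caneg$-class $E$ meeting $HK$ at any product, whence $r_1r_2\le\rho(C)\rho(D)=\rho(CD)\le\rho(E)$. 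This uses quotient-closedness (to have a congruence) and tameness of $\rho$, but at no point requires reasoning about $\rho(H'K')$, which is precisely where your argument goes astray.
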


We may now explain which set is computed by our algorithm instead of \sfcopti. Consider a \nice \mratm $\rho: 2^{A^*} \to R$. Since $\bratauxsfc: 2^{A^*} \to 2^R$ is a \ratm, we may consider the element $\ioptic{\bratauxsfc} \in 2^R$. By definition, $\ioptic{\bratauxsfc} = \bratauxsfc(L)$ where $L$ is a \Cs-optimal \iden for $\bratauxsfc$. Therefore, \ioptic{\bratauxsfc} is a subset of $\bratauxsfc(A^*) = \opti{\sfp{\Cs}}{A^*,\rho} = \sfcopti$. When \Cs is a \vari of group languages, one may compute the whole set \sfcopti from this subset.

\adjustc{prop:utooptibool}
\begin{proposition}
	Let \Cs be a \vari of group languages and $\rho: 2^{A^*} \to R$ a \nice \mratm. Then, \sfcopti is the least subset of $R$ containing \ioptic{\bratauxsfc} and satisfying the three following properties:
	\begin{itemize}
		\item {\bf Trivial elements.} For every $w \in A$, $\rho(w) \in \sfcopti$.
		\item {\bf Downset.} For every $r \in \sfcopti$ and $q \leq r$, we have $q \in \sfcopti$.
		\item {\bf Multiplication.} For every $q,r \in \sfcopti$, we have $qr \in \sfcopti$.
	\end{itemize}
\end{proposition}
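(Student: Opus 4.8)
The plan is to show two inclusions. Let $S_0$ denote the least subset of $R$ containing $\ioptic{\bratauxsfc}$ and closed under the three stated operations (trivial elements, downset, multiplication). For the inclusion $S_0 \subseteq \sfcopti$, it suffices to check that $\sfcopti$ itself contains $\ioptic{\bratauxsfc}$ and satisfies the three closure properties. That $\ioptic{\bratauxsfc} \subseteq \bratauxsfc(A^*) = \opti{\sfp{\Cs}}{A^*,\rho} = \sfcopti$ was already observed in the text. The three closure properties are generic facts about optimal \imprints of a \vari (here $\sfp{\Cs}$, which is a \vari since \Cs is): trivial elements because $\rho(w)\in\opti{\sfp{\Cs}}{A^*,\rho}$ for every word $w$, downset because \imprints are downsets by definition, and multiplication because optimal covers can be multiplied using closure of $\sfp{\Cs}$ under concatenation — this is the same argument used in the soundness part of Theorem~\ref{thm:sfclos:carfinite}. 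So $S_0 \subseteq \sfcopti$ is routine.

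The substantial inclusion is $\sfcopti \subseteq S_0$. Here I would construct, for an arbitrary $\sfp{\Cs}$-cover $\Kb$ of $A^*$, a way to witness every $r \in \opti{\sfp{\Cs}}{A^*,\rho}$ inside $S_0$. The key idea is to use a \Cs-optimal \iden $L$ for $\bratauxsfc$: by definition $L \in \Cs$, $\veps \in L$, and $\ioptic{\bratauxsfc} = \bratauxsfc(L) = \opti{\sfp{\Cs}}{L,\rho}$. Since \Cs is a \vari of group languages, $L$ is recognized by a morphism $\beta : A^* \to G$ into a finite group with $\beta(w)=1_G \Rightarrow w \in L$ (after intersecting with the syntactic image of $L$). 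The point is that for a group language, for any word $u$ there is a power $u^n$ with $\beta(u^n)=1_G$, hence $u^n \in L$; so $L$ is "dense" in a strong sense. Concretely, given $r \le \rho(K)$ with $K$ in the cover $\Kb$ of $A^*$, pick $w \in K$; then $w^n \in L$ for suitable $n$, and one tracks $\rho(w^n)$ through the multiplicative structure to land back on an element of the form built from $\ioptic{\bratauxsfc}$ by multiplication and downset. The relabeling between "covers of $A^*$" and "covers of $L$" is handled by Fact~\ref{fct:lunion} together with the observation that $\sfcopti = \opti{\sfp{\Cs}}{A^*,\rho}$ decomposes over $\beta$-classes, of which the $1_G$-class lies inside $L$.

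The main obstacle I anticipate is the bookkeeping needed to pass from an arbitrary element $r \in \sfcopti$ (coming from some $\sfp{\Cs}$-cover of $A^*$, not of $L$) to an element expressible from $\ioptic{\bratauxsfc}$: one must "fold" the arbitrary word witnessing $r$ into a word lying in $L$ without losing track of its $\rho$-value, and this uses both the group structure of \Cs-languages (to find the synchronizing power) and the multiplicative closure built into $S_0$. A secondary technical point is that $\bratauxsfc$ need not be \nice or \tame, so one cannot freely manipulate it as a morphism; instead one should work with $\rho$ directly and only invoke $\bratauxsfc(L) = \opti{\sfp{\Cs}}{L,\rho}$ as the defining identity for $\ioptic{\bratauxsfc}$, using Fact~\ref{fct:lunion} to control unions. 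Once the single-element reduction is in place, closure of $S_0$ under downset and multiplication finishes the inclusion, and the proposition follows.
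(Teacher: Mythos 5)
Your easy direction ($S_0 \subseteq \sfcopti$) matches the paper and is fine. For the substantive inclusion $\sfcopti \subseteq S_0$, you identify the right starting point (a \Cs-optimal \iden $L$ for \bratauxsfc and the group structure of \Cs-languages), but the mechanism you sketch — ``pick $w \in K$; then $w^n \in L$ for suitable $n$, and track $\rho(w^n)$'' — does not lead anywhere: the element $r$ is bounded by $\rho(K)$, not by $\rho(w)$ for a single $w$, and raising a word to a power changes its image under $\rho$ without producing any relation back to $\ioptic{\bratauxsfc}$. The powering idea is a red herring.

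The two ideas actually needed, and absent from your sketch, are these. First, a \emph{language-level} pumping decomposition: since $L$ is a group language containing $\veps$, the whole of $A^*$ is a \emph{finite} union of marked products $La_1 L \cdots a_\ell L$ with $a_1,\dots,a_\ell \in A$ (this is Lemma~\ref{lem:nest:grouplem} in the paper). Combined with the fact that \bratauxsfc is a \ratm (hence commutes with finite unions), this immediately places every $r \in \sfcopti = \bratauxsfc(A^*)$ inside some $\bratauxsfc(La_1 L \cdots a_\ell L)$. Second, one needs the partial multiplicativity of \bratauxsfc stated in Lemma~\ref{lem:quasitame}: $\dclosr(\bratauxsfc(H)\cdot\bratauxsfc(K)) = \bratauxsfc(HK)$, which splits $\bratauxsfc(La_1 L \cdots a_\ell L)$ into a downset-of-products of $\bratauxsfc(L) = \ioptic{\bratauxsfc}$ and $\bratauxsfc(a_i) = \dclosr\{\rho(a_i)\}$. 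This is exactly the closure structure of $S_0$, so $r \in S_0$ follows. Your remark that ``$\bratauxsfc$ need not be \nice or \tame, so one cannot freely manipulate it as a morphism'' is precisely where your approach goes astray: the correct proof \emph{does} exploit the morphism-like behaviour of \bratauxsfc, via Lemma~\ref{lem:quasitame}, and this is what lets the group-theoretic pumping at the level of languages transfer to the rating-set level. Without this, the argument cannot be closed.
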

\restorec

\begin{proof}
  We let $S \subseteq R$ be the least subset of $R$ containing \ioptic{\bratauxsfc} and satisfying the three properties in the proposition. We need to prove that $S = \sfcopti$. The inclusion $S \subseteq \sfcopti$ is immediate. We already observed before the statement that $\ioptic{\bratauxsfc} \subseteq \sfcopti$. The other properties are generic ones of optimal \imprints when the investigated class is a \vari (see Lemma~9.5 in~\cite{pzcovering2}).
	
  We concentrate on the converse inclusion: $\sfcopti \subseteq S$. For the proof, we let $L \subseteq A^*$ be a \Cs-optimal \iden for \bratauxsfc. Recall that by definition, this means that $L \in \Cs$, $\veps\in L$ and $\ioptic{\bratauxsfc} = \bratauxsfc(L)$. Consider $r \in \sfcopti$, we show that $r \in S$. The argument is based on the following lemma which is where we use the hypothesis that \Cs is made of \emph{group languages}.
		
	\begin{lemma} \label{lem:nest:grouplem}
		There exist $\ell \in \nat$ and $a_1,\dots,a_\ell \in A$ such that $r \in \bratauxsfc(La_1L \cdots a_\ell L)$.
	\end{lemma}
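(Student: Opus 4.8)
The plan is to exploit the fact that $\Cs$ consists of group languages, so that the $\Cs$-optimal \iden $L$ has a strong recurrence property. First I would recall that $L \in \Cs$ is recognized by a morphism $\beta : A^* \to G$ into a finite group; fix $g_0 = \beta(\veps) = 1_G$ (so $L = \beta^{-1}(F)$ for some $F \subseteq G$ containing $1_G$). The crucial point about groups is that for \emph{any} word $w$, there exists $n \geq 1$ with $\beta(w^n) = 1_G$, and more generally for any words $u_1,\dots,u_k$ one can insert copies of a ``return word'' to come back to $1_G$. Since $r \in \sfcopti = \bratauxsfc(A^*) = \opti{\sfp{\Cs}}{A^*,\rho}$, and optimal \imprints are computed by \sfp{\Cs}-covers, there is a word $w \in A^*$ witnessing $r$ in the following sense: every \sfp{\Cs}-cover $\Kb$ of $A^*$ has some $K \ni w$ with $r \leq \rho(K)$. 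Actually the cleaner route: by definition of $\bratauxsfc$ and Fact~\ref{fct:lunion}, $r \in \opti{\sfp{\Cs}}{\{w\},\rho} = \bratauxsfc(w)$ for some single word $w = a_1 \cdots a_m$ (taking a union decomposition of $A^*$ into singletons and using that optimal imprints distribute over unions).

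The key step is then to show that $w$ can be ``padded'' into $L a_1 L \cdots a_m L$ without losing the imprint $r$. I would argue as follows. Since $\Cs$ is a \vari of group languages, $L$ contains $\veps$ and is a union of $\beta^{-1}$-classes; the set $N = \beta^{-1}(1_G) \subseteq L$ is a submonoid of $A^*$ with $\veps \in N$. The reason we can insert arbitrary words between the letters $a_i$ and stay inside $L$ is that for any $v \in A^*$, there is $v' \in A^*$ with $\beta(vv') = 1_G$ (namely $v' = v^{|G|-1}$), so $vv' \in N \subseteq L$. More carefully: starting from $w = a_1 \cdots a_m$, I insert between consecutive letters a word $c_i \in N$ (in fact just $\veps$ works for group membership since $\beta(a_1 \cdots a_m)$ need not be $1_G$ — so this needs adjustment). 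The honest statement is that we want $r \in \bratauxsfc(La_1 L \cdots a_\ell L)$ for \emph{some} $\ell$ and \emph{some} letters, not necessarily the original $w$. So I would instead use: $w \in L a_1 L a_2 L \cdots a_m L$ trivially (take all the $L$-factors to be $\{\veps\} \subseteq L$), hence $\bratauxsfc(w) \subseteq \bratauxsfc(La_1 L \cdots a_m L)$ by monotonicity of $\Ds$-optimal imprints in the covered language (Fact~\ref{fct:lunion} gives $\opti{\Ds}{K_1,\rho} \subseteq \opti{\Ds}{K_1 \cup K_2, \rho}$, and $w \in La_1 L\cdots a_m L$). Since $r \in \bratauxsfc(w)$, we get $r \in \bratauxsfc(La_1 L \cdots a_m L)$, and we set $\ell = m$, $a_i$ the letters of $w$.

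The main obstacle is establishing that $r \in \bratauxsfc(w)$ for a single word $w$ in the first place, i.e.\ that $\sfcopti = \bigcup_{w \in A^*} \bratauxsfc(w)$. This follows from Fact~\ref{fct:lunion}: writing $A^* = \bigcup_{w \in A^*} \{w\}$ and using that $\opti{\sfp{\Cs}}{-,\rho}$ turns unions into unions (Fact~\ref{fct:lunion}, extended to arbitrary unions by a compactness/finiteness argument on $R$ since $R$ is finite and $\rho$ is \nice), we obtain $\bratauxsfc(A^*) = \opti{\sfp{\Cs}}{A^*,\rho} = \bigcup_{w} \opti{\sfp{\Cs}}{\{w\},\rho} = \bigcup_w \bratauxsfc(w)$. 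For $r \in \sfcopti = \bratauxsfc(A^*)$ this yields $w$ with $r \in \bratauxsfc(w)$, and the rest is the inclusion $\{w\} \subseteq La_1L \cdots a_\ell L$ with $a_1 \cdots a_\ell = w$ (using $\veps \in L$) together with monotonicity. I would also double-check the degenerate case $w = \veps$: then $\ell = 0$ and the claim reads $r \in \bratauxsfc(L)$, which holds since $r \in \sfcopti = \bratauxsfc(A^*)$... wait — this needs $r \in \bratauxsfc(L)$, not $\bratauxsfc(A^*)$; but $r \in \bratauxsfc(\veps) \subseteq \bratauxsfc(L)$ since $\veps \in L$, so it is fine. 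Thus the only real content is Fact~\ref{fct:lunion} plus monotonicity of optimal imprints under enlarging the covered language, and the observation $\{a_1\cdots a_\ell\} \subseteq La_1 L \cdots a_\ell L$.
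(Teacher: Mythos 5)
Your argument is essentially: (a) claim that $\sfcopti = \bigcup_{w\in A^*}\bratauxsfc(w)$, so there is a single word $w = a_1\cdots a_m$ with $r\in\bratauxsfc(w)$; (b) observe $\{w\}\subseteq La_1L\cdots a_mL$ (since $\veps\in L$) and invoke monotonicity of the \ratm \bratauxsfc to conclude $r\in\bratauxsfc(La_1L\cdots a_mL)$. Step (b) is fine. The problem is step (a): you are asserting that \bratauxsfc is \nice (at least on the language $A^*$), \emph{i.e.}\ that $\opti{\sfp{\Cs}}{A^*,\rho}$ is the union of the sets $\opti{\sfp{\Cs}}{\{w\},\rho}$ over all words $w$, and you justify it by ``a compactness/finiteness argument on $R$ since $R$ is finite and $\rho$ is \nice''. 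This does not go through. A \ratm, being a morphism $(2^{A^*},\cup)\to(R,+)$, is additive only over \emph{finite} unions; niceness is precisely the extra hypothesis needed to pass to singletons, and niceness of the underlying $\rho$ does \emph{not} transfer to the derived \ratm \bratauxsfc. The paper states this explicitly: ``\bratauxd need not be \nice nor \tame, see~\cite{pzbpolc} for details.'' The reason your finiteness heuristic breaks down is that $r\not\leq\rho(K_1)$ and $r\not\leq\rho(K_2)$ do not imply $r\not\leq\rho(K_1\cup K_2)=\rho(K_1)+\rho(K_2)$, so you cannot coalesce an infinite singleton-indexed cover into a finite $\sfp{\Cs}$-cover witnessing $r\notin\sfcopti$.

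There is also a structural red flag: your proof never actually uses the hypothesis that \Cs consists of group languages. You open by discussing return words in groups, but that discussion is discarded and the final argument would work for any quotient-closed Boolean algebra. This lemma is precisely the place in the proof of Proposition~\ref{prop:utooptibool} where the group hypothesis does real work. The paper's proof uses it as follows: since $L$ is recognized by a morphism onto a finite group $G$ and $\veps\in L$, a pumping argument on prefixes shows that every word of $A^*$ lies in some $La_1L\cdots a_\ell L$ with $\ell$ bounded by $|G|$, so $A^*$ is a \emph{finite} union of such languages. Then the (always valid) finite additivity of the \ratm \bratauxsfc writes $\sfcopti=\bratauxsfc(A^*)$ as the corresponding finite union of $\bratauxsfc(La_1L\cdots a_\ell L)$, and $r$ lands in one of them. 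The group structure is exactly what replaces the compactness you were hoping for: it converts an a priori infinite decomposition into a finite one.
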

	
	\begin{proof}
		Since $L \in \Cs$, it is a group language by hypothesis on \Cs: it is recognized by a finite group. Therefore, since $\veps \in L$, one may use a pumping argument to show that $A^*$ is a finite union of languages having the form $La_1L \cdots a_\ell L$ for $a_1,\dots,a_\ell \in A$. Moreover, $\sfcopti = \opti{\sfp{\Cs}}{A^*,\rho} = \bratauxsfc(A^*)$ by definition. Therefore, since \bratauxsfc is a \ratm (whose rating set is $(2^R,\cup)$, it follows that \sfcopti is a finite union of sets $\bratauxsfc(La_1L \cdots a_\ell L)$ for $a_1,\dots,a_\ell \in A$. Since $r \in \sfcopti$, the result follows.
	\end{proof}
	
	In view of Lemma~\ref{lem:quasitame},  the hypothesis that $r \in \bratauxsfc(La_1L \cdots a_\ell L)$ given by Lemma~\ref{lem:nest:grouplem} implies that,
	\[
	r \in \dclosr(\bratauxsfc(L) \cdot \bratauxsfc(a_1) \cdot \bratauxsfc(L) \cdots \bratauxsfc(a_\ell) \cdot \bratauxsfc(L))
	\]
	By definition of $L$, we have $\bratauxsfc(L) = \ioptic{\bratauxsfc}$. Moreover, we have the following fact.
	
	\begin{fact} \label{fct:theletters}
		For every $a \in A$, we have $\bratauxsfc(a) = \dclosr \{\rho(a)\}$.
	\end{fact}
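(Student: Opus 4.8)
The plan is to observe that the singleton $\{a\}$ is itself a language in $\sfp{\Cs}$, so it is its own best $\sfp{\Cs}$-cover, and then read off the imprint. Concretely, recall that by definition $\bratauxsfc(\{a\}) = \opti{\sfp{\Cs}}{\{a\},\rho} = \prin{\rho}{\Kb}$ for any optimal $\sfp{\Cs}$-cover \Kb of $\{a\}$. I would first consider the cover $\Kb_0 = \{\{a\}\}$. It is an $\sfp{\Cs}$-cover of $\{a\}$ since $\{a\} \in \sfp{\Cs}$ (it is one of the generators of $\sfp{\Cs}$), and its $\rho$-imprint is immediately $\prin{\rho}{\Kb_0} = \{r \mid r \leq \rho(\{a\})\} = \dclosr\{\rho(a)\}$.

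Next I would argue that $\Kb_0$ is optimal, i.e.\ $\prin{\rho}{\Kb_0} \subseteq \prin{\rho}{\Kb'}$ for every $\sfp{\Cs}$-cover $\Kb'$ of $\{a\}$. Indeed, since $a \in \{a\}$, any such $\Kb'$ contains some $K'$ with $a \in K'$; hence $\rho(a) \leq \rho(K')$, so $\rho(a) \in \prin{\rho}{\Kb'}$, and since $\rho$-imprints are downward closed for the canonical order, $\dclosr\{\rho(a)\} \subseteq \prin{\rho}{\Kb'}$. Combining, $\opti{\sfp{\Cs}}{\{a\},\rho} = \prin{\rho}{\Kb_0} = \dclosr\{\rho(a)\}$, which is exactly $\bratauxsfc(a) = \dclosr\{\rho(a)\}$.

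This is a short bookkeeping argument with no real obstacle; the only point to be careful about is invoking the right definitions (that $\bratauxsfc(K) = \opti{\sfp{\Cs}}{K,\rho}$ and that all optimal covers share the same imprint, so the choice of $\Kb_0$ is harmless) and noting that $\{a\}$ genuinely lies in $\sfp{\Cs}$. Alternatively one could shortcut the downward-closure step by quoting Lemma~\ref{lem:quasitame} ($\dclosr\bratauxsfc(K) = \bratauxsfc(K)$), but the direct computation above is self-contained and arguably cleaner.
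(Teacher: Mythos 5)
Your proof is correct and takes essentially the same route as the paper's: both rest on the observation that $\{a\}\in\sfp{\Cs}$ so that $\{\{a\}\}$ is an $\sfp{\Cs}$-cover of $\{a\}$, and then identify its imprint with $\dclosr\{\rho(a)\}$. The paper states optimality of this cover as "clear," whereas you spell out the one-line justification; otherwise the arguments coincide.
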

	
	\begin{proof}
		By definition, $\bratauxsfc(a) = \opti{\sfp{\Cs}}{\{a\},\rho}$. Since $\{a\} \in \sfp{\Cs}$ by definition, it is clear that $\{\{a\}\}$ is an optimal \sfp{\Cs}-cover of $\{a\}$. Thus, $\opti{\sfp{\Cs}}{\{a\},\rho} = \prin{\rho}{\{a\}} = \dclosr \{\rho(a)\}$ and we are finished.
	\end{proof}
	
	Altogether, we obtain that,
	\[
	r \in \dclosr(\ioptic{\bratauxsfc} \cdot \{\rho(a_1)\} \cdot \ioptic{\bratauxsfc} \cdots \{\rho(a_\ell)\} \cdot \ioptic{\bratauxsfc})
	\]
	It is now immediate from the definition of $S$ that $r \in S$ which concludes the proof for the inclusion $\sfcopti \subseteq S$.
\end{proof}

\subsection{Algorithm}

We may now present our algorithm for computing \sfcopti. We fix a \vari of group languages \Cs for the presentation. As expected, the main procedure computes \ioptic{\bratauxsfc} (see Proposition~\ref{prop:utooptibool}). In this case as well, this procedure is obtained from a characterization theorem.

Consider a \nice \mratm $\rho: 2^{A^*} \to R$. We define the \sfp{\Cs}-complete subsets of $R$ for $\rho$. The definition depends on auxiliary \nice \mratms. We first present them. Clearly, $2^R$ is an idempotent semiring (addition is union and the multiplication is lifted from the one of $R$). For every $S \subseteq R$, we use it as the rating set of a \nice \mratm $\sfrats: 2^{A^*} \to 2^{R}$. Since we are defining a \emph{\nice} \mratm, it suffices to specify the evaluation of letters. For $a \in A$, we let $\sfrats(a) = S \cdot \{\rho(a)\} \cdot S \in 2^R$. Observe that by definition, we have $\ioptic{\sfrats} \subseteq R$.

We are ready to define the \sfp{\Cs}-complete subsets of $R$. Consider $S \subseteq R$. We say that $S$ is \sfp{\Cs}-complete for $\rho$ when the following conditions are satisfied:
\begin{enumerate}
	\item\label{clos:sfc:downg} {\bf Downset.} For every $r \in S$ and $q \leq r$, we have $q \in S$.
	\item\label{clos:sfc:multg} {\bf Multiplication.} For every $q,r \in S$, we have $qr \in S$.
	\item\label{clos:sfc:operg} {\bf \Cs-operation.} We have $\ioptic{\sfrats} \subseteq S$.
	\item\label{clos:sfc:closg} {\bf \sfp{\Cs}-closure.} For every $r \in S$, we have $r^\omega + r^{\omega+1} \in S$.
\end{enumerate}

\adjustc{rem:sfclos:compne}
\begin{remark}
	The definition of \sfp{\Cs}-complete subsets does not explicitly require that they contain some trivial elements. Yet, this is implied by \Cs-operation. Indeed, if $S \subseteq R$ is \sfp{\Cs}-complete, then $\sfrats(\veps) = \{1_R\}$ (this is the multiplicative neutral element of $2^{R}$). This implies that $1_R \in \ioptic{\sfrats}$ and we obtain from \Cs-operation that $1_R \in S$.
\end{remark}
\restorec

\adjustc{thm:sfclos:cargroup}
\begin{theorem}[\sfp{\Cs}-optimal \imprints (\Cs made of group languages)]
Let $\rho: 2^{A^*} \to R$ be a \nice \mratm. Then, \ioptic{\bratauxsfc} is the least \sfp{\Cs}-complete subset of $R$.	
\end{theorem}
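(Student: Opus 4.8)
The plan is to prove both inclusions between $\ioptic{\bratauxsfc}$ and the least $\sfp{\Cs}$-complete subset $S$ of $R$, mirroring the soundness/completeness split used for Theorem~\ref{thm:sfclos:carfinite}. For \emph{soundness}, I would show that $\ioptic{\bratauxsfc}$ itself is $\sfp{\Cs}$-complete, hence contains $S$. Downset is immediate from Lemma~\ref{lem:quasitame} ($\dclosr \bratauxsfc(L) = \bratauxsfc(L)$ applied to a $\Cs$-optimal \iden $L$). Multiplication follows because $\ioptic{\bratauxsfc} = \bratauxsfc(L)$ with $\veps \in L$, so $\bratauxsfc(L)\cdot\bratauxsfc(L) \subseteq \dclosr(\bratauxsfc(L)\cdot\bratauxsfc(L)) = \bratauxsfc(LL)$, and since $L \in \Cs$ with $\veps \in L$ we have $LL \in \Cs$, $\veps \in LL$, so by optimality $\bratauxsfc(L) \leq \bratauxsfc(LL)$, i.e. $\bratauxsfc(LL) = \bratauxsfc(L)$; thus products stay inside. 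For $\sfp{\Cs}$-closure: given $r \in \bratauxsfc(L) = \opti{\sfp{\Cs}}{L,\rho}$, one picks an optimal $\sfp{\Cs}$-cover, finds $K$ with $r \leq \rho(K)$, and applies the pumping idea from the proof of Theorem~\ref{thm:sfclos:carfinite} (Lemmas~\ref{lem:sfclos:aper}, \ref{lem:sfclos:choiceofk}) to get $r^\omega + r^{\omega+1} \leq \rho(K')$ for some $K'$ in any cover of the relevant idempotent-power language, using that $L$ is a group language so $L$ absorbs such pumping. The $\Cs$-operation condition $\ioptic{\sfrats} \subseteq \ioptic{\bratauxsfc}$ for $S = \ioptic{\bratauxsfc}$ is the delicate one: I would unfold $\sfrats(a) = S\cdot\{\rho(a)\}\cdot S$, observe via Fact~\ref{fct:theletters} and Lemma~\ref{lem:quasitame} that $\bratauxsfc(LaL) \supseteq \ioptic{\bratauxsfc}\cdot\{\rho(a)\}\cdot\ioptic{\bratauxsfc}$ down to $\dclosr$, then iterate to get $\ioptic{\sfrats}$ realized inside $\bratauxsfc$ of a word-parametrized language of the shape $La_1L\cdots a_\ell L$, all of which lie in $\Cs$ and contain $\veps$, hence are $\geq$-dominated by the optimal \iden; this gives the inclusion.

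For \emph{completeness}, I would show $\ioptic{\bratauxsfc} \subseteq S$ for any $\sfp{\Cs}$-complete $S$. The key is to produce, for each element $r \in \ioptic{\bratauxsfc} = \bratauxsfc(L) = \opti{\sfp{\Cs}}{L,\rho}$ with $L$ a $\Cs$-optimal \iden for $\bratauxsfc$, an explicit $\sfp{\Cs}$-cover of $L$ (equivalently a $\bsdp{\Cs}$-cover, by Theorem~\ref{thm:sfclos:carac}) witnessing that $r$ must already be in $S$. Since $L$ is a group language containing $\veps$, a pumping argument (as in Lemma~\ref{lem:nest:grouplem}) shows $L$, hence $A^*$ up to the needed granularity, decomposes as a finite union of languages $La_1L\cdots a_\ell L$. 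One then builds a $\bsdp{\Cs}$-cover of each such piece analogous to Proposition~\ref{prop:sfclos:fromapertostar}/\ref{prop:sfclos:pumping}, but now tracking membership in $S$ instead of $s$-safety: the relevant invariant is ``$\gamma(K) \in$ (some $S$-indexed analogue of $\csats{\Hb}$)'' where the letter-evaluation $\sfrats(a) = S\cdot\{\rho(a)\}\cdot S$ precisely encodes the effect of inserting an isolated letter flanked by pieces of the cover. The $\Cs$-operation axiom is exactly what lets the induction restart across a letter boundary ($\ioptic{\sfrats} \subseteq S$ plays the role that $\ioptic{\cdot}$-of-an-auxiliary-ratm plays), multiplication handles concatenation of cover pieces, downset handles $\rho(K') \geq r$, and $\sfp{\Cs}$-closure handles Kleene stars on prefix codes with bounded synchronization delay.

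Concretely, I would set up the induction on the same three parameters as in Proposition~\ref{prop:sfclos:pumping}, but with the ambient semiring being $2^R$ and the distinguished subset $S$ replacing $T$: define, for a prefix code $P$ with bounded synchronization delay and a $\bsdp{\Cs}$-partition $\Hb$ of $P$ whose blocks satisfy a suitable $S$-compatibility, a $\bsdp{\Cs}$-partition $\Kb$ of $P^*$ whose blocks $K$ satisfy $\bratauxsfc(K)$ lies in the $S$-closure and, after multiplying by the running parameter, still lands in $S$. The base case is ``$\Hb$-stability'' of the parameter, handled by partitioning $P^*$ via the finite \vari coming from $\Cs$-aperiodicity (Fact~\ref{fct:caracfin}) and using $\sfp{\Cs}$-closure of $S$ on idempotent blocks; the inductive step isolates a block $H$, recurses on $(P\setminus H)^*$ and on $((P\setminus H)^*H)^*$ (Facts~\ref{fct:sfclos:sprefsub}, \ref{fct:sfclos:sprefnest}), and glues via the $UHW$ and $VU$ constructions. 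The main obstacle I anticipate is correctly formulating the $S$-indexed analogue of the ``safe/$T$-bounded'' bookkeeping so that the $\Cs$-operation axiom (which speaks about $\ioptic{\sfrats}$, itself an optimal-\iden quantity) matches the step where a single letter $a$ is inserted between two cover blocks — verifying that $\sfrats(a) = S\cdot\{\rho(a)\}\cdot S$ really captures $\bratauxsfc$ of the one-letter-insertion language requires carefully combining Lemma~\ref{lem:quasitame}, Fact~\ref{fct:theletters}, and the optimality of the \iden $L$, and getting the $\dclosr$'s and the monoid-of-group-languages pumping to cooperate is where the real work lies.
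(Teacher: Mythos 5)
Your soundness sketch has a systematic direction error that traces back to a missing ingredient: you never fix the optimal \iden $L$ to be of the form $L=\alpha\inv(1_G)$ for a surjective morphism $\alpha\colon A^*\to G$ onto a finite group (this is Fact~\ref{fct:idengroup} in the paper). Optimality of $L$ gives $\bratauxsfc(L)\subseteq\bratauxsfc(L')$ for every $L'\in\Cs$ with $\veps\in L'$, so $\bratauxsfc(L)$ is the \emph{least} such value. Your multiplication argument writes ``$\bratauxsfc(L)\leq\bratauxsfc(LL)$, i.e.\ $\bratauxsfc(LL)=\bratauxsfc(L)$'', but the first inclusion is also forced by $L\subseteq LL$ (since $\veps\in L$) and does not yield the reverse inclusion; what you actually need is $\bratauxsfc(LL)\subseteq\bratauxsfc(L)$, which comes for free once $LL=L$, i.e.\ once $L$ is the kernel of a group morphism. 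Similarly, for $\Cs$-operation your claim that the languages $La_1L\cdots a_\ell L$ ``contain $\veps$, hence are $\geq$-dominated by the optimal \iden'' is false for $\ell\geq1$ (such languages cannot contain $\veps$), and even if they did, optimality would again give the wrong direction. The correct step is: when $\alpha(a_1\cdots a_\ell)=1_G$, one has $La_1L\cdots a_\ell L\subseteq L$, whence $\bratauxsfc(La_1L\cdots a_\ell L)\subseteq\bratauxsfc(L)$ simply because \bratauxsfc is monotone — no appeal to optimality is needed at that point.

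For completeness you propose to re-run a Proposition~\ref{prop:sfclos:pumping}-style three-parameter induction with $S$-indexed bookkeeping in place of the $T$-bookkeeping, and you yourself flag that making $\Cs$-operation interact with the induction is where the real difficulty lies. The paper's route is genuinely different and sidesteps this: it takes a $\Cs$-optimal \iden $H$ for the auxiliary \mratm \sfrats, embeds $H$ in a finite $\vari$ $\Gs\subseteq\Cs$, sets $L=\gtype{\veps}$, and defines $S'=\{(\gtype{\veps},s)\mid s\in S\}\cup\{(C,r)\mid r\in\dclosr\sfrats(C)\}\subseteq(\gclac)\times R$. Lemma~\ref{lem:sfclos:isat} shows $S'$ is \sfp{\Gs}-saturated — this is where $\Cs$-operation enters, namely in Fact~\ref{fct:sfclos:sp}, which uses $\sfrats(\gtype{\veps})\subseteq\ioptic{\sfrats}\subseteq S$ — and then Theorem~\ref{thm:sfclos:carfinite} is invoked as a black box to conclude $\popti{\sfp{\Gs}}{\Gs}{\rho}\subseteq S'$; reading off the $\gtype{\veps}$-slice gives $\bratauxsfc(\gtype{\veps})\subseteq S$. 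Your more direct plan would, in effect, re-prove the technical core of Theorem~\ref{thm:sfclos:carfinite} from scratch; it is plausible that it could be made to work, but as written it is not a proof, and the construction of the right $S$-analogue of \csats{\Hb} and the role of $\Cs$-operation at letter boundaries are left entirely open. Proving completeness by \emph{reduction} to the finite case avoids that work.
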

\restorec

When \Cs-separation is decidable, Theorem~\ref{thm:sfclos:cargroup} yields a least fixpoint procedure for computing \ioptic{\bratauxsfc} from a \nice \mratm $\rho: 2^{A^*} \to R$. The computation starts from the empty set and saturates it with the four operations in the definition of \sfp{\Cs}-complete subsets. It is clear that we may implement downset, multiplication and \sfp{\Cs}-closure. Moreover, we may implement \Cs-operation as this boils down to \Cs-separation by Lemma~\ref{lem:sepepswit}. Eventually, the computation reaches a fixpoint and it is straightforward to verify that this set is the least \sfp{\Cs}-complete subset of $R$, \emph{i.e.}, \ioptic{\bratauxsfc} by Theorem~\ref{thm:sfclos:cargroup}.

\smallskip

By Proposition~\ref{prop:utooptibool}, we may compute \sfcopti from \ioptic{\bratauxsfc}. Altogether, this yields the decidability of \sfp{\Cs}-covering by Proposition~\ref{prop:breduc}. Hence, Theorem~\ref{thm:sfclos:gmain} is proved.

\subsection{Proof}

We now concentrate on proving Theorem~\ref{thm:sfclos:cargroup}. We fix a \nice \mratm $\rho: 2^{A^*} \to R$ for the proof. We need to show that \ioptic{\bratauxsfc} is the least \sfp{\Cs}-complete subset of $R$. The proof involves soundness and completeness directions. In both cases, we apply Theorem~\ref{thm:sfclos:carfinite} as a sub-result.

\subsubsection*{Soundness}

We prove that \ioptic{\bratauxsfc} is \sfp{\Cs}-complete. We start with a preliminary simple fact that will be useful.

\begin{fact} \label{fct:idengroup}
	There exists a finite group $G$ and a morphism $\alpha: A^* \to G$ such that the language $L = \alpha\inv(1_G)$ is a \Cs-optimal \iden for \bratauxsfc.
\end{fact}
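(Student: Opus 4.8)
\textbf{Proof plan for Fact~\ref{fct:idengroup}.} The goal is to realize a \Cs-optimal \iden for \bratauxsfc as a language of the special form $\alpha\inv(1_G)$ for a morphism $\alpha$ into a finite group. The plan is to start from \emph{any} \Cs-optimal \iden $L_0$ for \bratauxsfc, which exists by Lemma~\ref{lem:aliden} (applied with $\tau = \bratauxsfc$ and the Boolean algebra \Cs). Since $L_0 \in \Cs$ and \Cs is a \vari of group languages, $L_0$ is recognized by a morphism $\beta: A^* \to H$ into a finite group $H$, say $L_0 = \beta\inv(F)$ for some $F \subseteq H$. The issue is that $L_0$ need not be of the form $\beta\inv(1_H)$: it is merely a union of $\beta$-classes and contains $\veps$, so $1_H \in F$, but $F$ may be strictly larger than $\{1_H\}$.

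The key step is to replace $L_0$ by the smaller language $L = \beta\inv(1_H)$ and argue that it is \emph{still} a \Cs-optimal \iden. First, $L \in \Cs$: since \Cs is a \vari of group languages it is closed under quotients and Boolean operations, and $\beta\inv(1_H)$ is recognized by $\beta$, hence belongs to \Cs (this is the standard fact recalled in the proof of Lemma~\ref{lem:sfclos:grpstut}). Second, $\veps \in L$ since $\beta(\veps) = 1_H$. Third, and this is the heart of the argument, $\bratauxsfc(L) = \ioptic{\bratauxsfc}$. One inclusion is automatic: $L \subseteq L_0$ gives $\bratauxsfc(L) \leq \bratauxsfc(L_0) = \ioptic{\bratauxsfc}$ by monotonicity of \bratauxsfc (which holds because \bratauxsfc is a \ratm, hence additive, hence monotone for inclusion). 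For the reverse inequality $\ioptic{\bratauxsfc} \leq \bratauxsfc(L)$, one uses that $\ioptic{\bratauxsfc}$ is by definition the \emph{least} value $\bratauxsfc(K)$ over all $K \in \Cs$ with $\veps \in K$; since $L$ is such a $K$, we get $\ioptic{\bratauxsfc} \leq \bratauxsfc(L)$ directly. Combining the two gives equality, so $L$ is a \Cs-optimal \iden. Setting $G = H$ and $\alpha = \beta$ finishes the proof.

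The main obstacle, such as it is, is purely bookkeeping: one must be careful that the syntactic recognizer of a group language is indeed a group and that $\beta\inv(1_H) \in \Cs$ — but this is exactly the closure property of \varis of group languages already used elsewhere in the paper (e.g.\ in the proof of Lemma~\ref{lem:sfclos:grpstut}), so it can be cited rather than reproved. No pumping or combinatorial argument is needed here; the whole point of Fact~\ref{fct:idengroup} is cosmetic, to put the optimal \iden into the normal form $\alpha\inv(1_G)$ that subsequent arguments in the soundness proof will exploit (presumably to decompose $A^*$ as a union of languages $L a_1 L \cdots a_\ell L$ as in Lemma~\ref{lem:nest:grouplem}, and to iterate letters through the idempotent $1_G$).
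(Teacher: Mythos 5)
Your proof is correct and follows the paper's argument essentially verbatim: replace an arbitrary \Cs-optimal \iden $L_0$ (which exists by Lemma~\ref{lem:aliden}) by $L = \beta\inv(1_H)$ where $\beta: A^* \to H$ is the \emph{syntactic} morphism of $L_0$, then conclude by minimality since $L \subseteq L_0$, $L \in \Cs$, and $\veps \in L$. The one point to tighten is that in the body of the argument you should say \emph{syntactic} morphism rather than ``a morphism'' recognizing $L_0$, because it is the syntactic morphism specifically that guarantees $\beta\inv(1_H) \in \Cs$ (an arbitrary group morphism recognizing $L_0$ need not have this property when \Cs is not closed under inverse images); you do flag this correctly in your closing remark, so it is a presentational slip rather than a gap.
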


\begin{proof}
  We let $H$ be a \Cs-optimal \iden for \bratauxsfc: we have $H \in \Cs$, $\veps \in H$ and $\bratauxsfc(H) = \ioptic{\bratauxsfc}$. Let $\alpha: A^* \to G$ be the syntactic morphism of $H$. Since \Cs is a \vari of group languages and $H \in \Cs$, it is standard that $G$ is a finite group and that every language recognized by $\alpha$ belongs to \Cs (see~\cite{pingoodref} for example). In particular $L = \alpha\inv(1_G) \in \Cs$. Moreover, since $\veps \in H$ and $H$ is recognized by $\alpha$, we have $L \subseteq H$. Since $H$ is a \Cs-optimal \iden for \bratauxsfc, $L$ must be one as well.
\end{proof}

We fix the morphism $\alpha: A^* \to G$ and the \Cs-optimal \iden for \bratauxsfc $L = \alpha\inv(1_G)$ for the proof. We need to show that $\bratauxsfc(L) = \ioptic{\bratauxsfc}$ is \sfp{\Cs}-complete. There are four properties to verify.

\medskip
\noindent
{\bf Downset (Condition~\ref{clos:sfc:downg}).} By definition of $L$, we have $\ioptic{\bratauxsfc} = \bratauxsfc(L)$ and it is immediate from Lemma~\ref{lem:quasitame} that $\bratauxsfc(L) = \dclosr \bratauxsfc(L)$.

\medskip
\noindent
{\bf Multiplication (Condition~\ref{clos:sfc:multg}).} Let $q,r \in \ioptic{\bratauxsfc}$. We show that $qr \in \ioptic{\bratauxsfc}$. By hypothesis, we have $q,r \in \bratauxsfc(L)$. In particular, this implies that,
\[
qr \in \dclosr (\bratauxsfc(L) \cdot \bratauxsfc(L))
\]
Hence, Lemma~\ref{lem:quasitame} yields that $qr \in \bratauxsfc(LL)$. Finally, since $L = \alpha^{-1}(1_G)$, we have $LL = L$. Thus, $qr \in \bratauxsfc(L) = \ioptic{\bratauxsfc}$.

\medskip
\noindent
{\bf \Cs-operation (Condition~\ref{clos:sfc:operg}).} For the sake of avoiding clutter, we write $S = \ioptic{\bratauxsfc}$. We need to show that $\ioptic{\sfrats} \subseteq S$. Hence, we consider $r \in \ioptic{\sfrats}$ and show that $r \in S = \bratauxsfc(L)$. Since $L \in \Cs$ and $\veps \in L$, the hypothesis that $r \in \ioptic{\sfrats}$ yields $r \in \sfrats(L)$. Hence, we get $w \in L$ such that $r \in \sfrats(w)$. There are now two cases depending on whether $w = \veps$ or $w \in A^+$.

Assume first that $w = \veps$. In that case, we have $\sfrats(w) = \{1_R\}$ (this is the multiplicative neutral element of $2^R$). Thus, $r = 1_R$. Since $\veps \in L$, it is clear that $1_R \in \opti{\sfp{\Cs}}{L,\rho} = \bratauxsfc(L) = S$ which concludes this case.

We now assume that $w \in A^+$. In that case, there exist $n \geq 1$ and $a_1,\dots,a_n \in A$ such that $w = a_1 \cdots a_n$. By definition $\sfrats(a) =  S \cdot \{\rho(a_1)\} \cdot S$ for every $a \in A$. Moreover, since we already established that $S = \ioptic{\bratauxsfc}$ is closed under multiplication, we have $S \cdot S \subseteq S$. Hence, it follows that 
\[
\sfrats(w) \subseteq S \cdot \{\rho(a_1)\} \cdot S \cdots \{\rho(a_n)\} \cdot S
\]
We get $r \in S \cdot \{\rho(a_1)\} \cdot S \cdots \{\rho(a_n)\} \cdot S$. By definition, we have $S = \bratauxsfc(L)$ and it is clear that for every $a \in A$, we have $\rho(a) \in \opti{\sfp{\Cs}}{\{a\},\rho} = \bratauxsfc(a)$. Thus, we obtain that,
\[
r \in \bratauxsfc(L) \cdot \prod_{1 \leq i \leq n} \left(\bratauxsfc(a_i) \cdot \bratauxsfc(L)\right)
\]
By Lemma~\ref{lem:quasitame}, we have,
\[
\bratauxsfc(La_1L \cdots La_nL) = \dclosr \left(\bratauxsfc(L) \cdot \prod_{1 \leq i \leq n} \left(\bratauxsfc(a_i) \cdot \bratauxsfc(L)\right)\right)
\]
Consequently, $r  \in \bratauxsfc(La_1L \cdots La_nL)$. Recall that $L = \alpha\inv(1_G)$ and $a_1 \cdots a_n \in L$ (\emph{i.e.}, $\alpha(a_1 \cdots a_n) = 1_G$). Thus, $\alpha$ maps every word in $La_1L \cdots La_nL$ to $1_G$ and we obtain that $La_1L \cdots La_nL \subseteq L$. Since \bratauxsfc is a \ratm, this yields $\bratauxsfc(La_1L \cdots La_nL) \subseteq \bratauxsfc(L)$. Altogether, we obtain $r \in \bratauxsfc(L) = S$, finishing the proof.

\medskip
\noindent
{\bf \sfp{\Cs}-closure (Condition~\ref{clos:sfc:closg}).} Let $r \in \ioptic{\bratauxsfc}$. We show that $r^\omega +r^{\omega+1} \in \ioptic{\bratauxsfc}$ as well. Since $L$ is a \Cs-optimal \iden for $\bratauxsfc$, we need to show that:
\[
r^\omega +r^{\omega+1} \in \bratauxsfc(L) = \opti{\sfp{\Cs}}{L,\rho}.
\]
Let \Kb be an optimal \sfp{\Cs}-cover of $L$ for $\rho$. We need to exhibit $K \in \Kb$ such that $r^\omega +r^{\omega+1} \leq \rho(K)$. This is where we use Theorem~\ref{thm:sfclos:carfinite}. First, we use \Kb and $L$ to construct a \emph{finite} sub-class of \Cs.

\begin{fact} \label{fct:sfclos:grpaper}
	There exists a finite \vari $\Gs \subseteq \Cs$ such that $L \in \Gs$ and $K \in \sfp{\Gs}$ for every $K \in \Kb$.
\end{fact}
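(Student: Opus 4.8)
The statement to prove is Fact~\ref{fct:sfclos:grpaper}: given the optimal \sfp{\Cs}-cover \Kb of $L$ and the fact that $L\in\Cs$, there is a \emph{finite} \vari $\Gs\subseteq\Cs$ with $L\in\Gs$ and $K\in\sfp{\Gs}$ for every $K\in\Kb$.

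\medskip

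The plan is to combine two finiteness observations that already appear repeatedly in the excerpt (both citing Lemma~17 of~\cite{pzgenconcat}). First I would note that \Kb is a \emph{finite} set of languages, and each $K\in\Kb$ lies in $\sfp{\Cs}$, so by definition $K$ is obtained from finitely many languages of \Cs together with the singletons $\{a\}$ using Boolean operations and concatenation. Collecting, over all $K\in\Kb$, the finitely many languages of \Cs used in these constructions, together with $L$ itself, yields a \emph{finite} family $\mathcal{F}\subseteq\Cs$ of regular languages. Then I would invoke the standard fact (Lemma~17 in~\cite{pzgenconcat}, already used in the proof of Fact~\ref{fct:sfclos:profinite} and Fact~\ref{fct:sfclos:strat}) that any finite set of regular languages is contained in a finite \vari: so there is a finite \vari \Gs with $\mathcal{F}\subseteq\Gs$. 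Since each language of $\mathcal{F}$ belongs to \Cs and \Cs is a \vari, one must check that the finite \vari generated by $\mathcal{F}$ stays inside \Cs; this holds because \Cs is closed under Boolean operations and quotients, which are exactly the operations used to generate \Gs from $\mathcal{F}$. Hence $\Gs\subseteq\Cs$.

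\medskip

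It then remains to verify the two required properties. We have $L\in\mathcal{F}\subseteq\Gs$ by construction. And for each $K\in\Kb$, the construction of $K$ used only languages from $\mathcal{F}\subseteq\Gs$ (for the \Cs-part) and singletons $\{a\}$, closed under Boolean operations and concatenation; this is precisely the definition of $\sfp{\Gs}$, so $K\in\sfp{\Gs}$. This completes the proof of the fact.

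\medskip

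I do not expect a genuine obstacle here: the statement is a bookkeeping lemma whose only content is the two finiteness facts, both of which are available (Lemma~17 of~\cite{pzgenconcat}). The one point needing a little care is the claim that the finite \vari generated by a finite subfamily of \Cs remains inside \Cs — but this is routine since \Cs is assumed to be a \vari (closed under Boolean operations and quotients), and the cited lemma produces \Gs using exactly those closure operations; one could alternatively just take \Gs to be the intersection of the cited finite \vari with \Cs, or apply the cited lemma while observing it builds the generated \vari by these operations. Either phrasing works, and I would pick whichever matches how Lemma~17 of~\cite{pzgenconcat} is stated.
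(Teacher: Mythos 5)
Your proof is correct and matches the paper's argument essentially line for line: both collect the finitely many \Cs-languages used to build the (finitely many) $K\in\Kb$, throw in $L$, invoke Lemma~17 of~\cite{pzgenconcat} to embed this finite family in a finite \vari $\Gs\subseteq\Cs$, and then read off $L\in\Gs$ and $K\in\sfp{\Gs}$ directly from the construction. The extra care you take about why $\Gs\subseteq\Cs$ is fine but the paper treats it as immediate from the cited lemma.
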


\begin{proof}
	By definition \Kb contains finitely many languages in \sfp{\Cs}. Hence, there exists a finite set of languages $\Hb \subseteq \Cs$ such that every language $K \in \Kb$ is built by applying Boolean operations and concatenations to languages in \Hb. Since \Cs is a \vari, it is standard that there exists a finite \vari $\Gs \subseteq \Cs$ which contains all languages in \Hb and $L$ (since $L \in \Cs$ by definition). See Lemma~17 in~\cite{pzgenconcat} for a proof. It is then immediate by definition of \Hb that $K \in \sfp{\Gs}$ for every $K \in \Kb$.
\end{proof}

Since \Gs is finite, we may consider the associated equivalence \caneg defined on $A^*$. By definition, we have $\gtype{\veps} \in \Gs \subseteq \Cs$ and $\veps\in \gtype{\veps}$. Thus, $\ioptic{\bratauxsfc}\subseteq \bratauxsfc(\gtype{\veps})$. Since $r \in \ioptic{\bratauxsfc}$ by hypothesis, this yields,
\[
r \in \bratauxsfc(\gtype{\veps}) = \opti{\sfp{\Cs}}{\gtype{\veps},\rho}.
\]
Moreover, we have $\Gs \subseteq \Cs$ which implies that $\sfp{\Gs} \subseteq \sfp{\Cs}$. Consequently, Fact~\ref{fct:lunion} yields that $r \in  \opti{\sfp{\Gs}}{\gtype{\veps},\rho}$. By definition, this property can be reformulated as follows,
\[
(\gtype{\veps},r) \in \popti{\sfp{\Gs}}{\Gs}{\rho}
\]
By Theorem~\ref{thm:sfclos:carfinite}, we know that the set 
$\popti{\sfp{\Gs}}{\Gs}{\rho}$ is $\sfp{\Gs}$-saturated for $\rho$. Hence, it satisfies $\sfp{\Gs}$-closure and since $\gtype{\veps}$ is an idempotent $\caneg$-class, we get that:
\[
(\gtype{\veps},r^\omega +r^{\omega+1}) \in \popti{\sfp{\Gs}}{\Gs}{\rho}.
\]
Therefore, we have $r^{\omega} + r^{\omega+1} \in  \opti{\sfp{\Gs}}{\gtype{\veps},\rho}$.

\medskip
Finally, we know that $L \in \Gs$ by definition of \Gs (see Fact~\ref{fct:sfclos:grpaper}). Thus, the hypothesis that $\veps \in L$ yields $\gtype{\veps} \subseteq L$. We then obtain from Fact~\ref{fct:lunion} that $\opti{\sfp{\Gs}}{\gtype{\veps},\rho} \subseteq \opti{\sfp{\Gs}}{L,\rho}$. By definition \Kb is a cover of $L$. Moreover, it is an \sfp{\Gs}-cover by definition of \Gs in Fact~\ref{fct:sfclos:grpaper}. Thus, $\opti{\sfp{\Gs}}{L,\rho} \subseteq \prin{\rho}{\Kb}$. Altogether, we obtain  that $r^{\omega} + r^{\omega+1} \in \prin{\rho}{\Kb}$ which yields $K \in \Kb$ such that $r^\omega +r^{\omega+1} \leq \rho(K)$, finishing the soundness proof.

\subsubsection*{Completeness}

We have proved that \ioptic{\bratauxsfc} is a \sfp{\Cs}-complete subset of $R$. It remains to show that it is the least such subset. Hence, we fix an arbitrary \sfp{\Cs}-complete set $S \subseteq R$ and show that $\ioptic{\bratauxsfc} \subseteq S$. By definition, we know that $\ioptic{\bratauxsfc} \subseteq \bratauxsfc(L)$ for every language $L \in \Cs$ such that $\veps \in L$. Hence, it suffices to exhibit a language $L \in \Cs$ such that $\veps \in L$ and $\bratauxsfc(L) \subseteq S$. We first choose the appropriate language $L$.

We let $H$ be a \Cs-optimal \iden for the \nice \mratm \sfrats. That is, we have $H \in \Cs$, $\veps \in H$ and $\sfrats(H) = \ioptic{\sfrats}$. Since $H \in \Cs$, one may verify that there exists a \textbf{finite} \vari $\Gs \subseteq \Cs$ such that $H \in \Gs$ (see again Lemma~17 in~\cite{pzgenconcat} for a proof). We choose $L=\gtype{\veps}$ : clearly, this language belongs to $\Gs\subseteq \Cs$ and contains \veps by definition. It now remains to show the following inclusion:
\begin{equation} \label{eq:sfclos:ginc}
  \bratauxsfc(\gtype{\veps}) \subseteq S
\end{equation}
Let us give a brief overview of the proof. It is based on Theorem~\ref{thm:sfclos:carfinite}. First, we use our set $S \subseteq R$ which is \sfp{\Cs}-complete subset (for $\rho$) to build another set $S' \subseteq (\gclac) \times R$ which is \sfp{\Gs}-saturated (for $\rho$). This is where we apply Theorem~\ref{thm:sfclos:carfinite}. It states that the \emph{least} \sfp{\Gs}-saturated set is \popti{\sfp{\Gs}}{\Gs}{\rho}. Hence, we obtain that the inclusion $\popti{\sfp{\Gs}}{\Gs}{\rho} \subseteq S'$ holds. It is then straightforward to prove~\eqref{eq:sfclos:ginc} from this inclusion.

\medskip

Let us first define the set $S' \subseteq (\gclac) \times R$. The construction is based on the set $S$ and the \nice \mratm \sfrats. We define,
\[
S' = \{(\gtype{\veps},s) \mid s \in S\} \cup \{(C,r) \mid C \in \gclac \text{ and } r \in \dclosr \sfrats(C))\}
\]
Observe that by definition, we have the following useful fact about this new set $S'$.

\begin{fact} \label{fct:sfclos:sp}
	For every $s \in R$, if $(\gtype{\veps},s) \in S'$, then $s \in S$.
\end{fact}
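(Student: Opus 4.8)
The plan is simply to unfold the definition of $S'$. Suppose $(\gtype{\veps},s) \in S'$. Since $S'$ is the union of the two sets $\{(\gtype{\veps},s') \mid s' \in S\}$ and $\{(C,r) \mid C \in \gclac,\ r \in \dclosr \sfrats(C)\}$, there are two cases. If $(\gtype{\veps},s)$ belongs to the first set, then $s \in S$ by inspection and we are done. So the actual content of the fact is the second case: $s \in \dclosr \sfrats(\gtype{\veps})$, and from this I must deduce $s \in S$ using the \sfp{\Cs}-completeness of $S$.

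For that, I would first record that $\gtype{\veps} \subseteq H$, where $H$ is the \Cs-optimal \iden for $\sfrats$ fixed just before the statement. Indeed, $H$ was chosen to lie in the finite \vari $\Gs$, and the languages of a finite \vari are exactly the unions of classes of its canonical equivalence; since $\veps \in H$, the class $\gtype{\veps}$ is one of these, hence $\gtype{\veps} \subseteq H$. As $\sfrats$ is a \ratm it is monotone for inclusion, so $\sfrats(\gtype{\veps}) \subseteq \sfrats(H)$. By definition of $\ioptic{\sfrats}$ and the choice of $H$ as a \Cs-optimal \iden, we have $\sfrats(H) = \ioptic{\sfrats}$. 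Combining, $\sfrats(\gtype{\veps}) \subseteq \ioptic{\sfrats}$, and therefore $s \in \dclosr \sfrats(\gtype{\veps}) \subseteq \dclosr \ioptic{\sfrats}$.

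To conclude, I would invoke the two relevant closure properties of a \sfp{\Cs}-complete set: the \Cs-operation condition gives $\ioptic{\sfrats} \subseteq S$, and the downset condition gives $\dclosr S = S$; combining these, $s \in \dclosr \ioptic{\sfrats} \subseteq \dclosr S = S$, as desired. I do not anticipate a genuine obstacle here — the argument is essentially bookkeeping with the definition of $S'$ — and the only step that deserves an explicit line of justification is the inclusion $\gtype{\veps} \subseteq H$, which is exactly the reason the finite sub-\vari $\Gs$ containing $H$ was extracted in the first place.
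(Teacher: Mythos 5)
Your proof is correct and follows the paper's argument exactly: split on the definition of $S'$, use $\gtype{\veps}\subseteq H$ (from $H\in\Gs$, $\veps\in H$) and monotonicity of $\sfrats$ to get $s\in\dclosr\ioptic{\sfrats}$, then conclude via the \Cs-operation and downset conditions of \sfp{\Cs}-completeness. The only difference is that you spell out the monotonicity step, which the paper leaves implicit.
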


\begin{proof}
	Let $s \in R$ such that $(\gtype{\veps},s) \in S'$. By definition of $S'$, either $s \in S$ or $s \in \dclosr \sfrats(\gtype{\veps})$. In the former case, we are finished. Hence, we assume that $s \in \dclosr \sfrats(\gtype{\veps})$. By definition, $\gtype{\veps} \subseteq H$ (we have $H \in \Gs$ and $\veps \in H$). Since $H$ is a \Cs-optimal \iden for \sfrats, this yields $\sfrats(\gtype{\veps}) \subseteq \ioptic{\sfrats}$. hence, we have $s \in \dclosr \ioptic{\sfrats}$. Finally, since $S$ is \sfp{\Cs}-complete, we know that $\ioptic{\sfrats} \subseteq S$ and $\dclosr S = S$. Altogether, we get that $s \in S$, concluding the proof.
\end{proof}

We now turn to the technical core of the proof: our new set $S' \subseteq (\gclac) \times R$ is \sfp{\Gs}-saturated for $\rho$. We state this result in the following lemma.

\begin{lemma} \label{lem:sfclos:isat}
	The set $S' \subseteq (\gclac) \times R$ is \sfp{\Gs}-saturated for $\rho$.
\end{lemma}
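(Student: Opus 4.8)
\textbf{Proof plan for Lemma~\ref{lem:sfclos:isat}.}

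The plan is to verify directly that $S' \subseteq (\gclac) \times R$ satisfies the four closure properties in the definition of an \sfp{\Gs}-saturated set for $\rho$: trivial elements, downset, multiplication, and \sfp{\Gs}-closure. Throughout, the key asymmetry to keep in mind is that $S'$ consists of two ``layers'': the pairs $(\gtype{\veps},s)$ with $s \in S$, and the pairs $(C,r)$ with $r \in \dclosr\sfrats(C)$, the latter covering \emph{all} classes $C \in \gclac$ including $\gtype{\veps}$. So whenever we must produce a pair of the form $(\gtype{\veps}, \cdot)$, we may try to land in either layer; and whenever we combine pairs we must track which layer each operand comes from. The main structural input will be the ``quasi-tame'' behaviour of the auxiliary \ratm \sfrats through its definition on letters ($\sfrats(a) = S\cdot\{\rho(a)\}\cdot S$) together with the fact that $S$ is already closed under downset and multiplication and contains $1_R$ (Remark~\ref{rem:sfclos:compne}).

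First I would check \textbf{trivial elements}: for $w \in A^*$ we need $(\gtype{w},\rho(w)) \in S'$. If $\gtype{w} \ne \gtype{\veps}$ this is clear since $\rho(w) \in \dclosr\sfrats(w)$ (because $1_R \in S$ gives $\rho(w) = 1_R\cdot\rho(w)\cdot 1_R \in \sfrats(w)$, pushing the argument letter by letter and using that $S \cdot S \subseteq S$ so that $\sfrats$ of a product of letters still contains the corresponding product with $S$ on the outside; for $w=\veps$, $\sfrats(\veps)=\{1_R\}$ and $\gtype{\veps}=\gtype{\veps}$, so either layer works using $1_R \in S$). If $\gtype{w}=\gtype{\veps}$ with $w \ne \veps$, I would still use that $\rho(w) \in \sfrats(w) \subseteq \dclosr\sfrats(\gtype{w}) = \dclosr\sfrats(\gtype{\veps})$ since $\sfrats$ is a \ratm and $w \in \gtype{\veps}$, landing in the second layer. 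Next, \textbf{downset} is immediate: the first layer is closed downward because $S$ is ($\dclosr S = S$), and the second layer is literally a downset $\dclosr\sfrats(C)$. For \textbf{multiplication}, given $(C,q),(D,r) \in S'$ I would argue by cases on which layer each lies in. If both lie in the second layer, then $qr \in \dclosr\sfrats(C)\cdot\dclosr\sfrats(D) \subseteq \dclosr(\sfrats(C)\sfrats(D)) = \dclosr\sfrats(CD) = \dclosr\sfrats(C\cmult D)$ since \sfrats is a \nice \mratm and $CD \subseteq C\cmult D$ (as $w \mapsto \ctype{w}$, here $\gtype{\cdot}$, is a morphism), so $(C\cmult D, qr)$ is in the second layer. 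If $(C,q) = (\gtype{\veps},q)$ with $q\in S$ and $(D,r)$ is in the second layer, then $C\cmult D = D$, and I would show $qr \in \dclosr\sfrats(D)$: writing $r \leq \rho'$ for some $\rho' \in \sfrats(d)$ where $d \in D$, and using $q \in S$ on the outside, $qr \leq q\rho' \in S\cdot\sfrats(d)$; but $\sfrats(d) \subseteq S\cdot(\text{product of }\{\rho(a)\}\text{'s})\cdot S$ and $S\cdot S \subseteq S$, so $q\rho'$ still has the form ``$S$ times letters times $S$'', hence lies in $\sfrats(d) \subseteq \dclosr\sfrats(D)$ (here one is really using that prepending an element of $S$ to an $\sfrats$-value stays an $\sfrats$-value, which follows from the absorptive shape of the definition). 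The symmetric case is identical, and the case where both are in the first layer is just multiplication in $S$ giving $(\gtype{\veps},qr)$ with $qr \in S$.

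The step I expect to be the main obstacle is \textbf{\sfp{\Gs}-closure}: given $(E,r) \in S'$ with $E$ idempotent in \gclac, we must show $(E, r^\omega + r^{\omega+1}) \in S'$. If $(E,r)$ lies in the first layer, then $E = \gtype{\veps}$ (the only idempotent class that the first layer reaches) and $r \in S$; since $S$ is \sfp{\Cs}-complete it satisfies \sfp{\Cs}-closure, so $r^\omega + r^{\omega+1} \in S$, giving $(\gtype{\veps}, r^\omega+r^{\omega+1}) \in S'$ in the first layer. The delicate case is when $(E,r)$ lies only in the second layer, i.e. $r \in \dclosr\sfrats(E)$ with $E$ idempotent. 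Here I would pick $w \in E$ with $r \leq \sfrats(w)$-value; then $w^n \in E$ for all $n$ since $E$ is idempotent, and I would use \Cs-operation for $S$ together with the identity $\iopti{\Cs}{\sfrats} \subseteq S$ — more precisely, I would show $\sfrats(w) \subseteq \dclosr S$ by observing that $w \in E = \gtype{w}$ and $\gtype{\veps}$ need not equal $E$, so this requires relating $\sfrats(E)$ to $S$ via the optimal \iden $H$; since $H \in \Gs$, $E$ need not be contained in $H$, so one cannot directly conclude $\sfrats(E) \subseteq \iopti{\Cs}{\sfrats}$. Instead the right move is: $r \in \dclosr\sfrats(E)$ gives $r \in \sfrats(v)$ for some $v \in E$, and then $r^\omega + r^{\omega+1} \in \sfrats(v^\omega) + \sfrats(v^{\omega+1}) \subseteq \dclosr\sfrats(v^\omega \cup v^{\omega+1})$ with $v^\omega, v^{\omega+1} \in E$ (as $E$ is idempotent), so $r^\omega + r^{\omega+1} \in \dclosr\sfrats(E)$, i.e.\ $(E, r^\omega+r^{\omega+1})$ is back in the second layer — here one exploits that $\sfrats$ being a \nice \mratm into the idempotent semiring $2^R$ gives $\sfrats(v^\omega) \ni (\text{the }\sfrats(v)\text{-value})^\omega$ via the multiplicative structure on letters, so powers of an element of $\sfrats(v)$ reachable from $v$ stay reachable from powers of $v$. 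Once Lemma~\ref{lem:sfclos:isat} is established, Theorem~\ref{thm:sfclos:carfinite} gives $\popti{\sfp{\Gs}}{\Gs}{\rho} \subseteq S'$, and then $\bratauxsfc(\gtype{\veps}) = \opti{\sfp{\Gs}}{\gtype{\veps},\rho}$ (by Fact~\ref{fct:lunion} and $\sfp{\Gs}\subseteq\sfp{\Cs}$) consists exactly of the $r$ with $(\gtype{\veps},r) \in \popti{\sfp{\Gs}}{\Gs}{\rho} \subseteq S'$, whence by Fact~\ref{fct:sfclos:sp} each such $r$ lies in $S$, proving~\eqref{eq:sfclos:ginc} and completing the completeness argument.
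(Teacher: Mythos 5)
Your proof plan matches the paper's strategy for the first three saturation conditions (trivial elements, downset, multiplication), and those cases are handled correctly. The genuine gap is in the \sfp{\Gs}-closure step. You miss the key simplification: since $\Gs\subseteq\Cs$ is a \vari of group languages, $\gclac$ is a finite \emph{group}, and therefore $\gtype{\veps}$ is its \emph{unique} idempotent. This makes the case you call ``only in the second layer'' vacuous — any $(E,r)\in S'$ with $E$ idempotent has $E=\gtype{\veps}$, and then Fact~\ref{fct:sfclos:sp} immediately gives $r\in S$, after which \sfp{\Cs}-closure of $S$ finishes the argument (exactly your first-layer case). That is the entirety of the paper's argument for this condition.

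Worse, the substitute argument you propose for the ``second layer'' case does not close. You claim
$r^\omega + r^{\omega+1} \in \sfrats(v^\omega) + \sfrats(v^{\omega+1})$, but here there are two different additions in play. The addition in $2^R$ is \emph{union}, so $\sfrats(v^\omega) + \sfrats(v^{\omega+1}) = \sfrats(v^\omega) \cup \sfrats(v^{\omega+1})$; whereas $r^\omega + r^{\omega+1}$ is a sum taken \emph{inside $R$}. From $r^\omega\in\sfrats(v^\omega)$ and $r^{\omega+1}\in\sfrats(v^{\omega+1})$ you cannot conclude that $r^\omega+r^{\omega+1}$ lies in $\dclosr\sfrats(E)$: the sum in the idempotent semiring $R$ is \emph{increasing} (one has $r^\omega \leq r^\omega + r^{\omega+1}$), so $r^\omega + r^{\omega+1}$ sits above both summands, while membership in $\dclosr\sfrats(E)$ requires being \emph{below} some element of $\sfrats(E)$. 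In general $\dclosr\sfrats(E)$ is not closed under the addition of $R$, so this step fails. If you patch your proof by inserting the group-theoretic observation and discarding the second-layer case, you recover the paper's proof exactly.
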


Before we prove Lemma~\ref{lem:sfclos:isat}, let us use it to show that~\eqref{eq:sfclos:ginc} holds and conclude the completeness proof. We need to show that $\bratauxsfc(\gtype{\veps}) \subseteq S$. Hence, we consider $r \in \bratauxsfc(\gtype{\veps})$ and prove that $r \in S$.

By definition, we know that $r \in \opti{\sfp{\Cs}}{\gtype{\veps},\rho}$. Moreover, since $\Gs \subseteq \Cs$, we have $\sfp{\Gs} \subseteq \sfp{\Cs}$. This yields the following inclusion,
\[
\opti{\sfp{\Cs}}{\gtype{\veps},\rho} \subseteq \opti{\sfp{\Gs}}{\gtype{\veps},\rho}
\]
Consequently, we get $r \in \opti{\sfp{\Gs}}{\gtype{\veps},\rho}$ which can be reformulated as follows:
\[
(\gtype{\veps},r) \in \popti{\sfp{\Gs}}{\Gs}{\rho}.
\]
Theorem~\ref{thm:sfclos:carfinite} states that $\popti{\sfp{\Gs}}{\Gs}{\rho}$ is the least $\sfp{\Gs}$-saturated subset of $(\gclac) \times R$. Hence, since $S'$ is also $\sfp{\Gs}$-saturated by Lemma~\ref{lem:sfclos:isat}, we obtain that $(\gtype{\veps},r) \in S'$. By Fact~\ref{fct:sfclos:sp}, this implies that $r \in S$, concluding the main argument. It remains to prove Lemma~\ref{lem:sfclos:isat}.

\begin{proof}[Proof of Lemma~\ref{lem:sfclos:isat}]
	We show that $S' \subseteq (\gclac) \times R$ is $\sfp{\Gs}$-saturated for $\rho$. This involves four properties. They are proved independently.
	
	\medskip
	\noindent
	{\bf Trivial elements.} Consider $w \in A^*$, we have to show that $(\gtype{w},\rho(w)) \in S'$.  By definition of $S'$, it suffices to prove that $\rho(w) \in \dclosr \sfrats(\gtype{w})$. Clearly, $w \in \gtype{w}$. Hence, it now suffices to prove that $\rho(w) \in \sfrats(w)$. If $w = \veps$, then $\rho(w) = 1_R$ and $\sfrats(\veps)= \{1_R\}$ (this is the neutral element of $2^R$). Thus, it is immediate that $\rho(w) \in \sfrats(w)$. Assume now that $w \in A^+$. We have $a_1,\dots,a_n \in A$ such that $w=a_1 \cdots a_n$. Since $S$ is \sfp{\Cs}-complete, we have $1_R \in S$ (this is obtained from \Cs-operation, see Remark~\ref{rem:sfclos:compne}). Hence, we get that for every $i \leq n$, $\rho(a_i) \in S \cdot \{\rho(a_i)\} \cdot S = \sfrats(a_i)$. It then follows that $\rho(w) \in \sfrats(w)$ which concludes this case.
	
	\medskip
	\noindent
	{\bf Downset.} Consider $(C,r) \in S'$ and $q \leq r$. We show that $(C,q) \in S'$. By definition, there are two possible cases. First, it may happen that $C = \gtype{\veps}$ and $r \in S$. In that case, $q \in S$ since $\dclosr S = S$ ($S$ is \sfp{\Cs}-complete) and we get that $(C,q) = (\gtype{\veps},q)\in S'$. Otherwise, $r \in \dclosr \sfrats(C)$ which means that $q \in \dclosr \sfrats(C)$ as well and we get $(C,q) \in S'$, concluding the proof for downset.
	
	\medskip
	\noindent
	{\bf Multiplication.} Consider $(C,q),(D,r) \in S'$. We have to show that $(C \cmult D,qr) \in S'$. By definition of $S'$, there are several cases.
	
	Assume first that $C = D = \gtype{\veps}$. In that case, Fact~\ref{fct:sfclos:sp} yields that $q,r  \in S$. Since $S$ is \sfp{\Cs}-complete, this implies that $qr \in S$ and we obtain that $(C \cmult D,qr) = (\gtype{\veps},qr) \in S'$. We now assume that either $C$ or $D$ is distinct from $\gtype{\veps}$ for the remainder of this case.
	
	Assume first that both $C$ and $D$ are distinct from $\gtype{\veps}$. By definition of $S'$, this implies that $q \in \dclosr\sfrats(C)$ and $r \in \dclosr \sfrats(D)$. It follows that $qr \in \dclosr \sfrats(CD)$. Since $CD \subseteq C \cmult D$, this yields $qr \in  \dclosr \sfrats(C \cmult D)$ and we get $(C \cmult D,qr) \in S'$.
	
	Finally, we handle the case when $C = \gtype{\veps}$ and $D \neq \gtype{\veps}$ (the symmetrical case is left to the reader). By Fact~\ref{fct:sfclos:sp}, the hypothesis that $C = \gtype{\veps}$ yields $q \in S$. Moreover, we have $r \in \dclosr \sfrats(D)$ since $D \neq \gtype{\veps}$. Observe that the hypothesis $D \neq \gtype{\veps}$ also implies that $D \subseteq A^+$. Since $S \cdot S \subseteq S$ ($S$ is \sfp{\Cs}-complete), one may verify from the definition of \sfrats that this yields $S \cdot \sfrats(D) \subseteq \sfrats(D)$. Thus, since $q \in S$ and $r \in \dclosr \sfrats(D)$, we get $qr \in \dclosr \sfrats(D)$. Finally, $C \cmult D = D$ since $C = \gtype{\veps}$. Thus, $qr \in \dclosr \sfrats(C \cmult D)$ which yields $(C \cmult D,qr) \in S'$, concluding the proof for multiplication.
	
	\medskip
	\noindent
	{\bf \sfp{\Gs}-closure.} It remains to handle \sfp{\Gs}-closure. Since \Gs is a sub-class of \Cs, it is a \vari of group languages. Thus, one may verify that \gclac is a group which implies that $\gtype{\veps}$ is the only idempotent of $\gclac$. Hence, it suffices to show that for every $r \in R$ such that $(\gtype{\veps},r) \in S'$, we have $(\gtype{\veps},r^{\omega} + r^{\omega+1}) \in S'$. By Fact~\ref{fct:sfclos:sp}, we have $r \in S$. Thus, since $S$ is \sfp{\Cs}-complete, \sfp{\Cs}-closure yields that $r^{\omega} + r^{\omega+1} \in S$ which implies $(\gtype{\veps},r^{\omega} + r^{\omega+1}) \in S'$ by definition, finishing the proof.	
\end{proof}
 
\end{document}